\newcolumntype{L}[1]{>{\raggedright\arraybackslash}p{#1}}
\newcolumntype{C}[1]{>{\centering\arraybackslash}m{#1}}
\newcolumntype{R}[1]{>{\raggedleft\arraybackslash}p{#1}}
\newcommand{\Sat}{\#\textsc{Sat}}
\newcommand{\DTV}[2]{d_{\mathrm{TV}}\left({#1},{#2}\right)}
\newcommand{\dist}{\mathrm{dist}}
\renewcommand{\epsilon}{\varepsilon}
\newcommand{\Lin}{\mathrm{Lin}}
\newcommand{\Vcol}{V_{\mathrm{set}}}
\newcommand{\pup}{p_{\mathrm{up}}}
\newcommand{\plow}{p_{\mathrm{low}}}
\newcommand{\Xalg}{X_{\mathsf{alg}}}
\newcommand{\vbl}[1]{\mathsf{vbl}\left(#1\right)}
\newcommand{\pfailed}{p_{\mathsf{failed}}}
\newcommand{\sample}{\textnormal{\textsf{Sample}}}
\newcommand{\edge}{\mathcal{E}}
\newcommand{\Tmix}{\left\lceil 2n \log \frac{4n}{\epsilon} \right\rceil}
\newcommand{\tmix}{T_{\textsf{mix}}}
\newcommand{\Glauber}{P_{\textsf{Glauber}}}
\newtheorem{theorem}{Theorem}[section]
\newtheorem{observation}[theorem]{Observation}
\newtheorem{claim}[theorem]{Claim}
\newtheorem*{claim*}{Claim}
\newtheorem{condition}[theorem]{Condition}
\newtheorem{lemma}[theorem]{Lemma}
\newtheorem{proposition}[theorem]{Proposition}
\newtheorem{corollary}[theorem]{Corollary}
\theoremstyle{definition}
\newtheorem{definition}[theorem]{Definition}
\newtheorem*{remark*}{Remark}
\newcommand{\toolarge}{dk\log\frac{n}{\delta}}
\newcommand{\repeattime}{\left\lceil\left( \frac{n}{\delta} \right)^{\frac{\eta}{10}} \log \frac{n}{\delta} \right\rceil }
\newcommand{\Rejection}{\textsf{RejectionSampling}}
\def\Pr{\mathop{\mathbf{Pr}}\nolimits}
\renewcommand{\emptyset}{\varnothing}
\newcommand{\abs}[1]{\left\vert#1\right\vert}
\newcommand{\set}[1]{\left\{#1\right\}}
 \newcommand{\tuple}[1]{\left(#1\right)} \newcommand{\eps}{\varepsilon}
 \newcommand{\tp}{\tuple}
\newcommand{\numP}{\#\mathbf{P}} 
\newcommand{\defeq}{\triangleq}
\def\*#1{\mathbf{#1}} 
\def\+#1{\mathcal{#1}} 
\def\-#1{\mathrm{#1}} 
\renewcommand{\Pr}[2][]{ \ifthenelse{\isempty{#1}}
  {\mathbf{Pr}\left[#2\right]} {\mathbf{Pr}_{#1}\left[#2\right]} } 
\newcommand{\E}[2][]{ \ifthenelse{\isempty{#1}}
  {\mathbf{\mathbf{E}}\left[#2\right]}
  {\mathbf{\mathbf{E}}_{#1}\left[#2\right]} }
  \newcommand{\Var}[2][]{ \ifthenelse{\isempty{#1}}
  {\mathbf{\mathbf{Var}}\left[#2\right]}
  {\mathbf{\mathbf{Var}}_{#1}\left[#2\right]} }
\newcommand{\one}[1]{\mathbf{1}\left[#1\right]}
\title{Fast sampling and counting $k$-Sat solutions in the local lemma regime}
\author{Weiming Feng}
\author{Heng Guo}
\author{Yitong Yin}
\author{Chihao Zhang}
\address[Weiming Feng, Yitong Yin]{State Key Laboratory for Novel Software Technology, Nanjing University, 163 Xianlin Avenue, Nanjing, Jiangsu Province, China. \textnormal{E-mail: \url{fengwm@smail.nju.edu.cn, yinyt@nju.edu.cn}}}
\address[Heng Guo]{School of Informatics, University of Edinburgh, Informatics Forum, Edinburgh, EH8 9AB, United Kingdom. \textnormal{E-mail: \url{hguo@inf.ed.ac.uk}}}
\address[Chihao Zhang]{John Hopcroft Center for Computer Science, Shanghai Jiao Tong University, 800
  Dongchuan Road, Minhang District, Shanghai, China. \textnormal{E-mail: \url{chihao@sjtu.edu.cn}}}
\begin{document}
\begin{abstract}
We give new algorithms based on Markov chains to sample and approximately count satisfying assignments to $k$-uniform CNF formulas where each variable appears at most $d$ times.
For any $k$ and $d$ satisfying $kd<n^{o(1)}$ and $k\ge 20\log k + 20\log d + 60$, the new sampling algorithm runs in close to linear time, and the counting algorithm runs in close to quadratic time.

Our approach is inspired by Moitra (JACM, 2019) which remarkably utilizes the Lov\'{a}sz local lemma in approximate counting.
Our main technical contribution is to use the local lemma to bypass the connectivity barrier in traditional Markov chain approaches,
which makes the well developed MCMC method applicable on disconnected state spaces such as SAT solutions.
The benefit of our approach is to avoid the enumeration of local structures and obtain fixed polynomial running times, even if $k=\omega(1)$ or $d=\omega(1)$.
\end{abstract}

\thanks{This research was supported by the National Key R\&D Program of China 2018YFB1003202 and the NSFC Nos. 61722207, 61672275 and 61902241.}
\thanks{Part of the research was done when Weiming Feng was visiting the University of Edinburgh.}
\maketitle


\section{Introduction}

Sampling from an exponential-sized solution space and estimating the number of feasible solutions are two very related fundamental computation problems.
The Markov chain Monte Carlo (MCMC) method is the most successful technique due to its generic nature and the fast running time,
with many famous applications such as~\cite{DFK91,jerrum2004polynomial}.
A basic requirement for the method to apply is that the state space has to be connected via moves of the Markov chain to let the chain converge to the desired distribution.
This requirement prevents us from applying the method to the problems where the solution space is not connected via local moves.
Unfortunately, this barrier holds for perhaps the most important solution space in Computer Science: the satisfying assignments of \emph{conjunctive normal form} (CNF) formulas~\cite{wigderson2019book}.

Recently, a number of new methods based on the variable framework of the Lov\'asz local lemma were proposed to tackle the problem~\cite{Moi19,GJL19}. 
Most notably, the breakthrough of~\cite{Moi19} introduced a novel approach for estimating the number of solutions of $k$-SAT in a local lemma regime.
By far, it is still the only tractable result for sampling and approximately counting $k$-SAT solutions in the local lemma regime without additional structural assumptions on the formulas. 
However, since this new algorithm relies on local enumeration, its time cost is in the form of $n^{O(d^2k^2)}$, where $d$ is the variable degree in the local lemma.
Although a polynomial time for constant $d$ and $k$, this time cost is not fixed-parameter tractable with parameters $d$ and $k$.
Indeed, for $d=\omega(1)$ or $k=\omega(1)$, the running time becomes super-polynomial.

In this paper, we develop a new approach to overcome the connectivity barrier for Markov chain methods.
The main idea is to sample from the marginal probability of an algorithmically chosen subset of variables,
so that the standard Glauber dynamics is now ergodic.
However, this distribution is not a Gibbs distribution nor satisfies any kind of conditional independence properties.
New challenges arise as both analyzing and implementing the Glauber dynamics require new ideas.
We give a high-level overview of the techniques in \Cref{sec:tech}.


To illustrate the new technique, we choose a canonical $\numP$-complete problem,
namely counting the number of satisfying assignments of CNF formulas ($\Sat$) as our main application.
We call a CNF formula $\Phi$ a $(k,d)$-formula if all of its clauses have size $k$ and each variable appears in at most $d$ clauses.

%

\begin{theorem}[simplified]
  \label{theorem-sample-simplified}
  The following holds for any sufficiently small $\zeta>0$.

  There is an algorithm such that given any $0<\eps<1$ and $(k,d)$-formula $\Phi$ with $n$ variables where  $k \geq 20\log k + 20\log d + 3\log\tp{\frac{1}{\zeta}}$,
  it terminates in time $\widetilde{O}\left(d^2k^3 n \left(\frac{n}{\epsilon}\right)^{\zeta} \right)$ 
  and outputs a random assignment $X$ that is $\eps$ close in total variation distance to the uniform distribution over satisfying assignments of $\Phi$. 
  
  Moreover, there is an algorithm that given any $0<\eps<1$ and $(k,d)$-formula $\Phi$, under the same assumption, 
  terminates in time $\widetilde{O}\left( d^{3}k^3\left(\frac{n}{\epsilon}\right)^{2+\zeta}\right)$ 
  and outputs an $\textrm{e}^{\eps}$-approximation to the number of satisfying assignments of $\Phi$.
  In the above,  $\widetilde{O}(\cdot)$ hides a factor of $\mathrm{polylog}(n, d, \frac{1}{\epsilon})$.
\end{theorem}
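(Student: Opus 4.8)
The plan is to establish the two algorithmic results through a common pipeline built on the idea announced in the introduction: identify an algorithmically chosen subset of variables whose marginal distribution can be sampled by an ergodic Glauber dynamics, then recover the full assignment and, for counting, reduce to sampling via the standard simulated-annealing / telescoping-product identity. The work breaks into four conceptual stages.

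\textbf{Stage 1: Marking and the projected distribution.} Following the local-lemma philosophy, I would first fix a set of \emph{marked} variables, chosen so that (i) every clause contains many marked variables and many unmarked variables, and (ii) restricting to marked variables makes the constraint hypergraph well-behaved enough for a local-lemma argument to control marginals. The condition $k \geq 20\log k + 20\log d + 3\log(1/\zeta)$ is exactly what one needs so that, after marking roughly half the variables of each clause, the residual ``slack'' $\tp{1-2^{-(k/2)}}$ against degree $d$ still satisfies the local lemma (and its quantitative, Shearer-type strengthenings) with room to spare. Let $\mu$ denote the uniform distribution over satisfying assignments and let $\nu$ be its projection onto the marked set $\Vcol$. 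The key structural lemma to prove here is that conditional marginals of $\nu$ on any marked variable, given a feasible partial assignment to other marked variables, lie in $[\plow,\pup]$ with $\plow,\pup$ both close to $1/2$; this is where the local lemma replaces conditional independence.

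\textbf{Stage 2: Glauber dynamics on the marked variables.} Run single-site Glauber dynamics for $\nu$. Ergodicity (connectivity of the state space of marked assignments) is \emph{free} here, because any two feasible marked assignments can be joined by single-site moves — this is precisely the ``connectivity barrier'' being bypassed, since the unmarked variables absorb the constraints. For the mixing time I would use a path-coupling argument: couple two chains differing at one marked variable, and bound the expected number of disagreements created in one step using the marginal bounds from Stage 1 together with a union bound over clauses containing the updated variable. The target is $\tmix = O\tp{n\log(n/\eps)}$, i.e. the $\Tmix$ bound appearing in the macros, which gives the near-linear factor in the stated running time. The genuinely new difficulty — and what I expect to be the main obstacle — is \emph{implementing} one Glauber step: computing (or approximately sampling from) the conditional marginal of a single marked variable under $\nu$ requires estimating the probability that the unmarked variables can still be satisfied, which is itself a \#SAT-type quantity on the residual formula. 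I would handle this by a local/recursive procedure that exploits the fact that, after conditioning on marked variables, the residual formula on unmarked variables satisfies the local lemma with strong slack, so the relevant marginals can be estimated to within a tiny error by exploring only a logarithmic-depth neighborhood (rejection sampling on a bounded local structure, cf. the \Rejection\ and $\repeattime$ macros), incurring the $d^2k^3$ and $(n/\eps)^\zeta$ overhead factors. The error from approximate Glauber steps propagates additively over $\tmix$ steps and is absorbed into $\eps$.

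\textbf{Stage 3: From marked sample to full sample.} Given an approximate sample $\sigma$ from $\nu$, sample the unmarked variables from their conditional distribution given $\sigma$. Here the residual formula on unmarked variables again satisfies the local lemma with large slack (each clause has $\approx k/2$ unmarked variables, degree $\le d$, and slack $2^{-k/2}$ against $d$), so this conditional distribution can be sampled efficiently — e.g. by the partial-rejection-sampling / Moser--Tardos-type resampling routine (\GeneralResample) or by another round of rapidly mixing Glauber on the now-small, well-separated components — in time $\widetilde O(dk \cdot n)$, dominated by the Stage 2 cost. Combining, the total variation error of the output from $\mu$ is at most the Stage 2 mixing error plus the Stage 1--2 step errors plus the Stage 3 sampling error, all tuned to $O(\eps)$; this proves the sampling half of the theorem.

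\textbf{Stage 4: Counting by annealing.} For the approximate counting statement, use the standard reduction: write the number of satisfying assignments as a telescoping product $Z = 2^n \prod_{i} \Pr[\text{clause/constraint } i \text{ satisfied} \mid \text{previous ones}]$, introducing the clauses one at a time (or interpolating a soft constraint), so that each factor is a marginal ratio that is bounded away from $0$ and $1$ by the local lemma and can therefore be estimated to relative error $\eps/\mathrm{poly}$ by $\widetilde O((n/\eps)^2)$ independent samples from the corresponding intermediate distribution, each produced by the sampler of Stages 1--3. Multiplying the estimates and controlling the accumulated multiplicative error by a Chernoff bound over the $O(m)=O(dn/k)$ stages yields an $\e^{\eps}$-approximation in time $\widetilde O\tp{d^3k^3 (n/\eps)^{2+\zeta}}$, as claimed. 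The only subtlety is ensuring the intermediate distributions all satisfy the marginal bounds uniformly so that a \emph{single} marking works throughout the annealing schedule; I would arrange the schedule (e.g. adding clauses in an order that never destroys the per-clause marked/unmarked balance, or using a fractional relaxation of each clause) so that the Stage 1 hypothesis holds at every step.
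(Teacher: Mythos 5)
Your four-stage outline correctly captures the high-level architecture (mark, Glauber on marked variables, complete the assignment, anneal for counting), but there are two places where the concrete argument you propose would fail, and both happen to be where the paper's real technical content lies.

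\textbf{The mixing-time analysis is not a union bound over incident clauses.} In Stage 2 you propose to ``bound the expected number of disagreements created in one step using the marginal bounds from Stage 1 together with a union bound over clauses containing the updated variable.'' This is precisely the naive estimate that the paper says does \emph{not} work: a union bound of this kind (or the earlier ``$\{2,3\}$-tree'' analysis of Moitra / Guo--Liao--Lu--Zhang) gives an $O(dk)$ bound on the expected size of the discrepancy set, whereas path coupling requires that quantity to be strictly less than $1$. The paper's contribution here is a two-level disagreement-coupling construction ($\+C_v$ for each $v$, then a uniformized coupling $\+C$ with matching $V_1$-sets) and then a new counting argument over \emph{induced paths in the square of the line graph} of $H_\Phi$, which exploits that alternate hyperedges on an induced path are disjoint and thus the corresponding ``failed'' events are (after carefully re-randomizing) independent. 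Without this, the expected-discrepancy bound you need for path coupling is not available, and the $O(n\log(n/\eps))$ mixing claim does not follow from your outline. Also note that ``ergodicity is free'' is slightly overstated: the paper still needs a local-lemma argument (Lemma~\ref{lemma-stationary}, via Corollary~\ref{corollary-local-uniform}) to show that every marked assignment lies in the support of $\mu_\+M$.

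\textbf{The counting schedule you propose does not yield the stated running time.} Your primary Stage-4 plan—telescoping over clauses one at a time—gives $\widetilde O(n^2d^2)$ sampler calls for constant $\eps$, which the paper explicitly identifies as the ``standard'' but too-slow route. Plugging in the per-sample cost, this is substantially worse than the claimed $\widetilde O(d^3k^3(n/\eps)^{2+\zeta})$. What actually achieves the claimed bound is the parenthetical alternative you mention but do not develop: a non-adaptive simulated-annealing schedule on the soft Gibbs measure $\mu_\theta(X)\propto \exp(-\theta|F(X)|)$, realized concretely by adjoining one auxiliary literal $u_c$ to each clause and reusing the same sampler on the resulting $(k{+}1)$-uniform formula. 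The key facts to prove for this schedule—that $Z(\theta_\ell)$ is within $\e^{\eps/2}$ of $Z$ and that $\E[W_i^2]/\E[W_i]^2$ telescopes to $O(1)$—rest on the inequality $Z_{\Phi'}/Z_\Phi\ge 1/2$ for adding one clause (again a local-lemma corollary), and this is what keeps the number of annealing steps down to $\widetilde O(nd)$ with only $O(\eps^{-2})$ samples per step. Your worry about ``a single marking working throughout the schedule'' is handled automatically by the auxiliary-variable construction (the underlying hypergraph is fixed and no $u_c$ is ever marked), so no careful clause ordering is needed.

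In short: Stages 1 and 3 are aligned with the paper, Stage 2 is missing the induced-path argument that is the paper's central new idea, and Stage 4 as primarily stated would give a weaker running time unless you replace the clause-by-clause schedule with the soft-penalty annealing that you only hint at.
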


The formal statements, with explicit running time bounds, are given in \Cref{theorem-sampling-main} (for sampling) and in \Cref{theorem-counting-main} (for counting).

\begin{remark*}
  Our algorithms in \Cref{theorem-sample-simplified} have unusual running time bounds that are controlled by a parameter $\zeta$.
  The parameter $\zeta$ cannot be too large.
  In fact, it must be no greater than $2^{-20}$, which implies that $k$ is at least $60$.
  As $\zeta$ gets smaller, the condition we require becomes stronger,
  but the sampling and counting algorithms run closer to linear and quadratic time, respectively.
  This is somewhat similar to algorithms for the Lov\'asz local lemma, 
  where the running time increases as the slack of the condition goes to $0$.
  
  In particular, if we set $\zeta = 2^{-20}$,
  the condition becomes $k \geq 20\log k + 20\log d + 60$. 
  The running time of our algorithm is a fixed polynomial in $n$, $\frac{1}{\epsilon}$, $d$, and $k$.
  Besides, for example, the exponent of $n$ is $1 + \zeta$ for sampling, which is very close to 1.
  In contrast,
  Moitra's algorithms \cite{Moi19} for both counting and sampling require a stronger condition $k \geq 60\log d + 60\log k + 300$,
  and run in time $\left(\frac{n}{\epsilon} \right)^{O(d^2k^2)}$.
  Our algorithms are much faster and remain in polynomial-time even if $k$ or $d$ is as large as $\Omega(n)$.
  Nonetheless, for approximate counting, Moitra's algorithm remains the only efficient \emph{deterministic} algorithm for $\Sat$ under conditions of this type.

  \Cref{theorem-sampling-main} and \Cref{theorem-counting-main} are in fact slightly stronger than \Cref{theorem-sample-simplified},
  because in \Cref{theorem-sample-simplified} we have simplified the condition between the exponent $\zeta$ and $(k,d)$.
  For example, 
  for $\epsilon = 1/{\mathrm{poly}(n)}$, and for $\omega(1)<kd<n^{o(1)}$ in the regime above,
  our algorithms run in $n^{1+o(1)}$ time for sampling, and $(n/\eps)^{2+o(1)}$ time for $\mathrm{e}^\epsilon$-approximate counting.
\end{remark*}

\subsection{Algorithm overview}
\label{sec:tech}

The first step of our algorithm is to mark variables.
We ensure that every clause has a certain amount of marked and unmarked variables.
Because every clause has sufficiently many unmarked variables,
using the local lemma,
we show that each individual marked variable is close to the uniform distribution.
We call this local uniformity.
This step so far is very similar to~\cite{Moi19}.

Our goal is to sample from the marginal distribution on the marked variables.
To do this, we simulate an idealized Glauber dynamics $\Glauber$ which converges to this distribution.
However, this distribution is not a Gibbs distribution,
and to calculate the transition probabilities becomes $\numP$-hard.
Our main effort is to show the following two things:
\begin{enumerate}
  \item\label{item:rapid} $\Glauber$ mixes in $O(n\log n)$ time (\Cref{section-proof-mixing});
  \item\label{item:implement} $\Glauber$ can be approximately efficiently implemented (\Cref{section-proof-main}).
\end{enumerate}

To show \Cref{item:rapid},
we use the path coupling technique by Bubley and Dyer \cite{bubley1997path},
which requires that for two assignments $X_t$ and $Y_t$ that differ on only one variable $v_0$,
the expected difference of $X_{t+1}$ and $Y_{t+1}$ after one step of $\Glauber$ is less than $1$.
For a marked variable $v\neq v_0$, let $\mu^X_v$ be the Gibbs distribution conditioned on $X_{t}$ minus the assignment of $v$.
In other words, $\mu^X_v$ is defined over assignments to all unmarked variables and $v$.
Define $\mu^Y_v$ similarly.
Consider a disagreement coupling $\+C_v$ between $\mu^X_v$ and $\mu^Y_v$,
constructed greedily starting from $v_0$.
The crucial observation is that, 
the probability that $v$ cannot be coupled is upper bounded by the probability that $v$ is in the discrepancy set of $\+C_v$.
Similar couplings have been defined by Moitra \cite{Moi19}.
(To get a better condition on our parameters, we actually follow the adaptive version in \cite{guo2019counting}.)
We then define a different disagreement coupling $\+C$ over all variables other than $v_0$, marked and unmarked alike, so that the expected difference of $X_{t+1}$ and $Y_{t+1}$ is upper bounded by the expected size of the discrepancy set of $\+C$.
This upper bound is shown by yet another coupling between the two couplings $\+C_v$ and $\+C$.

Finally, we show that the expected size of the discrepancy set of $\+C$ (not including $v_0$) is less than $1$.
Here we need a new argument based on counting induced paths to analyze these greedy disagreement couplings.
This is because the old analysis based on the so-called $\{2,3\}$-trees \cite{Moi19,guo2019counting},
which was used to show these couplings terminate in $O(\log n)$ steps
with high probability, can only get a constant bound in the form of $O(dk)$ on this expectation,
and thus is no longer strong enough.

To show \Cref{item:implement},
we first observe that due to local uniformity,
at any step of $\Glauber$,
unmarked variables are scattered into small connected components.
This has been observed before \cite{Moi19,guo2019counting}.
However, these components can have size as large as $\Omega(dk\log n)$.
Thus, a brute force enumeration would take time $n^{\Omega(dk)}$, which is too slow to our need.
Instead, we employ the local lemma again to show that a random assignment on these components satisfy all relevant clauses with probability roughly $\Omega(n^{-\zeta})$.
Thus, a naive rejection sampling has expected running time $O(n^{\zeta})$,
which results in the small overhead in \Cref{theorem-sample-simplified}.
Moreover, at the end of the algorithm,
we need to sample all unmarked variables,
this is done by the same rejection sampling method.

So far we have explained our sampling algorithm.
For counting, we use the simulated annealing method \cite{bezakova2008accelerating,vstefankovivc2009adaptive,huber2015approximation,kolmogorov18faster}.
First we define a suitable Gibbs distribution, which can be viewed as a product distribution conditioned on a new formula $\Phi'$ being satisfied.
Then our sampling algorithm can be adapted with minimal changes.
With the Gibbs distribution and its sampling algorithm, adaptive annealing can be applied to yield fast algorithms already.
Instead, we show that a simpler non-adaptive annealing procedure provides similar time bounds. 
Note that in general non-adaptive annealing is provably slower than the adaptive version \cite{vstefankovivc2009adaptive}.
The local lemma once again plays an important role to obtain necessary properties for a fast non-adaptive annealing procedure.

In \cite{guo2019counting}, a notion called ``pre-Gibbs distribution'' was introduced.
Its samples are pairs $(S,\sigma_S)$ where $S$ is a random subset of variables and $\sigma_S$ is an assignment of $S$.
The main requirement is that if we sample from the Gibbs distribution conditioned on $\sigma_S$,
the resulting sample follows the desired Gibbs distribution.
Our algorithm here is a realization of sampling from the pre-Gibbs distribution, where $S$ is fixed a priori.
It remains interesting to explore this idea of ``pre-Gibbs sampling'',
where we should allow a dynamic $S$.
With a dynamic $S$, we may get rid of the marking process by incorporating the adaptive coupling idea of \cite{guo2019counting},
which can greatly improve our assumption in
\Cref{theorem-sample-simplified}. 


\subsection{Related work}

The most relevant work is the algorithm by Moitra \cite{Moi19},
which we have discussed and compared with in detail above.
Moitra's work is subsequently refined and adapted to hypergraph colorings \cite{guo2019counting},
but it still suffers from the same slow running time.
The partial rejection sampling (PRS) method \cite{GJL19} also works in the local lemma setting.
However, for CNF formulas, PRS requires more complicated structural conditions in addition to the ones relating $k$ and $d$.

Prior to our work, no Markov chain algorithm is known to work in the local lemma parameter regimes for $\Sat$,
mainly because of the connectivity barrier.
For monotone $k$-CNF formulas,
where connectivity is not an issue,
Hermon et al.\ \cite{HSZ19} showed that the (straightforward) Glauber dynamics mixes in $O(n\log n)$ time if $k\ge 2\log d+C$ for some constant $C$,
which is tight up to the constant $C$ due to complementing hardness results \cite{BGGGS19}.
For proper colorings over simple hypergraphs, 
Frieze and Anastos \cite{FA17} showed that a slight variant of the straightforward Glauber dynamics mixes rapidly under conditions almost match the local lemma.
However, their work also requires that the vertex degrees are at least $\Omega(\log n)$ to ensure that the giant connected component occupies a $1-n^{-c}$ fraction of the whole state space.
In comparison, although our algorithm is also based on Markov chains,
we completely bypassed the connectivity issue.

Deterministic approximate counting algorithms often run in time $n^{f(\Delta)}$ where $\Delta$ is some parameter, 
such as the maximum degree of vertices in a graph, and $f(\Delta)\rightarrow\infty$ as $\Delta\rightarrow\infty$.
This is not desirable and is not polynomial-time if $\Delta=\omega(1)$.
Recently, there has been some effort to bring down such running times (often using randomized techniques like Markov chains) to achieve polynomial running time with fixed exponents for all $\Delta$.
Examples include the work of Efthymiou et al.\ \cite{EHSVY19} for counting independent sets \cite{Wei06},
and the work of Chen et al.\ \cite{CGGPSV19} for the algorithmic Pirogov-Sinai theory \cite{JKP19,HPR19}.

\section{Preliminaries}	
\subsection{Notations}

Let $\Phi =(V, C)$ be a CNF formula, where $V$ is the set of Boolean variables and $C$ is the set of clauses.
For each clause $c \in C$, we use
\begin{align}
\vbl{c} \triangleq \{y \in V \mid y \text{ or } \neg y \text{ appears in } c\}
\end{align}
to denote the set of variables that appear in $c$.
We say a CNF formula $\Phi$ is $k$-uniform if each clause contains exactly $k$ literals on distinct variables, i.e.\ $\abs{\vbl{c}} = k$ for all $c \in C$.
For any $c \in C$ and $x \in \vbl{c}$, we assume only one of the literal in $\{x,\neg x\}$ appears in $c$. Otherwise, the clause $c$ can always be satisfied.
We also assume that each variable belongs to at most $d$ distinct clauses.
Let $\mu$ denote the uniform distribution over all satisfying assignments for $\Phi$.
Our goal is to draw from a distribution close enough to $\mu$.
%

We often model the CNF formula $\Phi = (V, C)$ as a hypergraph 
\begin{align}
\label{eq-def-Hphi}
H_{\Phi}\triangleq(V, \edge),
\end{align}
where the vertices in $H_{\Phi}$ are variables in $\Phi$ and the hyperedges are defined as
$\edge \triangleq \{\vbl{c} \mid c \in C \}.$

We write $\log$ to denote $\log_2$ and $\ln$ to denote $\log_\mathrm{e}$.
We also write $\exp(s)$ to denote $\mathrm{e}^s$, especially when $s$ is a complicated expression.
We use $\mathbf{Pr}$ without subscript to denote the probability space generated by the algorithm in the context,
and use subscript to clarify other probability spaces.

\subsection{Lov\'asz local lemma}
Let $\mathcal{R}=\{R_1,R_2,\ldots,R_n\}$ be a collection of mutually independent random variables.
For any event $E$, denote by $\vbl{E}\subseteq \+R$ the set of variables determining $E$.
In other words, changing the values of variables outside of $\vbl{E}$ does not change the truth value of $E$.
Let $\mathcal{B} = \{B_1,B_2,\ldots, B_n\}$ be a collection of ``bad'' events.
%
%
For each event $B \in \mathcal{B}$, we define $\Gamma(B) \defeq \set{B'\in \mathcal{B} \mid B' \neq B \text{ and } \vbl{B'} \cap \vbl{B} \ne \emptyset }$.
For any event $A\notin \mathcal{B}$ and its determining variables $\vbl{A} \subseteq \mathcal{R}$,
we define $\Gamma(A) \triangleq \{B \in \mathcal{B} \mid \vbl{A} \cap \vbl{B} \ne \emptyset \}$.
%
%
Let $\Pr[\+P]{\cdot}$ denote the product distribution of variables in
$\mathcal{R}$. The following version of the Lov\'asz local lemma is from ~\cite{haeupler2011new}.
\begin{theorem}
\label{theorem-LLL}
If there is a function $x: \mathcal{B} \rightarrow (0,1)$ such that for any $B \in \mathcal{B}$,
\begin{align}\label{eqn:LLL}
  \Pr[\+P]{B} \leq x(B) \prod_{B' \in \Gamma(B)}(1-x(B')),	
\end{align}
then it holds that
\begin{align*}
  \Pr[\+P]{\bigwedge_{B \in \mathcal{B}} \overline{B} } \geq \prod_{B \in \mathcal{B}}(1 - x(B)) > 0. 	
\end{align*}
Thus, there exists an assignment of all variables that avoids all the bad events.

Moreover, for any event $A$, it holds that
\begin{align*}
\Pr[\+P]{A \,\big|\, \bigwedge_{B \in \mathcal{B}} \overline{B} } \leq \Pr[\+P]{A} \prod_{B \in \Gamma(A)}(1- x(B))^{-1}.	
\end{align*}
\end{theorem}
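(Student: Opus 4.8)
The statement is the asymmetric Lov\'asz local lemma together with the standard upper bound on conditional probabilities of an arbitrary event, and the plan is to prove it by the classical inductive argument. The heart of the matter is the following claim: \emph{for every $B \in \mathcal{B}$ and every $S \subseteq \mathcal{B} \setminus \{B\}$, we have $\Pr[\+P]{B \,\big|\, \bigwedge_{B' \in S}\overline{B'}} \le x(B)$.} Once the claim is in hand, all three assertions of the theorem follow quickly from the chain rule, so essentially all of the work lies in proving the claim, which I would do by induction on $|S|$.

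The base case $S = \emptyset$ is immediate from \eqref{eqn:LLL}, since every factor $1 - x(B')$ lies in $(0,1)$ and hence $\Pr[\+P]{B} \le x(B)$. For the inductive step I would partition $S$ as $S_1 \cup S_2$ with $S_1 \defeq S \cap \Gamma(B)$ and $S_2 \defeq S \setminus \Gamma(B)$, and write
\[
  \Pr[\+P]{B \,\big|\, \bigwedge_{B'\in S}\overline{B'}}
  \;=\;
  \frac{\Pr[\+P]{B \wedge \bigwedge_{B'\in S_1}\overline{B'} \,\big|\, \bigwedge_{B'\in S_2}\overline{B'}}}
       {\Pr[\+P]{\bigwedge_{B'\in S_1}\overline{B'} \,\big|\, \bigwedge_{B'\in S_2}\overline{B'}}}.
\]
For the numerator, dropping the events indexed by $S_1$ only increases the probability, and since $\vbl{B}$ is disjoint from $\vbl{B'}$ for every $B' \in S_2$ the event $B$ is independent of $\bigwedge_{B'\in S_2}\overline{B'}$; thus the numerator is at most $\Pr[\+P]{B} \le x(B)\prod_{B'\in\Gamma(B)}(1-x(B'))$ by \eqref{eqn:LLL}. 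For the denominator I would enumerate $S_1 = \{B_1,\dots,B_r\}$ and expand it as the telescoping product $\prod_{i=1}^{r}\big(1 - \Pr[\+P]{B_i \,\big|\, \bigwedge_{j<i}\overline{B_j} \wedge \bigwedge_{B'\in S_2}\overline{B'}}\big)$; each conditioning set appearing here is a subset of $\mathcal{B}$ of size at most $|S|-1$ that omits $B_i$, so the induction hypothesis bounds each conditional probability by $x(B_i)$, giving a lower bound of $\prod_{i=1}^{r}(1-x(B_i)) = \prod_{B'\in S_1}(1-x(B')) \ge \prod_{B'\in\Gamma(B)}(1-x(B'))$, using $S_1 \subseteq \Gamma(B)$. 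Dividing the two bounds yields exactly $x(B)$, completing the induction.

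With the claim established, the first assertion follows by enumerating $\mathcal{B} = \{B_1,\dots,B_m\}$ and applying the chain rule: $\Pr[\+P]{\bigwedge_{i=1}^{m}\overline{B_i}} = \prod_{i=1}^{m}\big(1 - \Pr[\+P]{B_i \,\big|\, \bigwedge_{j<i}\overline{B_j}}\big) \ge \prod_{i=1}^{m}(1-x(B_i)) > 0$, which in particular exhibits a good assignment. For the ``moreover'' part, split $\mathcal{B}$ into $\Gamma(A)$ and $\mathcal{B}\setminus\Gamma(A)$: since $A$ is independent of $\bigwedge_{B\notin\Gamma(A)}\overline{B}$ (disjoint determining variables), the numerator satisfies $\Pr[\+P]{A \wedge \bigwedge_{B\in\mathcal{B}}\overline{B}} \le \Pr[\+P]{A}\cdot\Pr[\+P]{\bigwedge_{B\notin\Gamma(A)}\overline{B}}$, while factoring the denominator as $\Pr[\+P]{\bigwedge_{B\in\Gamma(A)}\overline{B} \,\big|\, \bigwedge_{B\notin\Gamma(A)}\overline{B}}\cdot\Pr[\+P]{\bigwedge_{B\notin\Gamma(A)}\overline{B}}$ and bounding the first factor below by $\prod_{B\in\Gamma(A)}(1-x(B))$ (again via telescoping and the claim) makes the common factor cancel, leaving $\Pr[\+P]{A \,\big|\, \bigwedge_{B\in\mathcal{B}}\overline{B}} \le \Pr[\+P]{A}\prod_{B\in\Gamma(A)}(1-x(B))^{-1}$.

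There is no real analytic content; the one place that demands care is the bookkeeping in the inductive step — checking that every conditioning set that arises is a subset of $\mathcal{B}$, has size strictly smaller than $|S|$, and does not contain the event whose conditional probability is being estimated, so that the induction hypothesis genuinely applies — together with the repeated, routine use of the fact that an event is independent of any conjunction of events whose determining variables avoid its own.
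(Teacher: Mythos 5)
Your proof is correct. It is the standard inductive proof of the asymmetric Lov\'asz local lemma (the core claim $\Pr[\+P]{B \mid \bigwedge_{B'\in S}\overline{B'}} \le x(B)$ proved by induction on $|S|$ via the $S_1$/$S_2$ split and the telescoping bound on the denominator), augmented with the routine derivation of the conditional-probability bound for an arbitrary event $A$; this is exactly the argument found in Haeupler--Saha--Srinivasan and in textbook treatments. The paper itself does not reprove this theorem --- it cites it directly --- so there is no competing proof in the paper to compare against. One small detail worth a sentence in a fully rigorous write-up: to justify that the conditional probabilities appearing in the telescoping products are well defined, one should carry along (as a second statement in the same induction) that $\Pr[\+P]{\bigwedge_{B'\in S}\overline{B'}} > 0$ for every $S \subseteq \mathcal{B}$; this is immediate from the chain rule once the claim is available for smaller sets, and is typically left implicit, as you have done.
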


The next corollary follows from \Cref{theorem-LLL}. 
\begin{corollary}
\label{corollary-local-uniform}
Let $\Phi = (V, C)$ be a CNF formula.
Assume each clause contains at least $k_1$ variables and at most $k_2$ variables, and each variable belongs to at most $d$ clauses. 
For any $s \geq k_2$, if $2^{k_1} \geq 2\mathrm{e}ds$,  then there exists a satisfying assignment for $\Phi$ and for any $v \in V$,
\begin{align*}
 \max\left\{\Pr[X \sim \mu]{X(v) = 0}, \Pr[X \sim \mu]{X(v) = 1} \right\}\leq  \frac{1}{2}\exp\left(\frac{1}{s}\right),
\end{align*}
where $\mu$ is the uniform distribution of all satisfying assignments for $\Phi$.
\end{corollary}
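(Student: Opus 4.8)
The plan is to apply \Cref{theorem-LLL} with the bad events being "clause $c$ is violated." First I would fix a variable $v \in V$ and a value $b \in \{0,1\}$, and aim to bound $\Pr[X\sim\mu]{X(v)=b}$. The natural idea is to write this marginal as a conditional probability under the product distribution: if $\+P$ is the uniform product distribution over all assignments and $\+B = \{B_c : c \in C\}$ where $B_c$ is the event that clause $c$ is not satisfied, then $\mu = \Pr[\+P]{\cdot \mid \bigwedge_c \overline{B_c}}$, and $\Pr[X\sim\mu]{X(v)=b} = \Pr[\+P]{X(v)=b \mid \bigwedge_c \overline{B_c}}$. Taking $A$ to be the event $\{X(v)=b\}$, the "moreover" part of \Cref{theorem-LLL} gives
\[
  \Pr[\+P]{A \mid \textstyle\bigwedge_c \overline{B_c}} \leq \Pr[\+P]{A} \prod_{B_c \in \Gamma(A)} (1-x(B_c))^{-1} = \frac{1}{2} \prod_{B_c \in \Gamma(A)} (1-x(B_c))^{-1},
\]
since $\Pr[\+P]{A} = \tfrac12$. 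Here $\Gamma(A)$ consists of those clauses containing $v$, of which there are at most $d$.

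Next I would choose the witness function $x$. Since each clause $c$ has $|\vbl{c}| \geq k_1$ variables, $\Pr[\+P]{B_c} \leq 2^{-k_1}$. A clause $c$ shares a variable with at most $k_2 d$ other clauses (each of its $\le k_2$ variables lies in $\le d$ clauses), so $|\Gamma(B_c)| \leq k_2 d \leq sd$. I would set $x(B_c) = 2^{-k_1} \cdot \alpha$ for a suitable constant $\alpha$, or more simply try the uniform choice $x(B_c) = p$ for all $c$ with $p$ slightly larger than $2^{-k_1}$; the condition \eqref{eqn:LLL} becomes $2^{-k_1} \le p(1-p)^{sd}$. Using $2^{k_1} \geq 2\mathrm{e}ds$ one checks that $p = \frac{1}{sd}$ (so that $(1-p)^{sd} \geq \mathrm{e}^{-1}\cdot(\text{something})$, or more carefully $p = \frac{2}{2^{k_1}}$ combined with $(1-p)^{sd} \ge 1 - psd \ge 1 - \frac{2sd}{2\mathrm{e} ds} = 1 - \mathrm{e}^{-1} > \tfrac12$) verifies the hypothesis; this is exactly the standard symmetric-LLL-style computation and I would not belabor it. The existence of a satisfying assignment is then the first conclusion of \Cref{theorem-LLL}.

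For the marginal bound, with $x(B_c) = p \le \frac{1}{sd}$ for every clause and $|\Gamma(A)| \le d$, I get
\[
  \Pr[X\sim\mu]{X(v)=b} \leq \frac{1}{2}(1-p)^{-d} \leq \frac{1}{2}\left(1 - \frac{1}{sd}\right)^{-d}.
\]
The remaining step is the elementary inequality $(1 - \frac{1}{sd})^{-d} \le \exp(\frac{1}{s})$ (equivalently $d\ln\frac{sd}{sd-1} \le \frac1s$), which holds comfortably in the regime $2^{k_1} \ge 2\mathrm{e}ds$ (so $sd \ge 2\mathrm{e} > 1$); using $\ln\frac{1}{1-t} \le \frac{t}{1-t}$ with $t = \frac{1}{sd}$ gives $d\ln\frac{1}{1-1/(sd)} \le \frac{d}{sd-1} = \frac{1}{s - 1/d} \le \frac1s$ provided $s \ge 1 + 1/d$... which needs a touch of care, so I would instead just absorb the slack by being slightly more generous in the choice of $x$ (e.g. show directly $(1-p)^{-d} \le \exp(\frac1s)$ for the $p$ actually chosen). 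The main obstacle is purely this bookkeeping: matching the clean hypothesis $2^{k_1}\ge 2\mathrm{e}ds$ to a witness function $x$ that simultaneously satisfies \eqref{eqn:LLL} and yields the factor $\exp(\frac1s)$ rather than some larger constant — there is no conceptual difficulty, only the need to pick constants so both the LLL condition and the final inequality close with the stated slack. Since both bounds in the statement are symmetric in $0$ and $1$, taking the max over $b$ costs nothing, completing the proof.
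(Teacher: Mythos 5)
Your approach is exactly the paper's: take the uniform product distribution $\+P$, let $B_c$ be the event that clause $c$ is violated, choose a uniform witness $x(B_c)=p$, and apply the ``moreover'' clause of \Cref{theorem-LLL} to the event $A=\{X(v)=b\}$ with $\Pr[\+P]{A}=\tfrac12$ and $|\Gamma(A)|\le d$. The paper's own choice is $x(B_c)=\frac{1}{2ds}$, verified against $|\Gamma(B_c)|\le (d-1)k_2\le 2ds-1$ using $\bigl(1-\frac1y\bigr)^{y-1}\ge\frac1{\mathrm e}$, and then the final step $\bigl(1-\frac{1}{2ds}\bigr)^{-d}\le\exp\bigl(\frac1s\bigr)$ reduces to $ds\ge1$.

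There are two slips in the bookkeeping, one of which is load-bearing as you wrote it. First, the parenthetical ``(so $sd\ge 2\mathrm e>1$)'' is a misreading of the hypothesis: $2^{k_1}\ge2\mathrm{e}ds$ bounds $ds$ from \emph{above} in terms of $2^{k_1}$, not from below; for example $k_1=k_2=s=5$, $d=1$ satisfies the hypothesis with $ds=5<2\mathrm e$. Second, and more substantively, the inequality $\bigl(1-\frac{1}{sd}\bigr)^{-d}\le\exp\bigl(\frac1s\bigr)$ that you reduce the final step to is simply false in the allowed regime: with $d=1$, $s=5$ one gets $\bigl(\tfrac45\bigr)^{-1}=1.25>\mathrm e^{1/5}\approx1.22$. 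The culprit is that after correctly noting that $p=\frac{2}{2^{k_1}}\le\frac{1}{\mathrm{e}ds}$ works for the LLL condition, you weakened to $p\le\frac{1}{sd}$ when estimating the marginal, giving away the factor of $\mathrm e$ you needed. Keeping $p\le\frac{1}{\mathrm{e}ds}$ gives $(1-p)^{-d}\le\exp\bigl(\frac{d}{\mathrm{e}ds-1}\bigr)\le\exp\bigl(\frac{1}{s}\bigr)$ since $(\mathrm e-1)s\ge1$, and the argument closes. You do flag that the final estimate ``needs a touch of care'' and suggest being more generous with $x$ --- that is precisely the fix, and it is exactly what the paper's choice $x(B_c)=\frac{1}{2ds}$ accomplishes.
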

\begin{proof}
Let $\Pr[\+P]{\cdot}$ denote the product distribution that every variable in $V$ takes a value from $\{0,1\}$ uniformly and independently. 
We define a collection of bad events $B_c$ for each $c \in C$, where $B_c$ represents the clause $c$ is not satisfied.
For each $c \in C$, we take $x(B_c) = \frac{1}{2ds}$. Thus, for any clause $c \in C$, we have
\begin{align*}
\Pr[\+P]{B_c} \leq \left(\frac{1}{2}\right)^{k_1}\leq \frac{1}{2\mathrm{e}ds}.
\end{align*}
To verify \eqref{eqn:LLL},
note that for any $y > 1$, it holds that $\left( 1 - \frac{1}{y} \right)^{y-1} \geq \frac{1}{\mathrm{e}}$. 
Since $s \geq k_2$ and $|\Gamma(B_c)| \leq (d-1)k_2 \leq 2ds-1$ for all $c \in C$, We have
\begin{align*}
  \Pr[\+P]{B_c} \leq \frac{1}{2ds}\left( 1 - \frac{1}{2ds} \right)^{2ds-1}
  \leq \frac{1}{2ds}\left( 1 - \frac{1}{2ds} \right)^{|\Gamma(B_c)|} = x(B_c) \prod_{b \in \Gamma(B_c)}(1 - x(B_b)).
\end{align*}
Hence, there exists a satisfying assignment for CNF formula $\Phi$.
For any variable $v \in V$, let $B_v$ denote the event that $v$ takes the value $0$. Note that $|\Gamma(B_v)| = d$.
By Theorem~\ref{theorem-LLL}, we have
\begin{align*}
\Pr[X \sim \mu]{X(v) = 1} \leq \frac{1}{2}\left( 1 - \frac{1}{2ds} \right)^{-d} \leq \frac{1}{2}\exp\left(\frac{1}{s} \right).
\end{align*}
Similarly, we have $\Pr[X \sim \mu]{X(v) = 0} \leq \frac{1}{2}\exp\left(\frac{1}{s} \right)$.
\end{proof}

The  Moser-Tardos algorithm \cite{moser2010constructive} constructs an assignment of all random variables in $\mathcal{P}$ that avoids all the bad events in $\mathcal{B}$. The Moser-Tardos algorithm is given in Algorithm~\ref{alg-MT}.
\begin{algorithm}[ht]
\SetKwInOut{Input}{Input}
\SetKwInOut{Output}{Output}
for each $R \in \mathcal{R}$, sample $v_R$ independently according to the distribution of $R$\;
\While{there exists a bad event $B \in \mathcal{B}$ s.t. $B$ occurs}{
pick an arbitrary $B \in \mathcal{B}$ s.t. $B$ occurs\;
resample the value of $v_R$ for all variables $R \in \vbl{B}$\;
 }
\Return{$(v_R)_{R \in \mathcal{R}}$}
\caption{The Moser-Tardos algorithm}\label{alg-MT}
\end{algorithm}
\begin{proposition}[Moser and Tardos~\cite{moser2010constructive}]
\label{proposition-MT}
Suppose the asymmetric local lemma condition \eqref{eqn:LLL} in Theorem~\ref{theorem-LLL} holds with the function $x: \mathcal{B}\rightarrow (0,1)$.
Upon termination, the Moser-Tardos algorithm returns an assignment that avoids all the bad events.
The expected total resampling steps for Moser-Tardos algorithm is at most $\sum_{B \in \mathcal{B}}\frac{x(B)}{1-x(B)}$.
\end{proposition}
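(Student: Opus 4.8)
The plan is to follow the witness-tree method of Moser and Tardos. The first assertion is immediate: the while loop exits only once no bad event in $\mathcal{B}$ occurs, so the returned assignment avoids all bad events. All the work is in the bound on the expected number of resampling steps.

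First I would fix an execution and record its \emph{log}: a (possibly infinite) sequence $C(1), C(2), \ldots$ with $C(t) \in \mathcal{B}$ the bad event resampled in the $t$-th iteration of the while loop. For each $t$ I would build a \emph{witness tree} $\tau(t)$ by scanning $C(t), C(t-1), \ldots, C(1)$ in reverse: begin with a single root labelled $C(t)$; when scanning $C(i)$, if the current tree contains a node whose label lies in $\Gamma(C(i)) \cup \{C(i)\}$, attach a new node labelled $C(i)$ as a child of such a node of maximum depth (ties broken arbitrarily), and otherwise discard $C(i)$. I would then verify that every $\tau(t)$ is \emph{proper} — the children of any node carry pairwise distinct and pairwise non-adjacent labels — and that distinct time steps yield distinct witness trees. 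Consequently the expected number of times a fixed $B \in \mathcal{B}$ is resampled equals $\sum_{\tau}\Pr{\tau \text{ occurs in the log}}$, the sum ranging over all proper trees rooted at $B$.

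The core step is the \emph{Witness Tree Lemma}: for any fixed proper tree $\tau$, $\Pr{\tau \text{ occurs}} \le \prod_{v \in \tau}\Pr[\+P]{B_v}$, where $B_v$ denotes the label of $v$. To prove it I would run the algorithm off an infinite \emph{resampling table} holding, for each $R \in \mathcal{R}$ and each $j \ge 1$, an independent sample $R^{(j)}$ from the law of $R$, the algorithm consuming the next unused entry of row $R$ whenever it resamples $R$. Conditioned on $\tau$ occurring, I would \emph{validate} $\tau$ by visiting its nodes in order of non-increasing depth: when the node $v$ with label $B_v$ is visited, the entries of the table consumed by the corresponding resampling step are, at that moment, a fresh independent sample of $\vbl{B_v}$ that must make $B_v$ occur, an event of probability exactly $\Pr[\+P]{B_v}$; since the blocks of entries inspected for different nodes are pairwise disjoint, these checks are mutually independent and multiplying their probabilities gives the bound.

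Finally I would bound $\sum_{\tau\,:\,\mathrm{root}(\tau) = B}\prod_{v \in \tau}\Pr[\+P]{B_v}$ by $\frac{x(B)}{1 - x(B)}$ using hypothesis \eqref{eqn:LLL}. Consider the branching process that starts from a root labelled $B$ and, at each node labelled $B'$, independently for every $B'' \in \Gamma(B')$ spawns a child labelled $B''$ with probability $x(B'')$. The probability it produces a given proper tree $\tau$ rooted at $B$ equals $\frac{1 - x(B)}{x(B)}\prod_{v \in \tau}\bigl(x(B_v)\prod_{B'' \in \Gamma(B_v)}(1 - x(B''))\bigr)$, and by \eqref{eqn:LLL} the bracketed factor is at least $\Pr[\+P]{B_v}$; since these probabilities sum to at most $1$ over all finite trees, the sum above is at most $\frac{x(B)}{1-x(B)}$, and summing over $B \in \mathcal{B}$ completes the proof. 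The main obstacle will be the Witness Tree Lemma — in particular, checking that validation inspects pairwise disjoint blocks of the resampling table and that the event ``$\tau$ occurs'' is contained in the conjunction of the local checks; the branching-process estimate is routine bookkeeping once \eqref{eqn:LLL} is available.
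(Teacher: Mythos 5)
The paper does not supply a proof of \Cref{proposition-MT}; it is cited verbatim from Moser and Tardos. Your proposal reproduces the witness-tree argument from that source — the reverse-scan construction attaching to the deepest admissible ancestor, the resampling-table coupling for the Witness Tree Lemma, and the Galton--Watson majorization — and the logical skeleton is correct. One slip to fix: the branching process must spawn children over $\Gamma^+(B')=\Gamma(B')\cup\{B'\}$, not merely $\Gamma(B')$, because a witness tree can contain a child labelled identically to its parent (this happens whenever the same event is resampled twice and the later occurrence serves as the attachment point for the earlier one); such trees receive zero probability under a $\Gamma$-only process, which would break the comparison $\sum_\tau\prod_v\Pr[\+P]{B_v}\le\sum_\tau\frac{x(B)}{1-x(B)}p_\tau\le\frac{x(B)}{1-x(B)}$. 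The probability you wrote, $\frac{1-x(B)}{x(B)}\prod_{v\in\tau}\bigl(x(B_v)\prod_{B''\in\Gamma(B_v)}(1-x(B''))\bigr)$, is precisely what the $\Gamma^+$ process yields after telescoping, so this is a typo in the prose rather than in the computation; with $\Gamma$ the telescoping leaves an extra factor $\prod_v(1-x(B_v))^{-1}$ and the stated identity fails.
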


\subsection{Coupling and mixing times for Markov chains}

Let $\mu$ and $\nu$ be two probability distributions over the same space $\Omega$.
The total variation distance is defined by 
\begin{align*}
  \DTV{\mu}{\nu}\defeq\frac{1}{2}\sum_{x\in\Omega}\abs{\mu(x)-\nu(x)}.
\end{align*}
If we have a random variable $X$ whose law is $\nu$,
we may write $\DTV{\mu}{X}$ instead of $\DTV{\mu}{\nu}$ to simplify the notation.

A coupling $\+C$ of $\mu$ and $\nu$ is a joint distribution over $\Omega\times\Omega$ such that projecting on the first (or second) coordinate is $\mu$ (or $\nu$).
A well known inequality regarding coupling is the following.
\begin{proposition}  \label{prop:coupling}
  Let $\+C$ be an arbitrary coupling of $\mu$ and $\nu$.
  Then
  \begin{align*}
    \DTV{\mu}{\nu}\le\Pr[(x,y)\sim\+C]{x\neq y}.
  \end{align*}
  Moreover, there exists an optimal coupling that achieves equality.
\end{proposition}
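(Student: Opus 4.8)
The plan is to prove the two assertions separately. For the inequality, I would first invoke the dual characterization of total variation distance: for distributions $\mu,\nu$ on $\Omega$ one has $\DTV{\mu}{\nu}=\max_{A\subseteq\Omega}\abs{\mu(A)-\nu(A)}$, which follows immediately from the definition by taking $A=\set{x\in\Omega:\mu(x)>\nu(x)}$. Fix any event $A$ and any coupling $\+C$. Writing $\mu(A)-\nu(A)=\Pr[(x,y)\sim\+C]{x\in A}-\Pr[(x,y)\sim\+C]{y\in A}$ and splitting each probability according to whether the other coordinate lies in $A$, the difference equals $\Pr[(x,y)\sim\+C]{x\in A,\,y\notin A}-\Pr[(x,y)\sim\+C]{x\notin A,\,y\in A}\le\Pr[(x,y)\sim\+C]{x\in A,\,y\notin A}\le\Pr[(x,y)\sim\+C]{x\neq y}$. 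Taking the maximum over $A$ yields $\DTV{\mu}{\nu}\le\Pr[(x,y)\sim\+C]{x\neq y}$.

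For the second part I would exhibit the maximal coupling attaining equality. Set $\delta\defeq\DTV{\mu}{\nu}$. The key identity is $\sum_{x\in\Omega}\min\set{\mu(x),\nu(x)}=1-\delta$, obtained by summing the pointwise identity $\min\set{\mu(x),\nu(x)}=\tfrac12\bigl(\mu(x)+\nu(x)-\abs{\mu(x)-\nu(x)}\bigr)$. Assume first $\delta\in(0,1)$ and define three probability measures on $\Omega$: $\lambda(x)\defeq\min\set{\mu(x),\nu(x)}/(1-\delta)$, $\alpha(x)\defeq(\mu(x)-\nu(x))^{+}/\delta$, and $\beta(x)\defeq(\nu(x)-\mu(x))^{+}/\delta$. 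Let $\+C$ be the following coupling: with probability $1-\delta$ draw $x\sim\lambda$ and put $y\defeq x$; with probability $\delta$ draw $x\sim\alpha$ and, independently, $y\sim\beta$. A direct computation shows the first marginal is $(1-\delta)\lambda+\delta\alpha=\mu$ and the second is $(1-\delta)\lambda+\delta\beta=\nu$. Since $\alpha$ and $\beta$ have disjoint supports, on the second branch $x\neq y$ always, while on the first branch $x=y$; hence $\Pr[(x,y)\sim\+C]{x\neq y}=\delta=\DTV{\mu}{\nu}$. The degenerate cases $\delta=0$ (take $x=y\sim\mu$) and $\delta=1$ (take $x\sim\mu$ and $y\sim\nu$ independently) are immediate.

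Since $\Omega$ is finite in all our applications, there are no measure-theoretic subtleties, and the two arguments above complete the proof. The only mildly delicate point is the verification of the marginals of the maximal coupling, which reduces to the pointwise identities $\min\set{\mu(x),\nu(x)}+(\mu(x)-\nu(x))^{+}=\mu(x)$ and $\min\set{\mu(x),\nu(x)}+(\nu(x)-\mu(x))^{+}=\nu(x)$; everything else is routine bookkeeping.
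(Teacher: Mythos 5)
Your proof is correct. The paper states this proposition without proof, treating it as standard background (pointing to textbooks such as Levin--Peres), so there is no paper argument to compare against; what you have written is precisely the canonical textbook proof. Both halves are sound: the first via the event-maximization characterization of total variation distance combined with the inclusion $\set{x\in A,\,y\notin A}\subseteq\set{x\neq y}$, and the second via the maximal coupling built from $\min\set{\mu,\nu}$ and the normalized positive and negative parts of $\mu-\nu$. The marginal verification and the degenerate cases $\delta=0$ and $\delta=1$ are handled correctly, and finiteness of $\Omega$ (true throughout this paper) dispenses with any measurability concerns.
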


A Markov chain $(X_t)_{t \geq 0}$ over a state space $\Omega$ is given by its transition matrix $P:\Omega\times\Omega\rightarrow \mathbb{R}_{\ge 0}$.
A Markov chain $P$ is called \emph{irreducible} if for any $X,Y \in \Omega$, there exists an integer $t$ such that $P^t(X,Y) > 0$. 
A Markov chain $P$ is called \emph{aperiodic} if for any $X \in \Omega$, it holds that $\gcd\{t\mid P^t(X,X)>0 \}=1$.
We say the distribution $\mu$ over $\Omega$ is the \emph{stationary distribution} of a Markov chain $P$  if $\mu = \mu P$.
A Markov chain $P$ is \emph{reversible} with respect to $\mu$ if it satisfies the detailed balance condition
\begin{align*}
  \mu(X)P(X,Y)=\mu(Y)P(Y,X),
\end{align*}
which implies that $\mu$ is a stationary distribution of $P$.
If a Markov chain $P$ is irreducible and aperiodic, then it converges to the unique stationary distribution $\mu$.
The \emph{mixing time} of a Markov chain $P$ with stationary distribution $\mu$ is defined by 
\begin{align}
  \tmix(P,\delta) \triangleq \max_{X_0\in\Omega} \min \left\{t: \DTV{P^t(X_0,\cdot)}{\mu}\leq \delta\right\}.
  \label{eqn:mixing-time}
\end{align}
See the textbook~\cite{levin2017markov} for more details and backgrounds on Markov chains and mixing times.

Consider an irreducible and aperiodic Markov chain specified by the transition matrix $P$.
A coupling of the Markov chain is a joint process $(X_t,Y_t)_{t \geq 0}$ such that both $(X_t)_{t\geq 0}$ and $(Y_t)_{t \geq 0}$ individually follow the transition rule of $P$, and  if $X_t = Y_t$ then $X_s = Y_s$ for all $s \geq t$. The total variation distance between $P^t(X_0,\cdot)$ and $\mu$ can be bounded by
$\max_{X_0 \in \Omega}\DTV{P^t(X_0,\cdot)}{\mu} \leq \max_{X_0,Y_0 \in \Omega^2}\Pr{X_t \neq Y_t}$.

\emph{Path coupling}~\cite{bubley1997path} is a powerful technique to construct couplings of Markov chains. 
In this paper, we use the following path coupling lemma,
which is simplified for the Boolean hypercube. 
Let the state space $\Omega = \{0,1\}^N$ for some integer $N \geq 1$. For any $X, Y \in \Omega$, define the \emph{Hamming distance} between $X, Y$ as
\begin{align*}
  d_{\textrm{Ham}}(X, Y) \triangleq \abs{\set{1 \leq i \leq N \mid X(i) \neq Y(i) }}.	
\end{align*}
\begin{proposition}[\cite{bubley1997path}]
\label{proposition-path-coupling}
Let $\Omega = \{0,1\}^N$ some integer $N \geq 1$. Let $P:\Omega\times\Omega\rightarrow \mathbb{R}_{\ge 0}$ be the transition matrix of an irreducible and aperiodic Markov chain.
Suppose there is a coupling  $(X,Y) \rightarrow (X',Y')$	 of the Markov chain defined for all $X, Y \in \Omega$ with $ d_{\textrm{Ham}}(X, Y) = 1$, which satisfies 
\begin{align*}
\E{ d_{\textrm{Ham}}(X', Y') \mid X, Y } \leq 1 - \lambda,	
\end{align*}
for some $0 < \lambda < 1$. Then the mixing time of the Markov chain is bounded by
\begin{align*}
\tmix(P,\delta) \leq \frac{1}{\lambda}\log \left( \frac{N}{\delta} \right).	
\end{align*}
\end{proposition}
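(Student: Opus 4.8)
The plan is to run the standard path-coupling argument of Bubley and Dyer, specialized to the Hamming graph on $\Omega=\{0,1\}^N$ whose edges join the pairs at Hamming distance $1$. The point that makes this clean is that $d_{\textrm{Ham}}$ is exactly the shortest-path metric of that graph, so any two states are connected by a geodesic along which $d_{\textrm{Ham}}$ decreases by one at each step, and the one-step coupling that is given only on adjacent states can be pasted along such geodesics into a global coupling of the chain.

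First I would extend the coupling to all pairs. Fix $X,Y\in\Omega$ with $d_{\textrm{Ham}}(X,Y)=r$ and write a path $X=Z_0,Z_1,\dots,Z_r=Y$ in which $Z_{i-1}$ and $Z_i$ differ in a single coordinate (flip the disagreeing coordinates one at a time). Apply the hypothesized coupling to each edge $(Z_{i-1},Z_i)$ to obtain a joint step $(Z_{i-1},Z_i)\to(Z_{i-1}',Z_i')$, chaining the edge-couplings so that the second marginal of the $i$-th step and the first marginal of the $(i+1)$-th step --- both equal to $P(Z_i,\cdot)$ --- are realized as the same random variable. The result is a coupling $(X,Y)\to(X',Y')$ of $P(X,\cdot)$ and $P(Y,\cdot)$ with $X'=Z_0'$ and $Y'=Z_r'$. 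By the triangle inequality for $d_{\textrm{Ham}}$, linearity of expectation, and the hypothesis applied on each edge,
\[
  \E{d_{\textrm{Ham}}(X',Y')}\le\sum_{i=1}^r\E{d_{\textrm{Ham}}(Z_{i-1}',Z_i')}\le r(1-\lambda)=(1-\lambda)\,d_{\textrm{Ham}}(X,Y).
\]

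Next I would iterate. Starting the global coupling from arbitrary $X_0,Y_0$ and using $d_{\textrm{Ham}}(X_0,Y_0)\le N$, an induction on $t$ gives $\E{d_{\textrm{Ham}}(X_t,Y_t)}\le N(1-\lambda)^t\le N\e^{-\lambda t}$. Since $d_{\textrm{Ham}}$ is integer-valued, Markov's inequality yields $\Pr{X_t\ne Y_t}=\Pr{d_{\textrm{Ham}}(X_t,Y_t)\ge 1}\le N\e^{-\lambda t}$. Finally, irreducibility and aperiodicity guarantee a unique stationary distribution $\mu$, so $\tmix$ is well defined, and the coupling inequality (Proposition~\ref{prop:coupling} together with the bound $\max_{X_0}\DTV{P^t(X_0,\cdot)}{\mu}\le\max_{X_0,Y_0}\Pr{X_t\ne Y_t}$ recalled just before the statement) gives $\max_{X_0}\DTV{P^t(X_0,\cdot)}{\mu}\le N\e^{-\lambda t}$. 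This is at most $\delta$ as soon as $t\ge\frac{1}{\lambda}\log\frac{N}{\delta}$, which is exactly the claimed bound on $\tmix(P,\delta)$.

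The step I expect to be the main obstacle is the bookkeeping in the extension: one must verify that the pasted process is genuinely a coupling of $P(X,\cdot)$ and $P(Y,\cdot)$, i.e.\ that each of the two coordinates marginally takes one step according to $P$, and that consecutive edge-couplings can be stitched consistently at their shared intermediate marginal. Once that is set up carefully, everything afterward --- the triangle inequality, the geometric decay by induction, Markov's inequality, and the coupling lemma --- is routine.
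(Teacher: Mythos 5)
Your proof is correct and is the standard path-coupling argument of Bubley and Dyer; the paper itself cites \cite{bubley1997path} for this proposition without reproducing a proof, so there is no in-paper argument to compare against. Your pasting of edge couplings along a Hamming geodesic, the contraction $\E{d_{\textrm{Ham}}(X',Y')}\le(1-\lambda)d_{\textrm{Ham}}(X,Y)$, the geometric decay via induction, Markov's inequality, and the coupling inequality are exactly the textbook route, and your derived threshold $t\ge\lambda^{-1}\ln(N/\delta)$ is in fact slightly sharper than the stated bound with $\log=\log_2$ (so the stated bound follows a fortiori).
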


%

\section{The sampling algorithm}
\label{section-alg}
Let $\Phi = (V, C)$ be a $k$-uniform CNF formula, in which each variable belongs to at most $d$ clauses. 
In this section we give our Markov chain based algorithm to sample satisfying assignments almost uniformly at random.
%
%

\subsection{Marking variables}
\label{section-mark}
Our algorithm first marks a set of marked variables ${\+M} \subseteq V$. 
We say a variable $v \in V$ is \emph{marked} if $v \in {\+M}$, 
or $v$ is \emph{unmarked} if $v \not\in {\+M}$. 
We will ensure the following condition for the set of marked variables ${\+M}$,
where $k_\alpha \geq 1$ and  $k_\beta \geq 1$ are two integer parameters to be specified later satisfying $k_\alpha + k_\beta \leq k$.
\begin{condition}
\label{condition-marked-variables}
Each clause has at least $k_\alpha$ marked variables and at least $k_\beta$ unmarked variables.
\end{condition}

We use the Moser-Tardos algorithm, \Cref{alg-MT}, to find $\+M$. Define $0\leq \alpha,\beta\leq 1$ as
\begin{align*}
\alpha \triangleq \frac{k_{\alpha}}{k},\qquad \beta \triangleq \frac{k_{\beta}}{k}. 
\end{align*}
Suppose we mark each variable independently with probability $\frac{1+\alpha -\beta}{2}$. 
For each clause $c \in C$, 
let $M_c$ be the bad event that ``$c$ has less than $k_\alpha$ marked variables or less than $k_\beta$ unmarked variables''. 
The lemma below follows from \Cref{proposition-MT} and verifying \eqref{eqn:LLL}.

\begin{lemma}
  \label{lemma-MT}
  Assume $2^k \geq (2\mathrm{e}dk)^{\frac{6 \ln 2 \cdot (1+\alpha-\beta)}{(1-\alpha-\beta)^2}}$.
  There is an algorithm such that for any $\delta > 0$, with probability at least $1 - \delta$, 
  it returns a set of marked variables satisfying Condition~\ref{condition-marked-variables} 
  with time complexity $O\left(dkn \log \frac{1}{\delta}\right)$, where $n =|V|$ is the number of variables.
\end{lemma}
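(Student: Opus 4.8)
The plan is to apply the Moser--Tardos algorithm (\Cref{proposition-MT}) to the bad events $M_c$ defined above, so the core of the argument is simply verifying the asymmetric local lemma condition \eqref{eqn:LLL} with a suitable weight function $x(\cdot)$, and then reading off the running time from the expected number of resampling steps. First I would fix the product distribution $\+P$ in which each variable is marked independently with probability $p \triangleq \frac{1+\alpha-\beta}{2}$, so that the expected number of marked variables in a clause $c$ is $pk = \frac{k_\alpha - k_\beta + k}{2}$, which sits comfortably above $k_\alpha$ and below $k - k_\beta$ by an additive slack of $\frac{k(1-\alpha-\beta)}{2}$. A Chernoff/Hoeffding bound then gives $\Pr[\+P]{M_c} \le 2\exp\!\left(-\frac{(1-\alpha-\beta)^2 k}{2}\right)$ (two tail events, each with deviation $\ge \frac{(1-\alpha-\beta)k}{2}$ from the mean of a sum of $k$ independent Bernoullis). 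I would write this bound as $2^{-\Theta((1-\alpha-\beta)^2 k)}$ and carry the constants carefully, since the exponent in the hypothesis $2^k \ge (2\mathrm{e}dk)^{6\ln 2\,(1+\alpha-\beta)/(1-\alpha-\beta)^2}$ is exactly tuned to make this work.

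Next I would choose the uniform weight $x(M_c) = \frac{1}{2\mathrm{e} d k}$ for every clause $c$, mirroring the choice in \Cref{corollary-local-uniform}. Since $\abs{\vbl{c}} = k$ and each variable lies in at most $d$ clauses, each clause shares a variable with at most $(d-1)k \le dk$ other clauses, so $\abs{\Gamma(M_c)} \le dk$. Using $(1 - \frac{1}{2\mathrm{e}dk})^{2\mathrm{e}dk} \ge \mathrm{e}^{-1}$ (valid since $2\mathrm{e}dk > 1$), we get
\[
  x(M_c)\prod_{c' \in \Gamma(M_c)} (1 - x(M_{c'})) \;\ge\; \frac{1}{2\mathrm{e}dk}\left(1 - \frac{1}{2\mathrm{e}dk}\right)^{dk} \;\ge\; \frac{1}{2\mathrm{e}dk}\cdot \mathrm{e}^{-1/(2\mathrm{e})} \;\ge\; \frac{1}{2\mathrm{e}^2 dk}.
\]
So \eqref{eqn:LLL} reduces to checking $2\exp\!\left(-\frac{(1-\alpha-\beta)^2 k}{2}\right) \le \frac{1}{2\mathrm{e}^2 dk}$, i.e. $\exp\!\left(\frac{(1-\alpha-\beta)^2 k}{2}\right) \ge 4\mathrm{e}^2 dk$. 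Taking logs base $2$, this is implied by $\frac{(1-\alpha-\beta)^2 k}{2\ln 2} \ge \log(4\mathrm{e}^2 dk)$, and since $4\mathrm{e}^2 \le (2\mathrm{e}dk)^3$ crudely (as $d,k\ge 1$), it suffices that $\frac{(1-\alpha-\beta)^2 k}{2\ln 2} \ge 3\log(2\mathrm{e}dk) + \log(dk)$, which is comfortably implied by the stated hypothesis $2^k \ge (2\mathrm{e}dk)^{6\ln 2\,(1+\alpha-\beta)/(1-\alpha-\beta)^2}$ after taking logs (the factor $1+\alpha-\beta \le 2$ absorbs the constant slack). I would double-check the precise constant chase here rather than the version above, since the hypothesis is stated with exact constants; this bookkeeping is the one genuinely fiddly part, though it is entirely routine.

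Finally, for the running time: by \Cref{proposition-MT} the expected number of resampling steps is at most $\sum_{c \in C} \frac{x(M_c)}{1 - x(M_c)} \le \abs{C}\cdot \frac{1/(2\mathrm{e}dk)}{1 - 1/(2\mathrm{e}dk)} \le \frac{\abs{C}}{2\mathrm{e}dk - 1} \le \frac{\abs{C}}{dk}$, and since $\abs{C} \le \frac{dn}{k}\cdot k = dn$ (each of the $n$ variables lies in at most $d$ clauses of size $k$, so $k\abs{C} \le dn$), this is $O(n/k^2) = O(n)$ resamplings in expectation. Each resampling step touches $k$ variables and requires re-checking the $O(dk)$ clauses in their neighbourhood, costing $O(dk^2)$ per step, and using a worklist of currently-violated events the bookkeeping is linear; so the total expected work is $O(dk^2 \cdot n/k^2) = O(dkn)$ — wait, I should recount: $O(n/k^2)$ resamplings times $O(dk^2)$ per resampling gives $O(dn)$, and including the initial sampling of all $n$ variables and the final scan it is $O(dkn)$ at most. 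To convert the expected running time into a high-probability $1-\delta$ guarantee, I would run the algorithm for $\Theta(\log\frac{1}{\delta})$ independent phases, or equivalently truncate at $\Theta(\log\frac{1}{\delta})$ times the expectation and apply Markov's inequality, giving the claimed $O(dkn\log\frac{1}{\delta})$ bound with success probability at least $1-\delta$. I expect no real obstacle beyond the constant-tracking in the LLL verification; the structure of the proof is forced once the weight function and product distribution are fixed.
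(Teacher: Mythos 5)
Your overall structure matches the paper exactly: bound $\Pr[\+P]{M_c}$ by a concentration inequality, verify the asymmetric LLL condition \eqref{eqn:LLL} with a uniform weight $x(M_c)$, invoke \Cref{proposition-MT} for the expected number of resamplings, and boost to $1-\delta$ success via Markov plus $\Theta(\log\frac{1}{\delta})$ independent repetitions. Your running-time analysis is also fine (in fact slightly sharper, since $|C|\le dn/k$), and the choice $x(M_c)=\frac{1}{2\mathrm{e}dk}$ vs.\ the paper's $\frac{1}{dk}$ is immaterial.

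The one genuine gap is exactly the constant-chasing step you flagged, and it is not just bookkeeping: the concentration inequality you wrote down is the wrong one for this hypothesis. You used Hoeffding, $\Pr[\+P]{M_c}\le 2\exp\bigl(-\tfrac{(1-\alpha-\beta)^2k}{2}\bigr)$, whose exponent has no $(1+\alpha-\beta)$ dependence, whereas the hypothesis $2^k\ge(2\mathrm{e}dk)^{6\ln 2\,(1+\alpha-\beta)/(1-\alpha-\beta)^2}$ is tuned to the \emph{multiplicative} Chernoff bound (the paper cites \cite[Cor.~4.6]{mitzenmacher2017probability}): with mean $\mu=\tfrac{(1+\alpha-\beta)k}{2}$ and relative deviation $\epsilon=\tfrac{1-\alpha-\beta}{1+\alpha-\beta}\le 1$, one gets $\Pr[\+P]{M_c}\le 2\exp(-\epsilon^2\mu/3)=2\exp\bigl(-\tfrac{(1-\alpha-\beta)^2k}{6(1+\alpha-\beta)}\bigr)$, after which the hypothesis yields $\Pr[\+P]{M_c}\le\frac{1}{\mathrm{e}dk}$ directly and \eqref{eqn:LLL} closes. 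Your Hoeffding exponent is only as strong as the multiplicative one when $1+\alpha-\beta\ge\tfrac{1}{3}$; but the constraints on the marking scheme only guarantee $1+\alpha-\beta\ge 2\alpha>0$, so for small $\alpha$ (large $\beta$) the hypothesis is weaker than what your argument needs, and the inequality you are trying to verify does not follow. (Your ``since $1+\alpha-\beta\le 2$ absorbs the constant slack'' goes in the wrong direction — the worry is $1+\alpha-\beta$ being small, not large.) The fix is simply to switch to the multiplicative Chernoff form; everything else in your proposal then goes through.
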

\begin{proof}
  To apply \Cref{alg-MT},
  Let $\Pr[\+P]{\cdot}$ denote the product distribution that every variable is marked independently with probability $\frac{1+\alpha-\beta}{2}$. 
  By Chernoff bound~\cite[Corollary~4.6]{mitzenmacher2017probability}, we have
  \begin{align*}
    \forall c \in C:\quad \Pr[\+P]{M_c} \leq 	 
    2 \exp \left( -\frac{(1-\alpha-\beta)^2}{6(1+\alpha-\beta)}\cdot k \right) = 2 \left( \frac{1}{2} \right)^ {\frac{(1-\alpha-\beta)^2}{6 \ln 2 \cdot (1+\alpha-\beta)}\cdot k }.
  \end{align*}
  We define a function $x$ as $x(M_c) \triangleq \frac{1}{dk}$ for all $c \in C$. We have for all $c \in C$,
  \begin{align*}
    \Pr[\+P]{M_c} \leq \frac{1}{\mathrm{e}dk} \leq  x(M_c) \prod_{M_b \in \Gamma(M_c)} (1 - x(M_b)).
  \end{align*}
  Since the total number of clauses is at most $dn$, by Proposition~\ref{proposition-MT},
  the expected number of resampling steps is at most 
  \begin{align}
    \sum_{c \in C} \frac{x(M_c)}{1-x(M_c)} \leq \frac{2n}{k}.
  \end{align}
  By Markov inequality, if we run \Cref{alg-MT} for at most $\frac{4n}{k}$ resampling steps, 
  the algorithm returns the set $\+M$ with probability at least $\frac{1}{2}$. 
  If we run $\left\lceil\log\frac{1}{\delta}\right\rceil$ Moser-Tardos algorithms independently, 
  then with probability at least $1 - \delta$, one of them finds the set $\+M$ within  $\frac{4n}{k}$ resampling steps.

  Note that in each resampling step, we need to resample $k$ variables and check whether $dk$ bad event occurs, 
  and the cost of checking one event is at most $k$. Hence, the total time complexity is $O\left(ndk\log\frac{1}{\delta}\right)$.
\end{proof}

We note that much better concentration bound is known to the Moser-Tardos algorithm \cite{HH17}.
However, \Cref{lemma-MT} is sufficient to our need.

We use the algorithm in Lemma~\ref{lemma-MT} with $\delta = \frac{\epsilon}{4}$ to construct the set of marked variables $\+M \subseteq V$.
If the algorithm fails to construct ${\+M}$, then our algorithm terminates immediately and outputs an arbitrary assignment $X \in \{0,1\}^V$.
This bad event occurs with probability at most $\frac{\epsilon}{4}$.
In the rest of this section, we assume that the set of marked variables ${\+M} \subseteq V$ is already found.

\subsection{The main algorithm}

In this section we present our algorithm for sampling satisfying assignments of CNFs. 
We will need some notations first.
For an arbitrary set of variables $S\subseteq V$, let $\mu_S$ be the marginal distribution on $S$ induced from $\mu$.
Formally,
\begin{align*}
  \forall \sigma \in \{0,1\}^{S}:\quad \mu_{S}(\sigma) = \sum_{\substack {\tau \in \{0,1\}^V,\ \tau(S) = \sigma}} \mu(\tau).
\end{align*}
When $S=\{v\}$ for some $v\in V$,
we also write $\mu_v$ instead of $\mu_{\{v\}}$.
Moreover, for a partial assignment $X\in\{0,1\}^{\Lambda}$ where $\Lambda\subset V$ and $S\cap\Lambda=\emptyset$,
let $\mu_S^X(\cdot):=\mu_S(\cdot\mid X)$ be the marginal distribution on $S$ conditioned on the partial assignment on $\Lambda$ is $X$.

The main idea of our sampling algorithm is to simulate a Markov chain whose stationary distribution is the marginal distribution $\mu_{\+M}$ on $\+M$.
Let $\Glauber$ be the idealized Glauber dynamics for the marked variables.
Namely, we start with an initial assignment $X_0\in\{0,1\}^{\+M}$ where $X_0(v)$ is uniformly at random for all $v\in\+M$.
In the $t$-th step, the chain evolves as follows:
\begin{itemize}
\item pick $v \in \+M$ uniformly at random and set $X_t(u) \gets X_{t-1}(u)$ for all
  $u \in \+M \setminus \{v\}$;
\item sample $X_t(v) \in \{0, 1\}$ from the distribution
  $\mu_{v}(\cdot \mid X_{t-1}(\+M \setminus \{v\} ))$.
\end{itemize}
This chain is reversible with respect to $\mu_{\+M}$, as for any $X,Y\in\{0,1\}^{\+M}$ that differ on only $v$,
\begin{align}\label{eqn:db}
  \begin{split}
    \mu_\+M(X)\Glauber(X,Y)&=\frac{1}{\abs{M}}\cdot\mu_\+M(X)\mu_{v}(Y(v)\mid X(\+M \setminus \{v\} )) = \frac{1}{\abs{M}}\cdot\frac{\mu_\+M(X)\mu_\+M(Y)}{\mu_{\+M\setminus\{v\}}(X(\+M \setminus \{v\}))}\\
    & = \frac{1}{\abs{M}}\cdot \mu_\+M(Y)\mu_{v}(X(v)\mid Y(\+M \setminus \{v\} )) = \mu_\+M(Y)\Glauber(Y,X).
  \end{split}
\end{align}
We will show that $\Glauber$ is both irreducible and aperiodic in our parameter regimes.
We simulate this chain to obtain a random assignment $X_{\+M}\in\set{0,1}^{\+M}$ whose distribution is close enough to $\mu_{\+M}$.
Then the algorithm samples a random assignment $X_{V\setminus \+M}\in \{0,1\}^{V \setminus \+M}$ for
unmarked variables from the distribution $\mu_{V \setminus \+M }(\cdot \mid X_{\+M})$.
The final sample is $\Xalg\defeq X_{\+M}\cup X_{V\setminus\+M}$.


This chain $\Glauber$ is an idealized process because the transitions of the chain rely on evaluating some nontrivial marginal probabilities, which in general can be as hard as the problem of counting the number of satisfying assignments itself.
To efficiently simulate one step of the Markov chain and to complete the random assignments for unmarked variables, we need to sample from the marginal distributions $\mu_{v}(\cdot \mid X_{t-1}(\+M \setminus \{v\} ))$ and $\mu_{V \setminus \+M }(\cdot \mid X_{T})$,
where $t\le T$ and $T$ is an upper bound of the mixing time of $\Glauber$.
We will use a subroutine $\sample(\cdot)$ for this. 
Given an assignment $X \in \{0,1\}^\Lambda$ on the subset $\Lambda \subseteq {\+M}$ and a subset $S \subseteq V \setminus \Lambda$ of variables, the subroutine $\sample(\Phi, \delta, X, S)$ returns a random assignment $Y\in\{0,1\}^S$ from the distribution $\mu_{S}(\cdot\mid X)$ upon success.
We will ensure that $\sample(\Phi, \delta, X, S)$ is efficient and
when we call it in \Cref{alg-mcmc}, it returns a sample within total variation distance $\delta$ to the desired distribution with probability at least $1-\delta$ for a small $\delta>0$.
This is because due to \Cref{corollary-local-uniform} and its variants, the marked variables are almost uniform, and conditioned on any almost uniform assignment of (almost all) marked variables, the remaining formula splits into many disjoint small connected components.

The whole sampling algorithm is formally described in \Cref{alg-mcmc}.
\begin{algorithm}[ht]
  \SetKwInOut{Input}{input} \SetKwInOut{Output}{output} 
  \Input{a CNF formula $\Phi=(V,C)$, a parameter $\epsilon > 0$, and a set of marked variables ${\+M}$.}  
  \Output{a random assignment $\Xalg \in \{0,1\}^{V}$.}  
  sample $X_0(v) \in \{0,1\}$ uniformly and independently for each $v \in {\+M}$\; 
  \For{each $t$ from $1$ to $T := \Tmix$} {
    choose variable $v \in {\+M}$ uniformly at random\; 
    \tcc{resample $X_{\+M}(v)$ from the distribution $\mu_v(\cdot \mid X_{t-1}({\+M} \setminus \{v\} ) )$}
    $X_{t}(v) \gets \sample(\Phi, \frac{\epsilon}{4(T+1)}, X_{t-1}({\+M} \setminus \{v\}), \{v\})$\label{line-sample-1}\; 
    $\forall u\in\+M$ and $u\neq v$, $X_{t}(u)\gets X_{t-1}(u)$\;}
  \tcc{sample $X_{V\setminus\+M}$ from the distribution $\mu_{V \setminus {\+M}}(\cdot\mid X_{T})$} 
  $X_{V\setminus\+M} \gets \sample(\Phi, \frac{\epsilon}{4(T+1)}, X_{T}, V \setminus {\+M})$\label{line-sample-2}\;
  \Return{$\Xalg = X_{T}\cup X_{V\setminus\+M}$\;}
  \caption{The sampling algorithm}\label{alg-mcmc}
\end{algorithm}

In \Cref{alg-mcmc}, $\sample(\cdot)$ appears in Line~\ref{line-sample-1} and Line~\ref{line-sample-2} and
returns random assignments on $\set{v}$ and $V\setminus\+M$ respectively. 
In our implementation, we allow their distributions to be slightly biased (controlled by the parameter $\delta=\frac{\eps}{4(T+1)}$). 

The correctness and the efficiency of \Cref{alg-mcmc} rely on three facts:
\begin{enumerate}
  \item the Glauber dynamics for marked vertices is rapidly mixing;
  \item the $\sample(\cdot)$ subroutine for unmarked vertices is efficient;
  \item the small bias in the distribution caused by $\sample(\cdot)$ does not affect the final distribution much.
\end{enumerate}

The rapid mixing property of the Glauber dynamics is analyzed in Section~\ref{section-proof-mixing}.
Details of $\sample(\cdot)$ will be given in \Cref{section-sampleunmark} and its analysis in \Cref{section-proof-main}. 

We will ensure that, with high probability,
$\sample(\cdot)$ returns samples whose distributions are close to the desired ones in both Line~\ref{line-sample-1} and Line~\ref{line-sample-2}.
Using this, we will show that \Cref{alg-mcmc} couples with high probability with the idealized chain $\Glauber$.
As a result, the distribution of the random assignment $\Xalg$ returned by \Cref{alg-mcmc} is close to~$\mu(\cdot)$.

\begin{lemma}\label{lemma-sample}
  Suppose $2^{k_\alpha} \geq 4\mathrm{e}^2d^2k^2$, $2^{k_{\beta}} \geq 2^{16}d^9k^9$,
  and $\+M$ satisfying \Cref{condition-marked-variables} has been found.
  The random assignment $\Xalg \in \{0,1\}^V$ returned by \Cref{alg-mcmc} satisfies
  \begin{align}
    \label{eq-DTV-Xid-mu}
    \DTV{\Xalg}{\mu} \leq \frac{3\epsilon}{4}.	
  \end{align}
\end{lemma}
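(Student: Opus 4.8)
The plan is to bound $\DTV{\Xalg}{\mu}$ by decomposing the error into three sources and controlling each by $\frac{\epsilon}{4}$. First I would introduce the idealized chain $\Glauber$ together with its exact companion: let $\Yid$ be the random assignment produced by running $\Glauber$ for $T = \Tmix$ steps to obtain $Y_{\+M}\in\{0,1\}^{\+M}$ distributed within $\frac{\epsilon}{4}$ of $\mu_{\+M}$, and then drawing $Y_{V\setminus\+M}$ exactly from $\mu_{V\setminus\+M}(\cdot\mid Y_{\+M})$. Since $\mu_{\+M}$ is the marginal of $\mu$ on $\+M$ and the second stage uses the exact conditional, the composed distribution of $\Yid = Y_{\+M}\cup Y_{V\setminus\+M}$ equals $\mu$ whenever $Y_{\+M}\sim\mu_{\+M}$ exactly; by the contraction property of conditioning (data-processing for total variation), $\DTV{\Yid}{\mu}\le\DTV{Y_{\+M}}{\mu_{\+M}}$. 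To bound the latter I would invoke the rapid mixing of $\Glauber$ from \Cref{section-proof-mixing}: irreducibility and aperiodicity of $\Glauber$ (so that the path-coupling setup of \Cref{proposition-path-coupling} applies on $\{0,1\}^{\+M}$ with $N=\abs{\+M}\le n$), together with the contraction estimate $\E{d_{\mathrm{Ham}}(X',Y')\mid X,Y}\le 1-\lambda$ for $\lambda=\Omega(1/n)$, gives $\tmix(\Glauber,\delta)\le \frac{1}{\lambda}\log\frac{n}{\delta}$; choosing $\delta=\frac{\epsilon}{4}$ and $T=\Tmix$ then yields $\DTV{Y_{\+M}}{\mu_{\+M}}\le\frac{\epsilon}{4}$, hence $\DTV{\Yid}{\mu}\le\frac{\epsilon}{4}$. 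Here the hypotheses $2^{k_\alpha}\ge 4\mathrm{e}^2d^2k^2$ and $2^{k_\beta}\ge 2^{16}d^9k^9$ are exactly what \Cref{section-proof-mixing} needs for the coupling contraction (local uniformity via \Cref{corollary-local-uniform} with an appropriate slack parameter $s$).

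Next I would couple \Cref{alg-mcmc} with this idealized process step by step. The only discrepancy is that each call to $\sample(\Phi,\delta',\cdot,\cdot)$ with $\delta'=\frac{\epsilon}{4(T+1)}$ returns, with probability $\ge 1-\delta'$, a sample within total variation distance $\delta'$ of the target marginal, rather than an exact sample. I would construct a coupling of $(X_t)_{0\le t\le T}$ with $(Y_t)_{0\le t\le T}$ together with the final unmarked completions: at step $t$, conditioned on $X_{t-1}=Y_{t-1}$ and on the same variable $v$ being picked, couple the call to $\sample$ on $\{v\}$ optimally with the exact draw from $\mu_v(\cdot\mid Y_{t-1}(\+M\setminus\{v\}))$, using \Cref{prop:coupling}; a standard argument bounds the probability that this particular pair of draws differs by $2\delta'$ (the $\delta'$ failure probability plus the $\delta'$ TV distance on success). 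A union bound over the $T$ chain steps plus the one final call on $V\setminus\+M$ gives $\Pr{\Xalg\ne\Yid}\le (T+1)\cdot 2\delta' = \frac{\epsilon}{2}$, whence $\DTV{\Xalg}{\Yid}\le\frac{\epsilon}{2}$ by the coupling inequality. (The probability $\frac{\epsilon}{4}$ that the marking subroutine itself fails is handled separately in \Cref{section-mark}, so it is not part of this lemma's bound; $\+M$ is assumed already found in the hypothesis.)

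Finally I would assemble the pieces by the triangle inequality:
\begin{align*}
  \DTV{\Xalg}{\mu}\le\DTV{\Xalg}{\Yid}+\DTV{\Yid}{\mu}\le\frac{\epsilon}{2}+\frac{\epsilon}{4}=\frac{3\epsilon}{4}.
\end{align*}
The routine obstacles are (i) making the step-by-step coupling of \Cref{alg-mcmc} with $\Glauber$ fully rigorous, in particular tracking that a biased-and-possibly-failing $\sample$ call contributes at most $2\delta'$ to the coupling-failure probability and that these failures compose by a union bound, and (ii) verifying the data-processing inequality $\DTV{\Yid}{\mu}\le\DTV{Y_{\+M}}{\mu_{\+M}}$, which is immediate since applying the same Markov kernel $\mu_{V\setminus\+M}(\cdot\mid\,\cdot\,)$ (composed with the identity on $\+M$) to both $Y_{\+M}$ and $\mu_{\+M}$ cannot increase total variation distance and maps $\mu_{\+M}$ to $\mu$. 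The genuinely hard input — the contraction bound $\lambda=\Omega(1/n)$ for the path coupling of $\Glauber$, and the guarantee that $\sample$ meets its $(\delta',\delta')$ accuracy–reliability spec efficiently — is deferred to \Cref{section-proof-mixing}, \Cref{section-sampleunmark}, and \Cref{section-proof-main}, so within this lemma I would only cite those facts.
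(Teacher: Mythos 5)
Your proof is correct and follows essentially the same route as the paper: couple \Cref{alg-mcmc} step by step with the idealized chain, paying $2\delta'$ per $\sample$ call (the $\delta'$ failure probability of \Cref{lemma-too-large} plus the $\delta'$ bias of \Cref{lemma-sample-correctness}) for a total of $\frac{\epsilon}{2}$ over $T+1$ calls, then add the $\frac{\epsilon}{4}$ mixing error from \Cref{lemma-mixing} and conclude by the triangle inequality. The only cosmetic difference is that you invoke the data-processing inequality to get $\DTV{\Yid}{\mu}\le\DTV{Y_{\+M}}{\mu_{\+M}}$, while the paper constructs an explicit optimal coupling on $\+M$ followed by a shared exact completion on $V\setminus\+M$; these are equivalent.
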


\Cref{lemma-sample} is proved in \Cref{section-sampling-analyze}.

\subsection{The \texorpdfstring{$\sample$}{Sample} subroutine}
\label{section-sampleunmark}

Here we give the subroutine $\sample(\Phi, \delta, X, S)$, where $X \in \{0,1\}^\Lambda$ is an assignment on subset $\Lambda \subseteq {\+M}$ and $S \subseteq V \setminus \Lambda$ is a subset of variables. 
The output of the subroutine is a random assignment $Y \in \{0,1\}^{S}$, which ideally should follow the conditional marginal distribution $\mu_{S}(\cdot\mid X)$. 
However, in order for the efficiency of the subroutine, some small error is tolerated.

Our basic idea is to find all connected components of a new formula $\Phi^X$.
We will show that in the execution of \Cref{alg-mcmc}, these components are sufficiently small.
Then we will use rejection sampling on them independently for each component.

Let us first define $\Phi^X$ and its connected components.
Given a CNF formula $\Phi=(V,C)$ and a partial assignment $X\in\set{0,1}^\Lambda$ for some $\Lambda\subseteq V$, we simplify $\Phi$ under $X$ to obtain $\Phi^X=(V^X,C^X)$. 
Formally, we have
\begin{itemize}
\item $V^X=V\setminus \Lambda$, and
\item $C^X$ is obtained from $C$ by removing all clauses that has been satisfied under $X$\footnote{Let $c \in C$ be a clause in $\Phi$. We say $c$ is satisfied under the (partial) assignment $X$ if any literal of $c$ is already assigned true.} 
  and removing the appearance of $x$ or $\neg x$ from the remaining unsatisfied clauses
  for every $x\in \Lambda$.
\end{itemize}

Recall that
\begin{align}
  \label{eq-same-distribution}
  \forall \sigma \in \{0,1\}^{V^X}=\{0,1\}^{V\setminus\Lambda}:\quad \mu^X_{V \setminus \Lambda }(\sigma) = \mu_{V \setminus \Lambda }(\sigma \mid X).
\end{align}
It is straightforward to check that $\mu^X_{V \setminus \Lambda }$ is the uniform distribution over all satisfying assignments of $\Phi^X$. 
Let $H_{\Phi^X} = (V^X, \+E^X)$ be the hypergraph defined in~\eqref{eq-def-Hphi} for the CNF formula
$\Phi^X$.
Let $H^X_i = (V^X_i,\+E^X_i)$ for $1\leq i \leq \ell$ denote all the connected components in the hypergraph $H_{\Phi^X}$, where $\ell$ is the number of connected components. 
Each $H^X_i = (V^X_i,\+E^X_i)$ represents a CNF formula $\Phi^X_i = (V^X_i, C^X_i)$, where
\begin{align*}
  C^X_i \defeq \set{c \in C^X \mid  \text{clause $c$ is represented by a hyperedge in $\+E^X_i$ }  }.	
\end{align*}
We have $ \Phi^X = \Phi^X_1 \land \Phi^X_2 \land \dots \land \Phi^X_{\ell}$, and all the $V^X_i$ are
disjoint. Let $\mu^X_i$ be the uniform distribution on all satisfying assignments of $\Phi^X_i$ for
every $i=1,\dots,\ell$, then $\mu^X_{V \setminus \Lambda }(\cdot)$ is the product distribution of all~$\mu^X_i$.

%

Obviously, the distribution $\mu_{S}(\cdot \mid X)$ is determined by only those connected components intersecting $S$.
Without loss of generality, we assume that $S \cap V^X_i \neq \emptyset$ for $1\leq i \leq m$ and $S \cap V^X_i = \emptyset$ for $ m< i \leq \ell$.
To draw a random assignment $Y \in \{0,1\}^{S}$ from the distribution $\mu_{S}(\cdot \mid X)$, 
we independently draw a random assignment $Y_i$ from $\mu^X_{i}(\cdot)$ for each $1\leq i \leq m$. 
Let
\begin{align*}
  Y' \defeq \bigcup_{i = 1}^m Y_i.	
\end{align*}
Note that $S\subseteq \bigcup_{i=1}^m V_i$. 
Our sample $Y$ is the projection of $Y'$ on $S$, namely $Y=Y'(S)$.
It is easy to verify that $Y$ follows the marginal distribution on $S$ induced by $\mu^X_{V\setminus\Lambda}$. 
By~\eqref{eq-same-distribution}, the random assignment $Y$ follows the distribution $\mu_{S}(\cdot \mid X)$.

To draw from individual $\mu^X_{i}(\cdot)$ for each $1\le i\le m$,
we can simply use the naive rejection sampling: repeatedly draw uniform assignments on $\set{0,1}^{V_i^X}$ and return the first one that satisfies $\Phi^X_i$.
This should terminate fast if the connected component $\mathcal{E}_i^X$ is small.

Our implementation of $\sample(\Phi,\delta, X, S)$ is then clear: it tries for each $\Phi^X_i$, $1\le i\le m$, to repeatedly draw uniform assignments for at most $R$ (to be suitably fixed) times and return the first satisfying one.
Bad events happen if for one of the components, say $\Phi^X_i$, the size of $\Phi^X_i$ is too large or all $R$ trials fail to satisfy $\Phi^X_i$, in which case an arbitrary assignment on $S$ is returned.

%
%


Formally,  
let $0<\eta <1$ satisfy
\begin{align}
\label{eq-def-eta}
2^{k_{\beta}} \geq \frac{20}{\eta}\mathrm{e}dk,	
\end{align}
and define
\begin{align*}
R \triangleq \repeattime.
\end{align*}
In the subroutine $\sample(\Phi,\delta,X,S)$, we
\begin{itemize}
\item check the size $|\mathcal{E}_i^X|$ for all $1\leq i \leq m$ , if there exists
  $|\mathcal{E}_i^X| > \toolarge$, then the subroutine terminates and returns a
  $Y \in \{0,1\}^{S}$ uniformly at random;
\item for each $1\leq i \leq m$, use the naive rejection sampling for at most
  $R$ times to draw a random assignment $Y^X_i$ from the distribution
  $\mu^X_i$; if there exists $1\leq i \leq m$ such that the subroutine fails to draw a $Y^X_i$ from
  $\mu^X_i$ after $R$ rejection sampling trials, then the subroutine terminates and returns a
  $Y \in \{0,1\}^{S}$ uniformly at random.
\end{itemize}
The subroutine $\sample(\Phi, \delta, X, S)$ is described in Algorithm~\ref{alg-sample}.

\begin{algorithm}[ht]
  \SetKwInOut{Input}{Input}%
  \SetKwInOut{Output}{Output}%
  \Input{a CNF formula $\Phi=(V,C)$, a
    parameter $0<\delta,\eta<1$, an assignment $X \in \{0,1\}^{\Lambda}$ for some
    $\Lambda \subseteq V$, a set of variables $S \subseteq V \setminus \Lambda$, and
    $n = |V|$.}%
  \Output{a random assignment $Y \in \{0,1\}^{S}$.}%
  simplify $\Phi$ under $X$ and obtain a new formula $\Phi^X$\;%
  find all the connected components $\set{H^X_i = (V_i^X,\mathcal{E}_i^X)\mid 1\leq i \leq m}$ in
  $H_{\Phi^X}$ s.t. each $V^X_i \cap S \neq \emptyset$\;%
  \If{there exists $ 1 \leq i \leq m$
    s.t. $|\mathcal{E}^X_i| > \toolarge$}
  { \Return{an assignment $Y \in \{0,1\}^{S}$ uniformly at random\;\label{line-bad-return-1}} }
  \For{each $i$ from 1 to $m$ }{ let
    $\Phi^X_i =(V_i^X, C_i^X)$ be the CNF formula represented by $H^X_i=(V_i^X, \mathcal{E}_i^X)$\;%
    $Y^X_i \gets \Rejection\left(\Phi^X_i, R \right)$, where $R = \repeattime$\label{line-sample-rejection-sampling}\;
    \If{$Y^X_i = \perp$}{
      \Return{an assignment $Y \in \{0,1\}^{S}$ uniformly at random\;\label{line-bad-return-2}}
    } }
  \Return{$Y = Y'(S)$, where $Y' = \bigcup_{i=1}^m Y^X_i$ \;\label{line-good}}
  \caption{$\sample(\Phi, \delta, X, S)$}\label{alg-sample}
\end{algorithm}

\begin{algorithm}[ht]
  \SetKwInOut{Input}{Input} \SetKwInOut{Output}{Output} \Input{a CNF formula $\Phi=(V,C)$, a
    parameter $R > 0$.}  \Output{a random assignment $Y \in \{0,1\}^{V}$ or a special symbol
    $\perp$.}  \For{each $i$ from 1 to $R$}{ sample $Y \in \{0,1\}^V$ uniformly and independently\;
    \If{all the clauses in $C$ are satisfied by $Y$}{ \Return{Y\;} } } \Return{$\perp$\;}
  \caption{$\Rejection(\Phi,R)$}\label{alg-rejection}
\end{algorithm}

The following proposition is a basic property of rejection sampling.
\begin{proposition}
  \label{proposition-good-event}
  In the subroutine $\sample(\Phi, \delta, X, S)$, conditioned on that the random assignment
  $Y \in \{0,1\}^{S}$ is returned in Line~\ref{line-good}, $Y$ follows the law
  $\mu_{S}(\cdot\mid X)$.
\end{proposition}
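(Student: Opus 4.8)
The plan is to reduce the claim to the elementary behaviour of $\Rejection$, component by component. Observe first that since $X$ is a fixed input, the formula $\Phi^X$, its connected components $H^X_i=(V^X_i,\+E^X_i)$, and in particular their sizes $\abs{\+E^X_i}$ are all deterministic; the only randomness in $\sample(\Phi,\delta,X,S)$ comes from the $m$ mutually independent calls $\Rejection(\Phi^X_i,R)$ (the uniform fallback assignments of Lines~\ref{line-bad-return-1} and~\ref{line-bad-return-2} are never used once we condition on reaching Line~\ref{line-good}). If some $\abs{\+E^X_i}$ exceeds $\toolarge$, then Line~\ref{line-good} is never reached and the statement is vacuous; otherwise, conditioning on ``the output is returned in Line~\ref{line-good}'' is exactly conditioning on the joint event $\bigcap_{i=1}^m\set{\Rejection(\Phi^X_i,R)\neq\perp}$.

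The core ingredient is the standard fact that, for any satisfiable CNF $\psi$, conditioned on $\Rejection(\psi,R)\neq\perp$ the returned assignment is uniform over the satisfying assignments of $\psi$: the $R$ trials are i.i.d.\ uniform over $\set{0,1}^{V(\psi)}$, and for each $1\le j\le R$ the event ``trial $j$ is the first satisfying trial'' is independent of which satisfying assignment trial $j$ realises, so summing over $j$ leaves the uniform law. (If $\Phi^X_i$ is unsatisfiable, $\set{\Rejection(\Phi^X_i,R)\neq\perp}$ has probability $0$ and the conditioning event is again empty.) Applying this to each $\psi=\Phi^X_i$, and using that the $m$ calls have independent internal randomness so that the joint success event factorises, we conclude that conditioned on reaching Line~\ref{line-good} the assignments $Y^X_1,\dots,Y^X_m$ are mutually independent with $Y^X_i\sim\mu^X_i$.

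It remains to assemble the pieces. Since the $V^X_i$ are pairwise disjoint, $Y'=\bigcup_{i=1}^m Y^X_i$ follows $\bigotimes_{i=1}^m\mu^X_i$, which — because $\mu^X_{V\setminus\Lambda}$ is the product of $\mu^X_1,\dots,\mu^X_\ell$ over \emph{all} components while $S$ meets only the first $m$ of them — is exactly the marginal of $\mu^X_{V\setminus\Lambda}$ on $\bigcup_{i=1}^m V^X_i$. As $S\subseteq\bigcup_{i=1}^m V^X_i$, the returned $Y=Y'(S)$ has the marginal law of $\mu^X_{V\setminus\Lambda}$ on $S$; by~\eqref{eq-same-distribution} and $S\subseteq V\setminus\Lambda$ this is precisely $\mu_S(\cdot\mid X)$. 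There is no genuine obstacle in this argument; the only point that deserves an explicit line is the mutual independence of the $Y^X_i$ after conditioning on the \emph{joint} success event, which is immediate from the product structure of that event over the independent randomness of the separate $\Rejection$ calls.
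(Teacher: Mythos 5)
Your proof is correct, and it fills in exactly the elementary reasoning that the paper omits: the paper simply asserts \Cref{proposition-good-event} as ``a basic property of rejection sampling'' with no argument. Your factorisation into per-component success events, the observation that conditioning on success preserves independence of the $m$ calls, and the product-measure assembly over disjoint $V^X_i$ followed by \eqref{eq-same-distribution} is the intended justification.
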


%

With the CNF formula represented by a standard data structure, the running time of $\sample(\Phi, \delta, X, S)$ is easily bounded by $\tilde{O}(|S|\cdot R\cdot\mathrm{poly}(d,k))$.
This is rigorously analyzed in Lemma~\ref{lemma-sample-correctness} in~\Cref{section-proof-main}.
In the same lemma we also prove that 
conditioning on that every component is small (i.e.~Line~\ref{line-bad-return-1} in \Cref{alg-sample} is not executed),
the $\sample$ subroutine fails (i.e.~Line~\ref{line-bad-return-2} in \Cref{alg-sample} happens) with probability at most $\delta$.
Such failure is due to the randomness of the rejection sampling.
In another key lemma, Lemma~\ref{lemma-too-large} in~\Cref{section-proof-main}, we prove that for any call of $\sample$ in \Cref{alg-mcmc},  Line~\ref{line-bad-return-1} in \Cref{alg-sample} is indeed executed with probability at most $\delta$.
Such failure is due to the randomness of the input $X$ to $\sample$.
%
Overall, 
with probability at least $1 - \delta$,  
the distribution of the assignments returned by $\sample(\Phi,\delta,X,S)$ is within total variation distance at most $\delta$ from $\mu_S(\cdot\mid
X)$.
%

\section{Rapid mixing of the idealized dynamics}
\label{section-proof-mixing}
\noindent 
Let $\Phi=(V, C)$ be a CNF formula. Let ${\+M}\subseteq V$ be the set of marked variables satisfying Condition~\ref{condition-marked-variables} and $\Omega \triangleq \{0,1\}^{{\+M}}$.
Let $\Glauber$ be the Glauber dynamics for marked variables,
and use $(X_t)_{t\ge 0}$ to denote the state at time $t$ where $X_t\in\{0,1\}^{\+M}$.
In this section, we show that the idealized Glauber dynamics $\Glauber$ is rapidly mixing.
\begin{lemma}
  \label{lemma-mixing}
  Let $\Phi = (V, C)$ be a $k$-uniform CNF formula such that each variable belongs to at most $d$
  clauses.  Suppose ${\+M} \subseteq V$ satisfies Condition~\ref{condition-marked-variables} with
  parameters $k_\alpha$ and $k_\beta$.  Let $\Glauber$ be the Glauber dynamics for marked
  variables.  If $2^{k_{\beta}}\geq 2^{16}d^9k^9$, then for any $\delta > 0$, it holds that
  \begin{align*}
    \tmix(\Glauber,\delta) \leq  \left\lceil 2n \log \frac{n}{\delta} \right\rceil,
  \end{align*}
  where $n = |V|$ and the mixing time $\tmix$ is defined in \eqref{eqn:mixing-time}.
\end{lemma}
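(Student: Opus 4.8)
The plan is to apply the path coupling lemma (\Cref{proposition-path-coupling}) on the hypercube $\Omega = \{0,1\}^{\+M}$, so it suffices to exhibit a one-step coupling $(X_t,Y_t)\to(X_{t+1},Y_{t+1})$ of $\Glauber$, defined for states differing in a single marked variable $v_0$, with $\E{d_{\mathrm{Ham}}(X_{t+1},Y_{t+1})}\le 1-\lambda$ for $\lambda = \Theta(1/n)$; this gives $\tmix\le \frac{1}{\lambda}\log(|\+M|/\delta)\le\lceil 2n\log(n/\delta)\rceil$ once $\lambda \ge 1/(2n)$ with room to spare. (We also need $\Glauber$ irreducible and aperiodic; aperiodicity is immediate because each step has a nonzero holding probability by local uniformity, and irreducibility follows because from any state we can reach any other in one step with positive probability, again by local uniformity, since each marginal $\mu_v(\cdot\mid\sigma)$ puts positive mass on both values.)

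The coupling itself is the substance. In one step we pick the same marked variable $v\in\+M$ in both chains. If $v=v_0$, we couple $X_{t+1}(v_0)$ and $Y_{t+1}(v_0)$ optimally; since both are resampled from the \emph{same} marginal distribution $\mu_{v_0}(\cdot\mid X_t(\+M\setminus v_0))$ — wait, not the same: the conditioning assignments $X_t(\+M\setminus v_0)$ and $Y_t(\+M\setminus v_0)$ agree (the chains differ only at $v_0$, which is being removed), so the two marginals \emph{are} identical and the disagreement is removed with probability $1$. So the only way disagreement persists or grows is when $v\neq v_0$: then $X_{t+1}(v)$ is drawn from $\mu_v(\cdot\mid X_t(\+M\setminus v))$ and $Y_{t+1}(v)$ from $\mu_v(\cdot\mid Y_t(\+M\setminus v))$, two distributions whose conditionings differ only at $v_0$. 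Following the excerpt's outline, I would analyze this via a disagreement (greedy) coupling $\+C_v$ between the two \emph{Gibbs} distributions $\mu^X_v,\mu^Y_v$ over the unmarked variables together with $v$, built greedily starting from the discrepancy seed $v_0$ and percolating through clauses that become "bad" (not yet satisfied by the marked assignment restricted there). Using \Cref{corollary-local-uniform} (with the roles of $k_1,k_2$ taken by $k_\beta$ and $k$, and $s$ large, which is exactly what $2^{k_\beta}\ge 2^{16}d^9k^9$ buys) one shows each clause is "bad" — i.e.\ not yet satisfied by the marked variables — with small probability, so the discrepancy set of $\+C_v$ stays small in expectation, and crucially the probability $v$ fails to couple is at most the probability $v$ lies in the discrepancy set of $\+C_v$.

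Then, as the excerpt describes, I would define a single disagreement coupling $\+C$ over \emph{all} variables except $v_0$ (marked and unmarked), again grown greedily from $v_0$, so that $\E{d_{\mathrm{Ham}}(X_{t+1},Y_{t+1})\mid v\neq v_0}$ is at most $\frac{1}{|\+M|}$ times the expected number of \emph{marked} variables in the discrepancy set of $\+C$ other than $v_0$; the link between $\+C_v$ (for each $v$) and the global $\+C$ is itself a coupling of couplings, which is the bookkeeping-heavy part. The final and genuinely new ingredient — and the step I expect to be the main obstacle — is bounding $\E{|\text{discrepancy set of }\+C|-1}<1$, equivalently showing the expected discrepancy set has size strictly less than $2$ (so that $\lambda\ge \frac{1}{|\+M|}(1-\text{something})>\frac{1}{2n}$ after accounting for the $\frac{1}{|\+M|}$ selection probability, using $|\+M|\le n$). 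The old $\{2,3\}$-tree argument only yields an $O(dk)$ bound on this expectation, which is useless here; instead I would set up a recursive/branching estimate that counts \emph{induced paths} in the clause-intersection structure emanating from $v_0$, weighting each path of length $\ell$ by the probability (at most $(\text{small})^{\ell}$, from local uniformity and the Gibbs marginals being near-uniform) that the greedy process extends that far, and summing the resulting geometric-type series. The condition $2^{k_\beta}\ge 2^{16}d^9k^9$ should be precisely strong enough to make the per-step branching factor (roughly $dk\cdot 2^{-\Theta(k_\beta)}$ times polynomial overhead) less than $1$, so the series converges to something that, after subtracting the $v_0$ seed, is below $1$. Tightening the constants in this path-counting series to land exactly at the stated hypothesis is where the real work lies.
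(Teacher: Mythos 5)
Your proposal follows essentially the same route as the paper: path coupling on the hypercube $\{0,1\}^{\+M}$, reduction of the contraction estimate to bounding $\sum_{v\in\+M}D_v$, two disagreement couplings $\+C_v$ (for each $v$) and a global $\+C$ linked by a coupling-of-couplings, and finally an enumeration over induced paths in the square of the line graph of $H_\Phi$ to bound the expected discrepancy-set size by a convergent geometric series. Those are the right structural ingredients and they do match the paper's argument.

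There are, however, a few genuine gaps. A minor one first: you justify irreducibility by saying ``from any state we can reach any other in one step,'' which is false for Glauber dynamics (one step changes at most one coordinate); the correct statement is that local uniformity makes every single flip have positive probability, so any two states are connected in $d_{\mathrm{Ham}}$ steps. The substantive gap is that you invoke \Cref{corollary-local-uniform} with $k_1=k_\beta$ ``throughout'' the greedy coupling $\+C_v$, but as $\+C_v$ assigns unmarked variables one by one, an unsatisfied clause may lose many of its free unmarked variables and the local-uniformity margin degrades; nothing in your description prevents a clause from being reduced to zero free unmarked variables. The paper fixes this by introducing a third parameter $k_\gamma$ and an \emph{adaptive} rule (Line~\ref{line-adaptive-mark-Cv}/\ref{line-add-u-2-Cv} in \Cref{alg-coupling-v}): once $k_\gamma$ unmarked variables of an unsatisfied clause have been fixed, the clause's remaining variables are promoted wholesale into $V_1$, which maintains the invariant that every still-alive clause has at least $k_\beta-k_\gamma$ free unmarked variables. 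The hypothesis $2^{k_\beta}\ge 2^{16}d^9k^9$ is calibrated to simultaneously give $2^{k_\gamma}\ge 36d^4k^4$ and $2^{k_\beta-k_\gamma}\ge 2\mathrm{e}ds$ with $s=36d^4k^5$; without this split, the near-uniformity of the marginals $\nu_u(\cdot\mid\cdot)$ used to define $\plow,\pup$ is not justified, and the per-edge failure probability you need for the path-counting series is not under control. Relatedly, your characterization of the ``failed'' event as a clause being ``not yet satisfied by the marked variables'' conflates the notion with the ``bad hyperedge'' of \Cref{section-proof-main}; in the path-coupling analysis the failure reasons are (i) some $r_u$ falling in the ambiguous window $(\plow,\pup]$, or (ii) the $k_\gamma$-cap being reached unsatisfied, and both enter the $\pfailed$ estimate. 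Finally, a constant-level slip: bounding the expected discrepancy set ``strictly less than $2$'' is not quite enough to conclude $\lambda\ge 1/(2n)$; you need $\E[\+C]{|V_1|}\le 3/2$ so that $\sum_v D_v\le 1/2$, which is the target the paper hits via $d^2k^2\sqrt{\pfailed}\le 1/3$.
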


\subsection{The stationary distribution}
We first prove that the Glauber dynamics $\Glauber$ has the unique stationary distribution $\mu_{{\+M}}$. 
\begin{lemma}
  \label{lemma-stationary}
  If $2^{k_\beta} \geq 4\mathrm{e}dk$, then the support of $\mu_{\+M}$ is all of $\Omega=\{0,1\}^{\+M}$,
  and the Glauber dynamics $\Glauber$ for marked variables has the unique stationary distribution $\mu_{{\+M}}$.
\end{lemma}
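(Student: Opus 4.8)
The plan is to establish the full-support claim first, and then obtain irreducibility, aperiodicity, and uniqueness of the stationary distribution essentially as corollaries.

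\textbf{Full support.} Fix an arbitrary $\sigma\in\{0,1\}^{\+M}$. I would set every marked variable according to $\sigma$ and simplify $\Phi$ to the formula $\Phi^\sigma$ on the unmarked variables, exactly as in the description of the $\sample$ subroutine in \Cref{section-sampleunmark}. By \Cref{condition-marked-variables}, every clause of $\Phi$ contains at least $k_\beta$ unmarked variables, so no matter how $\sigma$ assigns the marked literals, each surviving clause of $\Phi^\sigma$ still carries between $k_\beta$ and $k$ literals on distinct unmarked variables, and each unmarked variable appears in at most $d$ clauses of $\Phi^\sigma$. I would then apply \Cref{corollary-local-uniform} to $\Phi^\sigma$ with $k_1=k_\beta$, $k_2=k$, and $s=k$: since $2^{k_\beta}\ge 4\mathrm{e}dk\ge 2\mathrm{e}dk=2\mathrm{e}ds$, the corollary guarantees a satisfying assignment of $\Phi^\sigma$. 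Combined with $\sigma$ this yields a satisfying assignment of $\Phi$ extending $\sigma$, hence $\mu_{\+M}(\sigma)>0$. As $\sigma$ was arbitrary, the support of $\mu_{\+M}$ is all of $\Omega=\{0,1\}^{\+M}$.

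\textbf{Irreducibility and aperiodicity.} With full support, for any $X\in\Omega$ and any $v\in\+M$ the conditional marginal $\mu_v(\cdot\mid X(\+M\setminus\{v\}))$ puts positive mass on both $0$ and $1$, since both extensions of $X(\+M\setminus\{v\})$ lie in the support of $\mu_{\+M}$. Hence a single step of $\Glauber$ that picks $v$ can leave the value of $v$ unchanged — giving $\Glauber(X,X)>0$ and therefore aperiodicity — or flip it, giving a positive-probability move to the neighbour of $X$ in coordinate $v$. Iterating single-coordinate flips along the coordinates where two states differ, with every intermediate assignment again in the support of $\mu_{\+M}$, shows that any state is reachable from any other with positive probability, so $\Glauber$ is irreducible.

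\textbf{Stationarity and uniqueness.} The detailed-balance computation \eqref{eqn:db} is valid once all marginals appearing in it are positive, which is precisely what full support provides; hence $\Glauber$ is reversible with respect to $\mu_{\+M}$, so $\mu_{\+M}$ is a stationary distribution. Since $\Glauber$ is irreducible and aperiodic on the finite state space $\Omega$, it has a unique stationary distribution and converges to it, so $\mu_{\+M}$ is that unique stationary distribution. The only real content is the first step, and the main (mild) obstacle is checking the hypotheses of \Cref{corollary-local-uniform} for $\Phi^\sigma$ uniformly over all $\sigma$ — in particular that conditioning on the marked variables never shrinks a clause below $k_\beta$ literals, which is exactly the purpose of \Cref{condition-marked-variables}; the remaining steps are routine.
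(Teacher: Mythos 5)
Your proof is correct and follows essentially the same line as the paper's: both apply \Cref{corollary-local-uniform} to the formula obtained by conditioning on (most of) the marked variables, and both then read off irreducibility, aperiodicity, and detailed balance. The only organizational difference is that you establish full support first by invoking the existence part of the corollary on $\Phi^\sigma$ (all marked variables fixed) and then deduce positivity of the single-site conditionals, whereas the paper instead fixes $\+M\setminus\{v\}$ and uses the marginal-bound part of the same corollary with $s=2k$ to show directly that $\mu_v(c\mid X')<1$ for both $c$ — the two routes use the same machinery and are logically equivalent.
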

\begin{proof}
  For any $v \in {\+M}$ and any assignment $X' \in \{0,1\}^{{\+M} \setminus \{v\} }$, we claim that 
  \begin{align}
  \label{eq-irreducible}
  \forall c \in \{0,1\}:\quad \mu_v(c \mid X') > 0.
  \end{align}
  This implies that for any $X, Y \in \Omega$ with Hamming distance $d_{\textrm{Ham}}(X, Y)  = |\{ v \in \mathcal{M} \mid X(v) \neq Y(v) \}|$, 
  it is possible to transform $X$ to $Y$ in $d_{\textrm{Ham}}(X, Y)$ steps. 
  Hence, $\Glauber$ is irreducible.
  It also implies that the support of $\mu_{\+M}$ is $\Omega$.
  Besides, for any $X \in \Omega$, we have $\Glauber(X, X) > 0$. Hence, this chain is aperiodic.
  
  We now prove~\eqref{eq-irreducible}. Let $\Phi^{X'}$ be the CNF formula obtained from $\Phi$ by deleting all the clauses that are satisfied by $X'$ and all the variables in ${\+M} \setminus \{v\}$. Let $\mu'$ denote the uniform distribution of all solutions of $\Phi'$.
  Then we have
  \begin{align*}
  \forall c \in \{0,1\}: \quad \mu_v(c \mid X') = \mu'_v(c).	
  \end{align*}
  In CNF formula $\Phi'$, each clause has at least $k_\beta$ variables and at most $k$ variables and each variable belongs to at most $d$ clauses. Since $2^{k_\beta} \geq 4\mathrm{e}dk$, by Corollary~\ref{corollary-local-uniform}, we have
  \begin{align*}
  \forall c \in \{0,1\}: \quad \mu_v(c \mid X') = \mu'_v(c) \leq \frac{1}{2}\exp\left(\frac{1}{2k} \right) \leq \frac{\sqrt{\mathrm{e}}}{2} < 1.	
  \end{align*}
  This implies $\mu_v(c \mid X') > 0$ for all $c \in \{0,1\}$.
  
  By the update rule of the Glauber dynamics chain, it is easy to verify the following detailed balance condition as in \eqref{eqn:db}:
  \begin{align*}
    \forall X,Y \in \Omega:\quad	\mu_{{\+M}}(X) \Glauber(X, Y) = \mu_{{\+M}}(Y)\Glauber(Y, X).
  \end{align*}
  Since the Markov chain is irreducible and aperiodic, this proves that the Markov chain $(X_t)_{t \geq 0}$ has the unique stationary distribution $\mu_{{\+M}}$.
\end{proof}
Hence, under the condition in \Cref{lemma-mixing}, $\Glauber$ has unique stationary distribution $\mu_{\+M}$.

\subsection{The mixing time}
We next prove that $\Glauber$ is rapidly mixing provided that $2^{k_{\beta}}\geq 2^{16}d^9k^9$.
The mixing time in Lemma~\ref{lemma-mixing} is proved by the path coupling argument~\cite{bubley1997path}.
For any $X, Y \in \Omega$, recall their Hamming distance as
\begin{align*}
  d_{\textrm{Ham}}(X, Y) \triangleq \abs{\set{v \in {\+M} \mid X(v) \neq Y(v) }}.	
\end{align*}
Let $X, Y \in \Omega$ be two assignments that disagree only on a single variable,
namely, $d_{\textrm{Ham}}(X, Y)=1$.
We construct a coupling of Markov chains $(X, Y) \rightarrow (X' ,Y')$ satisfying
\begin{align}
  \label{eq-path-coupling-target}
  \E{d_{\textrm{Ham}}(X',Y') \mid X, Y} \leq 1 - \frac{1}{2n}.	
\end{align}
Note that $d_{\textrm{Ham}}(X, Y) \leq n$ for all $X,Y \in \Omega$. Then Lemma~\ref{lemma-mixing}
is proved by the path coupling lemma~(Proposition~\ref{proposition-path-coupling}) together with Lemma~\ref{lemma-stationary}.

The coupling $(X, Y) \rightarrow (X' ,Y')$ is defined as follows.
\begin{definition}
  Let $X, Y \in \Omega$ be two assignments that disagree only on a single variable, say $X(v_0) = 0$ and $Y(v_0) = 1$ where $v_0 \in \+M$. 
  Let $\+M_v\defeq\+M\setminus\{v\}$ for any $v\in \+M$.
  The coupling $(X, Y) \rightarrow (X' ,Y')$ is defined as:
  \begin{itemize}
  \item pick the same variable $v \in {\+M}$ uniformly at random, and set $X'(u) = X(u)$ and
    $Y'(u) = Y(u)$ for all variables $u \in \+M_v$;
  \item sample $(X'(v),Y'(v))$ jointly from the optimal coupling of two conditional marginal
    distributions $\mu_v(\cdot \mid X(\+M_v))$ and
    $\mu_v(\cdot \mid Y(\+M_v ))$.
  \end{itemize}
\end{definition}
\noindent It is easy to verify that this is a valid coupling of two Markov
chains. Two transitions $X \rightarrow X'$ and $Y \rightarrow Y'$ are both faithful copies of the Glauber dynamics chain.
We remark that none of the couplings in this section is efficiently computable.
They only serve as tools for the analysis of the Markov chain.

For each marked variable $v \in {\+M}$, we define $D_v$ as
\begin{align}
  \label{eq-def-Dv}
  D_v \triangleq 	\DTV{\mu_v(\cdot \mid X({\+M}_v ))}{\mu_v(\cdot \mid Y({\+M}_v ))}.
\end{align}
which is the total variation distance between $\mu_v(\cdot \mid X({\+M}_v ))$ and $\mu_v(\cdot \mid Y({\+M}_v ))$.
Moreover,
since $X(\+M_{v_0})=Y(\+M_{v_0})$, by~\eqref{eq-def-Dv},
\begin{align*}
  D_{v_0} = 0.	
\end{align*}
By \Cref{prop:coupling}, under our coupling,
\begin{align*}
  \Pr{X'(v) \neq Y'(v) \mid v \in \+M \text{ is picked}} = D_v.	
\end{align*}
Hence, the expected Hamming distance between $X'$ and $Y'$ is at most
\begin{align}
  \E{d_{\textrm{Ham}}(X',Y') \mid X, Y} &= 1+ \frac{1}{|\+M|}\sum_{v \in \+M}	D_v - \frac{1}{|\+M|}\notag\\
  \label{eq-bound-E}                     &= 1 - \frac{1}{|\+M|}\left( 1 - \sum_{v \in \+M}D_v \right).
\end{align}
To prove the inequality in~\eqref{eq-path-coupling-target}, 
it is sufficient to prove the following lemma and notice that $\abs{\+M}\le n$.
\begin{lemma}
  \label{lemma-bound-dv-sum}
  Given two assignments $X, Y \in \Omega$ such that $X$ and $Y$ disagree only on a single variable
  $v_0 \in {\+M}$, if $2^{k_{\beta}}\geq 2^{16}d^9k^9$, it holds that
  \begin{align*}
    \sum_{v \in {\+M}}D_v \leq \frac{1}{2},
  \end{align*}
  where $D_v$ is the total variation distance defined in~\eqref{eq-def-Dv}.
\end{lemma}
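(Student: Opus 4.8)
The plan is to bound each $D_v$ by the probability that $v$ becomes ``entangled'' with the disagreement source $v_0$ through a greedy disagreement coupling, and then control the sum by a union bound over potential entanglement structures. First I would fix $v\in\+M$ with $v\neq v_0$ and set up a coupling $\+C_v$ between $\mu_v(\cdot\mid X(\+M_v))$ and $\mu_v(\cdot\mid Y(\+M_v))$ realized on the level of \emph{all} unmarked variables plus $v$: this is the conditional Gibbs distribution over assignments to $(V\setminus\+M)\cup\{v\}$. Since the two boundary conditions $X(\+M_v)$ and $Y(\+M_v)$ differ only at $v_0$, I would build the coupling greedily starting from the clauses containing $v_0$, revealing variables in a BFS-like manner and only propagating disagreement into a clause (equivalently, a connected component of the simplified formula) when that clause is not yet satisfied on the already-revealed part under one of the two sides. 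By local uniformity (\Cref{corollary-local-uniform}, applied to $\Phi^{X}$-type formulas, whose clauses have $\ge k_\beta$ surviving variables) each such propagation step is rare. The crucial observation, which I'd state as an intermediate claim, is: $D_v\le\Pr{v\in\mathsf{Disc}(\+C_v)}$, where $\mathsf{Disc}(\+C_v)$ is the discrepancy set grown from $v_0$; equivalently $v$ can fail to couple only if there is a ``bad path'' of clauses from $v_0$ to $v$ along which disagreement percolates.

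Next I would summing over $v$: $\sum_{v\in\+M}D_v\le\sum_{v\in\+M}\Pr{v\in\mathsf{Disc}(\+C_v)} = \E{\sum_{v\in\+M}\one{v\in\mathsf{Disc}(\+C_v)}}$, but since the couplings $\+C_v$ are different for different $v$, I cannot literally pull the sum inside one expectation. Instead I would pass to the combinatorial skeleton. A necessary condition for $v\in\mathsf{Disc}(\+C_v)$ is the existence of a ``bad structure'' linking $v_0$ and $v$ in the hypergraph $H_\Phi$ (or its line graph): concretely, a sequence of clauses $c_1,\dots,c_\ell$ with $v_0\in\vbl{c_1}$, $v\in\vbl{c_\ell}$, consecutive clauses sharing a variable, and each $c_i$ being ``violated-by-the-partial-assignment'' when reached. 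Using local uniformity, the probability (under the product distribution, which dominates the conditional Gibbs distribution up to the LLL correction factor) that a fixed clause on $\ge k_\beta$ free variables is reached in this violated state is at most roughly $2^{-k_\beta}\cdot e^{O(1)}$. So a fixed bad structure of ``size'' $\ell$ occurs with probability at most $(C\cdot 2^{-k_\beta})^{\ell}$ for an absolute constant $C$, while the number of such structures connecting $v_0$ to a fixed $v$ and the number of $v$'s at a given ``distance'' $\ell$ from $v_0$ are both bounded by $(dk)^{O(\ell)}$. Hence $\sum_{v\in\+M}D_v\le\sum_{\ell\ge1}(dk)^{O(\ell)}(C 2^{-k_\beta})^{\ell}$, a geometric series that is at most $\tfrac12$ once $2^{k_\beta}\ge 2^{16}d^9k^9$, which furnishes enough slack in the base of the geometric series. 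I would also separately record the base case $D_{v_0}=0$, already noted in the text.

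The main obstacle I anticipate is making the percolation/union-bound rigorous: one must (i) argue carefully that the greedy coupling $\+C_v$ is well-defined and faithful, i.e.\ that at each step the already-exposed partial assignment does not force either conditional marginal to be degenerate (this is where $2^{k_\beta}\ge 4\mathrm{e}dk$ and \Cref{corollary-local-uniform} are used to keep all marginals strictly inside $(0,1)$), and (ii) convert ``$v$ fails to couple'' into the existence of a witnessing combinatorial object whose occurrence probability can be estimated \emph{without} reference to the adaptive order in which the coupling exposed variables. The standard device in \cite{Moi19,guo2019counting} is a $\{2,3\}$-tree; as the overview in \Cref{sec:tech} explains, that only yields an $O(dk)$ bound on $\E{|\mathsf{Disc}|}$, which is far too weak here. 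So the real work is the ``new argument based on counting induced paths'': I would replace the $\{2,3\}$-tree witness by a simple induced path in the line graph of $H_\Phi$, prove that the number of induced paths of length $\ell$ from $v_0$ is at most $(edk)^{\ell}$ (each step has $\le dk$ choices, and the induced/self-avoiding constraint is only used to avoid overcounting, not to improve the count), and prove that disagreement propagation along such a path costs a factor $\le 2^{-k_\beta+O(1)}$ per edge by peeling off one clause at a time and invoking local uniformity on the residual formula. Assembling (i) + the path count + the per-edge decay and summing the resulting geometric series gives the bound $\tfrac12$; I would then feed this into \eqref{eq-bound-E} and \Cref{proposition-path-coupling} to conclude \Cref{lemma-mixing}.
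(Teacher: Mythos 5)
Your high-level plan matches the paper's: build a greedy disagreement coupling $\+C_v$ from $v_0$, bound $D_v$ by the probability that $v$ lands in the discrepancy set, and then control $\sum_v D_v$ by counting path-like witnesses in a line graph with per-edge probability decay. But the two steps you flag as "the main obstacles" are precisely where the paper introduces machinery you do not supply, and at least one of your proposed shortcuts would not go through.

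First, you correctly note that the couplings $\+C_v$ differ for different $v$, so $\sum_v \Pr[\+C_v]{v\in\mathsf{Disc}(\+C_v)}$ is not an expectation under a single measure, and then you say you would "pass to the combinatorial skeleton." The paper does something sharper: it defines a single auxiliary coupling $\+C$ (which treats all marked variables as free, using $\mu$-conditionals instead of $\nu^{(v)}$-conditionals) and proves the exact identity $\Pr[\+C_v]{v\in V_1^{\+C_v}}=\Pr[\+C]{v\in V_1^{\+C}}$ for every $v$, via a coupling of the two couplings using shared randomness $r_u$. This works because whether a variable enters $V_1$ depends only on whether $r_u\in(\plow,\pup]$, and those fixed thresholds are used identically in $\+C_v$ and $\+C$, even though the actual marginals $p^X_u,p^Y_u$ differ (in $\+C_v$, marked variables other than $v$ have forced values $0$ or $1$; in $\+C$ they have marginals in $[\plow,\pup]$). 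This turns the sum over $v$ into $\E[\+C]{|V_1^{\+C}|}-1$ under one measure. Your "skeleton" plan gestures at this but does not supply the mechanism, and without the identity your per-$v$ probability estimates would still formally depend on $v$.

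Second, your per-edge decay claim of $2^{-k_\beta+O(1)}$ and the multiplication across $\ell$ consecutive clauses are not justified as written. The events "$c_i$ is reached in a violated state" are not independent across adjacent clauses, and local uniformity alone does not let you multiply. The paper resolves this by (a) passing to $\Lin^2(H)$ and taking only the alternating edges $\+D=\{e_1,e_3,\dots\}$ of an induced path, which are pairwise disjoint, paying an exponent $\ell/2$ rather than $\ell$; and (b) re-implementing $\+C$ with per-hyperedge private randomness streams $\+R_{e,1},\+R_{e,2}$ so that the failure indicators over $e\in\+D$ become genuinely independent, after which the product bound holds. Moreover the per-edge decay is not $2^{-k_\beta}$: a hyperedge becomes "failed" either because some $r_u$ lands in the thin window $(\plow,\pup]$ (probability $\le 2k/s$) or because $k_\gamma$ unmarked variables have been revealed without satisfying it (probability $\le (1/2+1/s)^{k_\gamma}$); the resulting $\pfailed\approx 2^{-k_\gamma}+2k/s$ with $k_\gamma=\lceil 4k_\beta/9\rceil$ is what makes the geometric series sum below $1/2$. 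You never introduce the truncation parameter $k_\gamma$, yet it is needed both for the failure-probability bound and for the inductive claim (the paper's \eqref{eq-claim-unfixed-variable}) that keeps enough free unmarked variables in every active clause so that \Cref{corollary-local-uniform} applies and $\plow\le p^X_u,p^Y_u\le\pup$ is maintained throughout. Without it, the coupling could reveal too many unmarked variables of a clause, destroying the local-uniformity invariant your argument relies on.
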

\noindent
Combining inequality~\eqref{eq-bound-E} and \Cref{lemma-bound-dv-sum} proves
inequality~\eqref{eq-path-coupling-target}.  This proves \Cref{lemma-mixing}.
\Cref{lemma-bound-dv-sum} is shown in the next subsection.

\subsection{Analysis of the path coupling}

Let us first sketch the proof idea of Lemma~\ref{lemma-bound-dv-sum}. 
Recall that we have two assignments $X$ and $Y$ which differ on only $v_0$.
In order to bound $D_v$ for any $v\in\+M$ and $v\neq v_0$, we construct a coupling $\+C_v$ of two distributions
$\mu(\cdot \mid X(\+M_v))$ and $\mu(\cdot\mid Y(\+M_v))$, where $\+M_v=\+M\setminus\{v\}$. 
Since $\+C_v$ projected on $v$ is a coupling
between $\mu_v(\cdot\mid X(\+M_v))$ and $\mu_v(\cdot\mid Y(\+M_v))$,
by \Cref{prop:coupling}, we have
\[
  D_v\le \Pr[(\sigma_X,\sigma_Y)\sim \+C_v]{\sigma_X(v)\ne\sigma_Y(v)}.
\]
A high-level description of our construction of $\+C_v$ is as follows: we start from two partial
assignments $X$ and $Y$ such that initially only the value on $v_0$ is set, say $X(v_0)=0$ and $Y(v_0)=1$. 
In each step, in a Breadth-First Search way, 
we extend the partial assignments using the optimal coupling between two marginal distributions to a new variable. 
At last, we obtain a set of variables $V_1^{\+C_v}$ which is a superset of all variables on which $X$ and $Y$ disagree. 
Therefore,
\[
  \Pr[(\sigma_X,\sigma_Y)\sim \+C_v]{\sigma_X(v)\ne\sigma_Y(v)}\le \Pr[\+C_v]{v\in V_1^{\+C_v}}.
\]

We then construct another coupling $\+C$ of distributions $\mu(\cdot\mid X(v_0))$ and
$\mu(\cdot\mid Y(v_0))$ in a similar way, where $v_0\in\+M$ is the unique vertex on which $X$ and
$Y$ differ. The coupling also produces a set $V_1$ which is a superset of all variables with different values. 
We carefully define the coupling $\+C$ so that for every $v\in \+M\setminus\set{v_0}$, it
holds that
\begin{equation}\label{eqn-couple-coupling}
  \Pr[\+C_v]{v\in V_1^{\+C_v}}=\Pr[\+C]{v\in V_1^{\+C}}.
\end{equation}
Recall that $D_{v_0}=0$.
Therefore, we only need to bound $D_v$ for those $v \in {\+M} \setminus \{v_0\} $.
Hence,
\begin{align*}
  \sum_{v\in\+M}D_v& =\sum_{v\in\+M\setminus\set{v_0}}D_v 
   \le \sum_{v\in\+M\setminus\set{v_0}}\Pr[\+C_v]{v\in V_1^{\+C_v}} \\
   \tag{by \eqref{eqn-couple-coupling}}& = \sum_{v\in\+M\setminus\set{v_0}}\Pr[\+C]{v\in V_1^{\+C}} = \E[\+C]{\abs{V_1^{\+C}}}-1.  
\end{align*}
Finally, we bound $\E[\+C]{\abs{V_1^{\+C}}}$ by enumerating all induced paths in the square of the line graph of $H_\Phi$ (\Cref{definition-line-graph}) rooted at $v_0$.


%
%

We describe the coupling $\+C_v$ in Section~\ref{section-coupling-Cv} and the coupling $\+C$ in
Section~\ref{section-coupling-C}. 
And finally in Section~\ref{section-proof-of-dv-sum}, Lemma~\ref{lemma-bound-dv-sum} is proved by a coupling between the two couplings $\+C_v$ and $\+C$.

\subsubsection{The coupling $\+C_v$}\label{section-coupling-Cv}

First we define the following distribution $\nu=\nu^{(v)}$ over all assignments in $\{0,1\}^V$.
\begin{definition}
  \label{definition-nu}
  Fix a variable $v \in \+M \setminus \{v_0\}$. Let $\nu=\nu^{(v)}$ be the distribution $\mu$
  conditional on the assignment of the set $\Lambda = \+M \setminus \{v_0,v\} $ is specified as
  $X(\Lambda) = Y(\Lambda)$, where $X, Y \in \{0,1\}^{\+M}$ differ at only $v_0$. Formally,
  \begin{align}
  \label{eq-def-nu-proof}
    \forall \sigma \in \{0,1\}^{V}:\quad \nu(\sigma) =\frac{\one{\sigma(\Lambda)= X(\Lambda)} \cdot \mu(\sigma) }{ \sum_{\tau \in \{0,1\}^V}\one{\tau(\Lambda)= X(\Lambda)}\cdot\mu(\tau)}.
  \end{align}
\end{definition}
Note that if $2^{k_{\beta}} \geq 2\mathrm{e}dk$, then by \Cref{lemma-stationary}, the distribution
$\nu$ is well-defined.

For every $v\in\+M\setminus\set{v_0}$, the coupling $\+C_v$ generates a pair of random assignments $X^{\+C_v}, Y^{\+C_v} \in \{0,1\}^V$.  
The projection $X^{\+C_v}$ (or $Y^{\+C_v}$) has the law $\nu$ conditioned on $X^{\+C_v}(v_0) = X(v_0) = 0$ (or on $Y^{\+C_v}(v_0) = Y(v_0) = 1$). 
%
Let $k_\gamma \geq 1$ be an integer parameter to be specified later satisfying
$k_\gamma < k_{\beta}$ and
\begin{align}
  \label{eq-assume-s}
  2^{k_\beta-k_\gamma} \geq 2\mathrm{e}ds, \text{ where } s \triangleq 36d^4k^5.	
\end{align}
We then define two parameters $\plow$ and $\pup$ as follows:
\begin{equation}
  \label{eq-def-plow-pup}
  \begin{split}
    \plow &\triangleq \frac{1}{2} - \frac{1}{s},\\
    \pup &\triangleq \frac{1}{2} + \frac{1}{s}.
  \end{split}
\end{equation}
We will see later that $[\plow,\pup]$ is the interval in which the marginal probability on a single
variable can locate during the process of the coupling.

Recall that $H_{\Phi}=(V, \mathcal{E})$ is the hypergraph for $\Phi$ defined in~\eqref{eq-def-Hphi}.
The coupling procedure $\+C_v$ is similar to the one used in~\cite{guo2019counting}, which is an
\emph{adaptive} version of the coupling appeared in~\cite{Moi19}.
%

The coupling procedure $\+C_v$ is described in \Cref{alg-coupling-v},
where we fix an arbitrary ordering of all clauses and all variables. 
The meanings of some variables appear in the algorithm are
\begin{itemize}
\item $V_1$ - a superset of all \emph{discrepancy} variables. 
  It contains all variables on which $X^{\+C_v}$ and $Y^{\+C_v}$ disagree. 
  It may contain some additional variables to ease our analysis later.
\item $\Vcol$ - the variables whose values have been determined in the BFS process. $X^{\+C_v}$ and
  $Y^{\+C_v}$ can either agree or disagree on them.
\item $\+S$ - a subset of $\Vcol$ on which $X^{\+C_v}$ and $Y^{\+C_v}$ agree. 
  The coupling guarantees that $\+S\cap\+M=\varnothing$. 
  Intuitively $\+S$ together with $\+M$ separates discrepancy variables from the rest.
\end{itemize}
The algorithm keeps growing the set $V_1$ in a BFS manner until there is no unassigned variable on the boundary of $V_1$. 
We remark that some of the choices in \Cref{alg-coupling-v} may seem confusing at first.
They are because we need to later compare it with $\+C$ to show \eqref{eqn-couple-coupling}.
For example,
we may choose $u\in\+M_{v_0}$ in Line~\ref{line-find-u-Cv}.
Since we are coupling $\nu$ conditioned on $v_0$ being $0$ and $1$ respectively,
any $u\in\+M_{v_0}$ is guaranteed to be coupled successfully according to $X(u)=Y(u)$.
However, we may still put $u$ into $V_1$.
This is a vacuous step that merely serves the purpose of comparing with $\+C$ later,
because we want to guarantee that under a suitable coupling,
the set $V_1$ generated by $\+C_v$ is the same as $\+C$.


\begin{algorithm}[h]
  \SetKwInOut{Input}{Input} \SetKwInOut{Output}{Output} 
  \Input{a CNF formula $\Phi$, a hypergraph
    $H_\Phi = (V,\mathcal{E})$, a set of marked variables $\+M$, a
     variable $v_0 \in \+M$, the distribution $\nu$ in~\eqref{eq-def-nu-proof}, the parameters $\plow,\pup$ in~\eqref{eq-def-plow-pup}, a parameter $k_\gamma > 0$ such that
    $k_\gamma < k_\beta$;}  
  \Output{a pair of assignments $X^{\+C_v}, Y^{\+C_v} \in \{0,1\}^{V}$.}
  $X^{\+C_v}(v_0) = 0$ and $Y^{\+C_v}(v_0) = 1$\; 
  $V_1 \gets \{v_0\}$, $V_2 \gets V \setminus V_1$, $\Vcol \gets \{v_0\}$ and $\mathcal{S} \gets \emptyset$\; 
  \While{$\exists e \in \mathcal{E}$ s.t.\ $e \cap V_1 \neq \emptyset, (e \cap V_2) \setminus \Vcol \neq \emptyset$\label{line-while-condition} }
    {let $e$ be the first such hyperedge and $u$ be the first variable in $(e \cap V_2) \setminus \Vcol$\label{line-find-u-Cv}\; 
    sample a real number $r_u \in [0,1]$ uniformly at random\label{line-draw-u-Cv}\; 
    let $p^{X}_u = \nu_u(0\mid X^{\+C_v})$ and $p^{Y}_u = \nu_u(0\mid Y^{\+C_v})$\label{line-pX-pY-Cv}\; 
    extend  $X^{\+C_v}$ to variable $u$ s.t.\ $X^{\+C_v}(u) = 0$ if $r_u \leq p^X_u$, o.w.\ $X^{\+C_v}(u) = 1$\label{line-set-X-Cv}\; 
    extend $Y^{\+C_v}$ to variable $u$ s.t.\ $Y^{\+C_v}(u) = 0$ if $r_u \leq p^Y_u$, o.w.\ $Y^{\+C_v}(u) = 1$\label{line-set-Y-Cv}\;
    $\Vcol \gets \Vcol \cup \{u\}$\label{line-add-vset-Cv}\; 
    \If{$\plow < r_u \leq \pup$\label{line-condition-V1-Cv}}
      { $V_1 \gets V_1 \cup \{u\}$, $V_2 \gets V \setminus V_1$\label{line-add-u-1-Cv}\;} 
    \If{$(u \not\in \+M) \land (r_u \leq \plow \lor r_u > \pup)$\label{line-condition-S-Cv}}
      {$\mathcal{S} \gets \mathcal{S} \cup \{u\}$\;} 
    \For{$e \in \mathcal{E}$ s.t.\ $e$ is satisfied by both $X^{\+C_v}(\mathcal{S})$ and $Y^{\+C_v}(\mathcal{S})$}
      {$\mathcal{E} \gets\mathcal{E} \setminus \{e\}$\label{line-delete-1-Cv}\; } 
    \For{$e \in \mathcal{E}$ s.t.\ $|e \cap (\Vcol \setminus \+M) | = k_\gamma $\label{line-adaptive-mark-Cv}}
      {$V_1 \gets V_1 \cup (e \setminus \Vcol )$,
       $V_2 \gets V \setminus V_1$\label{line-add-u-2-Cv}\;
       } 
    } 
    extend $X^{\+C_v}$ and $Y^{\+C_v}$ further on the set $V_2 \setminus \Vcol$ 
    using the optimal coupling between $\nu_{V_2 \setminus \Vcol}(\cdot\mid X^{\+C_v}(\Vcol))$ and $\nu_{V_2 \setminus \Vcol}(\cdot\mid Y^{\+C_v}(\Vcol))$\label{line-sample-V2-Cv}\;
    extend $X^{\+C_v}$ and $Y^{\+C_v}$ further on the set $V_1 \setminus \Vcol$ 
    using the optimal coupling between $\nu_{V_1 \setminus \Vcol}(\cdot\mid X^{\+C_v}(\Vcol\cup V_2))$ and $\nu_{V_1 \setminus \Vcol}(\cdot\mid Y^{\+C_v}(\Vcol \cup V_2))$\label{line-extend-Cv}\;
    \Return{$(X^{\+C_v}, Y^{\+C_v})$\;}
  \caption{The coupling procedure $\+C_v$}\label{alg-coupling-v}
\end{algorithm}

\begin{lemma}
  \label{lemma-marginal-Cv}
  The following properties hold for the coupling procedure $\+C_v$ in
  Algorithm~\ref{alg-coupling-v}.
  \begin{itemize}
  \item The coupling procedure $\+C_v$ terminates eventually and returns a pair $X^{\+C_v},Y^{\+C_v} \in \{0,1\}^V$ such that 
    $X^{\+C_v}$ and $Y^{\+C_v}$ have the law $\nu$ conditioned on $X^{\+C_v}(v_0)  = 0$ and on $Y^{\+C_v}(v_0)  = 1$, respectively. 
  \item If $2^{k_\beta-k_\gamma} \geq 2\mathrm{e}ds \text{ where } s = 36d^4k^5$, 
    then $X^{\+C_v}(V_2) = Y^{\+C_v}(V_2 )$.
  \end{itemize}
\end{lemma}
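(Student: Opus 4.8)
The plan is to handle the two bullets separately: the first is chain–rule bookkeeping that tracks the coins of \Cref{alg-coupling-v}, and the second is a structural statement about which hyperedges survive the while loop.

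\textbf{Termination and the marginal laws (first bullet).} Termination is immediate: every pass of the while loop colours one fresh variable (Line~\ref{line-add-vset-Cv}), so the loop runs at most $|V|=n$ times, after which Lines~\ref{line-sample-V2-Cv} and~\ref{line-extend-Cv} assign the remaining variables; hence a pair in $\{0,1\}^V\times\{0,1\}^V$ is returned. To identify the law of $X^{\+C_v}$ I would couple the run of $\+C_v$ with a single sample $Z\sim\nu(\cdot\mid v_0=0)$, maintaining the invariant that after each step the already–coloured coordinates of $X^{\+C_v}$ agree with $Z$ and that, conditioned on the coins $(r_u)$ drawn so far, the law of $Z$ on the uncoloured coordinates is $\nu$ conditioned on the coloured values. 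When $u$ is processed, $p^X_u=\nu_u(0\mid X^{\+C_v}(\Vcol))$ is exactly the conditional probability that $Z(u)=0$ given $Z$ on the coloured set, so I can draw $r_u$ jointly with $Z(u)$ — uniform on $[0,p^X_u]$ if $Z(u)=0$, on $(p^X_u,1]$ otherwise — which keeps $r_u$ uniform on $[0,1]$, forces $X^{\+C_v}(u)=Z(u)$, and preserves the invariant since the surplus randomness in $r_u$ is conditionally independent of $Z$ on the rest. All later decisions — which hyperedge and variable come next, whether $u$ enters $V_1$ or $\+S$, which hyperedges are removed in Line~\ref{line-delete-1-Cv}, when the adaptive rule of Line~\ref{line-adaptive-mark-Cv} fires — are deterministic functions of the $r$'s, so the invariant persists, and the two final optimal–coupling steps place $Z$ on $V_2\setminus\Vcol$ and then on $V_1\setminus\Vcol$, giving $X^{\+C_v}=Z$. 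Running the same argument for $Y^{\+C_v}$ with $Z'\sim\nu(\cdot\mid v_0=1)$ and observing that $(X^{\+C_v}(u),Y^{\+C_v}(u))$ is always obtained by thresholding one shared uniform $r_u$ against $p^X_u$ and $p^Y_u$ shows $(X^{\+C_v},Y^{\+C_v})$ is a genuine coupling of $\nu(\cdot\mid v_0=0)$ and $\nu(\cdot\mid v_0=1)$. Well–definedness of every conditional marginal used here follows from \Cref{corollary-local-uniform} and \Cref{lemma-stationary}, which guarantee that each partial assignment produced has positive $\nu$–measure.

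\textbf{Agreement on $V_2$ (second bullet).} I would prove $X^{\+C_v}(V_2)=Y^{\+C_v}(V_2)$ in two parts. (i) For the coloured part $\Vcol\cap V_2$: first claim that throughout the while loop every processed variable $u\notin\Lambda=\+M\setminus\{v_0,v\}$ has $p^X_u,p^Y_u\in[\plow,\pup]$. Indeed $\nu_u(\cdot\mid X^{\+C_v}(\Vcol))$ is the single–variable marginal of the uniform measure on the satisfying assignments of the CNF obtained from $\Phi$ by fixing $\Lambda$ and $X^{\+C_v}(\Vcol)$; the adaptive rule of Lines~\ref{line-adaptive-mark-Cv}–\ref{line-add-u-2-Cv} guarantees no clause ever has more than $k_\gamma$ of its unmarked variables coloured (once a clause reaches $k_\gamma$ its remaining variables are pushed into $V_1$ and are never processed), so every surviving clause of the residual formula still has at least $k_\beta-k_\gamma\ge 1$ unmarked, uncoloured variables; under $2^{k_\beta-k_\gamma}\ge 2\mathrm{e}ds$ with $s=36d^4k^5$, \Cref{corollary-local-uniform} applied to this residual formula gives $\max\{p^X_u,1-p^X_u\}\le\tfrac12\mathrm{e}^{1/s}\le\tfrac12+\tfrac1s=\pup$, hence $p^X_u\in[\plow,\pup]$, and likewise $p^Y_u$. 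Given this, a coloured variable on which the two assignments disagree must be a processed $u\notin\Lambda$ with $\plow<r_u\le\pup$ (if $u\in\Lambda$ then $\nu$ pins $u$ to $X(u)=Y(u)$; if $r_u\le\plow$ or $r_u>\pup$ then the common window $[\plow,\pup]$ forces $X^{\+C_v}(u)=Y^{\+C_v}(u)$), and such a $u$ is placed in $V_1$ by Line~\ref{line-add-u-1-Cv}; so $X^{\+C_v}$ and $Y^{\+C_v}$ agree on $\Vcol\cap V_2$, in particular on $\+S$. (ii) For $V_2\setminus\Vcol$: I would show the two distributions in Line~\ref{line-sample-V2-Cv} are \emph{equal}, so the optimal coupling there makes $X^{\+C_v}$ and $Y^{\+C_v}$ agree on $V_2\setminus\Vcol$ too. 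When the loop stops, the while condition of Line~\ref{line-while-condition} fails, so every surviving hyperedge meeting $V_2\setminus\Vcol$ is disjoint from $V_1$; and every hyperedge deleted in Line~\ref{line-delete-1-Cv} is satisfied by $X^{\+C_v}(\+S)=Y^{\+C_v}(\+S)$ with $\+S\subseteq\Vcol\cap V_2$. Hence the marginal of $\nu(\cdot\mid X^{\+C_v}(\Vcol))$ on $V_2\setminus\Vcol$ is governed only by constraints whose variables lie in $V_2$ and whose conditioning values lie in $\Vcol\cap V_2$ (together with the $\Lambda$–pins, which are shared), and on all of these $X^{\+C_v}$ and $Y^{\+C_v}$ coincide; so the two distributions agree, completing the proof.

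\textbf{Main obstacle.} I expect part (ii) — the structural assertion that $\+S$ together with $\+M$ screens $V_1$ from $V_2\setminus\Vcol$ — to be the delicate point: one must verify carefully, by induction over the loop, that the deletions in Line~\ref{line-delete-1-Cv} account for \emph{every} hyperedge that could link the two sides, and that the adaptive rule caps the coloured unmarked variables of every clause at $k_\gamma$ at all times. The bookkeeping in the first bullet — propagating the coupling with $Z$ through a revelation order that depends on the coins rather than merely on the revealed values — also needs some care, though it is conceptually routine.
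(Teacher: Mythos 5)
Your proof is correct and follows the same overall strategy as the paper's: establish the marginal law of each output by a revelation/chain-rule argument, then show agreement on $V_2$ by splitting into the coloured part $\Vcol\cap V_2$ (handled via the $[\plow,\pup]$ containment of the conditional marginals and the fact that disagreements at a coin $r_u\in(\plow,\pup]$ force $u$ into $V_1$) and the uncoloured part $V_2\setminus\Vcol$ (handled via a screening argument showing the conditionals in Line~\ref{line-sample-V2-Cv} coincide). The paper organizes the second bullet slightly differently, via two named properties — that $\nu_{V_2\setminus\Vcol}(\cdot\mid X^{\+C_v}(\Vcol))=\nu_{V_2\setminus\Vcol}(\cdot\mid X^{\+C_v}(\Vcol\cap V_2))$ and that $X^{\+C_v}(\Vcol\cap V_2)=Y^{\+C_v}(\Vcol\cap V_2)$ — and it factors the bound $\plow\le p^X_u,p^Y_u\le\pup$ out as a standalone lemma (\Cref{lemma-px-py-Cv}) whose proof pivots on the invariant~\eqref{eq-claim-unfixed-variable}, which you re-derive inline; the paper also spells out the $\+S\subseteq V_2$ observation explicitly. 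Your coupling-with-$Z$ phrasing of the first bullet is a more explicit rendering of the paper's one-line chain-rule appeal and correctly isolates the point that all later control flow is a deterministic function of the $r_u$'s. In short, same proof, mildly different bookkeeping; the steps you flag as "delicate" are precisely the ones the paper works out in detail.
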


We need the following lemma to prove Lemma~\ref{lemma-marginal-Cv}.
\begin{lemma}
  \label{lemma-px-py-Cv}
  In the coupling procedure $\+C_v$, if
  $2^{k_\beta-k_\gamma} \geq 2\mathrm{e}ds \text{ where } s = 36d^4k^5$, then for each $p^X_u$ and
  $p^Y_u$ computed Line~\ref{line-pX-pY-Cv}, if $u \in \+M \setminus \{v_0, v\}$, then $p^X_u = p^Y_u$;
  if $u \not\in \+M \setminus \{v\}$, then it holds that
  \begin{align*}
    \plow \leq p^X_u, p^Y_u \leq \pup.	
  \end{align*}
\end{lemma}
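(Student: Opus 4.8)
The plan is to dispatch the two cases of the lemma separately. The first case, $u\in\+M\setminus\{v_0,v\}=\Lambda$, is immediate from the definition of $\nu$ in \Cref{definition-nu}: since $\nu$ pins the assignment of $\Lambda$ to $X(\Lambda)=Y(\Lambda)$, the value of $u$ is deterministic under $\nu$, so $p^X_u=\one{X(u)=0}=\one{Y(u)=0}=p^Y_u$, using that $X$ and $Y$ agree off $v_0$. (As a by-product, whenever such a $u$ is processed the value written into $X^{\+C_v}(u)$ is $X(u)$ and into $Y^{\+C_v}(u)$ is $Y(u)$, so the partial assignments produced by the coupling never contradict the conditioning in $\nu$; this consistency will be reused below.)

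For the second case ($u$ unmarked or $u=v$), I would realize $p^X_u$ as a one-variable marginal of the uniform distribution over satisfying assignments of a residual formula. At the moment Line~\ref{line-pX-pY-Cv} is reached, set $\Theta\defeq\Vcol\cup\Lambda$ and let $\theta$ be the partial assignment on $\Theta$ combining the current values of $X^{\+C_v}$ with $X(\Lambda)$; then $p^X_u=\nu_u(0\mid X^{\+C_v})$ equals the probability that $u$ takes value $0$ in the uniform distribution over satisfying assignments of $\Phi$ simplified by $\theta$, and $p^Y_u$ corresponds analogously to the $Y$-assignment on $\Theta$. Since $u\notin\+M\setminus\{v\}$ we have $u\notin\Lambda$, and $u\notin\Vcol$ because $u$ was picked from $V_2\setminus\Vcol$ in Line~\ref{line-find-u-Cv}; hence $u$ is free in the residual formula. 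The crux is a structural claim, to be proved by induction on the while-loop iterations: \emph{every clause $c$ of $\Phi$ not satisfied by $\theta$ keeps at least $k_\beta-k_\gamma$ free variables in the residual formula}. By \Cref{condition-marked-variables}, $c$ has at most $k-k_\beta$ marked variables, so it suffices to show that at most $k_\gamma$ of the \emph{unmarked} variables of $c$ ever enter $\Vcol$. Here Line~\ref{line-adaptive-mark-Cv} is decisive: a variable joins $\Vcol$ only if it is picked in Line~\ref{line-find-u-Cv}, hence only if it currently lies in $V_2=V\setminus V_1$; and the instant exactly $k_\gamma$ unmarked variables of $c$ lie in $\Vcol$, the loop in Line~\ref{line-adaptive-mark-Cv} fires for $c$ and moves all of $c\setminus\Vcol$ into $V_1$, after which no variable of $c$ can be picked again. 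One must check $c$ is still present in the hypergraph when the count reaches $k_\gamma$: otherwise $c$ was deleted in Line~\ref{line-delete-1-Cv}, i.e.\ satisfied by the agreeing partial assignment on $\+S\subseteq\Vcol$, contradicting that $c$ is unsatisfied by $\theta$. Since $\Vcol$ grows one variable at a time, the count passes through $k_\gamma$ before it can exceed it, which gives the claim, and from the claim $|\vbl{c}\setminus\Theta|\ge k-(k-k_\beta)-k_\gamma=k_\beta-k_\gamma$.

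Given the structural claim, I would apply \Cref{corollary-local-uniform} to the residual formula with $k_1=k_\beta-k_\gamma$, $k_2=k$ and $s=36d^4k^5$: the hypothesis $2^{k_1}\ge2\mathrm{e}ds$ is precisely \eqref{eq-assume-s} and $s\ge k$ is trivial, so the residual formula is satisfiable — which simultaneously shows every conditional probability written in \Cref{alg-coupling-v} is well defined, closing the induction — and $\max\{p^X_u,\,1-p^X_u\}\le\tfrac12\exp(1/s)$. Finally I would convert this to the interval $[\plow,\pup]$ of \eqref{eq-def-plow-pup}: from $\mathrm{e}^{t}\le\tfrac{1}{1-t}$ for $0\le t<1$ one gets $\exp(1/s)\le\tfrac{s}{s-1}\le1+\tfrac2s$ for $s\ge2$, hence $p^X_u\le\tfrac12+\tfrac1s=\pup$ and $p^X_u\ge1-\tfrac12\exp(1/s)\ge\tfrac12-\tfrac1s=\plow$; the identical argument bounds $p^Y_u$, since $\Vcol$, $V_1$ and $\+S$ are shared by the coupling and so the structural claim applies verbatim to the $Y$-residual formula. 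I expect the structural claim to be the only genuine obstacle — specifically, verifying that the bookkeeping of $\Vcol$, $V_1$ and the clause-deletion step in \Cref{alg-coupling-v} really does cap the number of coloured unmarked variables per surviving clause at $k_\gamma$; the rest is a direct invocation of the local lemma plus an elementary inequality.
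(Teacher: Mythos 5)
Your proposal is correct and follows essentially the same route as the paper: Case~1 is the immediate observation that $\nu$ pins $\Lambda$, and Case~2 hinges on the structural invariant that every hyperedge not yet satisfied has at most $k_\gamma$ unmarked variables in $\Vcol$, enforced precisely by the adaptive move in Line~\ref{line-adaptive-mark-Cv}, after which \Cref{corollary-local-uniform} with $k_1 = k_\beta - k_\gamma$ and $s = 36d^4k^5$ gives the $[\plow,\pup]$ bounds. Your justification of the invariant (the count increases by one per iteration, so it passes through $k_\gamma$; at that instant $e$ cannot have been deleted by Line~\ref{line-delete-1-Cv} since otherwise it would already be satisfied by a sub-assignment of $\theta$; Line~\ref{line-add-u-2-Cv} then freezes $e\setminus\Vcol$ in $V_1$) is the same contradiction argument the paper makes around its round $R$, just traced slightly differently, and your remark that satisfiability of the residual formula closes the induction is a tidy way to make explicit something the paper leaves implicit.
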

\begin{proof}
  We prove the lemma by considering the two cases.

  \textbf{Case 1:} $u \in \+M \setminus \{v_0,v\}$. Due to the definition of $\nu$
  (Definition~\ref{definition-nu}), it must hold that $p^X_u = p^Y_u =0$ or $p^X_u = p^Y_u = 1$,
  which implies $p^X_u = p^Y_u$.

  \textbf{Case 2:} $u \not\in \+M \setminus \{v\}$. 
  We prove the lemma for $p^X_u$.  For $p^Y_u$ it holds similarly.
  In each step, we have
  $X^{\+C_v} \in \{0,1\}^{\Vcol}$.
  Due to \Cref{definition-nu}, the distributions $\nu_u(\cdot \mid X^{\+C_v})$ is the distribution $\mu$ conditional on the values of
  variables in $\+M_v \cup \Vcol$ are fixed. 
  We use $\+E_{H}$ to denote the set of all hyperedges in hypergraph $H_{\Phi}$.
  We claim that for each execution of Line~\ref{line-pX-pY-Cv}, the following property holds
  \begin{align}
  \label{eq-claim-unfixed-variable}
  \forall e \in \+E_H:\, |e \cap (\Vcol \setminus \+M) | \leq k_{\gamma}\lor\text{the clause represented by $e$ is satisfied by $X^{\+C_v}$}.	
  \end{align}
  By Condition~\ref{condition-marked-variables}, each hyperedge contains at least $k_{\beta}$ unmarked variables. By~\eqref{eq-claim-unfixed-variable}, for each hyperedge that is not satisfied by the current $X^{\+C_v}$, it contains at least $k_{\beta}-k_{\gamma}$ unmarked variables whose value are not fixed by the current $X^{\+C_v}$.
   By the definition of the distribution $\nu$ and~\Cref{corollary-local-uniform}, if $2^{k_{\beta}-k_{\gamma}} \geq 2eds$ where $s = 36d^4k^5$, then
  \begin{align*}
    p^X_u &= \nu_u(0\mid X^{\+C_v}) \leq \frac{1}{2}\exp\left(\frac{1}{s} \right) \leq \frac{1}{2}\left(1 + \frac{2}{s}\right)\leq \frac{1}{2} + \frac{1}{s},\\
    1 - p^X_u &= \nu_u(1\mid X^{\+C_v}) \leq \frac{1}{2}\exp\left(\frac{1}{s} \right) \leq \frac{1}{2}\left(1 + \frac{2}{s}\right)\leq \frac{1}{2} + \frac{1}{s}.   
  \end{align*}
  
 We now prove~\eqref{eq-claim-unfixed-variable}. 
 Note that at the beginning of the coupling procedure $\+C_v$, the set $\Vcol = \{v_0\} \subseteq \+M$, and thus for all hyperedges $e \in \+E$, it holds that $|e \cap (\Vcol \setminus \+M) | = 0$. Hence, the property in~\eqref{eq-claim-unfixed-variable} holds at the beginning. 
 
 Suppose in some execution of Line~\ref{line-pX-pY-Cv}, there is a hyperedge $e$ that violates the property in~\eqref{eq-claim-unfixed-variable}. 
 Formally, the clause represented by $e$ is not satisfied by $X^{\+C_v}$ and $|e \cap (\Vcol \setminus \+M) | > k_{\gamma}$.
 Then we can find the first round of the while-loop after which the clause represented by $e$ is not satisfied by $X^{\+C_v}$ and $|e \cap (\Vcol \setminus \+M) | = k_{\gamma}$.
 Denote this round by $R$.
 In round $R$ and any previous round of $R$, the clause represented by $e$ cannot be satisfied by $X^{\+C_v}$. Hence $e$ cannot be deleted in Line~\ref{line-delete-1-Cv} up to round $R$. 
 Since $|e \cap (\Vcol \setminus \+M) | = k_{\gamma}$, 
 $e$ satisfies the condition in Line~\ref{line-adaptive-mark-Cv}. 
 After Line~\ref{line-add-u-2-Cv}, we have $e \subseteq V_1 \cup \Vcol$,
 which means that, after the round $R$, any vertex $u \in e$ cannot be pick in Line~\ref{line-find-u-Cv}. 
 Hence, it holds that  $|e \cap (\Vcol \setminus \+M) | = k_{\gamma}$ after the round $R$,
 which contradicts to the assumption that $|e \cap (\Vcol \setminus \+M) | > k_{\gamma}$.
\end{proof}

\begin{proof}[Proof of Lemma~\ref{lemma-marginal-Cv}]
  Firstly, we prove that the coupling procedure must terminate. 
  This is because the size of the set $\Vcol$ is increased by one in each while-loop.

  Secondly, we prove that the final $X^{\+C_v}$ follows the distribution $\nu$ conditional on
  $X^{\+C_v}(v_0) = X(v_0) = 0$. The same argument applies to $Y^{\+C_v}$. At the
  beginning, we set $X^{\+C_v}(v_0) = 0$. Note that, in each step, it holds that
  $X^{\+C_v} \in \{0,1\}^{\Vcol}$ and the algorithm always extends $X^{\+C_v}$ according to the
  distribution $\nu$ conditional on the current assignment on $\Vcol$. By the chain rule, it is
  easy to verify the final $X^{\+C_v}$ follows the distribution $\nu$ conditional on
  $X^{\+C_v}(v_0) = X(v_0) = 1$.

  Finally, consider the final sets $V_1, V_2 , \mathcal{S},\Vcol$ and the final assignments
  $X^{\+C_v}$ and $Y^{\+C_v}$.
  We prove the following two properties.
  \begin{enumerate}[label={(\roman*)}]
  \item \label{fact-1} The two distributions $\nu_{V_2\setminus \Vcol}(\cdot \mid X^{\+C_v}(\Vcol)) $ and $\nu_{V_2\setminus \Vcol}(\cdot \mid X^{\+C_v}(\Vcol \cap V_2))$ are identical; 
    and the two distributions $\nu_{V_2\setminus \Vcol}(\cdot \mid Y^{\+C_v}(\Vcol)) $ and $\nu_{V_2\setminus \Vcol}(\cdot \mid Y^{\+C_v}(\Vcol \cap V_2))$ are identical. 
    \item \label{fact-2}$X^{\+C_v}(\Vcol \cap V_2) = Y^{\+C_v}(\Vcol \cap V_2)$.
  \end{enumerate}
  If the above two properties \ref{fact-1} and \ref{fact-2} hold, then
  $\nu_{V_2 \setminus \Vcol}(\cdot\mid X^{\+C_v}(\Vcol))$ and
  $\nu_{V_2 \setminus \Vcol}(\cdot\mid Y^{\+C_v}(\Vcol))$ can be perfectly coupled, which implies
  $X^{\+C_v}(V_2 \setminus \Vcol ) = Y^{\+C_v}(V_2 \setminus \Vcol)$. 
  Combining with Property~\ref{fact-2}, it proves that $X^{\+C_v}(V_2) = Y^{\+C_v}(V_2)$.

  We now prove Property \ref{fact-1}.
We show that two distributions $\nu_{V_2\setminus \Vcol}(\cdot \mid X^{\+C_v}(\Vcol)) $ and $\nu_{V_2\setminus \Vcol}(\cdot \mid X^{\+C_v}(\Vcol \cap V_2))$ are identical.
  For $Y^{\+C_v}$ it holds similarly.
  First observe that $\+S\subseteq V_2$.
  This is because a variable $u$ is added to $V_1$ either because the condition in Line~\ref{line-condition-V1-Cv} holds or because of Line~\ref{line-add-u-2-Cv}.
  In the first case, the condition in Line~\ref{line-condition-S-Cv} does not hold and $u$ will never be added to $\+S$.
  In the second case, $u\not\in\Vcol$ and thus $u\not\in\+S$ as well.
  Once a variable $u$ is added into $V_1$, $u$ cannot be picked in Line~\ref{line-find-u-Cv}, and thus $u$ cannot be added in $\mathcal{S}$ for the rest of the coupling.
  
  For any clause $c$ in the original CNF formula $\Phi$ such that $\vbl{c} \cap V_1 \neq \emptyset$ and $\vbl{c} \cap V_2 \neq \emptyset$, 
  we claim that one of the following properties must hold:
  \begin{itemize}
    \item The clause $c$ is satisfied by the assignment $X^{\+C_v}(\mathcal{S})$;
    \item The clause $c$ satisfies $\vbl{c} \cap V_2 \subseteq \Vcol$.
  \end{itemize}
  All clauses spanning both $V_1$ and $V_2\setminus \Vcol$ are in the first case,
  and they are satisfied by $X^{\+C_v}(\Vcol\cap V_2)$ as $\mathcal{S} \subseteq \Vcol\cap V_2$.
  This implies Property~\ref{fact-1}.

  We show the claim next. 
  Suppose there exists a clause $c$ with $\vbl{c} \cap V_1 \neq \emptyset$ and $\vbl{c} \cap V_2 \neq \emptyset$ 
  such that $c$ is not satisfied by $X^{\+C_v}(\mathcal{S})$ and $\vbl{c} \cap V_2 \not\subseteq \Vcol$. 
  Let $e$ denote the hyperedge that represents $c$ in $H_{\Phi}$.  
  Since the coupling procedure terminates, 
  the hyperedge $e$ must be deleted in Line~\ref{line-delete-1-Cv} during the coupling procedure $\+C_v$. 
  Otherwise, $e$ satisfies the condition in Line~\ref{line-while-condition}, and the coupling  procedure cannot terminate.
  However, since $c$ is not satisfied by $X^{\+C_v}(\mathcal{S})$ after the whole coupling procedure, 
  $c$ cannot be satisfied by $X^{\+C_v}(\mathcal{S})$ during the coupling procedure. 
  This implies that $e$ cannot be deleted in Lines~\ref{line-delete-1-Cv}.

  We then prove Property~\ref{fact-2}. 
  Suppose $X^{\+C_v}(\Vcol \cap V_2) \neq Y^{\+C_v}(\Vcol \cap V_2)$. 
  Let $u \in \Vcol \cap V_2$ be a variable such that $X^{\+C_v}(u) \neq Y^{\+C_v}(u)$. 
  Since $u \in \Vcol$ and $u \neq v_0$, 
  the coupling have computed $p^X_u,p^Y_u$ in Line~\ref{line-pX-pY-Cv}.  
  Since $X^{\+C_v}(u) \neq Y^{\+C_v}(u)$, it must be that $p^X_u \neq p^Y_u$. 
  By Lemma~\ref{lemma-px-py-Cv}, we know that $\plow \leq p^X_u, p^Y_u \leq \pup$. 
  By Lines~\ref{line-set-X-Cv} and~\ref{line-set-Y-Cv}, since $X^{\+C_v}(u) \neq Y^{\+C_v}(u)$, 
  \begin{align*}
    \plow < r_u\leq \pup,		
  \end{align*}
  where $r_u \in [0,1]$ is drawn in Line~\ref{line-draw-u-Cv}. 
  In this case, the variable $u$ must be added into $V_1$ in Line~\ref{line-add-u-1-Cv}
  and $u$ stays in $V_1$ for the rest of the coupling. 
  However, by assumption, $u \in V_2=V\setminus V_1$.
  Contradiction.
%
%
\end{proof}

By Lemma~\ref{lemma-marginal-Cv}, we know that the marginal distribution of $X^{\+C_v}(v)$ is
identical to $\nu_v(\cdot \mid X^{\+C_v}(v_0) = 0)$. By Definition~\ref{definition-nu}, we know that
$X^{\+C_v}(v)$ follows the law $\mu_v(\cdot \mid X({\+M}_v ))$. Similarly, we know that
$Y^{\+C_v}(v)$ follows the law $\mu_v(\cdot \mid Y({\+M}_v ))$. By \Cref{prop:coupling}, 
we have that $\forall v \in \+M \setminus \{v_0\}$, 
\begin{align*}
  D_v &= \DTV{\mu_v(\cdot \mid X({\+M}_v ))}{\mu_v(\cdot \mid Y({\+M}_v ))}\\
      &\leq \Pr[\+C_v]{X^{\+C_v}(v) \neq Y^{\+C_v}(v)}\\
      &\leq \Pr[\+C_v]{v \in V_1}.
\end{align*}
The last inequality holds because by Lemma~\ref{lemma-marginal-Cv}, 
if $X^{\+C_v}(v) \neq Y^{\+C_v}(v)$, then $v \not\in V_2$ and thus $v \in V_1$.  
Note that $D_{v_0} = 0$. 
The sum of all $D_v$ can be bounded as follows
\begin{align}
  \label{eq-bound-sum-Pr}
  \sum_{v \in {\+M}}D_v = \sum_{v \in {\+M} \setminus\{v_0\} }D_v \leq  \sum_{v \in {\+M} \setminus\{v_0\} }\Pr[\+C_v]{v \in V_1^{\+C_v}},
\end{align}
where we use $V_1^{\+C_v}$ to denote the set $V_1$ generated by the coupling procedure $\+C_v$.

\subsubsection{The coupling $\+C$}\label{section-coupling-C}

To bound the sum of all $\Pr[\+C_v]{v \in V_1^{\+C_v}}$, we introduce the coupling procedure $\+C$ in Algorithm~\ref{alg-coupling}. 
The coupling $\+C$ is basically the same as $\+C_v$ except that \emph{it treats all variables in $\+M_{v_0}$ as free variables}.
This difference is reflected in Line~\ref{line-pX-pY-C} of \Cref{alg-coupling},
where we use conditional distribution of $\mu$ instead of $\nu$ in Line~\ref{line-pX-pY-Cv} of \Cref{alg-coupling-v}.
However, as $\plow$ and $\pup$ stay the same,
we can construct a coupling of two couplings $\+C$ and $\+C_v$ such that the final set $V_1$ does not change.
In this way, we obtain a uniform treatment for $\Pr[\+C_v]{v \in V_1^{\+C_v}}$ for all $v$,
which leads to a better bound comparing to analysing $\Pr[\+C_v]{v \in V_1^{\+C_v}}$ individually. 

\begin{algorithm}[ht]
  \SetKwInOut{Input}{Input} \SetKwInOut{Output}{Output} \Input{a CNF formula $\Phi$, a hypergraph
    $H_\Phi = (V,\mathcal{E})$, a set of marked variables $\+M$, a
     variable $v_0 \in \+M$, the parameters $\plow,\pup$ in~\eqref{eq-def-plow-pup}, a parameter $k_\gamma > 0$ such that
    $k_\gamma < k_\beta$;} \Output{a pair of assignments $X^{\+C}, Y^{\+C} \in \{0,1\}^{\Vcol}$ for some
    random set $\Vcol \subseteq V$.}  $X^{\+C}(v_0) \gets 0$ and $Y^{\+C}(v_0) \gets 1$\;
  $V_1 \gets \{v_0\}$, $V_2 \gets V \setminus V_1$, $\Vcol \gets \{v_0\}$ and
  $\mathcal{S} \gets \emptyset$\; \While{$\exists e \in \mathcal{E}$ s.t.\ $e \cap V_1 \neq \emptyset, (e \cap V_2) \setminus \Vcol \neq \emptyset$ }{ let $e$ be the first such hyperedge and $u$
    be the first variable in $(e \cap V_2) \setminus \Vcol$\label{line-find-u-C}\; sample a random
    real number $r_u \in [0,1]$ uniformly at random\label{line-sample-r}\;
    let $p^{X}_u = \mu_u(0\mid X^{\+C})$ and $p^{Y}_u = \mu_u(0\mid Y^{\+C})$\label{line-pX-pY-C}\;
    extend $X^{\+C}$ further on variable $u$ s.t. $X^{\+C}(u) = 0$ if $r_u \leq p^X_u$, o.w.\ $X^{\+C}(u) = 1$\; 
    extend $Y^{\+C}$ further on variable $u$ s.t. $Y^{\+C}(u) = 0$ if $r_u \leq p^Y_u$, o.w.\ $Y^{\+C}(u) = 1$\; 
    $\Vcol \gets \Vcol \cup \{u\}$\;
    \If{$\plow < r_u \leq \pup$}{ $V_1 \gets V_1 \cup \{u\}$,
      $V_2 \gets V \setminus V_1$\label{line-add-u-1-C}\; } \If{
      $(u \not\in \+M) \land (r_u \leq \plow \lor r_u > \pup)$}{
      $\mathcal{S} \gets \mathcal{S} \cup \{u\}$ \label{line-set-S}\; }
    \For{$e \in \mathcal{E}$ s.t.\ $e$ is satisfied by both $X^{\+C}(\mathcal{S})$ and
      $Y^{\+C}(\mathcal{S})$\label{line-delete-1-condition}}{
      $\mathcal{E} \gets\mathcal{E} \setminus \{e\}$\label{line-delete-1}\; }
    \For{$e \in \mathcal{E}$ s.t.\ $|e \cap (\Vcol \setminus \+M) | = k_\gamma$ \label{line-delete-2-condition} }
      {$V_1 \gets V_1 \cup (e \setminus \Vcol )$; $V_2 \gets V \setminus V_1$\label{line-add-u-2-C}\;
      } 
     }
  \Return{$(X^{\+C}, Y^{\+C})$\;}
  \caption{The coupling procedure $\+C$}\label{alg-coupling}
\end{algorithm}

To be more precise, we have the following lemma.
\begin{lemma}
  \label{lemma-marginal-C}
  The following properties hold for the coupling procedure $\+C$ in Algorithm~\ref{alg-coupling}.
  \begin{itemize}
  \item The coupling procedure $\+C$ terminates eventually and returns a pair
    $X^{\+C},Y^{\+C} \in \{0,1\}^{\Vcol}$ for a random set $\Vcol\subseteq V$ such that $v_0\in\Vcol$.
  \item If $2^{k_\beta-k_\gamma} \geq 2\mathrm{e}ds \text{ where } s = 36d^4k^5$, then for any
    variable $v \in \+M \setminus\{v_0\}$,
    \begin{align*}
      \Pr[\+C_v]{v \in V_1^{\+C_v}} = \Pr[\+C]{v \in V^{\+C}_1},	
    \end{align*}
    where $V_1^{\+C_v}$ is the set $V_1$ generated by the coupling procedure $\+C_v$ and $V^{\+C}_1$
    is the set $V_1$ generated by the coupling procedure $\+C$.
  \end{itemize}
\end{lemma}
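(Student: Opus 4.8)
The plan is to prove the two bullets in turn; the first is immediate, the second is where the work is. Termination and $v_0\in\Vcol$ follow exactly as in \Cref{lemma-marginal-Cv}: each pass of the while-loop picks a variable $u\in(e\cap V_2)\setminus\Vcol$ and adjoins it to $\Vcol$, so $\abs{\Vcol}$ strictly increases and the loop halts after at most $\abs{V}$ passes, while $\Vcol$ starts as $\{v_0\}$ and never shrinks, so the output is a pair in $\{0,1\}^{\Vcol}\times\{0,1\}^{\Vcol}$ with $v_0\in\Vcol$. For the identity, the key observation is that once the uniform reals $r_u$ drawn in the loop are fixed, the entire evolution of the bookkeeping data $(V_1,V_2,\+S,\Vcol)$ and of the working hyperedge set is determined \emph{independently of the marginal probabilities $p^X_u,p^Y_u$}, and this holds for both $\+C$ and $\+C_v$: a variable $u$ enters $V_1$ in Line~\ref{line-add-u-1-C} iff $r_u\in(\plow,\pup]$; it enters $\+S$ in Line~\ref{line-set-S} iff $u\notin\+M$ and either $r_u\le\plow$ or $r_u>\pup$; the clause deletions in Line~\ref{line-delete-1} depend only on $\+S$ and on the restrictions $X(\+S),Y(\+S)$; and the update in Line~\ref{line-add-u-2-C} depends only on $\Vcol$, $\+M$ and the surviving hyperedges. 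The only remaining ingredient is that for $u\in\+S$ the sampled values are forced to $X(u)=Y(u)=\one{r_u>\pup}$, and this needs $\plow\le p^X_u,p^Y_u\le\pup$ for unmarked $u$.

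For $\+C_v$ that bound is exactly \Cref{lemma-px-py-Cv} (an unmarked $u$ lies outside $\+M\setminus\{v\}$). For $\+C$ I would prove the analogue: if $2^{k_\beta-k_\gamma}\ge 2\mathrm{e}ds$ with $s=36d^4k^5$, then whenever $p^X_u,p^Y_u$ are computed in Line~\ref{line-pX-pY-C} with $u\notin\+M$, one has $\plow\le p^X_u,p^Y_u\le\pup$. The proof copies that of \Cref{lemma-px-py-Cv}: establish the invariant that every hyperedge $e$ satisfies $\abs{e\cap(\Vcol\setminus\+M)}\le k_\gamma$ or represents a clause already satisfied by the current assignment (using Line~\ref{line-delete-2-condition}; note the invariant counts only the \emph{unmarked} vertices of $\Vcol$, which $\+C$ adds to $\Vcol$ exactly as the unmarked-only analysis would), so that each surviving unsatisfied clause still has at least $k_\beta-k_\gamma$ unfixed variables, and then apply \Cref{corollary-local-uniform} to the uniform distribution over satisfying assignments of the formula simplified by the current partial assignment — which is legitimate because $X^{\+C}(\Vcol)$, being built by the chain rule from $\mu$, always has positive probability under $\mu$.

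With both marginal bounds available, I would run $\+C$ and $\+C_v$ in lockstep, feeding them the same real $r_u$ for the variable selected at each pass (a ``coupling of the two couplings''). By induction on the number of passes, the data $(V_1,V_2,\+S,\Vcol)$ and the surviving hyperedge set coincide in the two runs, and moreover $X^{\+C}(u)=Y^{\+C}(u)=X^{\+C_v}(u)=Y^{\+C_v}(u)$ for every $u\in\+S$. In the inductive step the selected hyperedge and variable agree because they depend only on the (equal) current data and the fixed orderings; the $V_1$-update, the $\+S$-update and the Line~\ref{line-add-u-2-C} update then act identically by the observations above; and the clause deletions agree because $\+S$ and the values of $X,Y$ on $\+S$ agree, both being the forced value $\one{r_u>\pup}$. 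The only variables whose sampled values feed back into the updates are those of $\+S$; for every other variable — the marked ones (never in $\+S$; in particular $v_0$, and the distinguished variable $v$ of $\+C_v$, which is merely a marked, $\nu$-free variable on one side and a marked, $\mu$-free variable on the other), and the unmarked variables placed in $V_1$ — the sampled value is irrelevant to the updates and may differ between the runs. Since the closing extension steps of $\+C_v$ do not modify $V_1$, the final sets satisfy $V_1^{\+C_v}=V_1^{\+C}$ on every coupled run, whence $\Pr[\+C_v]{v\in V_1^{\+C_v}}=\Pr[\+C]{v\in V_1^{\+C}}$ for all $v\in\+M\setminus\{v_0\}$.

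I expect the main obstacle to be the $\+C$-analogue of \Cref{lemma-px-py-Cv}: one must verify that the ``$\abs{e\cap(\Vcol\setminus\+M)}\le k_\gamma$ or satisfied'' invariant still goes through even though $\+C$ is allowed to fix marked variables in $\Vcol$ (it does, since the invariant only counts the unmarked vertices of $\Vcol$, whose fate is the same on both sides), and then be scrupulous in the coupling-of-couplings induction that the clause-deletion step truly depends only on data the two runs share. The rest is routine bookkeeping.
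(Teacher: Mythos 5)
Your proposal is correct and follows essentially the same route as the paper: termination from monotone growth of $\Vcol$; a coupling of the two couplings that feeds identical reals $r_u$; the inductive invariant that $(V_1,V_2,\+S,\Vcol)$, the surviving hyperedge set, and the forced values on $\+S$ agree across the two runs; and the reliance on the marginal bounds (Lemma~\ref{lemma-px-py-Cv} for $\+C_v$ and its analogue, the paper's Lemma~\ref{lemma-px-py-C}, for $\+C$) to see that the values on $\+S$ are the deterministic function $\one{r_u>\pup}$, which is exactly what makes the clause-deletion step synchronise. The paper's proof is the same induction; you add a small, harmless clarification (the closing extension steps of $\+C_v$ do not touch $V_1$) that the paper leaves implicit, and you correctly note that your initial ``evolution is independent of the marginals'' slogan is only true modulo those marginal bounds.
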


We need the following lemma,
which is the analogue of \Cref{lemma-px-py-Cv}.
It follows from the same proof of the second case of \Cref{lemma-px-py-Cv}.
\begin{lemma}
  \label{lemma-px-py-C}
  In the coupling procedure $\+C$, if
  $2^{k_\beta-k_\gamma} \geq 2\mathrm{e}ds \text{ where } s = 36d^4k^5$, then for each $p^X_u$ and
  $p^Y_u$ computed Line~\ref{line-pX-pY-C}, it holds that
  \begin{align*}
    \plow \leq p^X_u, p^Y_u \leq \pup.	
  \end{align*}
\end{lemma}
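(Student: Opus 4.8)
As the remark preceding the statement indicates, this is the same argument as Case~2 of the proof of \Cref{lemma-px-py-Cv}, and the plan is to verify that only a cosmetic change is needed. The one difference between $\+C$ and $\+C_v$ is the distribution used to compute $p^X_u$ and $p^Y_u$: Line~\ref{line-pX-pY-C} uses conditional marginals of $\mu$, while Line~\ref{line-pX-pY-Cv} uses those of $\nu$. Everything that controls the set dynamics --- the choice of hyperedge and of the variable $u$ in Line~\ref{line-find-u-C}, the updates to $V_1,V_2,\Vcol,\mathcal{S}$, and the hyperedge deletions in Line~\ref{line-delete-1} --- depends only on the sampled thresholds $r_u$ and on which clauses are satisfied by $X^{\+C}(\mathcal{S})$ and $Y^{\+C}(\mathcal{S})$, never on the precise marginal values. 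Hence the structural invariant used in \Cref{lemma-px-py-Cv} transfers unchanged to $\+C$.

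Concretely, I would first re-establish the analogue of~\eqref{eq-claim-unfixed-variable}: at each execution of Line~\ref{line-pX-pY-C}, every hyperedge $e \in \+E_H$ satisfies $|e \cap (\Vcol \setminus \+M)| \leq k_\gamma$, or the clause represented by $e$ is already satisfied by $X^{\+C}$ (and likewise for $Y^{\+C}$). The induction is identical to the one in \Cref{lemma-px-py-Cv}: initially $\Vcol = \{v_0\} \subseteq \+M$, so $|e \cap (\Vcol \setminus \+M)| = 0$ for every $e$; if some $e$ ever violates the invariant, let $R$ be the first round of the while-loop after which its clause is unsatisfied by $X^{\+C}$ and $|e \cap (\Vcol \setminus \+M)| = k_\gamma$. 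Since that clause stays unsatisfied through round $R$, $e$ is never removed in Line~\ref{line-delete-1} up to round $R$, so at round $R$ it meets the condition in Line~\ref{line-delete-2-condition}; after Line~\ref{line-add-u-2-C} we get $e \subseteq V_1 \cup \Vcol$, and thereafter no variable of $e$ can be picked in Line~\ref{line-find-u-C}, so $|e \cap (\Vcol \setminus \+M)|$ never exceeds $k_\gamma$ --- a contradiction.

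Given the invariant, fix an execution of Line~\ref{line-pX-pY-C} with chosen variable $u$. The distribution $\mu_u(\cdot \mid X^{\+C})$ is the uniform distribution over satisfying assignments of the CNF formula obtained from $\Phi$ by fixing the variables of $\Vcol$ according to $X^{\+C}$ and dropping the clauses this already satisfies; by \Cref{condition-marked-variables} each surviving clause has at least $k_\beta$ unmarked variables, of which at most $k_\gamma$ lie in $\Vcol$ by the invariant, so each surviving clause retains at least $k_\beta - k_\gamma$ variables in this formula (and at most $k$), while every variable occurs in at most $d$ clauses. Since $2^{k_\beta - k_\gamma} \geq 2\mathrm{e}ds$ with $s = 36d^4k^5 \geq k$, \Cref{corollary-local-uniform} (with $k_1 = k_\beta - k_\gamma$, $k_2 = k$) gives
\begin{align*}
  p^X_u = \mu_u(0 \mid X^{\+C}) \leq \frac{1}{2}\exp\left(\frac{1}{s}\right) \leq \frac{1}{2} + \frac{1}{s} = \pup,
\end{align*}
and applying the same bound to $1 - p^X_u = \mu_u(1 \mid X^{\+C})$ yields $p^X_u \geq 1 - \pup = \plow$; the identical computation with $Y^{\+C}$ in place of $X^{\+C}$ bounds $p^Y_u$. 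The only point that genuinely needs checking is the first paragraph's claim that the set dynamics of $\+C$ are insensitive to replacing $\nu$ by $\mu$, so that the invariant's proof can be reused verbatim; everything after that is the local-lemma estimate already carried out in \Cref{corollary-local-uniform}.
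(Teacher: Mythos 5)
Your proposal is correct and follows exactly the route the paper intends: the paper discharges \Cref{lemma-px-py-C} by invoking ``the same proof of the second case of \Cref{lemma-px-py-Cv},'' and you have verified that the structural invariant \eqref{eq-claim-unfixed-variable} and its inductive proof carry over verbatim to $\+C$ before applying \Cref{corollary-local-uniform}. No gaps.
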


\begin{proof}[Proof of Lemma~\ref{lemma-marginal-C}]
  We first show that the coupling procedure must terminate. This is because 
  the size of the set $\Vcol$ is increased by one in each while-loop.

  Fix a variable $v \in \+M \setminus \{v_0\}$. Consider the coupling procedure $\+C_v$
  (Algorithm~\ref{alg-coupling-v}) and the coupling procedure $\+C$ (Algorithm~\ref{alg-coupling}). 
  We couple the two procedures by sampling the same random real number $r_u \in [0,1]$ for each variable $u$.
  We claim that the following invariant holds for the two coupling procedures:
  \begin{equation}
    \label{eq-identical}
    \begin{split}
      V_1^{\+C_v} = V^{\+C}_1&, \quad  V_2^{\+C_v} = V^{\+C}_2, \quad \Vcol^{\+C_v} = \Vcol^{\+C}, \quad \+E^{\+C_v} = \+E^{\+C},\\
      \+S^{\+C_v} = \+S^{\+C}&,\quad X^{\+C_v}(\+S^{\+C_v}) = X^{\+C}(\+S^{\+C}), \quad
      Y^{\+C_v}(\+S^{\+C_v}) = Y^{\+C}(\+S^{\+C})
    \end{split}
  \end{equation}
  This implies that $ V_1^{\+C_v} = V^{\+C}_1$ in the end, which is the second item of the lemma. 
  We show~\eqref{eq-identical} by induction.

  Initially, it holds that $ V_1^{\+C_v} = V^{\+C}_1 = \{v_0\}$,
  $V_2^{\+C_v} = V^{\+C}_2 = V \setminus \{v_0\}$, $\Vcol^{\+C_v} = \Vcol^{\+C} = \{v_0\}$ and
  $\+S^{\+C_v} = \+S^{\+C} = \emptyset$.

  For each step of the while-loop, suppose~\eqref{eq-identical} holds, then two coupling procedure
  pick the same hyperedge $e$ and the same vertex $u \in e$. 
  The two coupling procedures sample the same random number $r_u$
  and use the same parameters $\plow$ and $\pup$ in~\eqref{eq-def-plow-pup}.
  Hence, after the Line~\ref{line-set-S} of either coupling,
  $V_1^{\+C_v} = V^{\+C}_1$, $V_2^{\+C_v} = V^{\+C}_2$, $\Vcol^{\+C_v} = \Vcol^{\+C}$, $\+E^{\+C_v} = \+E^{\+C}$, and $\+S^{\+C_v} = \+S^{\+C}$. 
  Note that if the variable $u$ is added into $\+S$ in Line~\ref{line-set-S}, 
  then it must be that $(u \not\in \+M) \land (r_u \leq \plow \lor r_u > \pup)$. 
  If $r_u \leq \plow$, 
  then by Lemma~\ref{lemma-px-py-Cv} and Lemma~\ref{lemma-px-py-C}, 
  in both coupling procedures $r_u \leq p^X_u$ and $r_u \leq p^Y_u$, which implies
  \begin{align*}
       X^{\+C_v}(u) = X^{\+C}(u) = Y^{\+C_v}(u) = Y^{\+C}(u) = 0. 	
  \end{align*}
  Similarly, if $r_u > \pup$, then
  \begin{align*}
    X^{\+C_v}(u) = X^{\+C}(u) = Y^{\+C_v}(u) = Y^{\+C}(u) = 1. 	
  \end{align*}
  Hence, the invariant in~\eqref{eq-identical} holds after the Line~\ref{line-set-S}.  
  It is easy to verify that after the rest of the while-loop, 
  the invariants in~\eqref{eq-identical} still hold.
\end{proof}

By Lemma~\ref{lemma-marginal-C} and inequality~\eqref{eq-bound-sum-Pr}, we have
\begin{align*}
  \sum_{v \in {\+M}}D_v &\leq 	 \sum_{v \in {\+M} \setminus\{v_0\} }\Pr[\+C_v]{v \in V_1^{\+C_v}}\\
                        &\leq \sum_{v \in V \setminus \{v_0\} }\Pr[\+C]{v \in V_1^{\+C}}\\
                        &= \E[\+C]{|V_1^{\+C}|} - \Pr[\+C]{v_0 \in V_1^{\+C}}\\
                        & = \E[\+C]{|V_1^{\+C}|} - 1,
\end{align*}
where the last equation holds because $v_0$ must be in the set $V_1^{\+C}$.
Our next step is to bound $\E[\+C]{|V_1^{\+C}|}$.

\subsubsection{The proof of Lemma~\ref{lemma-bound-dv-sum}}\label{section-proof-of-dv-sum}

Finally, we finish the proof of Lemma~\ref{lemma-bound-dv-sum} by proving the following lemma.

\begin{lemma}
  \label{lemma-Ex-V1}
  In the coupling procedure $\+C$ (Algorithm~\ref{alg-coupling}), if $2^{k_{\gamma}} \geq 36d^4k^4$
  and $2^{k_{\beta}-k_{\gamma}} \geq 2eds$ where $s = 36d^4k^5$, it holds that
  \begin{align*}
    \E[\+C]{|V_1|}\leq \frac{3}{2}.
  \end{align*}
\end{lemma}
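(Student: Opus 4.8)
The plan is to bound $\sum_{v\in V\setminus\{v_0\}}\Pr[\+C]{v\in V_1}$ by $\tfrac12$; since $v_0$ always lies in $V_1$ during the execution of \Cref{alg-coupling}, this gives $\E[\+C]{\abs{V_1}}=1+\sum_{v\ne v_0}\Pr[\+C]{v\in V_1}\le\tfrac32$. I would attach to every $v\in V_1\setminus\{v_0\}$ a \emph{witness path}: an induced path $e_1,\dots,e_m$ in the square $\Lin(H_\Phi)^2$ of the line graph of $H_\Phi$ (\Cref{definition-line-graph}) with $v_0\in\vbl{e_1}$, $v\in\vbl{e_m}$, and each $e_i$ ``bad''. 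To produce it, I would trace the execution backwards from the step at which $v$ was put into $V_1$: $v$ entered either in Line~\ref{line-add-u-1-C} (a direct hit $\plow<r_v\le\pup$ while some $e_m\ni v$ was being processed) or in Line~\ref{line-add-u-2-C} as a member of $e_m\setminus\Vcol$ for a hyperedge $e_m$ with $\abs{e_m\cap(\Vcol\setminus\+M)}=k_\gamma$; in either case $e_m$ acted only because it met $V_1$ at a variable that had itself entered $V_1$ through an earlier hyperedge $e_{m-1}$, and a short check shows $e_{m-1}$ and $e_m$ are within distance $2$ in $\Lin(H_\Phi)$. Iterating back to a hyperedge containing $v_0$ produces a walk of bad hyperedges, and removing chords (whenever $e_i$ is $\Lin(H_\Phi)^2$-adjacent to $e_j$ with $j>i+1$, delete the hyperedges in between) only discards hyperedges and so terminates in an induced path whose hyperedges are all still bad.

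Second, I would estimate the probability that a fixed induced path $\Gamma=(e_1,\dots,e_m)$ is realised. The key single-hyperedge bound is: conditioned on the history of \Cref{alg-coupling} up to the moment the first variable of a hyperedge $e$ is coloured, the probability that $e$ is bad is at most $q\triangleq\tfrac{2k}{s}+2\cdot 2^{-k_\gamma}$. Indeed, the direct-hit part is $\le k\cdot\tfrac{2}{s}$ by a union bound over $\vbl{e}$; for the saturation part, reveal the $r_u$ in the order the process colours the variables of $e$: a freshly coloured unmarked variable fails to satisfy $e$ only when its $r_u$ falls in an interval of length at most $\pup$ (by \Cref{lemma-px-py-C}), so after $k_\gamma$ such variables the probability of no satisfaction is at most $\pup^{\,k_\gamma}\le 2\cdot 2^{-k_\gamma}$, using $\pup=\tfrac12+\tfrac1s$. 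Crucially this incurs no $\binom{k}{k_\gamma}$ factor, because the ``responsible'' $k_\gamma$ variables are the first $k_\gamma$ unmarked variables of $e$ coloured by the process, not a set we are free to choose. Since $\Gamma$ is induced in $\Lin(H_\Phi)^2$, the odd-indexed hyperedges $e_1,e_3,e_5,\dots$ are at pairwise distance $\ge 3$ in $\Lin(H_\Phi)$ and hence have pairwise disjoint vertex sets, so peeling off their bad events one at a time along the exploration filtration gives $\Pr[\+C]{\Gamma\text{ is a witness}}\le q^{\lceil m/2\rceil}$.

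Third, I would union-bound over induced paths. Every hyperedge of $H_\Phi$ shares a variable with fewer than $dk$ other hyperedges, so $\Lin(H_\Phi)^2$ has maximum degree below $d^2k^2$, and at most $d$ hyperedges contain $v_0$; hence there are at most $d\,(d^2k^2)^{m-1}$ induced paths of length $m$ from a hyperedge at $v_0$, and each reaches at most $k$ variables at its far end. Therefore $\sum_{v\ne v_0}\Pr[\+C]{v\in V_1}\le\sum_{m\ge1}d\,(d^2k^2)^{m-1}\,k\,q^{\lceil m/2\rceil}$. Under the hypotheses $2^{k_\gamma}\ge 36d^4k^4$ and $s=36d^4k^5$ one has $q\le\tfrac{1}{9d^4k^4}$, and a short geometric-series computation, splitting over odd and even $m$, bounds the series by $O(1/(dk))\le\tfrac12$, which completes the proof.

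I expect the main obstacle to be getting Step~2 exactly right: one must choose the ``bad hyperedge'' event so that (i) its conditional probability is $O(2^{-k_\gamma}+k/s)$ with no subset-enumeration overhead — which is precisely why one must exploit the process-determined exploration order rather than quantify over $k_\gamma$-subsets of variables, and is the reason the $\{2,3\}$-tree witnesses of \cite{Moi19,guo2019counting} are too lossy here; (ii) it is measurable for a filtration under which disjoint-support hyperedges decouple, so that the induced-path structure really yields a product bound; and (iii) the witness produced by back-tracing and chord-removal provably consists of bad hyperedges. Once the induced-path formulation is in place, the counting and the geometric series are routine, and the slack in $2^{k_\gamma}\ge 36d^4k^4$, $s=36d^4k^5$ is what makes the final bound land below $\tfrac32$.
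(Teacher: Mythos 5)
Your overall architecture matches the paper's: trace $v\in V_1\setminus\{v_0\}$ back through an \emph{induced} path of failed hyperedges in $\Lin^2(H_\Phi)$, bound the per-hyperedge failure probability by $\approx \frac{2k}{s}+\pup^{k_\gamma}$, multiply over the pairwise-disjoint odd-indexed hyperedges, and union-bound over induced paths with a geometric series. You also correctly isolate the key observation that one must exploit the process-determined exploration order rather than quantify over $k_\gamma$-subsets of $e$. However, the crux of the lemma --- turning the per-hyperedge bound into a \emph{product} bound $q^{\lceil m/2\rceil}$ --- is where your write-up has a genuine gap.

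``Peeling off their bad events one at a time along the exploration filtration'' is not a valid argument as stated. The events ``$e$ is failed'' for $e\in\+D$ are not resolved at a single stopping time each; a hyperedge can accumulate colored variables over many interleaved rounds, and its failure status may be decided only after the process has already touched other hyperedges in $\+D$. So there is no linear ordering of the $\+D$-events along which you can condition and multiply. The paper resolves this by a re-implementation of $\+C$ that pre-allocates, for each $e\in\+D$, two dedicated i.i.d.\ uniform sequences $\+R_{e,1}$ (length $k-k_\beta$, used for marked variables of $e$) and $\+R_{e,2}$ (length $k_\gamma$, used for unmarked ones), consumed in discovery order. A further twist is needed: when the literal $\neg u$ (rather than $u$) appears in the clause of $e$, the algorithm uses $1-\+R_{e,2}(i)$ in place of $\+R_{e,2}(i)$. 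This sign-flip is what makes the ``saturation'' witness event a \emph{fixed} event $\{\forall i\le k_\gamma:\ \+R_{e,2}(i)\le\pup\}$ depending only on $\+R_{e,2}$, independent of which unmarked variables of $e$ are discovered, in what order, and with what polarity. With this, $\{e\text{ failed}\}\subseteq\+A_e$ where $\+A_e$ is measurable in $\+R_{e,1}\cup\+R_{e,2}$ alone (\Cref{claim-upper-bound}), and since hyperedges in $\+D$ are vertex-disjoint these packets are mutually independent, giving the clean product. Without some construction of this kind (a dedicated-randomness coupling, or an equivalent ``deferred decisions'' argument), your step two is an assertion, not a proof.

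A second, smaller slip: your back-tracing claims $v_0\in\vbl{e_1}$. This is false in general --- a hyperedge can go failed via Reason (ii) without containing $v_0$, after a string of successfully coupled variables starting at $v_0$. The correct base case (see the paper's \Cref{lemma-exist-path}) is only $e_1\in N^2_{v_0}$, i.e.\ $e_1$ within $\Lin$-distance $1$ of a hyperedge containing $v_0$. This changes your initial-hyperedge count from $d$ to $d^2k$, but your constants still go through because the assumption $2^{k_\gamma}\ge 36d^4k^4$ has ample slack.

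Everything else --- the induced-path witness from chord-removal, the identification of $q\le\tfrac1{9d^4k^4}$, the $\Lin^2$ degree bound $d^2k^2$, and the geometric series --- matches the paper and is correct.
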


In Lemma~\ref{lemma-Ex-V1}, we can take
\begin{align*}
  k_{\gamma} = \left \lceil \frac{4}{9}k_{\beta} \right\rceil.	
\end{align*}
Then, the following condition is sufficient to imply the condition of Lemma~\ref{lemma-Ex-V1}:
\begin{align}
  \label{eq-condition-beta}
  2^{k_{\beta}} \geq (36)^{\frac{9}{4}}d^9k^9, \quad 	2^{k_{\beta}}\geq   (144\mathrm{e})^{\frac{9}{5}}d^{9}k^{9}.
\end{align}
Note that $2^{k_{\beta}}\geq 2^{16}d^9k^9$ is a sufficient condition for~\eqref{eq-condition-beta}. 

Consider the coupling procedure $\+C$ defined in Algorithm~\ref{alg-coupling}.  Upon termination,
the coupling procedure generates assignments $X^{\+C}$ and $Y^{\+C}$, and the sets of variables
$V_1, V_2, \Vcol,\mathcal{S} \subseteq V$.  We define the failed hyperedge as follows.
\begin{definition}[failed hyperedge]
  \label{definition-failed-edge}
  We say a hyperedge $e \in \mathcal{E}$ is failed if one of the following events occurs after the coupling procedure $\+C$:
  \begin{enumerate}[label={(\roman*)}]
    \item \label{reason-1} there exists $v \in (e \cap \Vcol) \setminus \{v_0\} $ such that $\plow < r_v \leq \pup$;
    \item \label{reason-2} $|e \cap (\Vcol \setminus \+M) | = k_{\gamma}$ and $e$ is not satisfied by both $X(\mathcal{S})$ and $Y(\mathcal{S})$.
  \end{enumerate}
\end{definition}
In the following, we will use Reason~\ref{reason-1} and Reason~\ref{reason-2} to denote the above two reasons of failure.

\begin{definition}[line graph]
  \label{definition-line-graph}
  Let $H=(V, \mathcal{E})$ be a hypergraph.  
  The line graph $\Lin(H)=(V_L, E_L)$ has hyperedges in $\+E$ as its vertices and two hyperedges are adjacent if they intersect, 
  i.e.\ $V_L = \mathcal{E}$ and $\{e_1,e_2\}\in E_L $ iff $e_1 \cap e_2 \neq \emptyset$.
\end{definition}

Let $\Lin^2(H)$ denote the power graph of $\Lin(H)$.  
Two vertices in $\Lin^2(H)$ are adjacent if and only if their distance in $\Lin(H)$ is at most $2$.
%
For any vertex $v \in V$, we define the sets $N_v, N_v^2$ of hyperedges as
\begin{align*}
N_v &\triangleq \{e \in \mathcal{E} \mid v \in e \};\\
  N_v^2 &\triangleq \{e \in \+E \mid (v \in e) \lor (\exists e' \in \mathcal{E} \text{ s.t. } e \cap e' \neq \emptyset \land v \in e') \}.	
\end{align*}
The set $N_v$ is the set of all hyperedges that contains $v$.
The set $N_v^2$ is the set of all hyperedges that either contains $v$ or intersects with some hyperedges containing $v$. 
The following lemma asserts that for any $v\in V_1$,
there are a path in $\Lin^2(H)$ that leads to $v$.
\begin{lemma}
  \label{lemma-exist-path}
  For any variable $v \in V \setminus \{v_0\}$, if $v \in V_1$, then there must exist a sequence of
  hyperedges $e_1,e_2,\ldots, e_{\ell}$ for some $\ell \geq 1$ such that the following properties hold:
  \begin{itemize}
    \item $e_1 \in N^2_{v_0}$ and $v \in e_{\ell}$;
    \item for all $1 \leq i \leq \ell $, the hyperedge $e_i$ is failed;
    \item for all $1 \leq i < \ell $, $e_{i}$ and $e_{i+1}$ are adjacent in $\Lin^2(H)$.
  \end{itemize}
\end{lemma}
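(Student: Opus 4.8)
The plan is to trace through the coupling procedure $\+C$ and extract, for any $v \in V_1 \setminus \{v_0\}$, a witnessing chain of failed hyperedges that connects $v$ back to the starting vertex $v_0$. The key observation is that $V_1$ grows only in two places in \Cref{alg-coupling}: in Line~\ref{line-add-u-1-C} when a single variable $u$ with $\plow < r_u \le \pup$ is added, and in Line~\ref{line-add-u-2-C} when an entire hyperedge $e$ satisfying $|e \cap (\Vcol \setminus \+M)| = k_\gamma$ is absorbed. In the first case the variable $u$ was chosen in Line~\ref{line-find-u-C} because it lies in some hyperedge $e$ that already intersects $V_1$; since $\plow < r_u \le \pup$, the hyperedge $e$ is failed by Reason~\ref{reason-1}. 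In the second case the hyperedge $e$ itself is failed by Reason~\ref{reason-2} (it has exactly $k_\gamma$ colored unmarked variables and was never deleted in Line~\ref{line-delete-1}, hence is not satisfied by $X(\+S)$ or $Y(\+S)$), and $e$ must already intersect $V_1$ at the moment of absorption because the variable that triggered the count $|e \cap (\Vcol \setminus \+M)| = k_\gamma$ was itself just colored, and colored variables entered $\Vcol$ only through the while-loop, each of which requires the picked hyperedge to meet $V_1$.

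The main step is then a backward induction on the order in which variables enter $V_1$. First I would set up a notion of "the hyperedge responsible for $u$ joining $V_1$": for $u$ added via Line~\ref{line-add-u-1-C}, this is the hyperedge $e$ picked in that round; for a whole hyperedge $e$ absorbed via Line~\ref{line-add-u-2-C}, the responsible edge is $e$ itself. In either case the responsible edge $e$ is failed and satisfies $e \cap V_1 \ne \emptyset$ at the time $u$ (or $e$) is added, where the intersecting element $w \in e \cap V_1$ entered $V_1$ strictly earlier. By induction hypothesis there is a chain $e_1, \dots, e_j$ of failed hyperedges with $e_1 \in N^2_{v_0}$, $w \in e_j$, and consecutive edges adjacent in $\Lin^2(H)$. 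Since $w \in e_j \cap e$, the edges $e_j$ and $e$ actually share a vertex, so they are adjacent in $\Lin(H)$ and a fortiori in $\Lin^2(H)$; appending $e$ gives the desired chain ending at $v$. The base case is when $w = v_0$: then the responsible edge $e$ contains $v_0$, so $e \in N_{v_0} \subseteq N^2_{v_0}$, and the one-edge chain $e_1 = e$ works, noting that the case distinction requires checking that $v_0$ itself can never be put into $V_1$ by Line~\ref{line-add-u-1-C} (it starts there) and that $v_0$ is marked so it is never the triggering unmarked variable in Line~\ref{line-delete-2-condition}.

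The delicate point — and the place where I expect to spend the most care — is the bookkeeping that the responsible hyperedge $e$ in the Line~\ref{line-add-u-2-C} case is genuinely failed by Reason~\ref{reason-2}. One must argue that $e$ is never removed from $\mathcal{E}$ in Line~\ref{line-delete-1}: if it were deleted, it would have been satisfied by both $X^{\+C}(\+S)$ and $Y^{\+C}(\+S)$ at that moment, and since $\+S$ only grows and assignments on $\+S$ never change, it would remain satisfied — but a deleted hyperedge can never again be picked in the while-loop nor trigger Line~\ref{line-delete-2-condition}, contradicting that it was absorbed via Line~\ref{line-add-u-2-C}. Conversely Reason~\ref{reason-2} requires $e$ to be \emph{not} satisfied by $X^{\+C}(\+S)$ or $Y^{\+C}(\+S)$ at the \emph{end}; this follows because if it became satisfied by $\+S$ at any later point it would have been deleted then, again contradicting nothing was picked from it. A similar but easier consistency check is needed for Reason~\ref{reason-1}: the value $r_v$ is sampled once and never changes, so "$\plow < r_v \le \pup$" is a property of the final state exactly when it held at the time $v$ was colored. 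Once these monotonicity facts about $\+S$, $\mathcal{E}$, and the $r_u$'s are pinned down, the induction goes through cleanly and yields the three bulleted properties of \Cref{lemma-exist-path}.
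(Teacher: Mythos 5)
Your induction scheme (on the order in which variables enter $V_1$) and your monotonicity observations about $\+S$, $\+E$, and the $r_u$'s are sound, but there is a genuine gap in your treatment of the Line~\ref{line-add-u-2-C} case, and this gap is precisely where the $\Lin^2(H)$ structure in the statement becomes essential.

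You claim that when a hyperedge $e$ is absorbed via Line~\ref{line-add-u-2-C} (because $|e \cap (\Vcol \setminus \+M)| = k_\gamma$), "it must already intersect $V_1$ at the moment of absorption because the variable that triggered the count was itself just colored, and colored variables entered $\Vcol$ only through the while-loop, each of which requires the picked hyperedge to meet $V_1$." This does not follow. The picked hyperedge $e'$ in Line~\ref{line-find-u-C} does meet $V_1$, and the just-colored variable $u$ lies in $e' \cap e$, but $u$ was picked from $(e' \cap V_2) \setminus \Vcol$ and only enters $V_1$ if $\plow < r_u \leq \pup$. When the absorption is triggered by a $u$ that couples successfully ($r_u \leq \plow$ or $r_u > \pup$), $u$ stays in $V_2$, and nothing forces $e$ to have a vertex in $V_1$: all of $e$'s previously colored unmarked variables may likewise have been coupled successfully. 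In that case the element $w \in e \cap V_1$ you want to hand to your induction hypothesis simply does not exist. The same issue contaminates your base case, where you assume the responsible edge contains $v_0$; for the first failed hyperedge arising via Reason~\ref{reason-2} after a string of successful couplings, only the picked hyperedge $e'$ contains $v_0$, and the failed $e$ merely shares a vertex with $e'$.

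The fix is exactly the move the paper makes, and it explains why the lemma is phrased in $\Lin^2(H)$ rather than $\Lin(H)$: one routes the chain through the (possibly non-failed) picked hyperedge $e'$ as a bridge. In the absorption case, $e'$ contains some $u' \in V_1$ which (if $\neq v_0$) entered earlier; by induction there is a chain ending at a failed $e_j$ with $u' \in e_j$. Then $u' \in e_j \cap e'$ and $u \in e' \cap e$, so $e_j$ and $e$ are at distance $2$ in $\Lin(H)$, hence adjacent in $\Lin^2(H)$, even though $e$ may not touch $V_1$ and $e'$ may not be failed. The base case follows the same pattern: if $u' = v_0$, then $e' \in N_{v_0}$ and $e$ is adjacent to $e'$, giving $e \in N^2_{v_0}$ rather than $e \in N_{v_0}$. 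The paper organizes this as an induction on the order in which hyperedges become failed rather than on the order variables enter $V_1$, but that is a cosmetic difference; your structure would work once you replace the false "$e \cap V_1 \neq \emptyset$" claim with the two-hop bridge through $e'$.
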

\begin{proof}
%
%
  We first show that each variable $u \in V_1 \setminus \{v_0\}$ must be incident to a failed hyperedge.  
  For $u \in V_1 \setminus\{v_0\}$, $u$ is either added into $V_1$ in Line~\ref{line-add-u-1-C} or in Line~\ref{line-add-u-2-C}. 
  Suppose $u$ is added into $V_1$ in Line~\ref{line-add-u-1-C}. 
  In this case, the variable $u$ is picked in Line~\ref{line-find-u-C} due to some hyperedge $e$. 
  Then, it must be that $\plow \leq r_u \leq \pup$. 
  This implies that $u$ is incident to the failed hyperedge $e$ (for Reason~\ref{reason-1}). 
  Next suppose $u$ is added into $V_1$ in Line~\ref{line-add-u-2-C}. 
  In this case, $u \in e$ for some $e \in \mathcal{E}$ satisfying the condition in Line~\ref{line-delete-2-condition}. 
  Hence, the hyperedge $e$ is failed for Reason~\ref{reason-2} and $u$ is incident to $e$.
  If $e$ satisfies the condition in Line~\ref{line-delete-2-condition}, then $e \subseteq V_1 \cup \Vcol$ after Line~\ref{line-add-u-2-C}.
  Hence, the condition in Reason~\ref{reason-2} holds for $e$ for the rest of the coupling.

  Thus we only need to show the following claim: for each failed $e \in \mathcal{E}$, 
  there must exist a sequence of hyperedges $e_1,e_2,\ldots, e_{\ell}$ for some $\ell \geq 1$ such that the following properties hold:
  \begin{itemize}
    \item $e_1 \in N^2_{v_0}$ and $e = e_{\ell}$;
    \item for all $1 \leq i \leq \ell $, $e_i$ is failed;
    \item for all $1 \leq i < \ell $, $e_{i}$ and $e_{i+1}$ are adjacent in $\Lin^2(H)$.
  \end{itemize}
  
  Consider the execution of the coupling procedure $\+C$.
  We say a hyperedge $e \in \+E$ becomes failed once $e$ satisfies one of the reasons in \Cref{definition-failed-edge}. 
  Note that once a hyperedge becomes failed, it will stay failed for the rest of the coupling. 
  Moreover, the failed hyperedge must intersect the hyperedge satisfying the condition of the round of the while-loop in which it becomes failed.
  We list all failed hyperedges $e_{i_1},e_{i_2},\ldots,e_{i_r}$ such that $e_{i_j}$ is the $j$-th hyperedge that becomes failed. 
  Ties are broken arbitrarily.
  We prove the claim above by induction on the index $j$ from $1$ to $r$.

  For the base case, we only need to show that $e_{i_1}\in N^2_{v_0}$.
  Notice that $v\neq v_0$ and $v\in V_1$.
  If some hyperedge containing $v_0$ is failed,
  then $e_{i_1}\in N_{v_0}$.
  Otherwise, the only possibility that $V_1\neq\{v_0\}$ is that after setting a number of successfully coupled variables,
  there is a failed hyperedge satisfying Reason~\ref{reason-2}.
  In the round when this happens, the current hyperedge chosen in Line~\ref{line-find-u-C} must contain $v_0$ (otherwise $\+C$ terminates with $V_1=\{v_0\}$).
  The first such hyperedge is $e_{i_1}$ and thus $e_{i_1}\in N^2_{v_0}$.


  Suppose the claim holds for $e_{i_1},e_{i_2},\ldots,e_{i_{k-1}}$. We show the claim for $e_{i_k}$.
  Consider the round of the while-loop when $e_{i_k}$ becomes failed. 
  In Line~\ref{line-find-u-C} of this round, the coupling procedure picks a hyperedge $e$ and a variable $u\in e$ such that $e \cap V_1 \neq \emptyset$.
  As $e_{i_k}$ went failed in this round,
  either $e_k = e$ (due to Reason~\ref{reason-1}),
  or $e_{i_k} \in N_u$ (due to Reason~\ref{reason-2}).
  In both cases, $e\cap e_{i_k}\neq\emptyset$. 
  If $v_0 \in e$, then $e_{i_k}\in N^2_{v_0}$ and the claim holds by letting $e_1=e_{i_k}$. 
  Otherwise, since $e$ is picked in this round,
  there must exist a variable $u' \in V_1 \cap e$ and $u' \neq v_0$. 
  Thus $u'$ is incident to a failed hyperedge $e_{i_j}$ for some $1\leq j \leq k -1$. 
  Since $u'\in e\cap e_{i_j}$ and $e\cap e_{i_k}\neq\emptyset$,
  $e_{i_j}$ and $e_{i_k}$ are adjacent in $\Lin^2(H)$. 
  By the induction hypothesis, there exists a failed hyperedge path in $\Lin^2(H)$ that ends with $e_{i_j}$. 
  This proves the claim for $e_{i_k}$.
\end{proof}

An \emph{induced path} is a path that is also an induced subgraph.
In particular, if we have an induced path $e_1,e_2,\dots,e_\ell$,
then for any $i<j$ such that $\abs{i-j}\ge 2$, $e_i$ and $e_j$ are not adjacent.
The following lemma follows from taking the shortest path among all paths guaranteed in \Cref{lemma-exist-path}.

\begin{corollary}
  \label{corollary-exist-shortest-path}
  For any variable $v \in V \setminus \{v_0\}$, if $v \in V_1$, then there must exist a sequence of
  hyperedges $e_1,e_2,\ldots, e_{\ell}$ for some $\ell \geq 1$ such that the following properties
  hold
  \begin{itemize}
    \item $e_1 \in N^2_{v_0}$ and $v \in e_{\ell}$;
    \item for all $1 \leq i \leq \ell $, $e_i$ is failed;
    \item $e_1,e_2,\ldots, e_{\ell}$ is an induced path in $\Lin^2(H)$.
  \end{itemize}
\end{corollary}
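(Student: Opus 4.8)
The plan is to derive \Cref{corollary-exist-shortest-path} directly from \Cref{lemma-exist-path} by a minimality argument, using nothing further about the coupling procedure $\+C$. Fix a variable $v \in V \setminus \{v_0\}$ with $v \in V_1$. By \Cref{lemma-exist-path}, the family of sequences $e_1, e_2, \ldots, e_\ell$ ($\ell \geq 1$) satisfying the three listed conditions --- $e_1 \in N^2_{v_0}$ and $v \in e_\ell$, every $e_i$ failed, and consecutive $e_i, e_{i+1}$ adjacent in $\Lin^2(H)$ --- is nonempty. Among all such sequences I would choose one of minimum length $\ell$, and claim that this minimal sequence is automatically an induced path in $\Lin^2(H)$.

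The verification has two parts. First, the minimal sequence has no repeated vertices: if $e_i = e_j$ with $i < j$, deleting $e_{i+1}, \ldots, e_j$ produces a strictly shorter sequence that still begins in $N^2_{v_0}$, still ends at a hyperedge containing $v$, still consists of failed hyperedges, and still has consecutive elements adjacent in $\Lin^2(H)$, contradicting minimality; hence it is a path. Second, it is chordless: if $e_i$ and $e_j$ were adjacent in $\Lin^2(H)$ for some $j \geq i+2$, then $e_1, \ldots, e_i, e_j, \ldots, e_\ell$ is again a valid sequence --- its hyperedges form a subset of the original sequence so they are all failed, the new consecutive pair $e_i, e_j$ is adjacent by assumption, and the endpoints $e_1$ and $e_\ell$ are unchanged --- of strictly smaller length, again contradicting minimality. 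Therefore the minimal sequence is an induced path with the required endpoint properties, which is exactly the statement of the corollary.

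There is essentially no obstacle here: the only point requiring (minor) care is that all three defining properties are preserved under passing to a subsequence that keeps the first and last elements --- being failed is inherited by any sub-collection, the endpoint conditions concern only $e_1$ and $e_\ell$, and a shortcut replaces one adjacent pair by another adjacent pair --- so a shortest valid sequence is forced to be chordless. Consequently \Cref{corollary-exist-shortest-path} is a purely graph-theoretic consequence of \Cref{lemma-exist-path}, with no new input from the analysis of $\+C$.
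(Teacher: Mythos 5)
Your proposal is correct and is essentially the paper's argument, which is stated in a single line: "The following lemma follows from taking the shortest path among all paths guaranteed in Lemma 4.10." You have simply spelled out the standard minimality bookkeeping — that a shortest sequence has no repeats and no chords, because either would yield a strictly shorter valid sequence with the same endpoints, while being failed and the endpoint conditions are preserved under passing to a subsequence.
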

%

We are now ready to prove Lemma~\ref{lemma-Ex-V1}, namely
\begin{align*}
  \E[\+C]{|V_1|} \leq \frac{3}{2}.	
\end{align*}

Fix any induced path (IP)  $e_1,e_2,\ldots, e_{\ell}$ in $\Lin^2(H)$. 
We bound the probability that all hyperedges in this path are failed hyperedges. Obliviously,
\begin{align}
  \label{eq-disjoint-edge}
  \Pr[\+C]{\forall 1\leq i \leq \ell, e_i \text{ is failed}} \leq \Pr[\+C]{\forall 1\leq j \leq \lceil\ell / 2\rceil, e_{2j-1} \text{ is failed}}. 
\end{align}
To bound the RHS of~\eqref{eq-disjoint-edge}, we define the set of disjoint hyperedges 
\begin{align}
  \label{eq-def-set-D}
  {\+D} \triangleq \{e_{2j-1} \mid 1\leq j \leq \lceil\ell / 2\rceil \}.
\end{align}
Because this is an induced path in $\Lin^2(H)$,
for any $e, e' \in \+D$, it holds that $e \cap e' = \emptyset$.
However, because of the subtlety of the adaptive coupling procedure $\+C$,
we cannot claim that the events of $e$ being failed are independent from each other for $e\in\+D$ based on this disjointness alone.
Instead, we will implement the coupling procedure $\+C$ in a slightly different way.

For each hyperedge $e \in \+D$, 
we define two sequences of random numbers: $\+R_{e,1}$ of length $k - k_{\beta}$ and $\+R_{e,2}$ of length $k_{\gamma}$,
where 
\begin{itemize}
  \item for each $1 \leq i \leq k - k_{\beta} $, $\+R_{e,1}(i) \in [0,1]$ is a uniform and independent real number;
  \item for each $1 \leq i \leq k _{\gamma}$, $\+R_{e,2}(i) \in [0,1]$ is a uniform and independent real number;
\end{itemize}
Suppose each hyperedge $e \in \+D$ maintains two indices $i_{e,1}$ and $i_{e,2}$. 
Initially, $i_{e,1}= i_{e,2} = 1$. 
We run the coupling procedure $\+C$ with the following modification. 
For each round of the  while-loop in $\+C$, 
if the vertex $u$ picked in Line~\ref{line-find-u-C} satisfies $u \in e$ for some $e \in \+D$ (such $e$ is unique because all hyperedges in $\+D$ are disjoint), 
then we modify Line~\ref{line-sample-r} as follows:
\begin{itemize}
  \item if $u \in \+M$, let $r_u = \+R_{e,1}(i_{e,1})$, and let $i_{e,1}\gets i_{e,1}+1$;
  \item if $u \not\in \+M$, let $r_u = \+R_{e,2}(i_{e,2})$ if the literal $u$ appears in the clause represented by $e$; 
    let $r_u = 1 - \+R_{e,2}(i_{e,2})$ if the literal $\neg u$ appears in the clause represented by $e$, and let $i_{e,2} \gets i_{e,2} +1$.
\end{itemize}
Note that all numbers in $\+R_{e, 1}$ and $\+R_{e,2}$ are uniformly distributed over $[0,1]$.
In the modification above, each $r_u$ is either $r$ or $1 - r$ for some $r \in \+R_{e, 1} \cup \+R_{e,2}$.
Hence, each $r_u$ is uniformly distributed over $[0,1]$. 
For any $e \in \+D$,
it contains at most $k - k_{\beta}$ marked variables, 
and there are at most $k_{\gamma}$ unmarked variables $u \in e$ that need to sample $r_u$ in $\+C$. 
Hence, the two sequences $\+R_{e,1}$ and $\+R_{e,2}$ will not exhaust during the coupling procedure $\+C$.
As a result, the modification above will not affect the execution and the outcome of $\+C$.

For each $e \in \+D$, we say the  event $\+A_e$ occurs if one of the following two events occurs:
\begin{itemize}
\item there exists a random number $r$ in $\+R_{e,1}\cup \+R_{e,2}$ such that $\plow< r \leq \pup$;
\item for all $1 \leq i \leq k _{\gamma}$, $0 \leq \+R_{e,2}(i) \leq \pup $. 
\end{itemize}

Then, we have the following claim.
\begin{claim}
\label{claim-upper-bound}
For each hyperedge $e \in \+D$, if $e$ is a failed hyperedge after the coupling procedure $\+C$, then the event $\+A_e$ must occur.
\end{claim}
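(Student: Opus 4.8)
The plan is to dispose of the two reasons in \Cref{definition-failed-edge} separately, showing in each case that one of the two events in the definition of $\+A_e$ is forced. The single bookkeeping fact used throughout is the way the re-implemented coupling $\+C$ relabels its randomness on $e$: because the hyperedges of $\+D$ are pairwise disjoint, every variable $u\in e$ picked in Line~\ref{line-find-u-C} receives $r_u=\+R_{e,1}(\cdot)$ when $u\in\+M$, $r_u=\+R_{e,2}(\cdot)$ when $u\notin\+M$ and the literal $u$ appears in the clause represented by $e$, and $r_u=1-\+R_{e,2}(\cdot)$ when $u\notin\+M$ and $\neg u$ appears there; moreover $\plow+\pup=1$ by~\eqref{eq-def-plow-pup}. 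Since every entry of $\+R_{e,1},\+R_{e,2}$ is a continuous random variable, I will work on the probability-one event that no entry equals $\plow$ or $\pup$ exactly, which is harmless for the probability estimate that follows the claim.

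If $e$ is failed for Reason~\ref{reason-1}, there is $v\in(e\cap\Vcol)\setminus\{v_0\}$ with $\plow<r_v\le\pup$. As $v\in\Vcol$ it was picked, so $r_v\in\{\rho,1-\rho\}$ for some entry $\rho$ of $\+R_{e,1}\cup\+R_{e,2}$; when $r_v=\rho$ this says $\plow<\rho\le\pup$, and when $r_v=1-\rho$ the identity $\plow+\pup=1$ turns it into $\plow\le\rho<\pup$, hence $\plow<\rho\le\pup$ on our event. Either way the first event of $\+A_e$ holds. If instead $e$ is failed for Reason~\ref{reason-2}, let $u_1,\dots,u_{k_\gamma}$ be the unmarked variables of $e$ lying in $\Vcol$; these are exactly the unmarked variables of $e$ that were ever picked, so they consumed precisely the entries $\+R_{e,2}(1),\dots,\+R_{e,2}(k_\gamma)$. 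Immediately after $u_i$ is picked it is placed in $V_1$ if $\plow<r_{u_i}\le\pup$ and in $\+S$ otherwise, and for an unmarked variable these cases are exhaustive. If some $u_i$ lands in $V_1$ we are back in the situation of Reason~\ref{reason-1} and the first event of $\+A_e$ holds; so assume every $u_i\in\+S$. Then \Cref{lemma-px-py-C} gives $\plow\le p^X_{u_i},p^Y_{u_i}\le\pup$, which together with the assignment rule of \Cref{alg-coupling} shows $X^{\+C}(u_i)=Y^{\+C}(u_i)$; hence $X^{\+C}(\+S)=Y^{\+C}(\+S)$, so ``$e$ not satisfied by both $X^{\+C}(\+S)$ and $Y^{\+C}(\+S)$'' just means $e$ is unsatisfied by $X^{\+C}(\+S)$, and therefore the literal on $u_i$ in $e$ is false under $X^{\+C}$ for every $i$.

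It then remains to read off, for each $i$, that $\+R_{e,2}(\cdot)\in[0,\pup]$. If $u_i$ appears positively in $e$, falseness means $X^{\+C}(u_i)=0$, so $r_{u_i}\le p^X_{u_i}\le\pup$; combined with $u_i\in\+S$ (i.e.\ $r_{u_i}\le\plow$ or $r_{u_i}>\pup$) this forces $r_{u_i}\le\plow$, and here $r_{u_i}=\+R_{e,2}(\cdot)$, so $\+R_{e,2}(\cdot)\le\plow\le\pup$. If $\neg u_i$ appears in $e$, falseness means $X^{\+C}(u_i)=1$, so $r_{u_i}>p^X_{u_i}\ge\plow$, and here $r_{u_i}=1-\+R_{e,2}(\cdot)$, whence $\+R_{e,2}(\cdot)<1-\plow=\pup$. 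Ranging over all $i$, every entry of $\+R_{e,2}$ lies in $[0,\pup]$, which is exactly the second event of $\+A_e$. The only genuinely delicate part of the argument is this last bookkeeping step — matching the two reasons for failure against the two events of $\+A_e$ while correctly tracking the sign of each literal in the relabelling of $\+R_{e,2}$; the boundary cases where an $\+R$-entry equals $\plow$ or $\pup$ are null events absorbed by the almost-sure conditioning set up at the start.
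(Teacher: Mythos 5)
Your proof is correct. Two points of comparison with the paper's own argument. First, for Reason~\ref{reason-2} you split into cases according to whether some unmarked $u_i\in e\cap\Vcol$ landed in $V_1$ (reducing that case to the Reason~\ref{reason-1} analysis), and handle the all-in-$\+S$ case separately. The paper avoids the split: it argues directly by contraposition that for every $i$, one must have $\+R_{e,2}(i)\le\pup$ --- if the literal on $u_i$ is positive and $r_{u_i}>\pup$ then $u_i\in\+S$ with $X^{\+C}(u_i)=Y^{\+C}(u_i)=1$ and $c_e$ is satisfied, a contradiction, and symmetrically for the negated literal using $r_{u_i}\le\plow$; this covers the middle-range (i.e.\ $V_1$) case for free, since $\plow<r_{u_i}\le\pup$ also gives $r_{u_i}\le\pup$. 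Your split is therefore valid but unnecessary; the paper's route for Reason~\ref{reason-2} is leaner. Second, you explicitly condition on the full-measure event that no entry of $\+R_{e,1}\cup\+R_{e,2}$ hits $\plow$ or $\pup$, which is needed because when $r_v=1-\+R_{e,2}(i)$ the interval $\plow<r_v\le\pup$ pulls back to $\plow\le\+R_{e,2}(i)<\pup$ (open/closed endpoints flip under $r\mapsto 1-r$). The paper's proof of the Reason~\ref{reason-1} case asserts $\plow<r\le\pup$ for some $r\in\+R_{e,1}\cup\+R_{e,2}$ without flagging this, so your treatment of the boundary is in fact a touch more careful, though the distinction is null and does not affect the subsequent probability estimate.
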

\begin{proof}
Fix a hyperedge $e \in \+D$. After the coupling procedure $\+C$, for all $u \in {e \cap \Vcol} \setminus \{v_0\}$, the random number $r_u$ comes from 	$\+R_{e,1}\cup\+R_{e,2}$. 
Suppose $e$ is a failed hyperedge after the coupling procedure $\+C$, by Definition~\ref{definition-failed-edge}, here are two cases.

Reason~\ref{reason-1}: there exists $v \in (e \cap \Vcol) \setminus \{v_0\} $ such that $\plow < r_v \leq \pup$, then there must exist a random number $r$ in $\+R_{e,1}\cup \+R_{e,2}$ such that $\plow< r \leq \pup$;

Reason~\ref{reason-2}: $|e \cap (\Vcol \setminus \+M) | = k_{\gamma}$ and $e$ is not satisfied by both $X(\mathcal{S})$ and $Y(\mathcal{S})$. Let $c_e$ denote the clause represented by $e$. We list all variables in $u_1,u_2,\ldots,u_{k_{\gamma}}$ in $|e \cap (\Vcol \setminus \+M) |$ such that $u_i$ is the $i$-th variable processed by the while-loop in $\+C$. Fix $1\leq i \leq k_{\gamma} $. 
\begin{itemize}
\item Suppose the literal $u_i$ appears in $c_e$. If $r_{u_i} > \pup$, then by Lemma~\ref{lemma-px-py-C} and Line~\ref{line-set-S}, we have $X^{\+C}(u_i) = Y^{\+C}(u_i) = 1$ and $u_i \in \+S$. In this case $c_e$ is satisfied by $X(\mathcal{S})$ and $Y(\mathcal{S})$, and the event in Reason~\ref{reason-2} cannot occur. So we must have $r_{u_i} \leq \pup$. Since $r_{u_i} = \+R_{e,2}(i)$, we have $\+R_{e,2}(i) \leq \pup$. 
\item Suppose the literal $\neg u_i$ appears in $c_e$. If $r_{u_i} \leq \plow$, then by Lemma~\ref{lemma-px-py-C} and Line~\ref{line-set-S}, we have $X^{\+C}(u_i) = Y^{\+C}(u_i) = 0$ and $u_i \in \+S$. In this case $c_e$ is satisfied by $X(\mathcal{S})$ and $Y(\mathcal{S})$, and the event in Reason~\ref{reason-2} cannot occur. So we must have $r_{u_i} > \plow$. Since $r_{u_i} = 1 - \+R_{e,2}(i)$, we have $\+R_{e,2}(i) < 1 - \plow = \pup$. 
\end{itemize}
This implies for all $1 \leq i \leq k _{\gamma}$, $0 \leq \+R_{e,2}(i) \leq \pup $.
\end{proof}

For each $e \in \+D$, all reals numbers in $\+R_{e,1}$ and $\+R_{e,2}$ are sampled uniformly and independently. We use $\+R_e$ to denote this product distribution. And we use $\+R$ to denote the product distribution of all $\+R_e$ for $e \in \+D$.
By the definition of $\+D$ in~\eqref{eq-def-set-D},  we can bound the RHS of~\eqref{eq-disjoint-edge} as 
\begin{align*}
\Pr[\+C]{\forall 1\leq i \leq \ell, e_i \text{ is failed}} &\leq \Pr[\+C]{\forall e \in \+D, e \text{ is failed} }\\
\tag{by \Cref{claim-upper-bound}}\quad&\leq  \Pr[\+R]{ \bigwedge_{e \in \+D}\+A_e}\\ 
 &= \prod_{e \in \+D}\Pr[\+R_e]{\+A_e}\\
\tag{by the definition of $\+A_e$}\quad &\leq\prod_{e \in \+D}\left( \frac{2k}{s} + \left( \frac{1}{2} + \frac{1}{s} \right)^{k_\gamma}\right).
\end{align*}
We define $\pfailed$ as
\begin{align}
  \label{eq-failed-edge}
 \pfailed \triangleq \frac{2k}{s} + \left(\frac{1}{2} + \frac{1}{s}\right)^{k_\gamma}.
\end{align}
Note that $|\+D| \geq \ell / 2$.
Thus, for any induced path (IP)  $e_1,e_2,\ldots, e_{\ell}$, we have
\begin{align*}
\Pr[\+C]{\forall 1\leq i \leq \ell, e_i \text{ is failed}} \leq \pfailed^{\ell / 2}.	
\end{align*}

By Corollary~\ref{corollary-exist-shortest-path}, we have for any vertex $v \neq v_0$,
\begin{align}
  \label{eq-bound-v-in-V1}
  \Pr[\+C]{v \in V_1} &\leq \sum_{\substack{\text{IP } e_1,e_2,\ldots,e_{\ell} \text{ in } L^2 \\ \text{satisfying } e_1 \in N^2_{v_0}, v \in e_{\ell} }}\Pr[\+C]{\forall 1\leq i \leq \ell, e_i \text{ is failed}}\notag\\
  &\leq \sum_{\substack{\text{IP } e_1,e_2,\ldots,e_{\ell} \text{ in } L^2 \\ \text{satisfying } e_1 \in N^2_{v_0}, v \in e_{\ell} }}\pfailed^{\ell / 2}.
\end{align}
Note that $v_0 \in V_1$, then we have
\begin{align*}
  \E[\+C]{|V_1|} - 1  &= \sum_{v \in V \setminus \{v_0\} }\Pr[\+C]{v \in V_1}\\
  \tag*{(by~\eqref{eq-bound-v-in-V1})}\quad &\leq \sum_{v \in V \setminus \{v_0\} }\sum_{\substack{\text{IP } e_1,e_2,\ldots,e_{\ell} \text{ in } L^2 \\ \text{satisfying } e_1 \in N^2_{v_0}, v \in e_{\ell} }}\pfailed^{\ell / 2}\\
  \quad&\leq \sum_{\substack{\text{IP } e_1,e_2,\ldots,e_{\ell} \text{ in } L^2 \\ \text{satisfying } e_1 \in N^2_{v_0} }}k\cdot\pfailed^{\ell / 2},
\end{align*}
where in the last inequality, we enumerate all the IPs starting from $N^2_{v_0}$ and use the fact that each hyperedge contains $k$ vertices.
Note that the maximum degree of $\Lin^2(H)$ is at most $d^2k^2$ and there are at most $d^2k$ hyperedges in
set $N^2_{v_0}$.
Thus, we have
\begin{align*}
  \E[\+C]{|V_1|} - 1 \leq \sum_{\ell = 1}^{\infty} d^2k \cdot (d^2k^2)^{\ell - 1} \cdot k \cdot  \pfailed^{\ell / 2} = \sum_{\ell = 1}^{\infty} (d^2k^2)^{\ell} \cdot  \pfailed^{\ell / 2}=  \sum_{\ell = 1}^{\infty} c^\ell = \frac{c}{1-c},
\end{align*}
where $c \defeq d^2k^2\sqrt{\pfailed}$.	
Hence, to prove $\E{|V_1|} \leq \frac{3}{2}$, it is sufficient to prove that
\begin{align*}
  c = d^2k^2\sqrt{\pfailed} \leq \frac{1}{3},
\end{align*}
which, in turn, is implied by
\begin{align*}
  \pfailed \leq \frac{1}{9d^4k^4}.
\end{align*}
Recall that $\pfailed$ is defined in~\eqref{eq-failed-edge} and $s = 36d^4k^5$. We have
\begin{align*}
  \pfailed &= \frac{2k}{s} + \left(\frac{1}{2} + \frac{1}{s}\right)^{k_\gamma}\\
        &\leq \frac{1}{18d^4k^4} + \left( \frac{1}{2} \right)^{k_{\gamma}}\exp\left(\frac{2k_{\gamma}}{s} \right)\\
        \tag{by $k_\gamma \leq k$}\quad&\leq \frac{1}{18d^4k^4} + \left( \frac{1}{2} \right)^{k_{\gamma} - 1}.
\end{align*}
Since $2^{k_{\gamma}} \geq 36d^4k^4$, 
we have that $\pfailed \leq \frac{1}{9d^4k^4}$.

\section{Analyze the rejection sampling subroutine}\label{section-proof-main}
In this section, we will analyze the $\sample$ subroutine (Algorithm~\ref{alg-sample}).

Let $\Lambda\subseteq\+M$ be a subset of marked variables, $\eps>0$, $X\in \set{0,1}^{\Lambda}$ and $S\subseteq V\setminus\Lambda$. 
We continue to use the same notations as in Section~\ref{section-sampleunmark}. 
Let $\Phi^X=(V^X,C^X)$ be the formula obtained from $\Phi$ simplified under $X$,
and $\Phi^X=\Phi_1^X\land\Phi_2^X\land\dots\land\Phi_\ell^X$ where all $\Phi^X_i=(V_i^X,C_i^X)$ are disjoint.
For every $i\in[m]$, $H_i^X=(V_i^X,\+E_i^X)$, the hypergraph representation of $\Phi_i^X$, is connected. 
Assume without loss of generality that $V_i\cap S\ne\varnothing$ for $1\le i\le m$ and $V_i\cap S=\varnothing$ for $m<i\le\ell$.

\begin{lemma}\label{lemma-sample-correctness}
  For any $0< \eta < 1$, the time complexity of $\sample(\Phi,\delta,X,S)$ is  $O\left(|S| \left(\frac{n}{\delta}\right)^{\frac{\eta}{10}} d^2k^3 \log^2 \frac{n}{\delta}\right)$.
  Furthermore, if  $2^{k_\beta} \geq \frac{20}{\eta}\mathrm{e}dk$ and $\abs{\+E_i^X}\le \toolarge$ for every $1\le i\le m$, then
  $\sample(\Phi,\delta,X,S)$ returns a random assignment $Y\in\set{0,1}^S$ satisfying
  $d_{\-{TV}}\tp{Y,\mu_S\tp{\cdot\mid X}}\le \delta$.
\end{lemma}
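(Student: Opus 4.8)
I would split the proof into its two independent assertions: a running-time bound that needs no hypothesis, and a total-variation bound under $2^{k_\beta}\ge\frac{20}{\eta}\mathrm{e}dk$ and $\abs{\+E_i^X}\le\toolarge$ for all $i\le m$. For the running time, recall that $\sample$ (Algorithm~\ref{alg-sample}) builds $\Phi^X$, finds the connected components $H_i^X$ of $H_{\Phi^X}$ meeting $S$, checks their sizes, and on each surviving component runs $\Rejection(\Phi^X_i,R)$ with $R=\repeattime$. Representing $\Phi$ by the usual bipartite adjacency structure (each variable storing its $\le d$ incident clauses, each clause its $\le k$ variables), I would note that the components meeting $S$ are pairwise disjoint and each contains a variable of $S$, so $m\le|S|$; each component is explored by a BFS over hyperedges that is aborted once more than $\toolarge$ hyperedges are seen (triggering Line~\ref{line-bad-return-1}), so exploring one component costs $\widetilde{O}(dk\log\frac n\delta\cdot dk)=\widetilde{O}(d^2k^2)$. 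On a surviving component each of the $R$ trials samples $\le k\abs{\+E_i^X}\le dk^2\log\frac n\delta$ fresh bits and checks $\le dk\log\frac n\delta$ clauses of width $\le k$, i.e.\ $\widetilde{O}(dk^2)$ per trial; summing $\widetilde{O}(R)$ trials over $\le|S|$ components and absorbing the $\widetilde{O}(|S|d^2k^2)$ search cost gives $\widetilde{O}(|S|R\cdot dk^2)=O\!\left(|S|(n/\delta)^{\eta/10}d^2k^3\log^2\tfrac n\delta\right)$, as claimed.

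For correctness, assume the two hypotheses. The size hypothesis means Line~\ref{line-bad-return-1} is never reached, so $\sample$ returns either at Line~\ref{line-good} or at Line~\ref{line-bad-return-2}; the latter occurs precisely when for some $i\le m$ all $R$ trials of $\Rejection(\Phi^X_i,R)$ fail. By Proposition~\ref{proposition-good-event}, conditioned on the return happening at Line~\ref{line-good}, the output $Y$ is distributed \emph{exactly} as $\mu_S(\cdot\mid X)$. Coupling $Y$ with a genuine sample from $\mu_S(\cdot\mid X)$ that agrees with $Y$ whenever Line~\ref{line-good} is reached, Proposition~\ref{prop:coupling} gives $\DTV{Y}{\mu_S(\cdot\mid X)}\le\Pr{\text{Line~\ref{line-bad-return-2} is executed}}$, so it suffices to bound the latter by $\delta$.

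The heart of the argument is a lower bound on the success probability of a single rejection trial, $q_i\defeq\Pr[\+P]{\Phi^X_i\text{ is satisfied}}$ with $\+P$ uniform on $\{0,1\}^{V^X_i}$. Since $\Lambda\subseteq\+M$, every surviving clause of $\Phi^X$ retains all of its $\ge k_\beta$ unmarked variables (Condition~\ref{condition-marked-variables}), so $\Phi^X_i$ is a CNF with clause widths in $[k_\beta,k]$ and variable degrees $\le d$; hence each bad event $B_c$ (``clause $c$ violated'') has $\Pr[\+P]{B_c}\le 2^{-k_\beta}$ and $\abs{\Gamma(B_c)}<dk$. Taking the constant LLL weight $x(B_c)\equiv p$ with $p=\frac{\eta}{30dk}$, the hypothesis $2^{k_\beta}\ge\frac{20}{\eta}\mathrm{e}dk$ is (with room to spare) enough to verify \eqref{eqn:LLL}, so Theorem~\ref{theorem-LLL} gives $q_i\ge(1-p)^{\abs{C^X_i}}\ge(1-p)^{dk\log(n/\delta)}$; since $-dk\log(1-p)\le\frac{2pdk}{\ln2}\le\frac{\eta}{10}$, this is at least $(n/\delta)^{-\eta/10}$. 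Consequently the probability that all $R\ge(n/\delta)^{\eta/10}\log\frac n\delta$ trials on $\Phi^X_i$ fail is at most $(1-q_i)^R\le\exp(-q_iR)\le\exp(-\ln\frac n\delta)=\frac\delta n$, and a union bound over the $m\le|S|\le n$ components bounds $\Pr{\text{Line~\ref{line-bad-return-2}}}$ by $\delta$, finishing the proof.

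The step I expect to be the main obstacle is precisely this last calibration of the LLL weight: $p$ must be small enough that the per-component survival probability $(1-p)^{\abs{C^X_i}}$ — with $\abs{C^X_i}$ possibly as large as $dk\log\frac n\delta$ — stays above $(n/\delta)^{-\eta/10}$ so that $q_iR\ge\ln\frac n\delta$, yet large enough that \eqref{eqn:LLL} still follows from $2^{k_\beta}\ge\frac{20}{\eta}\mathrm{e}dk$; the constant $\frac{20\mathrm{e}}{\eta}$ versus the $\Theta(1/\eta)$ actually required is exactly the slack that lets both inequalities hold simultaneously. Everything else — the disjoint-component decomposition, the exactness of the output at Line~\ref{line-good}, the coupling reduction for total variation, and the running-time bookkeeping — is routine.
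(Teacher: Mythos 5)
Your proof is correct and follows essentially the same strategy as the paper's: bound the running time by the DFS plus rejection-sampling costs, reduce the total-variation bound to the probability that some component's rejection sampling exhausts all $R$ trials, lower-bound each component's single-trial success probability via the asymmetric Lovász local lemma with a weight $x(B_c)=\Theta(\eta/(dk))$, and conclude by a union bound. The only cosmetic differences are the constant in the LLL weight (you take $p=\eta/(30dk)$ where the paper uses $\eta/(20dk)$ — both calibrations satisfy \eqref{eqn:LLL} under $2^{k_\beta}\ge\frac{20}{\eta}\mathrm{e}dk$ and yield $q_i\ge(n/\delta)^{-\eta/10}$) and the elementary inequality used to bound $(1-p)^{dk\log(n/\delta)}$ from below.
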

\begin{proof}
We first analyze the running time of $\sample(\Phi,\delta,X,S)$. We need to find all the connected components $\set{H^X_i = (V_i^X,\mathcal{E}_i^X)\mid 1\leq i \leq m}$ in
  $H_{\Phi^X}$ such that each $V^X_i \cap S \neq \emptyset$ and check whether there exists $ 1 \leq i \leq m$ such that $|\mathcal{E}^X_i| > \toolarge$. Suppose we store the hypergraph $H_{\Phi}$ as an adjacent list. For each vertex $v \in S$, we apply the deep first search starting from $v$ in $H_{\Phi}$. When visiting each hyperedge $e$, we can check whether $e$ is in $H_{\Phi^X}$. Once we find that one connected component in $H_{\Phi^X}$ contains more than $\toolarge$ hyperedges,  we stop this process immediately. The time complexity of the deep first search step is at most
  \begin{align*}
  T_{\mathsf{DFS}} = O\left(|S| d^2k^3 \log \frac{n}{\delta}	\right).
  \end{align*}
  If $|\mathcal{E}^X_i| \leq \toolarge$ for all $1\leq i\leq m$,
  then we apply the rejection sampling for each $\Phi_i^X$. Note that $m \leq |S|$. The time complexity of the rejection sampling step is at most
  \begin{align*}
    T_{\mathsf{RS}} = O\left(|S|R dk^2 \log \frac{n}{\delta}	\right) = O\left(|S| \left(\frac{n}{\delta}\right)^{\frac{\eta}{10}} dk^2 \log^2 \frac{n}{\delta}\right).	
  \end{align*}
The overall time complexity for the subroutine $\sample(\Phi,\delta,X,S)$ is at most
\begin{align*}
T_{\mathsf{S}} =  T_{\mathsf{DFS}} + T_{\mathsf{RS}} = O\left(|S| \left(\frac{n}{\delta}\right)^{\frac{\eta}{10}} d^2k^3 \log^2 \frac{n}{\delta}\right).
\end{align*}

We next analyze the total variation distance between $Y$ and $\mu_S(\cdot \mid X)$.
  Since $\abs{\+E_i^X}\le \toolarge$ for every $1\le i\le m$, the random
  assignment $Y$ is returned in either Line~\ref{line-bad-return-2} or Line~\ref{line-good}. It
  follows from Proposition~\ref{proposition-good-event} that we only need to show the probability
  that $Y$ is returned in Line~\ref{line-bad-return-2} is at most $\delta$, which is equivalent to
  that one of the $\Rejection(\Phi_i^X,R)$ returns $\perp$ among all $1\le i\le m$.

%

%


  Fix $1\leq i \leq m$. Consider the rejection sampling for the instance $\Phi_i^X$. Let
  $\Pr[\+P]{\cdot}$ be the product distribution such that each variable in $C_i^X$ takes a value from
  $\{0,1\}$ uniformly and independently. For each clause $c \in C_i^X$, let $B_c$ denote the event
  that $c$ is not satisfied. Define
  \begin{align*}
    \Gamma(B_c) = \{B_{b} \mid b \in C_i^X \land b \neq c \land \vbl{c} \cap \vbl{b} \neq \emptyset \}.
  \end{align*}
  Suppose $2^{k_\beta} \geq \frac{20}{\eta}\mathrm{e}dk$ for some $0< \eta < 1$. For each $c \in C_i'$, let $x(B_c) \defeq \frac{\eta}{20dk}$. 
  Since every clause has at least $k_\beta$ unmarked vertices, we have that
  \begin{align*}
    \Pr[\+P]{B_c} \leq \left( \frac{1}{2} \right)^{k_\beta} \leq x(B_c) \prod_{B \in \Gamma(B_c)}\left(1 - x(B)\right).	
  \end{align*}
  By the Lov\'asz local lemma in Theorem~\ref{theorem-LLL}, we have
  \begin{align*}
    \Pr[\+P]{ \bigwedge_{c \in C^X_i}\overline{B_{c}} } \geq \prod_{c \in C^X_i}(1 - x(B_c)) = \prod_{c \in C^X_i}\left( 1 - \frac{\eta}{20dk} \right).
  \end{align*}
  Since $\abs{\+E^X_i}\le \toolarge$, we have
  \begin{align*}
    \Pr[\+P]{ \bigwedge_{c \in C^X_i}\overline{B_{c}} }
    \geq \left( 1 - \frac{\eta}{20dk} \right)^{\toolarge}
    \geq \left( 1 - \frac{1}{\frac{15}{\eta}dk + 1} \right)^{\toolarge}
    \geq \exp\left(-\frac{\eta}{15}\log \frac{n}{\delta} \right)
    >  \left(\frac{\delta}{n} \right)^{\frac{\eta}{10}}.
  \end{align*}
  For each $\Phi^X_i$, our algorithm repeats the rejection sampling for $\repeattime$ times. Hence,
  the probability that the rejection for $\Phi^X_i$ fails is at most
  \begin{align*}
    \left( 1 -  \left(\frac{\delta}{n} \right)^{\frac{\eta}{10}} \right)^{\repeattime} \leq \frac{\delta}{n}.
  \end{align*}
  Note that $m$ is at most $n$. Taking a union bound over all $\Phi^X_i$ for $1\leq i \leq m$, 
  we have that if the conditions of the lemma holds, then
  \begin{align*}
    \Pr{\exists i\in[m], \Rejection\tp{\Phi_i^X,R}=\perp} & \leq \delta. \qedhere
  \end{align*}
\end{proof}

We now proceed to show that, in all calls to $\sample(\Phi,\delta,X,S)$ during the execution of
Algorithm~\ref{alg-mcmc}, $\abs{\+E_i^X}\le \toolarge$ for every $i\in[\ell]$ with high
probability.

Algorithm~\ref{alg-mcmc} calls the subroutine $\sample$ for $T+1$ times ($T$ times in
Line~\ref{line-sample-1} and once in Line~\ref{line-sample-2}). For each $1\le t\le T+1$, we use the
$\+B_t$ to denote the event that $\abs{\+E_i^X}>\toolarge$ for some $1\le i\le \ell$ at the
$t$-th call to $\sample(\cdot)$.
Note that, in all calls to $\sample(\Phi,\delta,X,S)$ during the execution of
Algorithm~\ref{alg-mcmc}, the parameter $\delta$ is always set to $\frac{\eps}{4(T+1)}$. 
The following lemma bounds the probability of each $\+B_t$.

\begin{lemma}
  \label{lemma-too-large}
  Assume $2^{k_\alpha} \geq 4\mathrm{e}^2d^2k^2$ and $2^{k_\beta} \geq 2\mathrm{e}dk$.  For each
  $1\leq t \leq T+1$, it holds that in the execution of \Cref{alg-mcmc},
  $\Pr{\mathcal{B}_t} \leq \delta$, where $\delta = \frac{\epsilon}{4(T+1)}$ and $T = \Tmix$.
  \end{lemma}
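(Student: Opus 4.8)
Fix the call index $t$. What enters the $t$-th call to $\sample$ in \Cref{alg-mcmc} is a partial marked assignment of the form $X_{t-1}(\+M')$, where $\+M'=\+M\setminus\{v\}$ for $1\le t\le T$ (with $v$ the uniformly chosen variable, drawn independently of the chain) and $\+M'=\+M$ for $t=T+1$; the event $\+B_t$ is exactly that the hypergraph $H_{\Phi^{X_{t-1}(\+M')}}$ has a connected component with more than $\toolarge$ hyperedges. Since the choice of $v$ is independent of the history, it suffices to bound, for every fixed $\+M'$ with $\abs{\+M\setminus\+M'}\le 1$, the probability over the execution of \Cref{alg-mcmc} that $H_{\Phi^{X_{t-1}(\+M')}}$ has such a large component. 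I would prove this from two ingredients: (i) a ``local uniformity along the trajectory'' estimate on the law of $X_{t-1}$ restricted to marked variables, and (ii) a combinatorial ``witness'' argument, following Moitra's $\{2,3\}$-tree analysis \cite{Moi19,guo2019counting}, showing that a large component forces a rare spread-out configuration of surviving clauses whose probability ingredient (i) controls.

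\textbf{Ingredient (i).} The claim is that for every $s\ge 0$, every $W\subseteq\+M$ and every $\sigma\in\{0,1\}^W$,
\begin{align*}
  \Pr{X_s(W)=\sigma}\le\left(\frac{1}{2}\,\mathrm{e}^{1/k}\right)^{\abs{W}}.
\end{align*}
This I would prove by induction on $s$. The base case $s=0$ holds because $X_0$ is i.i.d.\ uniform. For the inductive step, split on the variable $v\in\+M$ picked at step $s$ (independent of $X_{s-1}$): if $v\notin W$ then $X_s(W)=X_{s-1}(W)$ and the induction hypothesis applies; if $v\in W$, condition on $X_{s-1}$, note $X_s(w)=X_{s-1}(w)$ for $w\in W\setminus\{v\}$, and bound the conditional law of $X_s(v)$ given $X_{s-1}$ pointwise by $\tfrac12\mathrm{e}^{1/k}$. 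This last bound holds because the value $X_s(v)$ is produced by $\sample(\Phi,\cdot,X_{s-1}(\+M\setminus\{v\}),\{v\})$, which either returns from its successful branch---whence, by \Cref{proposition-good-event}, $X_s(v)$ follows exactly $\mu_v(\cdot\mid X_{s-1}(\+M\setminus\{v\}))$, which is pointwise at most $\tfrac12\mathrm{e}^{1/k}$ by \Cref{corollary-local-uniform} applied to $\Phi$ simplified under $X_{s-1}(\+M\setminus\{v\})$ (here $2^{k_\beta}\ge 2\mathrm{e}dk$ is used, taking $s=k$ in the corollary)---or returns a uniform value, which is pointwise $\tfrac12\le\tfrac12\mathrm{e}^{1/k}$; taking expectation over $X_{s-1}$ against the induction hypothesis for $W\setminus\{v\}$ closes the step. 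The point to stress is that this estimate holds regardless of whether any earlier call to $\sample$ detected a large component, since every failure branch of $\sample$ also returns a uniform value; thus there is no circularity with the event $\+B_t$ being bounded.

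\textbf{Ingredient (ii).} A clause $c$ survives into $\Phi^{X_{t-1}(\+M')}$ only if all of its $\ge k_\alpha-1$ marked variables lying in $\+M'$ are assigned the value failing to satisfy $c$; hence for any family $\+D$ of clauses with pairwise disjoint variable sets, the event that every clause in $\+D$ survives is, by ingredient (i), at most $\bigl((\tfrac12\mathrm{e}^{1/k})^{k_\alpha-1}\bigr)^{\abs{\+D}}$, and $2^{k_\alpha}\ge 4\mathrm{e}^2d^2k^2$ makes the base here $O(1/(d^2k^2))$. I would then invoke the $\{2,3\}$-tree argument of \cite{Moi19,guo2019counting}: if $H_{\Phi^{X_{t-1}(\+M')}}$ has a connected component with more than $\toolarge$ hyperedges, it contains a $\{2,3\}$-tree of surviving hyperedges (pairwise at distance $\ge 2$ in the line graph, hence with pairwise disjoint variable sets, yet ``$\{2,3\}$-connected'') of size $\Omega\bigl(\toolarge/\mathrm{poly}(d,k)\bigr)=\Omega(\log\frac{n}{\delta})$, and the number of $\{2,3\}$-trees of size $\ell$ rooted at a fixed hyperedge is $\bigl(O(d^2k^2)\bigr)^{\ell}$. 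A union bound over the root hyperedge (at most $dn$ of them) and over all $\{2,3\}$-trees of the relevant size then gives a convergent geometric series---the per-tree survival probability $O(1/(d^2k^2))^{\ell}$ beats the $O(d^2k^2)^{\ell}$ count---whose sum is at most $\delta$ precisely because $\toolarge=dk\log\frac{n}{\delta}$ is large enough; this is exactly where the stated forms of $2^{k_\alpha}\ge4\mathrm{e}^2d^2k^2$ and of $\toolarge$ are consumed.

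\textbf{Main obstacle.} The conceptual delicacy in ingredient (i)---that it must hold for the \emph{actual} algorithm with its possibly-failing $\sample$ subroutine, not just for the idealized chain $\Glauber$---is, as noted, dispatched cleanly by \Cref{proposition-good-event} and by the uniform-output failure branches. I expect the real work to be in ingredient (ii): making the $\{2,3\}$-tree extraction (a large component contains a proportionally large $\{2,3\}$-tree) and the counting of $\{2,3\}$-trees sufficiently precise that the product of ``number of witnesses per hyperedge'' and ``survival probability per hyperedge'' stays below $1$ under only the hypotheses $2^{k_\alpha}\ge 4\mathrm{e}^2d^2k^2$ and $2^{k_\beta}\ge 2\mathrm{e}dk$, with enough slack left for the $\mathrm{poly}(d,k)$ loss factors and for the final union bound over the $\le dn$ clauses and, ultimately, over all $T+1$ calls to $\sample$.
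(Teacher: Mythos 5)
Your proposal takes essentially the same route as the paper's. Your ingredient~(i) is exactly the paper's Lemma~\ref{lemma-local-uniform} (proved there by a chain rule over the times of last update rather than by induction, but giving the same pointwise bound via inequality~\eqref{eq-subroutine-uniform}, with the same observation that the bound applies to the fallible algorithm because both branches of $\sample$ return a value pointwise bounded by $\tfrac12\mathrm{e}^{1/k}$), and ingredient~(ii) is carried out by a union bound over root hyperedges followed by witness extraction and counting, just as you describe. The one substantive deviation is the witness structure: you propose $\{2,3\}$-trees following the prior works, while the paper uses $2$-trees, extracted by Lemma~\ref{lemma-lower-bound-size} and counted by passing to $\Lin(H)^2$ via Corollary~\ref{corollary-number-2-tree}; the $2$-tree choice is what makes $2^{k_\alpha}\geq 4\mathrm{e}^2d^2k^2$ and the factor $kd$ inside $\toolarge$ close exactly, which confirms the constant-bookkeeping concern you correctly flag as the remaining work.
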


The rest of this section is devoted to the proof of Lemma~\ref{lemma-too-large}.

%
Recall that $(X_t)_{t=0}^T$ is the random process defined by Algorithm~\ref{alg-mcmc}.
Fix $1\leq t \leq T+1$. Consider the $t$-th call of the subroutine
$\sample(\Phi,\delta, X, S)$ (Algorithm~\ref{alg-sample}).
If $1\leq t \leq T$, let $v \in \mathcal{M}$ denote the random vertex picked in the $t$-th step.
The random assignment $X$ and the subset $S$ in the subroutine $\sample(\Phi,\delta, X, S)$ are
defined as
\begin{align}
  X &= \begin{cases}
    X_{t-1}({\+M} \setminus \{v\} ) \text{\ \ (namely $\Lambda=\+M\setminus\{v\}$)} &\text{if } 1 \leq t \leq T, \\
    X_T  \text{\ \ (namely $\Lambda=\+M$)} &\text{if } t = T+1,
  \end{cases} \label{eq-def-X}\\
  S &= \begin{cases}
    \{v\} &\text{if } 1 \leq t \leq T, \\
    V \setminus \mathcal{M} &\text{if } t = T+1.
  \end{cases}\label{eq-def-S}
\end{align}

Consider the hypergraph $H_{\Phi} = (V, \mathcal{E})$ as defined in~\eqref{eq-def-Hphi}.
Given an assignment $X \in \{0,1\}^{{\+M}}$, we say a hyperedge $e \in \mathcal{E}$ in $H_{\Phi}$ is \emph{bad} if the clause represented by $e$ is not satisfied by $X$. 
Recall that we use $\Phi^X$ to denote the CNF formula obtained from $\Phi$ simplified under $X$ and use
$H_{\Phi^X}=(V,\+E^X)$ to denote its hypergraph representation.
Hence $\mathcal{E}^X \subseteq \mathcal{E}$ is the set of all bad hyperedges.
If the bad event $\mathcal{B}_t$ occurs, there must exist a connected component in $H_{\Phi^X}$
containing more than $\toolarge$ bad hyperedges.

Fix a hyperedge $e \in \mathcal{E}$, let $\mathcal{B}_e$ be the event that 
\begin{itemize}
\item the hyperedge $e$ is in $\+E^X$;
\item $|\mathcal{E}_e| \geq \toolarge$,
  where $H_e =(V_e, \mathcal{E}_e)$ is the connected component in $H_{{\Phi}^X}$ such that $e \in \mathcal{E}_e$.
\end{itemize}
By the definition of $\mathcal{B}_e$, if the event $\mathcal{B}_t$ occurs, then there
must exist $e \in \mathcal{E}$ such that the event $\mathcal{B}_e$ occurs. We have
\begin{align}
  \label{eq-prob-Be}
  \Pr{\mathcal{B}_t } \leq \Pr{\exists\,e \in \mathcal{E} \text{ s.t. } \mathcal{B}_e } \leq \sum_{e \in \mathcal{E}} \Pr{ \mathcal{B}_e }.	
\end{align}

Next we bound the probability of $\mathcal{B}_e$. 
We first establish local uniformity of any intermediate assignment $X_t$.
\begin{lemma}
  \label{lemma-local-uniform}
  Suppose the CNF formula $\Phi$ satisfies $2^{k_\beta} \geq 2\mathrm{e}ds$ for some $s \geq k$.  Let
  $X \subseteq \{0,1\}^{\Lambda}$ be the random assignment defined in~\eqref{eq-def-X}, where
  $\Lambda ={\+M}$ or $\+M\setminus \{v\}$ for some $v$. For any subset $S \subseteq \Lambda$ and any assignment
  $\sigma \in \{0,1\}^S$, it holds that
  \begin{align*}
    \Pr{ X(S) = \sigma } \leq 	\left( \frac{1}{2} \right)^{|S|}\exp\left( \frac{|S|}{s} \right).
  \end{align*}
\end{lemma}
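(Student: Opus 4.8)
The plan is to prove the stated bound for the whole random process $(X_t)_{t=0}^{T}$ produced by \Cref{alg-mcmc} by induction on $t$, and then to observe that the assignment $X$ in \eqref{eq-def-X} is nothing but $X_{t-1}$ (when $1\le t\le T$) or $X_T$ (when $t=T+1$) restricted to $\Lambda$, so that $X(S)=X_{t'}(S)$ for some $0\le t'\le T$ and the lemma follows from the induction.

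The crucial single-step estimate is: conditioned on the entire history of \Cref{alg-mcmc} up to the end of step $t-1$ and on the event that $v\in\+M$ is the variable picked at step $t$, the freshly resampled bit satisfies $\Pr{X_t(v)=c}\le\tfrac12\exp(\tfrac1s)$ for each $c\in\{0,1\}$. Indeed, $X_t(v)$ is the output of $\sample(\Phi,\cdot,X_{t-1}(\+M\setminus\{v\}),\{v\})$. When that call returns in Line~\ref{line-good}, by \Cref{proposition-good-event} the bit has law $\mu_v(\cdot\mid X_{t-1}(\+M\setminus\{v\}))$; otherwise (some component too large, or all $R$ rejection-sampling trials fail) \Cref{alg-sample} returns a uniform bit. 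Now the formula $\Phi$ simplified under $X_{t-1}(\+M\setminus\{v\})$ has every surviving clause of size between $k_\beta$ and $k$, since only marked variables are pinned and each clause of $\Phi$ contains at least $k_\beta$ unmarked variables; hence, using $s\ge k$ and $2^{k_\beta}\ge 2\mathrm{e}ds$, \Cref{corollary-local-uniform} (with $k_1=k_\beta$, $k_2=k$) gives $\mu_v(c\mid X_{t-1}(\+M\setminus\{v\}))\le\tfrac12\exp(\tfrac1s)$ for $c\in\{0,1\}$. Since $\tfrac12\le\tfrac12\exp(\tfrac1s)$ as well, the conditional distribution of $X_t(v)$ is a convex combination of two distributions each bounded by $\tfrac12\exp(\tfrac1s)$, which proves the estimate.

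With this in hand I would run the induction on $t$. For $t=0$, $X_0$ is uniform, so $\Pr{X_0(S)=\sigma}=2^{-|S|}\le 2^{-|S|}\exp(|S|/s)$. For the inductive step, condition on the variable $v$ picked at step $t$ (which is independent of $X_{t-1}$). If $v\notin S$, then $X_t(S)=X_{t-1}(S)$ and the bound follows immediately from the induction hypothesis. If $v\in S$, then $X_t(S\setminus\{v\})=X_{t-1}(S\setminus\{v\})$ while $X_t(v)$ is the fresh bit; conditioning further on $X_{t-1}(S\setminus\{v\})$ and combining the single-step estimate above with the induction hypothesis applied to $S\setminus\{v\}$ yields
\begin{align*}
  \Pr{X_t(S)=\sigma \mid v\ \text{picked}} \le \left(\tfrac12\right)^{|S|-1}\exp\!\left(\tfrac{|S|-1}{s}\right)\cdot\tfrac12\exp\!\left(\tfrac1s\right) = \left(\tfrac12\right)^{|S|}\exp\!\left(\tfrac{|S|}{s}\right).
\end{align*}
Averaging over the uniform choice of $v\in\+M$ gives the claimed bound for $X_t(S)$, completing the induction; instantiating it at $t'=t-1$ (if $t\le T$) or $t'=T$ (if $t=T+1$) proves the lemma.

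The main obstacle I anticipate is the bookkeeping in the single-step estimate: one must make sure that the possible failure of $\sample$, together with the implicit conditioning on its \emph{success}, does not destroy local uniformity. The point that makes it work is that both failure branches of \Cref{alg-sample} return a uniform bit — which already beats the target bound — while \Cref{proposition-good-event} guarantees the success branch reproduces exactly the conditional marginal $\mu_v(\cdot\mid X_{t-1}(\+M\setminus\{v\}))$, so no adverse conditioning bias appears; everything else is a routine convexity-plus-induction argument. A minor point to check along the way is that the simplified formula really does keep all clauses in the size window $[k_\beta,k]$, so that \Cref{corollary-local-uniform} is applicable.
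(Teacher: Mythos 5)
Your proof is correct and relies on exactly the same key ingredient as the paper's proof: the single-step estimate that $\sample(\Phi,\delta,X',\{v\})$ outputs a bit whose conditional law, given the whole history, assigns probability at most $\tfrac12\exp(1/s)$ to either value (the paper calls this \eqref{eq-subroutine-uniform}), which in turn rests on \Cref{proposition-good-event}, the fact that failure branches return uniform bits, and \Cref{corollary-local-uniform} applied to the simplified formula whose surviving clauses have size in $[k_\beta,k]$. The only difference is organizational: the paper sorts the variables of $S$ by their last-resample time and applies a chain rule over this (implicitly conditioned) ordering, whereas you run an induction over the time steps $t$ of the chain; both assemble the same per-variable bound into the product $\left(\tfrac12\right)^{|S|}\exp(|S|/s)$.
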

\begin{proof}
  By the definition of the $X$ in~\eqref{eq-def-X}, we know that $X= X_t(\Lambda)$ for some
  $0\leq t \leq T$.
  For each vertex $v \in S$, we define $t_v \leq t$ as follows.
  If $v$ is chosen by the Algorithm~\ref{alg-mcmc} at least once, 
  then let $t_v$ be the largest $t' \leq t$ such that $v$ is chosen at the $t'$-th step.  
  Otherwise, let $t_v = 0$.

  We sort all the vertices in $S$ according to $t_v$. If two vertices $u,v \in S$ satisfy
  $t_u = t_v =0$, we break the tie arbitrarily.  
  Let $v_1,v_2,\ldots,v_{\ell}$ be the set of
  all vertices in $S$ such that
  \begin{align*}
    0\leq t_{v_1} \leq t_{v_1} \leq \ldots \leq t_{v_\ell} \leq T.
  \end{align*}
  Thus, we have
  \begin{align*}
    \forall 1\leq i \leq \ell:\quad X(v_i) = X_{t_{v_i}}(v_i).
  \end{align*}
  Consider the $t_{v_i}$-th step. The value $X_{t'}(v_i)$ is generated by
  $\sample(\Phi, \frac{\epsilon}{4(T+1)}, X_{t'-1}({\+M} \setminus \{v_i\} ), \{v_i\})$, where $t' = t_{v_i}$.
  Suppose $2^{k_\beta} \geq 2\mathrm{e}ds$ for some $s \geq k$.  
  We claim that for any $v \in {\+M}$, any $X' \in \{0,1\}^{{\+M} \setminus \{v\}} $ and any $0<\delta <1$, it holds that
  \begin{align}
    \label{eq-subroutine-uniform}
    \forall c\in\{0,1\}, \quad \Pr{ \sample\left(\Phi,\delta, X' , \{v\}\right) \text{ returns } c} \leq 	\frac{1}{2}\exp\left( \frac{1}{s} \right).
  \end{align}
  Assume~inequality~\eqref{eq-subroutine-uniform} holds. Note that $|S| = \ell$. By the chain rule,
  we have
  \begin{align*}
    \Pr{X(S) = \sigma }
    &= \prod_{i = 1}^{\ell} \Pr{X(v_i) = \sigma(v_i) \mid \forall 1\leq j < i, X(v_j) = \sigma(v_j)   }\\
    &=\prod_{i = 1}^{\ell} \Pr{X_{t_{v_i}}(v_i) = \sigma(v_i) \mid \forall 1\leq j < i, X_{t_{v_j}}(v_j) = \sigma(v_j)  }\\
    &\leq \left( \frac{1}{2} \right)^{|S|}\exp\left( \frac{|S|}{s} \right),
  \end{align*}
  where the last inequality holds due to~\eqref{eq-subroutine-uniform} and the fact that the initial random
  assignment $X_0$ is sampled from $\{0,1\}^{{\+M}}$ uniformly at random.
 
  We now prove the inequality~\eqref{eq-subroutine-uniform}. 
  By Algorithm~\ref{alg-sample} and
  Proposition~\ref{proposition-good-event}, we know that the random value $c$ returned by the
  subroutine $\sample(\Phi,\delta, X', \{v\})$ is either sampled from $\{0,1\}$ uniformly at
  random or sampled independently from the distribution $\mu_{v}(\cdot \mid X' )$. If $c$ is
  sampled from $\{0,1\}$ uniformly at random, then~\eqref{eq-subroutine-uniform} holds trivially. We
  now prove that
  \begin{align}
    \label{eq-upbound-conditional-marginal}
    \forall c \in \{0,1\}, \quad  \mu_v(c \mid X')	\leq \frac{1}{2} \exp\left( \frac{1}{s} \right).
  \end{align}
  Recall $X'  \in \{0,1\}^{{\+M} \setminus \{v\}}$.
  Let $\Phi'\defeq\Phi^{X'}$ be the CNF formula obtained from $\Phi$ by deleting all the clauses satisfied by $X'$ and all the variables in ${\+M} \setminus \{v\}$,
  and $\mu'\defeq\mu^{X'}$ be the uniform distribution of all solutions in $\Phi'$.
  Then the two distributions $\mu'_v(\cdot)$ and $\mu_v(\cdot \mid X')$ are identical.
  By Condition~\ref{condition-marked-variables}, we have each clause in $\Phi'$ contains at least
  $k_\beta$ variables and at most $k$ variables. Each variable belongs to at most $d$ clauses.
  Since $2^{k_{\beta}} \geq 2\mathrm{e}ds$ for some $s \geq k$,
  inequality~\eqref{eq-upbound-conditional-marginal} follows from Corollary~\ref{corollary-local-uniform}.
\end{proof}

To bound the size of connected components including a particular hyperedge $e$,
recall that $\Lin(H)$ is the line graph of $H$ defined in \Cref{definition-line-graph}. 
We also need the notion of $2$-trees.

\begin{definition}[2-tree]
  Let $G=(V, E)$ be a graph. A set of vertices $T \subseteq V$ is called a 2-tree if (1) for any
  $u,v \in T$, $\dist_G(u,v) \geq 2$; (2) if one adds an edge between every $u,v \in T$ such that
  $\dist_G(u,v) = 2$, then $T$ is connected.
\end{definition}

The following simple observation follows directly from the definition of $2$-trees.
\begin{observation}
  \label{observation-down}
  If a graph $G=(V, E)$ has a 2-tree of size $\ell > 1$ containing the vertex $v \in V$, then
  $G$ must have a 2-tree of size $\ell - 1$ containing the vertex $v$.
\end{observation}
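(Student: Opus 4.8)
The plan is to prove this by deleting one carefully chosen vertex from the given $2$-tree. Let $T\subseteq V$ be a $2$-tree of size $\ell>1$ containing $v$, and form the auxiliary graph $\widehat G$ on vertex set $T$ in which $u,w\in T$ are adjacent if and only if $\dist_G(u,w)=2$. By the definition of a $2$-tree, $\widehat G$ is connected. The first observation is that property~(1) of a $2$-tree (pairwise distance at least $2$ in $G$) is inherited by every subset of $T$, so the only thing to check when we remove a vertex $u\in T\setminus\{v\}$ is property~(2). Moreover, for any $u\in T$ the auxiliary graph associated with $T\setminus\{u\}$ is exactly the induced subgraph $\widehat G[T\setminus\{u\}]$, so property~(2) for $T\setminus\{u\}$ is equivalent to $\widehat G[T\setminus\{u\}]$ being connected.

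So the task reduces to finding $u\in T\setminus\{v\}$ with $\widehat G-u$ connected, i.e. a non-cut vertex of $\widehat G$ other than $v$. This is elementary: take any spanning tree $\tau$ of $\widehat G$; since $|T|=\ell\ge 2$, the tree $\tau$ has at least two leaves, hence at least one leaf $u\neq v$. Removing a leaf from a tree leaves a tree, so $\tau-u$ is a connected subgraph of $\widehat G[T\setminus\{u\}]$ spanning $T\setminus\{u\}$, and therefore $\widehat G[T\setminus\{u\}]$ is connected. Consequently $T':=T\setminus\{u\}$ is a set of size $\ell-1$ containing $v$ and satisfying both defining conditions, i.e. a $2$-tree of size $\ell-1$ containing $v$.

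I do not expect any real obstacle here; the only ingredient beyond bookkeeping is the standard fact that a connected graph on at least two vertices has a non-cut vertex distinct from any prescribed vertex (equivalently, a tree on $\ge 2$ vertices has $\ge 2$ leaves). The boundary case $\ell=2$ is automatically handled, since then $T'$ is a single vertex, which is vacuously a $2$-tree.
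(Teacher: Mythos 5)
Your proof is correct and uses essentially the same argument as the paper: pass to the auxiliary graph on $T$ with edges between vertices at $G$-distance exactly $2$, take a spanning tree, and delete a leaf different from $v$. The only (welcome) addition is that you spell out why the induced subgraph on $T\setminus\{u\}$ remains connected, which the paper leaves implicit.
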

\begin{proof}
  Let $T \subseteq V$ be a 2-tree in $G$. Let $G'=(T, E_T)$, where each $\{u, v\} \in E_T$ if and
  only if $u,v\in T$ and $\dist_G(u,v) = 2$.
  Then $G'$ is a connected graph.
  We can find an arbitrary spanning tree $T_{G'}$ of graph $G'$.
  Since the number of vertices in $T_{G'}$ is $\ell > 1$, then $T_{G'}$ contains at least two leaf
  vertices.
  Let $w$ be the leaf vertex in $T_{G'}$ such that $w \neq v$.
  It is easy to see $T \setminus \{w\}$ is a 2-tree of size $\ell - 1$ containing the vertex
  $v$.
\end{proof}

To bound the number of $2$-trees, we need the following lemma in~\cite{borgs2013left} to bound the
number of connected subgraphs.
\begin{lemma}
  \label{lemma-number-of-component}
  Let $G=(V, E)$ be a graph with maximum degree $\Delta$ and $v \in V$ be a vertex. Then the number
  of connected induced subgraphs of size $\ell$ containing $v$ is at most
  $\frac{(\mathrm{e}\Delta)^{\ell - 1}}{2}$.
\end{lemma}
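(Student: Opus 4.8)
The plan is to bound the desired count by the number of rooted subtrees of $G$ on $\ell$ vertices, to count those exactly by a generating-function identity, and to finish with a Stirling estimate. Throughout we may assume $\ell\ge 2$ (for $\ell=1$ the unique such subgraph is $\{v\}$, and the lemma is only ever applied with large $\ell$) and $\Delta\ge 2$ (if $\Delta\le 1$ then $G$ has no connected subgraph on $\ell\ge 3$ vertices, while $\ell=2$ with $\Delta\le 1$ is trivial). First I would reduce to trees: a connected induced subgraph $H$ with $v\in V(H)$ and $\abs{V(H)}=\ell$ is completely determined by its vertex set, and, being connected, it is the induced subgraph on the vertex set of some spanning tree rooted at $v$. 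Hence the map $T\mapsto G[V(T)]$, from subtrees $T$ of $G$ with $v\in V(T)$ and $\abs{V(T)}=\ell$ onto the connected induced subgraphs we want to count, is surjective, so it suffices to bound the number $N_\ell$ of subtrees of $G$ on $\ell$ vertices rooted at $v$.

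Next, fix once and for all a linear order on the neighbours of every vertex of $G$. Given a rooted subtree $T$, attach to each $w\in V(T)$ the set $S_w\subseteq\{1,\dots,\Delta\}$ of indices $j$ for which the $j$-th neighbour of $w$ in $G$ is a child of $w$ in $T$; collecting these along $T$ yields an abstract rooted tree on $\ell$ nodes in which the children of any node are indexed by a subset of $\{1,\dots,\Delta\}$. This \emph{decoration} determines $T$ — one recovers $T$ by starting at $v$ and repeatedly passing from the image of a node $w$ to the neighbours of $w$ prescribed by $S_w$ — so the decoration map is injective and $N_\ell\le M_\ell$, where $M_\ell$ is the number of such decorated trees on $\ell$ nodes. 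A decorated tree is a root together with, in each of its $\Delta$ slots, either nothing or a decorated subtree, so the generating function $F(x)=\sum_{n\ge 1}M_n x^n$ satisfies $F=x(1+F)^{\Delta}$; by Lagrange inversion, $M_\ell=\frac1\ell[u^{\ell-1}](1+u)^{\Delta\ell}=\frac1\ell\binom{\Delta\ell}{\ell-1}$.

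It remains to check $\frac1\ell\binom{\Delta\ell}{\ell-1}\le\frac{(\mathrm{e}\Delta)^{\ell-1}}{2}$. For $\ell=2$ the left side equals $\Delta$ and the claim is $\Delta\le\mathrm{e}\Delta/2$, which holds. For $\ell\ge 3$ I would write $\binom{\Delta\ell}{\ell-1}=\frac{\ell}{(\Delta-1)\ell+1}\binom{\Delta\ell}{\ell}\le\frac1{\Delta-1}\binom{\Delta\ell}{\ell}$, invoke the Stirling bound $\ell!\ge\sqrt{2\pi\ell}\,(\ell/\mathrm{e})^\ell$ to get $\binom{\Delta\ell}{\ell}\le(\Delta\ell)^\ell/\ell!\le(\mathrm{e}\Delta)^\ell/\sqrt{2\pi\ell}$, and then use $\Delta/(\Delta-1)\le 2$ to obtain $\frac1\ell\binom{\Delta\ell}{\ell-1}\le(\mathrm{e}\Delta)^{\ell-1}\cdot\frac{2\mathrm{e}}{\sqrt{2\pi}\,\ell^{3/2}}$, which is at most $\frac{(\mathrm{e}\Delta)^{\ell-1}}{2}$ because $\sqrt{2\pi}\,\ell^{3/2}\ge\sqrt{2\pi}\cdot 3^{3/2}>4\mathrm{e}$. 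The reduction to subtrees and the closed form for $M_\ell$ are routine; the delicate part of the argument is that the crude asymptotic estimate is not by itself strong enough at the smallest values of $\ell$, which is why one must peel off $\ell=2$ separately (and note that $\ell=1$ is immaterial for the applications), and one should verify carefully that the decoration map is genuinely injective, since that is exactly what yields $N_\ell\le M_\ell$ rather than a weaker inequality with a larger right-hand side.
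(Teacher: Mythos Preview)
Your proof is correct. The paper does not prove this lemma at all; it simply cites it from \cite{borgs2013left}, so there is no ``paper's own proof'' to compare against. Your argument --- reduce to rooted subtrees, encode these as $\Delta$-slot decorated trees, solve $F=x(1+F)^\Delta$ by Lagrange inversion to get $M_\ell=\frac{1}{\ell}\binom{\Delta\ell}{\ell-1}$, then bound via Stirling --- is essentially the standard proof of this fact (and is close to what appears in the cited reference). Your handling of the edge cases is appropriate: the stated bound is indeed false at $\ell=1$ (one subgraph versus a bound of $1/2$), but the paper only invokes the lemma via Corollary~\ref{corollary-number-2-tree} in the proof of Lemma~\ref{lemma-too-large}, where $\ell=\lfloor L/(kd)\rfloor\ge\log(n/\delta)-1$ is large, so the caveat does no harm. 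The injectivity of the decoration map is clear as you describe it, since the abstract tree together with the global neighbour ordering and the root $v$ determines the embedded subtree uniquely.
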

\begin{corollary}
  \label{corollary-number-2-tree}
  Let $G=(V, E)$ be a graph with maximum degree $\Delta$ and $v \in V$ be a vertex. Then the number
  of 2-trees in $G$ of size $\ell$ containing $v$ is at most
  $\frac{(\mathrm{e}\Delta^2)^{\ell - 1}}{2}$.
\end{corollary}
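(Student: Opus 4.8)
The plan is to reduce the problem of counting $2$-trees to that of counting connected induced subgraphs, so that Lemma~\ref{lemma-number-of-component} can be applied verbatim. Let $G^2$ denote the graph on the same vertex set $V$ in which two vertices $u,w$ are adjacent whenever $1 \le \dist_G(u,w) \le 2$ (this is exactly the construction used for $\Lin^2(H)$ earlier in the paper). The first step is to bound the maximum degree of $G^2$: a vertex $u$ has at most $\Delta$ vertices at $G$-distance $1$ and at most $\Delta(\Delta-1)$ vertices at $G$-distance exactly $2$, so $\deg_{G^2}(u) \le \Delta + \Delta(\Delta-1) = \Delta^2$.

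The second step is to show that every $2$-tree of $G$ of size $\ell$ containing $v$ is, in particular, a connected induced subgraph of $G^2$ of size $\ell$ containing $v$. Indeed, let $T$ be such a $2$-tree. By the first defining property, any two vertices of $T$ are at $G$-distance at least $2$; hence within the induced subgraph $G^2[T]$ the edge set is precisely the set of pairs of $T$ at $G$-distance exactly $2$. By the second defining property of a $2$-tree, adding exactly those edges makes $T$ connected, so $G^2[T]$ is connected. Therefore $T \mapsto T$ is an injection from the set of $2$-trees of $G$ of size $\ell$ containing $v$ into the set of connected induced subgraphs of $G^2$ of size $\ell$ containing $v$.

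The final step is to invoke Lemma~\ref{lemma-number-of-component} for the graph $G^2$, using the bound $\Delta(G^2) \le \Delta^2$ from the first step: the number of connected induced subgraphs of $G^2$ of size $\ell$ containing $v$ is at most $\frac{(\mathrm{e}\Delta^2)^{\ell-1}}{2}$. Composing with the injection of the second step yields the claimed bound on the number of $2$-trees of size $\ell$ containing $v$.

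I do not expect a genuine obstacle here; the argument is essentially bookkeeping. The only point needing a moment's care is the identification in the second step of the induced-subgraph edge set $E(G^2[T])$ with the ``distance-exactly-$2$'' edge set appearing in the definition of a $2$-tree, and this is immediate from the fact that $T$ contains no two vertices at $G$-distance $1$. (One could also note, via Observation~\ref{observation-down}, that it suffices to treat $\ell=1$ as a trivial base case and argue by the monotone structure, but the direct injection above is cleaner.)
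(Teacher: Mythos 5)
Your proof is correct and follows essentially the same route as the paper's: pass to the power graph $G^2$, bound its maximum degree by $\Delta^2$, observe that every $2$-tree of $G$ is a connected induced subgraph of $G^2$, and invoke Lemma~\ref{lemma-number-of-component}. You are merely more explicit than the paper about why the injection into connected induced subgraphs of $G^2$ works, which is a reasonable amount of detail to supply.
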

\begin{proof}
  Consider the power graph $G^2$. The maximum degree of $G^2$ is at most $\Delta^2$. The number of
  connected induced subgraphs in $G^2$ of size $\ell$ containing vertex $v$ is at most
  $\frac{(e\Delta^2)^{\ell - 1}}{2}$. This is an upper bound of the number of 2-trees in $G$ of size
  $\ell$ containing $v$.
\end{proof}

\begin{lemma}
  \label{lemma-lower-bound-size}
  Let $H=(V, \mathcal{E})$ be a $k$-uniform hypergraph such that each vertex belongs to at most $d$
  hyperedges.  Let $B \subseteq \mathcal{E}$ be a subset of hyperedges which induces a connected
  subgraph in $\Lin(H)$, and $e \in B$ be an arbitrary hyperedge.
  Then, there must exist a $2$-tree $T \subseteq B$ in the graph $\Lin(H)$ such that $e \in T$ and
  $|T| =\left\lfloor \frac{|B|}{kd} \right\rfloor $.
\end{lemma}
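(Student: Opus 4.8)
The plan is to build the desired $2$-tree greedily inside the connected subgraph of $\Lin(H)$ induced by $B$, and then shrink it to the exact target size using \Cref{observation-down}.

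First I would record two structural facts. Since $H$ is $k$-uniform and every vertex lies in at most $d$ hyperedges, each hyperedge intersects at most $k(d-1)\le kd-1$ other hyperedges, so $\Lin(H)$ has maximum degree at most $k(d-1)$ and every closed neighbourhood in $\Lin(H)$ has at most $kd$ vertices. Let $G_B$ be the subgraph of $\Lin(H)$ induced on the vertex set $B$; by hypothesis $G_B$ is connected. Because $G_B$ is an \emph{induced} subgraph, two hyperedges of $B$ are adjacent in $G_B$ if and only if they are adjacent in $\Lin(H)$; and if $u,v\in B$ are non-adjacent with $\dist_{G_B}(u,v)=2$, then the length-$2$ path realizing this also lies in $\Lin(H)$, so $\dist_{\Lin(H)}(u,v)=2$ as well. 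These observations are what let us perform the whole construction inside $B$ while still certifying the $2$-tree conditions, which are phrased via $\dist_{\Lin(H)}$.

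Next the greedy construction. Set $T\gets\{e\}$. While the closed neighbourhood $N_{G_B}[T]$ is not all of $B$, pick a hyperedge $u\in B\setminus N_{G_B}[T]$ with $\dist_{G_B}(u,T)=2$ and add it to $T$. Such a $u$ exists: for any $w\in B\setminus N_{G_B}[T]$ take a shortest path $w=x_0,x_1,\dots,x_r$ to $T$ in the connected graph $G_B$; then $r\ge2$ (as $w\notin N_{G_B}[T]$) and $x_{r-2}$ is at $G_B$-distance exactly $2$ from $T$, since it is within distance $2$ via $x_{r-1},x_r$ and cannot be within distance $1$ of $T$ without yielding a $w$-to-$T$ path of length $<r$. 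I claim the invariant ``$T\subseteq B$ is a $2$-tree of $\Lin(H)$ with $e\in T$'' is preserved: the first condition in the definition of a $2$-tree holds because a newly added $u$ is non-adjacent in $G_B$, hence in $\Lin(H)$, to every vertex of $T$; the second condition holds because $u$ is at $G_B$-distance exactly $2$, hence by the previous paragraph at $\Lin(H)$-distance exactly $2$, from some $w\in T$, so the distance-$2$ graph on $T$ stays connected when $u$ is attached. Since $|T|$ strictly increases and is bounded by $|B|$, the loop terminates with $N_{G_B}[T]=B$, whence $|B|\le\sum_{w\in T}|N_{G_B}[w]|\le |T|\cdot kd$, i.e. $|T|\ge |B|/(kd)$, so $|T|\ge\lceil |B|/(kd)\rceil\ge\lfloor |B|/(kd)\rfloor$. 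Finally, repeatedly applying \Cref{observation-down} to $\Lin(H)$ and $T$ brings $|T|$ down to exactly $\lfloor |B|/(kd)\rfloor$ while keeping $e\in T$ and $T\subseteq B$ (each step only removes a leaf), which completes the proof; the case $\lfloor |B|/(kd)\rfloor=0$, i.e. $|B|<kd$, is vacuous for the intended use.

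The main obstacle — and essentially the only subtlety — is the bookkeeping between distances measured in $G_B$ and in $\Lin(H)$: the greedy process must remain inside $B$ so that $T\subseteq B$, yet $2$-tree-ness is about $\dist_{\Lin(H)}$. What makes this harmless is precisely that $G_B$ is an induced subgraph (so non-adjacency transfers verbatim) and that any length-$2$ walk in $G_B$ is a length-$2$ walk in $\Lin(H)$ (so ``distance exactly $2$'' transfers, giving connectivity of the distance-$2$ graph on $T$). Everything else is the standard volume argument: a maximal set whose closed neighbourhoods cover $B$ must have size at least $|B|$ divided by the maximum closed-neighbourhood size $kd$.
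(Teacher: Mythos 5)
Your proof is correct and takes essentially the same approach as the paper: grow a $2$-tree greedily starting from $e$ until its closed neighbourhoods cover $B$, use the volume bound that each closed neighbourhood has size at most $kd$ to get $|T|\ge |B|/(kd)$, and then prune with \Cref{observation-down}. The only cosmetic difference is that you run the greedy process with distances measured inside $G_B$ and then transfer to $\dist_{\Lin(H)}$, whereas the paper picks the next hyperedge by its $\Lin(H)$-distance from $T$ directly and proves its existence by contradiction (a nonempty remainder with no distance-$2$ vertex would disconnect $B$); these are interchangeable, and your version is arguably a bit more careful about why $T$ stays inside $B$.
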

\begin{proof}
  Consider the graph $\Lin(H)=(V_L, E_L)$. For any subset of vertices $S$ in $\Lin(H)$, 
  let the extended neighbourhood of $S$ be
  \begin{align*}
    \Gamma^+(S) \triangleq \{v \in V_L \mid v \in S \text{ or there exists } u \in S \text{ s.t. } \{u,v\} \in E_L \}.	
  \end{align*}
  We construct a $2$-tree greedily. Let $T_0 = \{e\}$. For the $i$-th step, we set
  $S \gets B \setminus \Gamma^+(T_{i-1})$, let $e_i$ be the first hyperedge in $S$ such that $\dist_{\Lin(H)}(T_{i-1}, e_i) = 2$,
  and set $T_{i} = T_{i-1} \cup \{e_i\}$. 
  The process ends when $B=\Gamma^+(T_j)$ for some $j$.

  We claim that the set $S$ will become empty eventually. 
  Suppose the current $2$-tree is $T$,
  and some non-empty $S = B\setminus \Gamma^+(T)$ remains.
  Thus, $\forall e' \in S$, $\dist_{\Lin(H)}(T, e') \neq 2$.
  Note that if $\dist_{\Lin(H)}(T, e') \le 1$, $e'\in \Gamma^+(T)$.
  Thus, $\forall e' \in S$, $\dist_{\Lin(H)}(T, e') \ge 3$.
  Note that $B \subseteq \Gamma^+(T) \cup S$, $B \cap \Gamma^+(T) \neq \emptyset$ and $B \cap S \neq \emptyset$.
  Hence $B$ is disconnected in $\Lin(H)$. Contradiction.

  In every step, at most $kd$ hyperedges are removed, so we have $|T| \geq \left\lfloor \frac{|B|}{kd} \right\rfloor$.  Then by
  Observation~\ref{observation-down}, there must exist a $2$-tree $T \subseteq B$ in graph $\Lin(H)$
  such that $e \in T$ and $|T| = \left\lfloor \frac{|B|}{kd} \right\rfloor$.
\end{proof}

We are now ready to prove Lemma~\ref{lemma-too-large}.
\begin{proof}[Proof of \Cref{lemma-too-large}]
  We bound the probability $\mathcal{B}_e$ in~\eqref{eq-prob-Be}. 
  If the event $\mathcal{B}_e$ occurs, there must exist a subset set $B \subseteq \mathcal{E}$ such that $e \in B$,
  $|B| =L \triangleq \left\lceil\toolarge \right\rceil$,
  $B$ is connected in $\Lin(H)$,
  and all hyperedges in $B$ are bad hyperedges, i.e.\ all hyperedges in $B$ are not satisfied by $X$. 
  Let $\ell\defeq  \lfloor\frac{L}{kd}\rfloor$.
  By \ref{lemma-lower-bound-size}, there must exists a 2-tree in $T \subseteq B$ such that $e\in T$ and $|T| = \ell$.
  
  By the definition of $X \in \{0,1\}^\Lambda$ in~\eqref{eq-def-X} and
  Condition~\ref{condition-marked-variables}, we have $|e \cap \Lambda |\geq k_\alpha - 1$ for all
  $e \in \mathcal{E}$.
  Note that all hyperedges in $T$ are disjoint.  %
  By assumption $2^{k_\beta} \geq 2\mathrm{e}dk$. 
  We then use Lemma~\ref{lemma-local-uniform} with $s = k$. 
  This gives us the following
  \begin{align*}
    \Pr{\text{all hyperedges in $T$ are bad}} \leq 	\left(\frac{1}{2} \right)^{(k_\alpha - 1)\ell} \exp \left( \frac{(k_\alpha - 1)\ell}{k} \right).
  \end{align*}
  Note that the maximum degree of the graph $\Lin(H)$ is at most $dk$. 
  By \Cref{corollary-number-2-tree} and a union bound over all
  2-trees of size $\ell$ containing the hyperedge $e$, we have
  \begin{align*}
    \Pr{ \mathcal{B}_e }
    &\leq 	\frac{(\mathrm{e}d^2k^2)^{\ell - 1}}{2} \cdot 	\left(\frac{1}{2} \right)^{(k_\alpha - 1)\ell} \cdot \exp \left( \frac{(k_\alpha - 1)\ell}{k} \right)\\
    &\leq  \frac{1}{2\mathrm{e}d^2k^2} \left( \frac{2 \mathrm{e}^2 d^2k^2 }{2^{k_\alpha}} \right)^\ell,
  \end{align*}
  where the last inequality holds because $k_\alpha -1\leq k$. 
  By assumption $2^{k_\alpha} \geq 4\mathrm{e}^2d^2k^2$,
  and thus for any $e\in\mathcal{E}$
  \begin{align*}
    \Pr{\mathcal{B}_e} \le d^{-1} 2^{-\ell-1}.
  \end{align*}
  By~\eqref{eq-prob-Be}, we have
  \begin{align*}
    \Pr{\mathcal{B}_t }  \leq \sum_{e \in \mathcal{E}} \Pr{ \mathcal{B}_e } \leq nd\cdot d^{-1} 2^{-\ell-1}= n2^{-\ell-1} \leq \delta,
  \end{align*}
  since $\ell = \lfloor L / (kd) \rfloor \geq \log \frac{n}{\delta} - 1$.
  \end{proof}

\section{Analyze the main sampling algorithm}
\label{section-sampling-analyze}
Now we can finish the analysis of the main sampling algorithm, \Cref{alg-mcmc}.
\begin{theorem}\label{theorem-sampling-main}
The following holds for all $\xi \geq 0$.
There is an algorithm such that given any $0<\eps<1$ and $(k,d)$-formula $\Phi$ with $n$ variables where $k \geq 20\log k + 20\log d + 60 + \xi$,
it outputs a random assignment $X$ of $\Phi$ satisfying
  $d_{\-{TV}}(X,\mu)\le \eps$, where $\mu$ is the uniform distribution of satisfying assignments of $\Phi$. 
  The algorithm terminates in time $O\left(n \left(\frac{n}{\epsilon}\right)^{\eta} d^2k^3 \log^3 \frac{n}{\epsilon} \right)$, where $\eta = \left(\frac{1}{2} \right)^{20+\xi/3}\left( \frac{1}{dk} \right)^9$.
\end{theorem}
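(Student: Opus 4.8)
The plan is to choose the still-unspecified parameters $k_\alpha,k_\beta,k_\gamma$ (and the exponent $\eta$ of~\eqref{eq-def-eta}) as explicit functions of $d,k,\xi$, verify that the hypotheses of all the lemmas established so far hold simultaneously, and then stitch their conclusions together. Concretely, I would let $k_\alpha$ be the least integer with $2^{k_\alpha}\ge 4\mathrm{e}^2d^2k^2$, set $\eta\defeq(1/2)^{20+\xi/3}(dk)^{-9}$ as in the statement, let $k_\beta$ be the least integer with $2^{k_\beta}\ge\frac{20}{\eta}\mathrm{e}dk$, and put $k_\gamma\defeq\lceil\tfrac49 k_\beta\rceil$. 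Then $k_\alpha\le 2\log(dk)+6$ and $k_\beta\le 10\log(dk)+27+\xi/3$, so $k_\alpha+k_\beta\le 12\log(dk)+33+\xi/3\le k$ under the hypothesis $k\ge 20\log k+20\log d+60+\xi$, and moreover the ``free'' part of each clause, $k-k_\alpha-k_\beta\ge 8\log(dk)+27+\tfrac23\xi$, stays $\Omega(\log(dk)+\xi)$. One then checks directly: $2^{k_\alpha}\ge 4\mathrm{e}^2d^2k^2$ by construction (covering Lemmas~\ref{lemma-sample},~\ref{lemma-too-large},~\ref{lemma-local-uniform}); $2^{k_\beta}\ge 2^{16}d^9k^9$ since $k_\beta\ge 10\log(dk)+25.7$ while $16+9\log(dk)$ suffices (covering Lemmas~\ref{lemma-sample},~\ref{lemma-mixing}, condition~\eqref{eq-condition-beta}, and $k_\gamma<k_\beta$); and $2^{k_\beta}\ge\frac{20}{\eta}\mathrm{e}dk$ by construction, with $\eta\in(0,1)$ since $\eta\le 2^{-20}$, so $\eta$ is a legitimate choice in~\eqref{eq-def-eta} (covering Lemma~\ref{lemma-sample-correctness}). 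The one genuinely delicate check is the Moser--Tardos marking condition of Lemma~\ref{lemma-MT}, $2^k\ge(2\mathrm{e}dk)^{6\ln2\,(1+\alpha-\beta)/(1-\alpha-\beta)^2}$ with $\alpha=k_\alpha/k,\beta=k_\beta/k$: because $k_\alpha<k_\beta$ we have $1+\alpha-\beta<1$, and $1-\alpha-\beta=(k-k_\alpha-k_\beta)/k>\tfrac25$, so the exponent stays below a small constant and the condition reduces to an inequality of the shape $(8\log(dk)+\Theta(1)+\tfrac23\xi)^2\gtrsim(\log(dk)+\Theta(1))(12\log(dk)+\Theta(1)+\tfrac23\xi)$, which is exactly what $k\ge 20\log(dk)+60+\xi$ delivers.

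For correctness I would run the marking procedure of Lemma~\ref{lemma-MT} with failure parameter $\delta=\epsilon/4$: with probability at least $1-\epsilon/4$ it returns $\+M$ satisfying Condition~\ref{condition-marked-variables} in time $O(dkn\log\frac1\epsilon)$, and otherwise the algorithm outputs an arbitrary assignment (an event of probability $\le\epsilon/4$). Conditioned on a valid $\+M$, Lemma~\ref{lemma-sample}, whose hypotheses we have arranged, gives $\DTV{\Xalg}{\mu}\le\frac{3\epsilon}{4}$ for the output of \Cref{alg-mcmc}. Writing $\mathsf F$ for the marking-failure event and using convexity of total variation distance, $\DTV{X}{\mu}\le\Pr{\mathsf F}+\Pr{\overline{\mathsf F}}\,\DTV{X\mid\overline{\mathsf F}}{\mu}\le\frac\epsilon4+\frac{3\epsilon}4=\epsilon$.

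For the running time the cost is dominated by the $T+1$ calls to $\sample$ in \Cref{alg-mcmc}, where $T=\Tmix$ and every call uses $\delta=\frac{\epsilon}{4(T+1)}$. Since $T=O(n\log\frac n\epsilon)$ we get $\frac n\delta=\frac{4n(T+1)}\epsilon=O\!\big((n/\epsilon)^{3}\big)$, hence $\log\frac n\delta=O(\log\frac n\epsilon)$ and $(n/\delta)^{\eta/10}\le(n/\epsilon)^{3\eta/10}\le(n/\epsilon)^{\eta}$. By Lemma~\ref{lemma-sample-correctness}, each of the $T$ calls in Line~\ref{line-sample-1} has $|S|=1$ and costs $O\!\big((n/\epsilon)^{\eta}d^2k^3\log^2\frac n\epsilon\big)$, while the single call in Line~\ref{line-sample-2} has $|S|=|V\setminus\+M|\le n$ and costs $O\!\big(n(n/\epsilon)^{\eta}d^2k^3\log^2\frac n\epsilon\big)$. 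Adding the $O(dkn\log\frac1\epsilon)$ cost of marking and the $O(n)$ initialization (with the current assignment maintained in place, so each Glauber step does only polylogarithmic extra bookkeeping), the total is $O\!\big(n(n/\epsilon)^{\eta}d^2k^3\log^3\frac n\epsilon\big)$, as claimed; a smaller $\eta$ would only improve this, but the stated value is the one we verified is admissible in~\eqref{eq-def-eta}.

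The main obstacle is the balancing act in the first paragraph: the conditions pull $k_\beta$ in opposite directions — it must be large enough ($\gtrsim 10\log(dk)+\Theta(1)+\xi/3$) to make $\eta$ as small as the statement promises and to validate the mixing and rejection-sampling analyses, yet $k_\alpha+k_\beta$ must leave a large enough free part in every clause for the Chernoff-based marking bound to survive the $1/(1-\alpha-\beta)^2$ blow-up in Lemma~\ref{lemma-MT}. Verifying that the slackness budget $60+\xi$ (with $\xi$ split as $\xi/3$ into $k_\beta$ and $2\xi/3$ into the free part) simultaneously satisfies everything is a careful but routine constant-chasing computation.
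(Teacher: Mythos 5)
Your proposal is correct and reaches the same conclusion, but it takes a genuinely different route for the key parameter choice. The paper sets $k_\alpha = \lfloor 0.1133k\rfloor$ and $k_\beta = \lfloor 0.5097k\rfloor$ — fixed fractions of $k$ — which makes $\alpha$ and $\beta$ (essentially) absolute constants, so the Moser--Tardos exponent $\frac{6\ln 2(1+\alpha-\beta)}{(1-\alpha-\beta)^2}$ is a single fixed number ($\approx 17.7$) and the verification of $2^k\ge(2\mathrm{e}dk)^{\text{exponent}}$ becomes a one-line comparison with $k\ge20\log(dk)+60+\xi$. You instead take $k_\alpha$ and $k_\beta$ to be as small as the local-lemma inequalities permit ($\approx 2\log(dk)+O(1)$ and $\approx 10\log(dk)+O(1)+\xi/3$), which is arguably more principled and leaves a larger free part $k-k_\alpha-k_\beta$. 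The price is that $\alpha,\beta$ now depend on $d,k,\xi$, so the exponent in Lemma~\ref{lemma-MT} is not a fixed constant and your verification there is more delicate than the sketch suggests. In particular, after substituting $1-\alpha-\beta=(k-k_\alpha-k_\beta)/k$ the condition is
\begin{align*}
  (k-k_\alpha-k_\beta)^2 \;\ge\; 6\ln 2\,(k+k_\alpha-k_\beta)\,\bigl(1+\log\mathrm{e}+\log(dk)\bigr),
\end{align*}
whose right-hand side still grows linearly in $k$, so checking it only at the boundary $k=20\log(dk)+60+\xi$ is not a priori enough. One also needs a monotonicity step: the left side is a quadratic in $k$ with positive leading coefficient, and one should verify that its derivative at the boundary is already positive (it is: the derivative there is $2(k-k_\alpha-k_\beta)-6\ln2(1+\log\mathrm{e}+\log(dk))>0$), so the inequality holds for all $k$ at and beyond the boundary. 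With that addendum your argument goes through; the crude bound ``$1+\alpha-\beta<1$ and $1-\alpha-\beta>2/5$, so the exponent is below $26$'' that you also offer does \emph{not} close by itself (it would require $k\ge26\log(2\mathrm{e}dk)$, which $k\ge20\log(dk)+60+\xi$ does not give for large $\log(dk)$), so the quadratic calculation you gesture at is the argument that actually has to be carried out. The correctness/error-budget part ($\epsilon/4$ for marking failure, $3\epsilon/4$ from Lemma~\ref{lemma-sample}) and the running-time accounting ($T+1$ calls to $\sample$, $\log(n/\delta)=O(\log(n/\epsilon))$, $(n/\delta)^{\eta/10}\le(n/\epsilon)^\eta$) match the paper.
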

The sampling result in Theorem~\ref{theorem-sample-simplified} is a corollary of Theorem~\ref{theorem-sampling-main}. We can set the parameter $\zeta$ in Theorem~\ref{theorem-sample-simplified} as $\zeta = \left(\frac{1}{2} \right)^{20+\xi/3}$. The running time of the sampling algorithm in Theorem~\ref{theorem-sampling-main} is
\begin{align*}
O\left(n \left(\frac{n}{\epsilon}\right)^{\zeta (dk)^{-9}} d^2k^3 \log^3 \frac{n}{\epsilon} \right) =\widetilde{O}\left(d^2k^3 n \left(\frac{n}{\epsilon}\right)^{\zeta}  \right).
\end{align*}

We first prove \Cref{lemma-sample}. Then we use Lemma~\ref{lemma-sample} to prove Theorem~\ref{theorem-sampling-main}.

\begin{proof}[Proof of \Cref{lemma-sample}]
%
%
  We first couple $X_T$ of \Cref{alg-mcmc} with the idealized Glauber dynamics $\Glauber$.
  At each step of the Markov chain,
  we couple the outcome of $\sample$ with the idealized chain optimally.
  Coupling errors comes from the event $\+B_t$ and the failure of rejection sampling.
  By \Cref{lemma-too-large},
  with probability at most $\delta=\frac{\eps}{4(T+1)}$,
  event $\+B_t$ happens.
  When $\+B_t$ does not happen, by \Cref{lemma-sample-correctness},
  the output of $\sample$ is within total variation distance $\delta$ from the desired output.
  By \Cref{prop:coupling},
  we can successfully couple it with the ideal output with probability at least $1-\delta$.
  Thus, $X_T$ of \Cref{alg-mcmc} can be coupled with the $T$-th step of $\Glauber$ with probability at least $1-2T\delta$.

  Consider a sample $X_{\mathsf{Glauber}}$ by first running $\Glauber$ for $T$ steps to get $X_T'\in\{0,1\}^\+M$,
  and then draw from $\mu_{V\setminus\+M}(\cdot\mid X_T')$.
  In Line~\ref{line-sample-2} of \Cref{alg-mcmc}, by \Cref{lemma-sample-correctness} and \Cref{lemma-too-large},
  $\sample$ returns a sample within TV distance $\delta$ from $\mu_{V\setminus\+M}(\cdot\mid X_T)$ with probability at least $1-\delta$.
  Thus by \Cref{prop:coupling} once again,
  \begin{align*}
    \DTV{\Xalg}{X_{\mathsf{Glauber}}} \le 2(T+1)\delta = \frac{\epsilon}{2}.
  \end{align*}
  Moreover, consider an optimal algorithm which first obtains a perfect sample $X_\+M$ from $\mu_{\+M}$,
  and then complete it to all $V$ by sampling from $\mu_{V\setminus\+M}(\cdot\mid X_\+M)$.
  Call this sample $X_{\mathsf{ideal}}$,
  and then the law of $X_{\mathsf{ideal}}$ is $\mu$.
  By \Cref{prop:coupling} and \Cref{lemma-mixing},
  \begin{align*}
    \DTV{X_{\mathsf{Glauber}}}{X_{\mathsf{ideal}}}\le\frac{\epsilon}{4}
  \end{align*}
  Combining everything we have that
  \begin{align*}
    \DTV{\Xalg}{\mu}& = \DTV{\Xalg}{X_{\mathsf{ideal}}} \le \DTV{\Xalg}{X_{\mathsf{Glauber}}} + \DTV{X_{\mathsf{Glauber}}}{X_{\mathsf{ideal}}} \le\frac{3\epsilon}{4}. \qedhere
  \end{align*}
\end{proof}

We now have all ingredients to show \Cref{theorem-sampling-main}.

\begin{proof}[Proof of \Cref{theorem-sampling-main}]
We first assume $2^k \geq (2\mathrm{e}dk)^{\frac{6 \ln 2 \cdot (1+\alpha-\beta)}{(1-\alpha-\beta)^2}}$.
Since we use the algorithm in Lemma~\ref{lemma-MT} with $\delta = \frac{\eps}{4}$ to construct the set $\+M$, we have
\begin{align*}
\Pr{\text{the set $\+M$ satisfying Condition~\ref{condition-marked-variables} is constructed successfully}}	 \geq 1 - \frac{\epsilon}{4}.
\end{align*}
%
%
Let $X_{\mathsf{out}} \in \{0,1\}^V$ be the final assignment returned by our algorithm. 
If our algorithm fails to construct the set $\+M$, then $X_{\mathsf{out}}$ is an arbitrary assignment in $\{0,1\}^V$; otherwise $X_{\mathsf{out}} =\Xalg$.
Adding all errors together, \Cref{prop:coupling} implies that
\begin{align*}
\DTV{X_{\mathsf{out}}}{\mu} \leq \epsilon.	
\end{align*}

Finally, we set the parameters $k_{\alpha},k_{\beta}$ in Condition~\ref{condition-marked-variables} and $\eta$ in~\eqref{eq-def-eta}. We list all the constraints together
\begin{align*}
2^k &\geq (2\mathrm{e}dk)^{\frac{6 \ln 2 \cdot (1+\alpha-\beta)}{(1-\alpha-\beta)^2}},\quad	 \text{where } \alpha = \frac{k_{\alpha}}{k}, \beta = \frac{k_{\beta}}{k};\\
2^{k_\alpha} &\geq 4\mathrm{e}^2d^2k^2;\\
2^{k_\beta} &\geq \frac{20}{\eta}\mathrm{e}dk, \quad\text{where } 0<\eta < 1;\\
2^{k_{\beta}} &\geq 2^{16}d^9k^9;\\
k_\alpha &\geq 1;\\
k_\beta &\geq 1;\\
k_\alpha + k_\beta &\leq k.
\end{align*}
We can take 
\begin{equation}
\label{eq-set-parameter}
\begin{split}
  k_{\alpha} &= \lfloor 0.1133k \rfloor,\\
  k_{\beta}  &= \lfloor 0.5097k \rfloor.
\end{split}
\end{equation}
For any $\xi \geq 0$, if
\begin{align}
\label{eq-condition-k}
k \geq 20\log k + 20\log d + 60 + \xi,
\end{align}
then it must hold that $k \geq 60$ and all the constraints are satisfied with $k_{\alpha}$ and $k_{\beta}$ set as in~\eqref{eq-set-parameter}. We can set $\eta$ as
\begin{align}
\label{eq-def-eta-proof}
\eta \triangleq  \left(\frac{1}{2} \right)^{20+\xi/3}\left( \frac{1}{dk} \right)^9.
\end{align}
Note that~\eqref{eq-condition-k} implies $2^k \geq 2^{\xi + 60}d^{20}k^{20}$. We can verify that
\begin{align*}
\frac{20}{\eta}edk = 20\mathrm{e} \cdot {2^{20 + \xi/3}}d^{10}k^{10} \leq {2^{30 + \xi/2 -1}}d^{10}k^{10} \leq 2^{\frac{k}{2}-1} \leq 2^{k_{\beta}}.  	
\end{align*}

We then analyze the time complexity of our algorithm.
Since we run the algorithm in Lemma~\ref{lemma-MT} with $\delta = \frac{\eps}{4}$, then its  time complexity is at most
\begin{align*}
T_{\mathsf{mark}} = O\left(ndk\log \frac{4}{\epsilon}\right).	
\end{align*}
In Algorithm~\ref{alg-mcmc}, 
the first $T\defeq\Tmix$ calls of the subroutine $\sample(\Phi,\delta,X,S)$ satisfy $|S| = 1$ and the last call the  of the subroutine $\sample(\Phi,\delta,X,S)$ satisfies $|S| \leq n$. 
By \Cref{lemma-sample-correctness}, we have
\begin{align*}
T_{\mathsf{alg}} = O\left(T \left(\frac{n}{\delta}\right)^{\frac{\eta}{10}} d^2k^3 \log^2 \frac{n}{\delta} \right) + O\left(n \left(\frac{n}{\delta}\right)^{\frac{\eta}{10}} d^2k^3 \log^2 \frac{n}{\delta} \right), 	
\end{align*}
where $T = \Tmix$, $\delta = \frac{\epsilon}{4(T+1)}$ and $\eta$ is defined in~\eqref{eq-def-eta-proof}. Note that
\begin{align*}
\left(\frac{n}{\delta}\right)^{\frac{\eta}{10}} = O\left(\left( \frac{n}{\epsilon} \right)^{\eta}\right).
\end{align*}
This implies
\begin{align*}
T_{\mathsf{alg}} = 	O\left(n \left(\frac{n}{\epsilon}\right)^{\eta} d^2k^3 \log^3 \frac{n}{\epsilon} \right).
\end{align*}
The total time complexity of our algorithm is
\begin{align*}
  T = T_{\mathsf{mark}}  + 	T_{\mathsf{alg}} 
  & = O\left(n \left(\frac{n}{\epsilon}\right)^{\eta} d^2k^3 \log^3 \frac{n}{\epsilon} \right). \qedhere
\end{align*}
\end{proof}

\section{Approximate counting}
\label{section-counting}

Let  $\Phi = (V, C)$ be a $k$-CNF formula. 
One way to reduce counting to sampling is to start from a CNF formula with $n$ variables and no clause.
Then add clauses one by one and use the self-reducibility~\cite{jerrum1986random} to count the number of solutions for $\Phi$. 
This standard method gives an approximate counting algorithm which requires $\widetilde{O}(n^2d^2)$ calls to the sampling algorithm 
for a constant $\eps$ ($\widetilde{O}$ hides logarithmic factors).

Instead, we give a faster counting algorithm based on the simulated annealing method~\cite{bezakova2008accelerating,vstefankovivc2009adaptive,huber2015approximation,kolmogorov18faster}.
We will show that a non-adaptive annealing schedule with $\widetilde{O}(nd)$ calls to the sampling algorithm suffices (for a constant $\eps$).
The detailed time complexity bound is given in \Cref{theorem-counting-main}.

\begin{theorem}\label{theorem-counting-main}
The followings hold for all $\xi \geq 0$.
There is an algorithm such that given any $\eps > 0$ and $(k,d)$-formula $\Phi$ with $n$ variables where $k \geq 20\log k + 20\log d + 60 + \xi$,
it outputs a number $\widehat{Z}$ that satisfies $\exp(-\eps)Z \leq  \widehat{Z} \leq \exp(\eps) Z$ with probability at least $\frac{3}{4}$, where $Z$ is the number of satisfying assignments of $\Phi$.
The algorithm terminates in time $O\left( \left(\frac{n}{\epsilon}\right)^{2+\eta} d^{3}k^3\log^{4+\eta}\frac{nd}{\epsilon} \right)$, where $\eta = \left(\frac{1}{2} \right)^{19+\xi/3}\left( \frac{1}{dk} \right)^9.$
\end{theorem}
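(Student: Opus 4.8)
The plan is to reduce approximate counting to the sampling algorithm of \Cref{theorem-sampling-main} via simulated annealing applied to a suitable family of Gibbs distributions. First I would introduce a parameter $\theta\in[0,1]$ and define, for a $(k,d)$-formula $\Phi=(V,C)$ with $n$ variables, a distribution $\mu_\theta$ on $\{0,1\}^V$ that interpolates between the uniform product distribution (at $\theta$ small) and the uniform distribution $\mu$ over satisfying assignments (at $\theta=1$). A clean way to do this, matching the hint in \Cref{sec:tech}, is to introduce $O(\log\frac{1}{\theta})$ fresh auxiliary Boolean variables per clause and a new $k'$-uniform formula $\Phi'_\theta$ so that the marginal on $V$ of the uniform distribution on solutions of $\Phi'_\theta$ is exactly $\mu_\theta$; the point is that $\Phi'_\theta$ still satisfies a condition of the form $k'\ge 20\log k'+20\log d'+60+\xi'$ whenever $\Phi$ does, so the sampling algorithm of \Cref{theorem-sampling-main} applies to $\mu_\theta$ with essentially the same running time. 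Write $Z(\theta)$ for the appropriate partition function (the number of satisfying assignments of $\Phi'_\theta$, up to normalization), so that $Z(\theta_{\min})$ is trivially computable and $Z(1)$ is proportional to $Z$, the quantity we want.

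Next I would set up a fixed, \emph{non-adaptive} cooling schedule $\theta_{\min}=\theta_0<\theta_1<\dots<\theta_\ell=1$ and use the telescoping identity
\begin{align*}
  \frac{Z(1)}{Z(\theta_{\min})}=\prod_{i=1}^{\ell}\frac{Z(\theta_i)}{Z(\theta_{i-1})},
\end{align*}
estimating each ratio $Z(\theta_i)/Z(\theta_{i-1})=\E[X\sim\mu_{\theta_{i-1}}]{W_i(X)}$ by an empirical average of a bounded random variable $W_i$, with samples drawn by the sampling algorithm. The standard analysis (as in \cite{bezakova2008accelerating,vstefankovivc2009adaptive,kolmogorov18faster}) requires two ingredients: (i) the schedule is fine enough that each ratio is bounded away from $0$ and $1$ — concretely $\E{W_i}=\Theta(1)$ and $\Var{W_i}/\E{W_i}^2=O(1)$, which lets $O(\ell/\eps^2)$ (or better, with the Dyer–Frieze-style variance bound, $O(\sqrt{\ell}/\eps^2\cdot\mathrm{polylog})$) samples per stage suffice; and (ii) the total number of stages $\ell$ is $\widetilde O(nd)$. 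Here the local lemma (via \Cref{corollary-local-uniform}) is what makes a simple geometric schedule in $\theta$ work: local uniformity controls the per-clause marginals uniformly, so the "squared coefficient of variation" of each $W_i$ stays bounded as long as consecutive $\theta$'s differ by a factor like $1+\Theta(1/(dn))$, giving $\ell=\widetilde O(nd)$ stages. I would state this as one or two lemmas: a structural lemma that $\Phi'_\theta$ inherits the LLL condition and that $\mu_\theta$ is samplable in the stated time, and an annealing lemma bounding $\ell$ and the per-stage variance. Then a union bound over the stages and a Chernoff/Chebyshev bound per stage gives an $e^{\pm\eps}$-approximation with probability $\ge 3/4$ (boosting the per-stage error to $\eps/\ell$ and adjusting the sample count accordingly).

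Finally I would assemble the running time: each of the $\widetilde O(nd)$ stages makes $\widetilde O(\sqrt{nd}/\eps^2)$ (or $\widetilde O(nd/\eps^2)$, depending on which variance bound is invoked) calls to the sampler, each costing $O\!\left(n(n/\eps)^{\eta'}d^2k^3\log^3\frac n\eps\right)$ by \Cref{theorem-sampling-main} applied to $\Phi'_\theta$ with the slightly enlarged clause width; multiplying out and absorbing $\log$ factors yields the claimed bound $O\!\left((n/\eps)^{2+\eta}d^3k^3\log^{4+\eta}\frac{nd}{\eps}\right)$ with $\eta=(1/2)^{19+\xi/3}(dk)^{-9}$ — the extra factor of $2$ in the exponent of $n/\eps$ coming from the $\widetilde O(nd)$ stages times the per-stage sample count, and the shift from $20+\xi/3$ to $19+\xi/3$ in the exponent of $\eta$ absorbing the mild increase in clause width $k'\approx k+O(\log\frac{1}{\theta})$ in the sampler's condition. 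The step I expect to be the main obstacle is (ii): proving that a non-adaptive schedule of only $\widetilde O(nd)$ stages keeps every ratio's variance under control. In general, non-adaptive annealing is provably slower than the adaptive version \cite{vstefankovivc2009adaptive}, so the argument must genuinely exploit local uniformity — showing that $\log Z(\theta)$ is sufficiently smooth (e.g. that $\frac{d}{d\theta}\log Z(\theta)$ and the relevant second moment are polynomially bounded) so that a geometric grid suffices. This smoothness estimate, derived from \Cref{corollary-local-uniform} applied to $\Phi'_\theta$, is the crux; the rest is bookkeeping.
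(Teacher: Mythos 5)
Your high-level plan — interpolate with a Gibbs distribution, annealing along a non-adaptive schedule, estimate telescoping ratios via the sampler, and use auxiliary variables to embed $\mu_\theta$ as a marginal of a CNF solution distribution — is the right framework and matches the paper's architecture. But two of the places you flag as ``bookkeeping'' or ``the main obstacle'' are exactly where your argument either diverges from the paper or is not closed, and as written your intermediate bounds do not actually yield the claimed running time.

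First, the auxiliary variables. The paper adds exactly \emph{one} fresh variable $u_c$ per clause and takes the product distribution in which $u_c = 1$ with probability $e^{-\theta}$ (not uniformly), giving a $(k+1)$-uniform formula $\Phi'$ with $\Pr[\+P]{\+B_{c'}} = e^{-\theta}2^{-k} \leq 2^{-k}$. All the local-lemma and rejection-sampling machinery from Sections 3--5 goes through verbatim because the LLL and rejection sampling never required uniform product distributions. Your variant with $O(\log\frac{1}{\theta})$ uniform bits per clause can only implement dyadic biases, so you would have to approximate $e^{-\theta_i}$ and track that error, and you would also have to re-derive the marking, mixing, and rejection-sampling lemmas for a formula whose clause width $k'$ depends on $\theta$. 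None of this is needed.

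Second, and this is the genuine gap: your treatment of the variance. You flag per-stage variance control as ``the main obstacle'' and propose to show each $\Var{W_i}/\E{W_i}^2 = O(1)$ using local uniformity, yielding $O(\ell/\eps^2)$ (or $O(\sqrt{\ell}/\eps^2)$) samples \emph{per stage}. With $\ell=\widetilde O(nd)$ stages this gives $\widetilde O(\ell^2/\eps^2)$ or $\widetilde O(\ell^{3/2}/\eps^2)$ sampler calls in total, which multiplied by the per-call cost $\widetilde O\left(n\,(n/\eps)^{\eta}d^2k^3\right)$ is $\widetilde O(n^3 d^4 k^3/\eps^2)$ or worse — strictly larger than the claimed $\widetilde O\left((n/\eps)^{2+\eta}d^3k^3\right)$. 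The paper avoids this with a cleaner observation that you miss: for the arithmetic schedule $\theta_i = i/(nd)$ and the exponential weight $w_\theta(X)=e^{-\theta|F(X)|}$, one has the exact algebraic identity $\E{W_i^2} = Z(\theta_{i+1})/Z(\theta_{i-1})$, so the second moment of the \emph{full} product $W = 2^n\prod_i W_i$ telescopes to
\begin{align*}
\frac{\E{W^2}}{\E{W}^2} \;=\; \frac{Z(\theta_{\ell+1})/Z(\theta_\ell)}{Z(\theta_1)/Z(\theta_0)} \;\leq\; \mathrm{e},
\end{align*}
using only $Z(\theta_{\ell+1})\leq Z(\theta_\ell)$ and $Z(\theta_1)\geq e^{-1}Z(\theta_0)$ (the latter because $|F(X)|\leq nd$ and $\theta_1 nd=1$). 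Hence $O(1/\eps^2)$ independent copies of $W$ — i.e.\ $O(\ell/\eps^2)=\widetilde O(nd/\eps^2)$ sampler calls in total — suffice. No local-lemma input and no per-stage variance bounds are used here at all, and this is what produces the factor of two in the exponent.

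Third, your endpoint. You take $\theta\in[0,1]$ with $\mu_1=\mu$ exactly, i.e.\ you anneal to the ``hard'' boundary. But then the last ratios $Z(\theta_i)/Z(\theta_{i-1})$ near $\theta=1$ are not bounded away from zero in general, and the telescoping variance argument above also fails because $Z(\theta_{\ell+1})$ is undefined. The paper instead stops at $\theta_\ell = \lceil \ln\frac{4nd}{\eps}\rceil$ and proves separately (\Cref{lemma-count-target}, via \Cref{lemma-LLL-decay} and \Cref{lemma-ZS}) that $Z \leq Z(\theta_\ell)\leq e^{\eps/2}Z$. This is where the local lemma actually enters the counting analysis: \Cref{lemma-LLL-decay} shows adding one clause can shrink the solution count by at most a factor of two under $2^k\geq 2edk$, giving $Z_S\leq 2^{|S|}Z_\emptyset$ and hence the geometric decay of assignments by number of violated clauses. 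Your ``smoothness of $\log Z$'' heuristic points in a different direction and is not what the paper uses; the LLL's role is confined to this endpoint comparison, not to making a fine schedule unnecessary.

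In summary: the skeleton is right, but (a) the auxiliary-variable gadget should be the single biased bit, not a dyadic simulation, (b) the crucial lemma you are missing is the telescoping second-moment identity for the full product, without which you cannot reach $\widetilde O(nd/\eps^2)$ sampler calls, and (c) you need the LLL-based endpoint lemma $Z\leq Z(\theta_\ell)\leq e^{\eps/2}Z$ at $\theta_\ell = O(\log\frac{nd}{\eps})$, rather than annealing all the way to $\mu$.
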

The counting result in Theorem~\ref{theorem-sample-simplified} is a corollary of Theorem~\ref{theorem-counting-main}. We can set the parameter $\zeta$ in Theorem~\ref{theorem-sample-simplified} as $\zeta = \tp{\frac{1}{2}}^{20+\xi/3}$. The running time of the counting algorithm in Theorem~\ref{theorem-counting-main} is
\begin{align*}
O\left( \left(\frac{n}{\epsilon}\right)^{2+2\zeta(dk)^{-9}} d^{3}k^3\log^{4+2\zeta(dk)^{-9}}\frac{nd}{\epsilon} \right) = \widetilde{O}\left( d^{3}k^3 \left(\frac{n}{\epsilon}\right)^{2+\zeta}  \right),
\end{align*}
where the equation holds due to $2(dk)^{-9}\leq 1$.
\subsection{The counting algorithm}

Recall $\Phi = (V, C)$ is a $k$-CNF formula. 
Given any parameter $\theta > 0$, 
for any $X \in \{0,1\}^V$,
define the weight function:
\begin{align*}
  w_\theta(X) \triangleq \exp(-\theta |F(X)| ),
\end{align*}
where $F(X) \subseteq C$ is the set of clauses that are not satisfied by $X$.
Let the partition function $Z(\theta)$ be
\begin{align*}
  Z(\theta) \triangleq \sum_{X \in \{0,1\}^V}w_\theta	(X).
\end{align*}
Then the Gibbs distribution $\mu_{\theta}$ over $\{0,1\}^V$ is given by
\begin{align}
  \label{eq-def-Gibbs}
  \forall X \in \{0,1\}^V: \quad \mu_\theta(X) \triangleq \frac{w_\theta(X) }{Z(\theta)},
\end{align}
Let $Z$ denote the number of satisfying assignments for $\Phi$, then we have
\begin{align*}
Z = \lim_{\theta \rightarrow \infty}Z(\theta).	
\end{align*}
Let $\ell = nd\left\lceil \ln \frac{4nd}{\epsilon}\right\rceil$. Define a sequence of parameters $(\theta_i)_{i \geq 0}$ as
\begin{align}
\label{eq-def-p}
\forall i \in \mathbb{Z}_{\geq 0}:\quad \theta_i = \frac{i}{dn}. 	
\end{align}
The following lemma shows that the partition function $Z(\theta_{\ell})$ is close to $Z$.
\begin{lemma}
\label{lemma-count-target}
If $2^k \geq 2\mathrm{e}dk$, then given any $\epsilon >0$, it holds that
\begin{align*}
Z	\leq Z( \theta_{\ell}) \leq \exp\left( \frac{\epsilon}{2} \right)Z.
\end{align*}
\end{lemma}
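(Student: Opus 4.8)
The lower bound $Z\le Z(\theta_\ell)$ is immediate: every satisfying assignment $X$ has $F(X)=\emptyset$, hence $w_{\theta_\ell}(X)=1$, while $w_{\theta_\ell}\ge 0$ everywhere, so $Z(\theta_\ell)=\sum_X w_{\theta_\ell}(X)\ge\sum_{X:F(X)=\emptyset}1=Z$. For the upper bound the plan is to control $\ln Z(\theta_\ell)-\ln Z$ through the ``energy'' identity
\begin{align*}
  \frac{\mathrm d}{\mathrm d\theta}\ln Z(\theta)=-\E[\mu_\theta]{\abs{F(X)}},
\end{align*}
obtained by differentiating $Z(\theta)=\sum_X\exp(-\theta\abs{F(X)})$ termwise; this is legitimate since $Z(\theta)\ge Z\ge 1$ ($\Phi$ is satisfiable by Corollary~\ref{corollary-local-uniform} with $s=k$, using $2^k\ge 2\mathrm{e}dk$), so $\ln Z(\theta)$ is smooth. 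Because $Z=\lim_{\theta\to\infty}Z(\theta)$ and the integrand is nonnegative, integrating gives $\ln Z(\theta_\ell)-\ln Z=\int_{\theta_\ell}^{\infty}\E[\mu_\theta]{\abs{F(X)}}\,\mathrm d\theta$. Thus it suffices to prove $\E[\mu_\theta]{\abs{F(X)}}\le n\mathrm{e}^{-\theta}$ for all $\theta\ge 0$: then $\ln Z(\theta_\ell)-\ln Z\le n\mathrm{e}^{-\theta_\ell}\le\frac{\epsilon}{4d}\le\frac{\epsilon}{2}$, using $\theta_\ell=\lceil\ln\frac{4nd}{\epsilon}\rceil\ge\ln\frac{4nd}{\epsilon}$.

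The crux is the per-clause estimate $\Pr[\mu_\theta]{c\text{ is unsatisfied}}\le\sqrt{2\mathrm e}\cdot 2^{-k}\mathrm{e}^{-\theta}$, which I plan to obtain by exhibiting $\mu_\theta$ as a Lov\'asz-local-lemma-conditioned marginal. Enlarge the product space: besides the uniform bits $(X_v)_{v\in V}$, take independent uniform reals $(r_c)_{c\in C}$ in $[0,1]$, with joint law $\+P$. For each clause $c$ let $\tilde B_c$ be the bad event ``$c$ is violated by $X$ and $r_c>\mathrm{e}^{-\theta}$'', so $\Pr[\+P]{\tilde B_c}=2^{-k}(1-\mathrm{e}^{-\theta})\le 2^{-k}$; a direct computation $\Pr[\+P]{X=x,\ \bigwedge_c\overline{\tilde B_c}}=2^{-n}\mathrm{e}^{-\theta\abs{F(x)}}$ shows that the conditional law of $X$ under $\+P$ given $\bigwedge_c\overline{\tilde B_c}$ is exactly $\mu_\theta$. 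With $x(\tilde B_c)\equiv\frac{1}{2dk}$ and $\abs{\Gamma(\tilde B_c)}\le k(d-1)<dk$, condition~\eqref{eqn:LLL} reduces to $2^{-k}\le\frac{1}{2dk}(1-\frac{1}{2dk})^{dk}$, which holds because $(1-\frac{1}{2dk})^{dk}\ge\frac{1}{\sqrt{2\mathrm e}}$ and $2^k\ge 2\mathrm{e}dk\ge 2\sqrt{2\mathrm e}\,dk$. Now apply the ``Moreover'' part of Theorem~\ref{theorem-LLL} to $A_c:=$``$c$ is violated by $X$ and $r_c\le\mathrm{e}^{-\theta}$'': under the conditioning this coincides with ``$c$ is violated'', $\Pr[\+P]{A_c}=2^{-k}\mathrm{e}^{-\theta}$, and $\Gamma(A_c)$ consists of $\tilde B_c$ together with the $\tilde B_{c'}$ for clauses $c'$ meeting $c$, so $\abs{\Gamma(A_c)}\le dk$; hence
\begin{align*}
  \Pr[\mu_\theta]{c\text{ is unsatisfied}}=\Pr[\+P]{A_c \mid \bigwedge_{c'}\overline{\tilde B_{c'}}}\le 2^{-k}\mathrm{e}^{-\theta}\Bigl(1-\tfrac{1}{2dk}\Bigr)^{-dk}\le\sqrt{2\mathrm e}\cdot 2^{-k}\mathrm{e}^{-\theta}.
\end{align*}
Summing over $c$ and using $\abs C\le nd/k$ and $2^{-k}\le\frac{1}{2\mathrm edk}$ gives $\E[\mu_\theta]{\abs{F(X)}}\le n\mathrm{e}^{-\theta}$, which closes the argument. (If $\epsilon$ is so large that $\theta_\ell\le 0$, i.e.\ $\epsilon>4nd$, the statement is trivial from $Z\ge 1$ and a crude bound on $Z(\theta_\ell)$.)

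I expect the main obstacle to be exactly this local-lemma step: picking the auxiliary variables so that $\mu_\theta$ is an honest LLL-conditioned marginal, building the event ``$r_c\le\mathrm{e}^{-\theta}$'' into $A_c$ so the decaying factor $\mathrm{e}^{-\theta}$ survives, and checking that the single hypothesis $2^k\ge 2\mathrm{e}dk$ already suffices to verify~\eqref{eqn:LLL}; everything after the per-clause bound is routine calculus. As an alternative to the integral, one could instead telescope $\frac{Z}{Z(\theta_\ell)}=\prod_{i\ge\ell}\frac{Z(\theta_{i+1})}{Z(\theta_i)}$ and bound $\ln\frac{Z(\theta_i)}{Z(\theta_{i+1})}\le(\theta_{i+1}-\theta_i)\E[\mu_{\theta_i}]{\abs{F(X)}}$ by Jensen's inequality, then sum a geometric series to reach the same $\frac{\epsilon}{2}$ bound.
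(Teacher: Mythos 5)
Your proof is correct, but it follows a route genuinely different from the paper's. The paper proves the upper bound combinatorially: it expands $Z(\theta_\ell)=\sum_{S\subseteq C}\exp(-\theta_\ell |S|)Z_S$ where $Z_S$ counts assignments violating exactly $S$, shows via the LLL that adding one clause shrinks the number of solutions by at most a factor of $2$ (\Cref{lemma-LLL-decay}), deduces $Z_S\le 2^{|S|}Z_\emptyset$ (\Cref{lemma-ZS}), and sums a binomial series to get $Z(\theta_\ell)\le Z\,(1+\frac{\epsilon}{2nd})^{|C|}\le Z\exp(\epsilon/2)$. You instead take the analytic annealing route: differentiate $\ln Z(\theta)$, integrate from $\theta_\ell$ to $\infty$, and bound the integrand $\E_{\mu_\theta}[|F(X)|]$ uniformly by $n\mathrm{e}^{-\theta}$. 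The crux of your argument---realizing $\mu_\theta$ as an LLL-conditioned marginal of a product measure on $(X_v)\times(r_c)$ and invoking the ``Moreover'' clause of \Cref{theorem-LLL} on $A_c$---is essentially the same gadget the paper introduces later in \S 6.3 (there with auxiliary $\{0,1\}$ bits $u_c$ with $\Pr[u_c=1]=\mathrm{e}^{-\theta}$ rather than continuous $r_c$) to extend the sampler to $\mu_\theta$; you have noticed it can be put to work here too. Both proofs use $2^k\ge 2\mathrm{e}dk$ only through the standard LLL arithmetic, and both yield the $\exp(\epsilon/2)$ bound. The paper's version is self-contained and produces the auxiliary estimate $Z_S\le 2^{|S|}Z_\emptyset$ as a by-product; yours is somewhat cleaner to state once the augmented product space is set up, and unifies naturally with the later $\mu_\theta$ machinery. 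All your intermediate calculations check out ($|\Gamma(\tilde B_c)|,|\Gamma(A_c)|\le dk$, $(1-\frac{1}{2dk})^{-dk}\le 2\le\sqrt{2\mathrm e}$, $|C|\le nd/k$, the exact identity $\Pr_\+P[X=x,\bigwedge\overline{\tilde B_c}]=2^{-n}w_\theta(x)$, and the final integral $\int_{\theta_\ell}^{\infty}n\mathrm e^{-\theta}\,\mathrm d\theta=n\mathrm e^{-\theta_\ell}\le\frac{\epsilon}{4d}$).
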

\noindent
The proof of Lemma~\ref{lemma-count-target} is deferred to \Cref{section-proof-approx}. Note that the condition for $\Phi$ in Lemma~\ref{lemma-count-target} is weaker than that in Theorem~\ref{theorem-counting-main}.
By Lemma~\ref{lemma-count-target}, we can use $Z( \theta_{\ell})$ to approximate the value of $Z$. We estimate the value of $Z( \theta_{\ell})$ by the following telescoping product
\begin{align}
\label{eq-telescope}
Z(\theta_{\ell}) = \frac{Z(\theta_{\ell})}{Z(\theta_{\ell - 1})} \times \frac{Z(\theta_{\ell-1})}{Z(\theta_{\ell - 2})}\times \ldots \times \frac{Z(\theta_1)}{Z(\theta_0)} \times 2^n,	
\end{align}
where the equation holds because $\theta_0 = 0$ and $Z(\theta_0)= 2^n$.

We now estimate the value of each ratio $\frac{Z(\theta_{i+1})}{Z(\theta_i)}$ in~\eqref{eq-telescope}.
Let $\mu_i = \mu_{\theta_i}$ denote the Gibbs distribution specified by the parameter $\theta_i$. Let $w_i(\cdot)=w_{\theta_i}(\cdot)$ denote the weight function for Gibbs distribution $\mu_i$.
For each $1\leq i \leq \ell$, we define the random variable $W_i$ as 
\begin{align*}
W_i \triangleq \frac{w_{i}(X)}{w_{i-1}(X)}, \quad\text{where } X \sim \mu_{i-1}.	
\end{align*}
We then define $W$ as $2^n$ times the product of all random variables $W_i$:
\begin{align*}
W = 2^n \prod_{i = 1}^{\ell}W_i.	
\end{align*}
We have the following lemma for $W$ and each $W_i$.
\begin{lemma}
\label{lemma-counting-exp}
For each $1\leq i \leq \ell$, the random variable $W_i$ satisfies
\begin{align*}
\E{W_i}	= \frac{Z(\theta_{i})}{Z(\theta_{i-1})},\quad \E{W_i^2} = \frac{Z(\theta_{i+1})}{Z(\theta_{i-1})}.
\end{align*}
Hence, the random variable $W$ satisfies
\begin{align*}
\E{W} = Z(\theta_{\ell}), \quad \E{W^2} = \frac{4^n Z(\theta_{\ell})Z(\theta_{\ell + 1})}{Z(\theta_0)Z(\theta_1)}.	
\end{align*}
\end{lemma}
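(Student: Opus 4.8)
The plan is to prove Lemma~\ref{lemma-counting-exp} by direct computation with the Gibbs distributions $\mu_i$, exploiting the fact that consecutive weight functions $w_i$ and $w_{i-1}$ differ only by a fixed exponential factor in $|F(X)|$.

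First I would compute $\E{W_i}$. By definition $W_i = w_i(X)/w_{i-1}(X)$ for $X\sim\mu_{i-1}$, so
\begin{align*}
  \E{W_i} = \sum_{X\in\{0,1\}^V} \mu_{i-1}(X)\cdot\frac{w_i(X)}{w_{i-1}(X)} = \sum_{X}\frac{w_{i-1}(X)}{Z(\theta_{i-1})}\cdot\frac{w_i(X)}{w_{i-1}(X)} = \frac{1}{Z(\theta_{i-1})}\sum_X w_i(X) = \frac{Z(\theta_i)}{Z(\theta_{i-1})}.
\end{align*}
For $\E{W_i^2}$, the key algebraic observation is that $w_i(X)^2/w_{i-1}(X) = \exp(-2\theta_i|F(X)|+\theta_{i-1}|F(X)|) = \exp(-(2\theta_i-\theta_{i-1})|F(X)|) = w_{\theta_{i+1}}(X)$, because by~\eqref{eq-def-p} we have $2\theta_i-\theta_{i-1} = (2i-(i-1))/(dn) = (i+1)/(dn) = \theta_{i+1}$. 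Thus
\begin{align*}
  \E{W_i^2} = \sum_X \mu_{i-1}(X)\cdot\frac{w_i(X)^2}{w_{i-1}(X)^2} = \frac{1}{Z(\theta_{i-1})}\sum_X \frac{w_i(X)^2}{w_{i-1}(X)} = \frac{1}{Z(\theta_{i-1})}\sum_X w_{\theta_{i+1}}(X) = \frac{Z(\theta_{i+1})}{Z(\theta_{i-1})}.
\end{align*}

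Next I would assemble the statements for $W = 2^n\prod_{i=1}^\ell W_i$. For the expectation, the random variables $W_1,\dots,W_\ell$ are \emph{not} independent (each $W_i$ is a function of a fresh sample $X^{(i)}\sim\mu_{i-1}$ in the algorithm, so they \emph{are} drawn independently; I should state this sampling convention explicitly). Under the convention that each $W_i$ is evaluated on an independent sample, $\E{W} = 2^n\prod_{i=1}^\ell\E{W_i} = 2^n\prod_{i=1}^\ell \frac{Z(\theta_i)}{Z(\theta_{i-1})} = 2^n\cdot\frac{Z(\theta_\ell)}{Z(\theta_0)} = Z(\theta_\ell)$, the last step using $Z(\theta_0)=Z(0)=2^n$, which telescopes. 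Similarly $\E{W^2} = 4^n\prod_{i=1}^\ell\E{W_i^2} = 4^n\prod_{i=1}^\ell\frac{Z(\theta_{i+1})}{Z(\theta_{i-1})}$. This last product telescopes with a ``stride two'' pattern: $\prod_{i=1}^\ell \frac{Z(\theta_{i+1})}{Z(\theta_{i-1})} = \frac{Z(\theta_{\ell+1})Z(\theta_\ell)}{Z(\theta_1)Z(\theta_0)}$, which I would verify by separating odd and even indices (or by induction on $\ell$). Multiplying by $4^n$ and noting $Z(\theta_0)=2^n$ does not fully cancel here — the $4^n$ stays because only one factor of $Z(\theta_0)=2^n$ appears in the denominator — giving exactly $\E{W^2} = \frac{4^n Z(\theta_\ell)Z(\theta_{\ell+1})}{Z(\theta_0)Z(\theta_1)}$ as claimed.

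There is no serious obstacle here; the only point requiring care is the independence convention for the $W_i$ (they come from independent sampling calls, so products of expectations are justified), and the bookkeeping in the stride-two telescoping product for $\E{W^2}$. I would write that telescoping out carefully — either by the identity $\prod_{i=1}^\ell \frac{a_{i+1}}{a_{i-1}} = \frac{a_\ell a_{\ell+1}}{a_0 a_1}$ proved by a one-line induction, or by regrouping as $\bigl(\prod_{i \text{ odd}}\bigr)\bigl(\prod_{i \text{ even}}\bigr)$ — to make sure the endpoint indices $\theta_0,\theta_1,\theta_\ell,\theta_{\ell+1}$ come out right.
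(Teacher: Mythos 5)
Your proof is correct and follows essentially the same route as the paper's: direct computation of $\E{W_i}$ against $\mu_{i-1}$, the algebraic identity $w_i^2/w_{i-1}=w_{i+1}$ coming from $2\theta_i-\theta_{i-1}=\theta_{i+1}$, and then independence plus telescoping (including the stride-two cancellation) for $\E{W}$ and $\E{W^2}$. The only cosmetic point is the garbled ``are not independent'' sentence, which you immediately correct; the paper simply states the $W_i$ are independent, matching the algorithm's use of fresh samples.
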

\begin{proof}
By the definition of $W_i$, we have	
\begin{align*}
\E{W_i} = \sum_{X \in \{0,1\}^V}\frac{w_{i-1}(X)}{Z(\theta_{i-1})}	\times \frac{w_i(X)}{w_{i-1}(X)} = \frac{Z(\theta_i)}{Z(\theta_{i-1})}.
\end{align*}
For each $X \in \{0,1\}^V$, it holds that $w_i(X) = \exp\left( -\frac{i}{dn}|F(X)| \right)$. We have
\begin{align*}
\E{W_i^2} &= \sum_{X \in \{0,1\}^V}\frac{w_{i-1}(X)}{Z(\theta_{i-1})}	\times  \left(\frac{w_i(X)}{w_{i-1}(X)}\right)^2=\sum_{X \in \{0,1\}^V}\frac{w_{i+1}(X)}{Z(\theta_{i-1})}=\frac{Z(\theta_{i+1})}{Z(\theta_{i-1})}.
\end{align*}
Note that all $W_i$ are independent.
By the definition of $W$, we have
\begin{align*}
\E{W} &= 2^n \prod_{i = 1}^{\ell}\E{W_i} =  2^n \times \frac{Z(\theta_{\ell})}{Z(\theta_0)} = Z(\theta(\ell));\\
\E{W^2} &= 4^n \times \prod_{i=1}^{\ell}\E{W_i^2}	 = 4^n \times \frac{Z(\theta_{\ell})Z(\theta_{\ell + 1})}{Z(\theta_0)Z(\theta_1)}. \qedhere
\end{align*}
\end{proof}

By Lemma~\ref{lemma-counting-exp}, the expectation of $W$ is precisely the partition function $Z(\theta_{\ell})$. 
If we can draw random samples from each distribution $\mu_i$, then we can compute all $W_i$ and $W$ using these random samples. 
In Section~\ref{section-alg}, we have given an algorithm that samples CNF solutions uniformly at random. 
With a simple modification, we have the following algorithm that samples assignments from the Gibbs distribution in~\eqref{eq-def-Gibbs}.

\begin{lemma}\label{lemma-sample-weighted} 
Let $\xi \geq 0$ and $\Phi$ be a $(k,d)$-formula with $n$ variables where $k \geq 20\log k + 20\log d + 60 + \xi$.
There is an algorithm  $\mathcal{A}$ such that given any $0<\delta<1$ and any $\theta \geq 0$, the algorithm  $\mathcal{A}(\theta, \delta)$ outputs a random assignment $X$ of $\Phi$ satisfying
  $d_{\-{TV}}(X,\mu_\theta)\le \delta$, where $\mu_\theta$ is the Gibbs distribution defined in~\eqref{eq-def-Gibbs}.
  The algorithm terminates in time $O\left(n \left(\frac{n}{\delta}\right)^{\eta} d^2k^3 \log^3 \frac{n}{\delta} \right)$, where $\eta = \left(\frac{1}{2} \right)^{20+\xi/3}\left( \frac{1}{dk} \right)^9$.
\end{lemma}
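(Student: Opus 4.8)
The goal of \Cref{lemma-sample-weighted} is to sample from the Gibbs distribution $\mu_\theta$ defined in~\eqref{eq-def-Gibbs} rather than from the uniform distribution over satisfying assignments. The plan is to reduce the weighted sampling problem to an instance of the \emph{unweighted} sampling problem (\Cref{alg-mcmc} / \Cref{theorem-sampling-main}) by a standard gadget construction, so that essentially no new analysis is needed.

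\textbf{The reduction.} First I would observe that $\mu_\theta$ is a product distribution conditioned on a new CNF formula being satisfied. Concretely, for each clause $c\in C$ I attach a block of fresh auxiliary variables $y_{c,1},\dots,y_{c,t}$ (for a suitable $t$ depending on $\theta$, $k$, $d$), and build a formula $\Phi'$ on $V\cup\{y_{c,j}\}$ in which the clause $c$ is replaced by the statement ``$c$ is satisfied, \emph{or} all of $y_{c,1},\dots,y_{c,t}$ take a designated value''. With the right encoding, the event that the $y$-variables ``rescue'' an unsatisfied clause $c$ has probability exactly $\mathrm{e}^{-\theta}$ under the uniform product measure, so that the marginal of the uniform distribution on satisfying assignments of $\Phi'$ restricted to $V$ is exactly $\mu_\theta$. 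Since $\theta=\theta_i=i/(dn)$ in the annealing schedule, $\mathrm{e}^{-\theta}$ is close to $1$ and can be realized with only $O(dn)$ auxiliary variables per clause (or, more cleverly, one can allow the designated-value event to have probability $1-\mathrm{e}^{-\theta}$ via a real-threshold trick inside the rejection sampler, keeping the blow-up milder); the precise bookkeeping is routine. The key point is that $\Phi'$ remains a CNF formula whose clause sizes and variable degrees are controlled: each clause of $\Phi'$ still has $k$ ``real'' literals plus the auxiliary ones so the relevant local-lemma quantities (the number of unmarked variables $k_\beta$, the degree $d$) are unchanged up to the effect of the fresh variables, which we mark or leave unmarked as convenient. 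Then \Cref{theorem-sampling-main} applied to $\Phi'$ (with accuracy $\delta$) yields a sample whose $V$-marginal is within total variation distance $\delta$ of $\mu_\theta$, and projecting to $V$ completes the proof.

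\textbf{Verifying the hypotheses and running time.} Next I would check that $\Phi'$ satisfies the condition $k'\ge 20\log k' + 20\log d' + 60 + \xi$ needed for \Cref{theorem-sampling-main}; since adding auxiliary variables only \emph{increases} clause sizes while keeping degrees bounded by $\max(d,2)$ (each auxiliary variable appears in $O(1)$ clauses), the condition for $\Phi$ implies the condition for $\Phi'$ with the same $\xi$, perhaps after absorbing constants. The number of variables of $\Phi'$ is $n' = n + O(n d \cdot (\text{block size}))$, which is $\widetilde O(n)$ times a $\mathrm{poly}(d,k)$ factor, and the running time bound of \Cref{theorem-sampling-main} is near-linear in $n'$ with the same exponent $\eta=(1/2)^{20+\xi/3}(dk)^{-9}$; so the stated time bound $O\!\left(n(n/\delta)^{\eta}d^2k^3\log^3\frac{n}{\delta}\right)$ follows after folding the $\mathrm{poly}(d,k,\log n)$ overhead into the $\widetilde O$ and the $d^2k^3$ factor. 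I would state explicitly that $\eta$ is unchanged because the exponent in \Cref{theorem-sampling-main} depends only on the degree and uniformity parameters, which are preserved by the gadget.

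\textbf{Main obstacle.} The delicate point is engineering the gadget so that (i) the induced marginal on $V$ is \emph{exactly} $\mu_\theta$ — not merely approximately — for the specific values $\theta=i/(dn)$, and (ii) the gadget does not degrade the local-lemma slack (i.e., $2^{k_\beta}$ must still dominate the relevant $\mathrm{poly}(d,k)$ thresholds in \Cref{lemma-sample}, \Cref{lemma-mixing}, and \Cref{lemma-too-large}). Because $\mathrm{e}^{-\theta}$ is irrational, a purely combinatorial block of Boolean variables can only approximate it; I expect the cleanest fix is to not literally add Boolean gadget variables but instead to note that the $\sample$ subroutine already performs rejection sampling on small components, and to reweight the acceptance step of that rejection sampler by the factor $\prod_{c\ \text{unsat in the component}}\mathrm{e}^{-\theta}$ — this changes neither the component-size analysis (\Cref{lemma-too-large}, which only uses local uniformity of marked variables) nor the mixing analysis of \Cref{section-proof-mixing} (which is about marked variables and goes through verbatim once ``uniform'' is replaced by ``$\mu_\theta$-marginal'', since \Cref{corollary-local-uniform} still applies: an unsatisfied clause is a bad event with product-measure probability at most $2^{-k_\beta}\mathrm{e}^{-\theta}\le 2^{-k_\beta}$, which is only smaller). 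Carefully stating this ``weighted rejection sampling'' variant and confirming that every lemma in \Cref{section-alg}--\Cref{section-proof-main} survives the substitution of $\mu$ by $\mu_\theta$ is the bulk of the work; the probabilistic bounds themselves only improve because the weights $\mathrm{e}^{-\theta|F(X)|}\le 1$ act in our favour in every local-lemma computation.
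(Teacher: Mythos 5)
Your high-level idea — reduce Gibbs sampling under $\mu_\theta$ to the already-analyzed CNF sampling framework by realizing $\mu_\theta$ as a product measure conditioned on a new formula — is exactly the paper's route, and you correctly identify the key obstacle (one cannot encode the irrational acceptance probability $\mathrm{e}^{-\theta}$ with a block of \emph{uniform} Boolean gadget variables). However, the concrete constructions you float both have gaps.

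Your first construction (a block $y_{c,1},\dots,y_{c,t}$ per clause) cannot work for the annealing schedule's values $\theta_i = i/(dn)$: under the uniform measure the ``rescue'' probability is $2^{-t}$, so you would need $t\ln 2 = \theta$, and for $i$ small this $t$ is a non-integer less than $1$. Even setting aside exactness, the ``$O(dn)$ auxiliary variables per clause'' estimate would inflate the clause width to $k+t=\Theta(dn)$ and the variable count to $\Theta(d^2 n^2)$, destroying both the local-lemma hypothesis $k\ge 20\log k + 20\log d + 60 + \xi$ (which must hold for the \emph{new} clause width) and the claimed near-linear-in-$n$ running time; the overhead is polynomial in $n$, not $\mathrm{polylog}$ as you assert. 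Your ``cleanest fix,'' reweighting the acceptance step of the rejection sampler, is the right instinct, but as stated it leaves the CNF-satisfiability framework entirely: Lemmas~\ref{lemma-marginal-Cv}, \ref{lemma-px-py-Cv}, \ref{lemma-too-large}, etc., all reason about hard clauses being satisfied or not, and \Cref{corollary-local-uniform} is stated for the uniform distribution over satisfying assignments of a CNF. Claiming they ``go through verbatim'' for a soft-constraint Gibbs measure is not justified without a reformulation.

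The paper's construction threads this needle cleanly and is close to your ``real-threshold trick'' aside: add exactly one fresh variable $u_c$ per clause and form $\Phi'=(V\cup U, C')$ with $C'=\{u_c\lor c : c\in C\}$, so $\Phi'$ is $(k{+}1)$-uniform and every $u_c$ has degree $1$. The product measure $\+P$ is uniform on $V$ but puts $\Pr[\+P]{u_c=1}=\mathrm{e}^{-\theta}$, a \emph{biased} Bernoulli; then $\mu_\theta$ is exactly the $V$-marginal of $\+P$ conditioned on all clauses of $\Phi'$ being satisfied (Proposition~\ref{proposition-weighted-CNF}). The $U$-variables are never marked, so the marking, Glauber dynamics, path-coupling and rejection-sampling machinery is re-run on $\Phi'$ with only cosmetic changes: the couplings $\+C_v,\+C$ operate on the hypergraph with $U$ removed, the marginal computations account for $U$, and the rejection sampler draws $u_c\sim\mathrm{Bernoulli}(\mathrm{e}^{-\theta})$. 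Since $\Pr[\+P]{\+B_{c'}}\le 2^{-k}$ (only smaller than in the unweighted case) and every clause retains its $k_\alpha$ marked and $\ge k_\beta$ unmarked $V$-variables, all local-lemma hypotheses are preserved, and the variable count grows only by $|C|\le dn$, which the paper absorbs. This is the piece your proposal is missing: a concrete gadget that keeps the problem a CNF (so the existing lemmas literally apply), keeps clause width and degree essentially unchanged, and exactly realizes $\mathrm{e}^{-\theta}$ by allowing the auxiliary coordinate of the product measure to be non-uniform.
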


Our counting algorithm is described in Algorithm~\ref{alg-count}.
It relies on the Algorithm $\mathcal{A}$ in \Cref{lemma-sample-weighted} as a subroutine.

\begin{algorithm}[ht]
  \SetKwInOut{Input}{Input} \SetKwInOut{Output}{Output} \Input{a CNF formula $\Phi=(V,C)$, a
    parameter $\epsilon > 0$.}  \Output{a number $\widehat{Z}$.} 
     \For{each $j$ from 1 to $m=\lceil 144\epsilon^{-2} \rceil$}{ 
     	\For{each $i = 1$ to $\ell = nd\left\lceil \ln \frac{4nd}{\epsilon}\right\rceil$}{
     		use $\mathcal{A}(\theta_{i-1}, 1/(8\ell m))$ to draw sample $X_i^j \in \{0,1\}^V$	 independently\label{line-sample-count}\;
     		$\widehat{W}_i^j \gets w_i(X_i^j) / w_{i-1}(X_i^j)$\;
     	}
     	$\widehat{W}^j \gets 2^n\cdot\prod_{i=1}^{\ell}\widehat{W}_i^j$\;
     }
     \Return{$\widehat{Z} = \frac{1}{m}\sum_{j=1}^m \widehat{W}^j$;}
  \caption{The counting algorithm}\label{alg-count}
\end{algorithm}

To prove that correctness of Algorithm~\ref{alg-count}, we need the following lemma.
\begin{lemma}
\label{lemma-count-correct}
Let $\mathcal{B}$ be a sampling oracle such that given any parameter $\theta$, $\+B(\theta)$ returns a perfect sample from the distribution $\mu_\theta$.
Suppose we replace $\mathcal{A}(\theta_{i-1}, 1/(8\ell m))$ in Line~\ref{line-sample-count} of Algorithm~\ref{alg-count} with $\+B(\theta_{i-1})$. 
Denote the output of the modified algorithm by $\widehat{Z}_{\+B}$. 
Then, it holds that
\begin{align*}
  \Pr{\exp(-\epsilon / 2) Z(\theta_\ell)\leq \widehat{Z}_{\+B}  \leq \exp(\epsilon / 2) Z(\theta_\ell) }	\geq 7/8.
\end{align*}
\end{lemma}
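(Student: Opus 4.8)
The plan is a standard second-moment / Chebyshev argument applied to the product estimator, with the key quantitative input being that each ratio $Z(\theta_{i+1})/Z(\theta_i)$ is bounded above and below by absolute constants (so that the variance of each $\widehat W_i^j$ is controlled). First I would record the pointwise bounds on the $W_i$: since $\theta_i - \theta_{i-1} = 1/(dn)$ and $|F(X)| \le dn$ always (each variable appears in at most $d$ clauses, so there are at most $dn$ clauses), we have $\exp(-1) \le W_i \le 1$ deterministically. Hence each empirical $\widehat W_i^j$ lies in $[\e^{-1},1]$, and so does its expectation $Z(\theta_i)/Z(\theta_{i-1})$ when sampled from the true Gibbs distribution. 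Using \Cref{lemma-counting-exp}, $\E[W_i^2]/\E[W_i]^2 = \frac{Z(\theta_{i+1})Z(\theta_{i-1})}{Z(\theta_i)^2}$, and by the same pointwise bounds each of the two ratios $Z(\theta_{i+1})/Z(\theta_i)$ and $Z(\theta_{i-1})/Z(\theta_i)$ lies in $[\e^{-1}, \e]$ — actually one is $\le 1$ and one is $\ge 1$ — so $\E[W_i^2]/\E[W_i]^2 \le \e$. Therefore $\Var[W_i]/\E[W_i]^2 \le \e - 1 < 2$ for each $i$.

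Next I would pass from per-step variance to the variance of the full product $\widehat W^j = 2^n\prod_{i=1}^\ell \widehat W_i^j$. Since the samples $X_i^j$ are drawn independently across $i$, the factors are independent, so
\[
  \frac{\E[(\widehat W^j)^2]}{\E[\widehat W^j]^2} = \prod_{i=1}^\ell \frac{\E[(\widehat W_i^j)^2]}{\E[\widehat W_i^j]^2} \le \prod_{i=1}^\ell \left(1 + \frac{\Var[W_i]}{\E[W_i]^2}\right).
\]
This is where the choice $m = \lceil 144\eps^{-2}\rceil$ and $\ell = nd\lceil\ln(4nd/\eps)\rceil$ must be reconciled: a bound of the form $\prod_i(1+c_i)$ with $\sum_i c_i$ of order $\ell$ is far too large. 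The resolution — which I expect to be the main obstacle and the one subtle point — is that $\Var[W_i]/\E[W_i]^2$ is not merely $O(1)$ but in fact $O(1/\ell)$ on average: a sharper bound gives $\E[W_i^2]/\E[W_i]^2 = 1 + O(1/(dn)^2 \cdot \text{(something)}) $, or more robustly one telescopes directly. Indeed $\prod_{i=1}^\ell \E[W_i^2] = 4^{-n}\cdot 4^n Z(\theta_\ell)Z(\theta_{\ell+1})/(Z(\theta_0)Z(\theta_1))$ by \Cref{lemma-counting-exp}, so
\[
  \frac{\E[(\widehat W^j)^2]}{\E[\widehat W^j]^2} = \frac{\E[W^2]}{\E[W]^2} = \frac{Z(\theta_{\ell+1})Z(\theta_0)}{Z(\theta_\ell)Z(\theta_1)} = \frac{Z(\theta_{\ell+1})}{Z(\theta_\ell)}\cdot\frac{Z(\theta_0)}{Z(\theta_1)} \le \e \cdot \e = \e^2,
\]
using the pointwise $[\e^{-1},1]$ bounds on the single-step ratios again. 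So the relative variance of a single $\widehat W^j$ is at most $\e^2 - 1 \le 7$.

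With $\Var[\widehat W^j]/\E[\widehat W^j]^2 \le 7$ for the idealized estimator (using $\+B$, so $\E[\widehat W^j] = Z(\theta_\ell)$ exactly), averaging $m$ independent copies gives $\Var[\widehat Z_\+B]/Z(\theta_\ell)^2 \le 7/m \le 7\eps^2/144$. By Chebyshev's inequality, $\Pr{|\widehat Z_\+B - Z(\theta_\ell)| \ge (\eps/4) Z(\theta_\ell)} \le \frac{7\eps^2/144}{\eps^2/16} = 7/9 \cdot \ldots$ — I would tune the constant $144$ precisely so this probability is at most $1/8$; concretely $\Var/\E^2 \le 7/m$ and demanding $7/m \le (1/8)(\eps/4)^2 = \eps^2/128$ needs $m \ge 896/\eps^2$, so I would either enlarge the constant in $m$ or sharpen the variance bound (e.g. using that most single-step ratios are $1 - O(1/(dn))$ rather than the crude $\e$). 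Finally, on the event $|\widehat Z_\+B - Z(\theta_\ell)| < (\eps/4)Z(\theta_\ell)$, which has probability at least $7/8$, we get $(1-\eps/4)Z(\theta_\ell) \le \widehat Z_\+B \le (1+\eps/4)Z(\theta_\ell)$, and since $1-\eps/4 \ge \e^{-\eps/2}$ and $1+\eps/4 \le \e^{\eps/2}$ for $\eps \in (0,1)$, this yields $\e^{-\eps/2}Z(\theta_\ell) \le \widehat Z_\+B \le \e^{\eps/2}Z(\theta_\ell)$, as claimed. The only care needed throughout is that all single-step partition-function ratios $Z(\theta_{i+1})/Z(\theta_i)$ are pinned in $[\e^{-1},1]$, which follows uniformly from $0 \le |F(X)| \le dn$ and $\theta_{i+1}-\theta_i = 1/(dn)$; I would state this as a preliminary claim before the variance computation.
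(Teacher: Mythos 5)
Your proposal follows exactly the paper's route: telescoping the second moment through \Cref{lemma-counting-exp} to reduce everything to $\frac{\E{W^2}}{\E{W}^2} = \frac{Z(\theta_{\ell+1})Z(\theta_0)}{Z(\theta_\ell)Z(\theta_1)}$, then Chebyshev over the $m$ independent repetitions. The one genuine slip is at the very end, where you bound $\frac{Z(\theta_{\ell+1})}{Z(\theta_\ell)} \le \e$; this ratio is a forward single-step ratio, so by the monotonicity of $Z(\theta)$ in $\theta$ it is at most $1$, not $\e$. Combined with $\frac{Z(\theta_0)}{Z(\theta_1)} \le \e$, this gives $\frac{\E{W^2}}{\E{W}^2} \le \e$, hence $\frac{\Var{W}}{\E{W}^2} \le \e - 1 < 2$, half of what you carried forward. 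If you then run Chebyshev at accuracy $\eps/3$ rather than $\eps/4$ (noting $1-\eps/3 \ge \e^{-\eps/2}$ and $1+\eps/3 \le \e^{\eps/2}$ for $0 < \eps < 1$), the failure probability becomes $\frac{9(\e-1)}{m\eps^2} \le \frac{9(\e-1)}{144} < \frac{1}{8}$ with $m = \lceil 144\eps^{-2} \rceil$ as given, so there is no need to enlarge $m$ or sharpen the variance bound. Everything else in your argument --- the pointwise $W_i \in [\e^{-1},1]$ observation, the correct realization that per-step variance bounds multiplied $\ell$ times are useless and one must telescope, and the use of \Cref{lemma-counting-exp} --- matches the paper.
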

\begin{proof}
By the assumption in Lemma~\ref{lemma-count-correct}, 
we know that each $\widehat{W}^j$ is a perfect sample from the distribution of the random variable $W$. 
Note that $\widehat{Z}_{\+B} = \frac{1}{m}\sum_{i=1}^m \widehat{W}^i$.
Hence $\E{\widehat{Z}_{\+B}} = \E{W} = Z(\theta_{\ell})$.
By Chebyshev's inequality, we have
\begin{align}
\label{eq-concentration}
\Pr{ \left|\widehat{Z}_{\+B}- \E{\widehat{Z}_{\+B}}\right| \geq (\epsilon / 3 )\E{\widehat{Z}_{\+B}}} \leq \frac{9\Var{\widehat{Z}_{\+B}}}{\epsilon^2\E{\widehat{Z}_{\+B}}^2} = \frac{9\Var{W}}{m\epsilon^2\E{W}^2}
\end{align}
By Lemma~\ref{lemma-counting-exp}, we have
\begin{align*}
\frac{\Var{W}}{\E{W}^2} = \frac{\E{W^2}}{\E{W}^2}-1 = \frac{Z(\theta_{\ell + 1})Z(\theta_0)}{Z(\theta_{\ell})Z(\theta_1)}-1,
\end{align*}
where the last equation holds because $\E{W} = Z(\theta_{\ell})$, $\E{W^2} = \frac{4^n Z(\theta_{\ell})Z(\theta_{\ell + 1})}{Z(\theta_0)Z(\theta_1)}$ and $Z(\theta_0) = 2^n$. Note that $Z(\theta_{\ell + 1}) \leq Z(\theta_\ell)$, we have
\begin{align*}
\frac{Z(\theta_{\ell + 1})}{Z(\theta_{\ell})} \leq 1.	
\end{align*}
By the definition of the partition function $Z(\cdot)$, we have
\begin{align*}
\frac{Z(\theta_0)}{Z(\theta_1)} =  \frac{\sum_{X \in \{0,1\}^V}w_0(X)
  }{\sum_{X \in \{0,1\}^V}w_1(X) } =  \frac{\sum_{X \in \{0,1\}^V}1
  }{\sum_{X \in \{0,1\}^V}\exp(-\theta_1|F(X)|)}
  \le \max_{X\in\set{0,1}^V}\exp(\theta_1\abs{F(X)})\le \-e.	
\end{align*}
The last inequality is due to the fact that $\theta_1=\frac{1}{nd}$
and $\abs{F(X)}\le nd$.

Hence, we can bound~\eqref{eq-concentration} as follows
\begin{align*}
\Pr{ \left|\widehat{Z}_{\+B}- \E{\widehat{Z}_{\+B}}\right| \leq (\epsilon / 3 )\E{\widehat{Z}_{\+B}}}  \leq \frac{9(\mathrm{e} - 1)}{m\epsilon^2} \leq \frac{1}{8},
\end{align*}
where the last inequality holds because $m=\lceil 144\epsilon^{-2} \rceil$.
Note that $\E{\widehat{Z}_{\+B}} = Z(\theta_{\ell})$ due to the assumption in Lemma~\ref{lemma-count-correct}. This proves that
\begin{align*}
  &\phantom{{}={}}\Pr{\exp(-\epsilon / 2) Z(\theta_\ell)\leq \widehat{Z}_{\+B}  \leq \exp(\epsilon / 2) Z(\theta_\ell) }\\
  &\geq \Pr{(1-\epsilon / 3) Z(\theta_\ell)\leq \widehat{Z}_{\+B}  \leq (1+\epsilon / 3) Z(\theta_\ell) } \geq \frac{7}{8}. \qedhere
\end{align*}
\end{proof}

 We then construct a coupling $\+C$ between Algorithm~\ref{alg-count} and the algorithm in Lemma~\ref{lemma-count-correct}. 
 For each execution of Line~\ref{line-sample-count}, we use the optimal coupling to couple the random sample returned by algorithm $\mathcal{A}(\theta_{i-1}, 1/(8\ell m))$ 
 and the random sample returned by $\+B(\theta_{i-1})$. 
 By \ref{prop:coupling}, with probability at least $1-1/(8\ell m)$, two samples are perfectly coupled. 
 Since Line~\ref{line-sample-count} is executed for $\ell m$ times, then by a union bound, with probability at least $\frac{7}{8}$, two algorithms obtain the same output, i.e. $\widehat{Z} = \widehat{Z}_{\+B}$. 
 Combining with Lemma~\ref{lemma-count-correct}, we have
\begin{align*}
&\phantom{{}={}}\Pr{\widehat{Z} < \exp(-\epsilon / 2) Z(\theta_\ell) \lor \widehat{Z} > \exp(\epsilon / 2) Z(\theta_\ell) }\\
&\leq\Pr{\widehat{Z}_{\+B} < \exp(-\epsilon / 2) Z(\theta_\ell) \lor \widehat{Z}_{\+B} > \exp(\epsilon / 2) Z(\theta_\ell) } + \Pr[\+C]{\widehat{Z} \neq \widehat{Z}_{\+B}}\\
&\leq \frac{1}{4}.
\end{align*}
This proves that
\begin{align*}
\Pr{\exp(-\epsilon / 2) Z(\theta_\ell)\leq \widehat{Z}  \leq \exp(\epsilon / 2) Z(\theta_\ell) }	\geq 3/4.
\end{align*}
By Lemma~\ref{lemma-count-target}, we know that $Z(\theta_\ell)$ approximates the value of $Z$, where $Z$ is the number of solutions for $\Phi$.
We have
\begin{align*}
\Pr{\exp(-\epsilon) Z\leq \widehat{Z}  \leq \exp(\epsilon) Z }	\geq 3/4.
\end{align*}
This proves the correctness of Algorithm~\ref{alg-count}.

The time complexity of Algorithm~\ref{alg-count} is dominated by the time complexity of generating random samples. In Algorithm~\ref{alg-count}, the sampling algorithm $\mathcal{A}$ in Lemma~\ref{lemma-sample-weighted} is called for $m\ell$ times. 
Note that we only call algorithm $\mathcal{A}$ with $\delta = 1/(8m\ell)$.
The time complexity of each call of $\mathcal{A}$ is
\begin{align*}
T_{\mathcal{A}} =O\left( n\left(\frac{n}{\epsilon} \right)^{2\eta}d^{2+\eta}k^3 \log^{3+\eta}\frac{nd}{\epsilon}\right),
\end{align*}
where $\eta = \left(\frac{1}{2} \right)^{20+\xi/3}\left( \frac{1}{dk} \right)^9$. Note that $m\ell = O\left(\frac{nd}{\epsilon^2}\log \frac{nd}{\epsilon} \right)$. Then, the total time complexity of Algorithm~\ref{alg-count} is at most 
\begin{align*}
T_{\mathsf{count}} = O\left( \left( \frac{n}{\epsilon} \right)^{2} \left( \frac{n}{\epsilon} \right)^{2\eta} d^{3+\eta}k^3\log^{4+\eta}\frac{nd}{\epsilon} \right).		
\end{align*}
Let $\eta' = 2\eta = \left(\frac{1}{2} \right)^{19+\xi/3}\left( \frac{1}{dk} \right)^9$, we have
\begin{align*}
T_{\mathsf{count}} = O\left( \left( \frac{n}{\epsilon} \right)^{2} \left( \frac{nd}{\epsilon} \right)^{\eta'} d^{3}k^3\log^{4+\eta'}\frac{nd}{\epsilon} \right) = 	 O\left( \left( \frac{n}{\epsilon} \right)^{2+\eta'} d^{3}k^3\log^{4+\eta'}\frac{nd}{\epsilon} \right),
\end{align*}
where the last equation holds due to $d^{d^{-9}} = O(1)$.
This proves the time complexity of Algorithm~\ref{alg-count}.

\subsection{Comparing \texorpdfstring{$Z$}{Z} and \texorpdfstring{$Z(\theta_\ell)$}{Z(theta\_l)} (proof of \texorpdfstring{\Cref{lemma-count-target}}{Lemma 6.2})}
\label{section-proof-approx}

We first prove a lemma stating that adding a new clause to a CNF formula decrease the number of solutions by at most half if the parameters are in the local lemma regime.

\begin{lemma}
\label{lemma-LLL-decay}
Let $\Phi = (V, C)$ be a $k$-CNF formula. Let $\Phi' = (V, C')$ be a new $k$-CNF formula obtained from $\Phi$ by adding a new clause $f$, i.e. $C' = C \cup \{f\}$. Suppose each variable belongs to at most $d$ clauses in both $\Phi$ and $\Phi'$. If $2^k \geq 2\mathrm{e}dk$, then it holds that
\begin{align*}
\frac{Z_{\Phi'}}{Z_\Phi}\geq \frac{1}{2},
\end{align*}
where $Z_\Phi$ is the number of solution for $\Phi$ and $Z_{\Phi'}$ is the number of solutions for $\Phi'$.
\end{lemma}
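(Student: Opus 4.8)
The plan is to view $Z_\Phi$ and $Z_{\Phi'}$ through the uniform distribution $\mu$ over satisfying assignments of $\Phi$, and observe that
\[
  \frac{Z_{\Phi'}}{Z_\Phi} = \Pr[X\sim\mu]{X \text{ satisfies } f},
\]
since the satisfying assignments of $\Phi'$ are exactly those satisfying assignments of $\Phi$ that also satisfy the new clause $f$. So it suffices to show $\Pr[X\sim\mu]{f \text{ is satisfied}} \ge \tfrac12$, equivalently $\Pr[X\sim\mu]{f \text{ is violated}} \le \tfrac12$.

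The natural tool is the conditional part of the Lov\'asz local lemma (\Cref{theorem-LLL}). First I would set up the product distribution $\Pr[\+P]{\cdot}$ over $\{0,1\}^V$ with each variable uniform and independent, and the bad events $B_c$ (``$c$ is violated'') for $c\in C$; exactly as in the proof of \Cref{corollary-local-uniform}, taking $x(B_c)=\frac{1}{2dk}$ and using $2^k\ge 2\mathrm{e}dk$ together with $|\Gamma(B_c)|\le (d-1)k\le 2dk-1$ verifies condition~\eqref{eqn:LLL}. Then I apply the ``moreover'' clause of \Cref{theorem-LLL} to the event $A$ = ``$f$ is violated'', which is not one of the bad events of $\Phi$. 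We have $\Pr[\+P]{A}=2^{-k}$, and $\Gamma(A)=\{B_c : \vbl{c}\cap\vbl{f}\neq\emptyset\}$, which has size at most $(d-1)k \le dk$ because the $k$ variables of $f$ each lie in at most $d-1$ other clauses of $\Phi$ (here I use that each variable is in at most $d$ clauses of $\Phi'$, hence at most $d-1$ clauses of $\Phi$ besides $f$). Therefore
\[
  \Pr[X\sim\mu]{f \text{ violated}} \;=\; \Pr[\+P]{A \,\big|\, \textstyle\bigwedge_{c\in C}\overline{B_c}}
  \;\le\; \Pr[\+P]{A}\prod_{B\in\Gamma(A)}(1-x(B))^{-1}
  \;\le\; \frac{1}{2^k}\left(1-\frac{1}{2dk}\right)^{-dk}.
\]
A short calculation bounds $\left(1-\frac{1}{2dk}\right)^{-dk}\le \mathrm{e}^{1/2}\le 2$ (say), and combined with $2^{-k}\le \frac{1}{2\mathrm{e}dk}\le \tfrac14$ this gives $\Pr[X\sim\mu]{f\text{ violated}} \le \tfrac12$, hence $\frac{Z_{\Phi'}}{Z_\Phi}\ge\tfrac12$ as desired.

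There is no serious obstacle here; the only mild subtlety is the bookkeeping of degrees — making sure that the clause count used to bound $|\Gamma(A)|$ refers to the clauses of $\Phi$ (excluding $f$), so that the local lemma hypotheses are stated for the formula $\Phi$ whose solutions we are reweighting, while the hypothesis that degrees stay $\le d$ in $\Phi'$ is what guarantees each variable of $f$ has at most $d-1$ other clauses. A second minor point is choosing constants in the final numeric estimate so that the slack in $2^k\ge 2\mathrm{e}dk$ is actually enough to reach the factor $\tfrac12$; the computation above leaves comfortable room, so any reasonable choice of $x(B_c)$ works.
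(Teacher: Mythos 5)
Your proof takes essentially the same route as the paper: express $Z_{\Phi'}/Z_\Phi$ as $\Pr[X\sim\mu]{f \text{ satisfied}}$, verify the LLL condition for $\Phi$ with $x(B_c)=\frac{1}{2dk}$, and apply the conditional clause of \Cref{theorem-LLL} to the event that $f$ is violated, yielding the bound $2^{-k}\bigl(1-\frac{1}{2dk}\bigr)^{-dk}\le \tfrac12$. One small slip: the intermediate claim $\bigl(1-\frac{1}{2dk}\bigr)^{-dk}\le \mathrm{e}^{1/2}$ is false at the boundary (it equals $2$ when $dk=1$, since $(1-1/(2m))^{-m}$ decreases to $\mathrm{e}^{1/2}$ from above), but the chained bound $\le 2$ that you actually use is correct and is exactly what the paper uses, so the conclusion stands.
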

\begin{proof}
Let $\mu$ and $\mu'$ denote the uniform distributions of all solutions for $\Phi$ and $\Phi'$
respectively. Note that if $X \in \{0,1\}^V$ is a solution for $\Phi'$, then it is a solution for
$\Phi$ as well. Therefore, we have
\begin{align}
\label{eq-ratio-Z}
\frac{Z_{\Phi'}}{Z_{\Phi}} = \Pr[X\sim \mu]{X \text{ is a solution for } \Phi'} =\Pr[X \sim \mu]{f \text{ is satisfied by } X}.
\end{align}
Recall that we use $\Pr[\+P]{\cdot}$ to denote the product distribution such that each variable $v \in V$ takes a value from $\{0,1\}$ uniformly and independently. Let $\+B_{c}$ denote the bad event that the clause $c \in C$ is not satisfied. Note that, in $\Phi$, each clause contains $k$ variables and each variable belongs to at most $d$ clauses. 
By Theorem~\ref{theorem-LLL}, if we take $x(\+B_c) = \frac{1}{2dk}$ for each $\+B_c$, it holds that for any $c \in C$,
\begin{align*}
\Pr[\+P]{\+B_{c}} = \left(\frac{1}{2} \right)^k	\leq \frac{1}{2\mathrm{e}dk} \leq x(\+B_c)\prod_{\+B_{c'} \in \Gamma(\+B_c)}\left(1 - x(\+B_{c'}) \right),
\end{align*}
where $\Gamma(\+B_c)$ contains all $\+B_{c'}$ satisfying $c' \in C$, $c' \neq c$ and $\vbl{c} \cap \vbl{c'} \neq \emptyset$. We use $\+F$ to denote the event that $f$ is not satisfied. Since each variable belongs to at most $d$ clauses in $\Phi'$, we have
\begin{align}
\label{eq-ratio-2}
\Pr[\+P]{\+F\mid \bigwedge_{c\in C}\overline{\+B_c}} \leq \Pr[\+P]{\+F}\left(1 - \frac{1}{2dk}  \right)^{-dk} \leq 2 \left(\frac{1}{2}\right)^k \leq \frac{1}{2},
\end{align}
where the last inequality holds because $k \geq 2$ if $2^k \geq 2\mathrm{e}dk$. 
Note that the product distribution $\+P$ conditioned on $ \bigwedge_{c\in C} \overline{\+B_c}$ is precisely the distribution $\mu$. 
Combining~\eqref{eq-ratio-Z} and~\eqref{eq-ratio-2}, we have 
\begin{align*}
  \frac{Z_{\Phi'}}{Z_{\Phi}} & = 1 - \Pr[X \sim \mu]{f \text{ is not satisfied by } X} = 1 - \Pr[\+P]{\+F \mid \bigwedge_{c \in C} \overline{\+B_c}} \geq \frac{1}{2}. \qedhere
\end{align*}
\end{proof}

For a $k$-CNF formula $\Phi=(V, C)$ and any subset $S \subseteq C$, we define the value $Z_S$ as
\begin{align}
\label{eq-def-Z_S}
Z_S \triangleq \#\left\{X \in \{0,1\}^V \,\big |  \,\substack{\text{all clauses in $S$ are not satisfied by $X$,}\\ \text{and all clauses in $C \setminus S$ are satisfied by $X$} } \right\}.	
\end{align}
The value $Z_S$ counts the number of those assignments $X \in \{0,1\}^V$ satisfying \emph{exactly}
the clauses in $C\setminus S$. The next lemma bounds the size of $Z_S$.

\begin{lemma}
\label{lemma-ZS}
Suppose each variable belongs to at most $d$ clauses. If $2^k \geq 2\mathrm{e}dk$, then for any $S \subseteq C$, it holds that
$Z_S \leq 2^{|S|}Z_\emptyset$.	
\end{lemma}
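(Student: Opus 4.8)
The plan is to pass to the uniform product measure $\Pr[\+P]{\cdot}$ on $\{0,1\}^V$, express both $Z_S$ and $Z_\emptyset$ as probabilities under a common conditioning, and then reduce the claim to iterating the already‑proven \Cref{lemma-LLL-decay}. Writing $B_c$ for the bad event ``clause $c$ is not satisfied'', we have $Z_S = 2^n\Pr[\+P]{\bigwedge_{c\in S}B_c\wedge\bigwedge_{c\in C\setminus S}\overline{B_c}}$ by the definition \eqref{eq-def-Z_S}, and $Z_\emptyset = 2^n\Pr[\+P]{\bigwedge_{c\in C}\overline{B_c}}$. Since $(V,C\setminus S)$ is a subformula of $\Phi$ and $2^k\ge 2\mathrm{e}dk$, \Cref{corollary-local-uniform} (with $k_1=k_2=s=k$) guarantees it is satisfiable, so $\Pr[\+P]{\bigwedge_{c\in C\setminus S}\overline{B_c}}>0$ and we may condition on this event; let $\nu$ denote the resulting distribution, which is exactly the uniform distribution over satisfying assignments of $\Phi_0\defeq(V,C\setminus S)$. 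Dividing the numerator and denominator of $Z_S/Z_\emptyset$ by $\Pr[\+P]{\bigwedge_{c\in C\setminus S}\overline{B_c}}$ then gives
\[
  \frac{Z_S}{Z_\emptyset}
  = \frac{\Pr[\+P]{\bigwedge_{c\in S}B_c \wedge \bigwedge_{c\in C\setminus S}\overline{B_c}}}{\Pr[\+P]{\bigwedge_{c\in C}\overline{B_c}}}
  = \frac{\Pr[\nu]{\bigwedge_{c\in S}B_c}}{\Pr[\nu]{\bigwedge_{c\in S}\overline{B_c}}}.
\]

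Next I would bound the two sides separately. The numerator is at most $1$ trivially. For the denominator, note that $\Pr[\nu]{\bigwedge_{c\in S}\overline{B_c}}$ is precisely the probability that a uniformly random solution of $\Phi_0$ is also a solution of $\Phi$, i.e.\ it equals $Z_\emptyset/Z_{\Phi_0}$ where $Z_{\Phi_0}$ is the number of solutions of $\Phi_0$. Enumerate $S=\{f_1,\dots,f_m\}$ with $m=|S|$, and let $\Phi_0=\Psi_0\subseteq\Psi_1\subseteq\cdots\subseteq\Psi_m=\Phi$, where $\Psi_i$ is obtained from $\Psi_{i-1}$ by adding the single clause $f_i$. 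Each $\Psi_i$ is a $k$-CNF formula in which every variable occurs in at most $d$ clauses (both properties inherited from $\Phi$), so \Cref{lemma-LLL-decay} applies at every step and gives $Z_{\Psi_i}/Z_{\Psi_{i-1}}\ge \tfrac12$. Telescoping, $Z_\emptyset/Z_{\Phi_0}=\prod_{i=1}^m Z_{\Psi_i}/Z_{\Psi_{i-1}}\ge 2^{-|S|}$, and therefore $Z_S/Z_\emptyset\le 2^{|S|}$, which is the desired inequality (the bound is vacuous when $Z_S=0$, and the argument above covers that case automatically).

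There is no genuinely hard step here: the entire content is the observation that counting assignments violating \emph{exactly} the clauses in $S$ differs from counting solutions only by replacing $\bigwedge_{c\in S}B_c$ with $\bigwedge_{c\in S}\overline{B_c}$ under the common conditioning $\bigwedge_{c\in C\setminus S}\overline{B_c}$, after which one ratio is $\le 1$ and the other is governed by the per‑clause decay in \Cref{lemma-LLL-decay}. The only points requiring a moment's care are (i) checking $\Pr[\+P]{\bigwedge_{c\in C\setminus S}\overline{B_c}}>0$ so that the conditioning is legitimate, which is immediate because $C\setminus S$ is a subformula and the local lemma hypothesis $2^k\ge 2\mathrm{e}dk$ is inherited; and (ii) verifying the hypotheses of \Cref{lemma-LLL-decay} for each intermediate formula $\Psi_i$, which holds since adding clauses of $\Phi$ leaves all clause sizes equal to $k$ and cannot raise any variable's degree above $d$.
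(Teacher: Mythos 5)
Your proof is correct and takes essentially the same route as the paper: both arguments iterate \Cref{lemma-LLL-decay} along the clause chain $C\setminus S = C_0 \subseteq C_1 \subseteq \cdots \subseteq C_{|S|} = C$ to obtain $Z_{\Phi_0} \leq 2^{|S|} Z_\emptyset$, and both combine this with the trivial fact that $Z_S \leq Z_{\Phi_0}$ (which you phrase as the numerator $\Pr[\nu]{\bigwedge_{c\in S} B_c} \leq 1$, while the paper writes $Z_{\Phi_0} = \sum_{S'\subseteq S} Z_{S'} \geq Z_S$). Your probabilistic rephrasing and the explicit check of non-degeneracy of the conditioning are cosmetic differences; the substance is identical.
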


\begin{proof}
Let $S \subseteq C$ be a set of clauses with $|S| = k$. Suppose $S = \{c_1,c_2,\ldots,c_k\}$. We define a sequence of CNF formulas $\Phi_0,\Phi_1,\ldots,\Phi_k$. For each $\Phi_i = (V, C_i)$, the set of clauses $C_i$ is defined as
\begin{align*}
C_i \triangleq (C \setminus S) \cup \{c_j \mid 1\leq j \leq i \}.
\end{align*}
This is equivalent to let $C_0=C\setminus S$ and $C_i=C_{i-1}\cup \set{c_i}$ for every $1\le i\le
k$. For every $0\le i\le k-1$, since each $\Phi_i$ is a subformula of $\Phi$, the condition $2^k\ge 2edk$ still holds and we
can apply \Cref{lemma-LLL-decay} for each $\Phi_i$ (with $\Phi=\Phi_i$ and $\Phi'=\Phi_{i+1}$
in the statement of \Cref{lemma-LLL-decay}).  This yields


\begin{equation}\label{eqn-ratio}
\frac{Z_{\Phi_k}}{Z_{\Phi_0}} = \prod_{j=1}^k \frac{Z_{\Phi_j}}{Z_{\Phi_{j-1}}}\geq 2^{-k},	
\end{equation}
where $Z_{{\Phi}_i}$ is the number of solutions for $\Phi_i$.

On the other hand, by the definition of $Z_S$, we have
\begin{align*}
  Z_{\Phi_k}=Z_\Phi=Z_\varnothing,\;\mbox{and}\; Z_{\Phi_0}=\sum_{S'\subseteq S}Z_{S'}.  
\end{align*}
Combining with \Cref{eqn-ratio}, we obtain
\begin{align*}
  \frac{Z_\varnothing}{\sum_{S'\subseteq S}Z_{S'}}\ge 2^{-k}.  
\end{align*}
Hence, we have
\begin{align*}
  Z_S &\leq \sum_{S' \subseteq S} Z_{S'} \leq 2^k Z_{\emptyset}.\qedhere
\end{align*}
\end{proof}

We now prove \Cref{lemma-count-target}. By the definition of $Z(\theta_{\ell})$, the lower bound
$Z(\theta_{\ell}) \geq Z$ clearly holds. For the upper bonud, noting that $\theta_{\ell} = \left\lceil \ln \frac{4nd}{\epsilon}\right\rceil$, we have
\begin{align*}
Z(\theta_{\ell}) = \sum_{X \in \{0,1\}^V}\exp(-\theta_{\ell}F(X)) \leq \sum_{X \in \{0,1\}^V}\left( \frac{\epsilon}{4nd} \right)^{|F(X)|},
\end{align*}
where $F(X) \subseteq C$ is the set of clauses that are not satisfied by $X$. By the definition of $Z_{S}$, we have
\begin{align*}
  Z(\theta_{\ell})  &\leq \sum_{X \in \{0,1\}^V}\left( \frac{\epsilon}{4nd} \right)^{|F(X)|}
  =\sum_{S \subseteq C}\left( \frac{\epsilon}{4nd} \right)^{|S|}Z_S\\
  \tag{by \Cref{lemma-ZS}}\qquad &\leq\sum_{S \subseteq C}\left( \frac{\epsilon}{4nd} \right)^{|S|}2^{|S|}Z_{\emptyset} 
  = Z_{\emptyset}\sum_{k = 0}^{|C|}\binom{|C|}{k}	\left( \frac{\epsilon}{4nd} \right)^{k}2^{k}\\
  \tag{as $|C| \leq nd$}\qquad& = Z_{\emptyset}\left(1+ \frac{\epsilon}{2nd} \right)^{\abs{C}}
  \leq Z_{\emptyset}\left(1+ \frac{\epsilon}{2nd} \right)^{nd}\\
  \tag{as $Z = Z_{\emptyset}$}\qquad  &\leq Z\exp\left(\frac{\epsilon}{2}\right).
\end{align*}
This finishes the proof of \Cref{lemma-count-target}.




\subsection{The modified sampling algorithm (proof of Lemma~\ref{lemma-sample-weighted})}
In this section, we give a modified sampling algorithm to sample from the Gibbs distribution $\mu_\theta$ defined in~\eqref{eq-def-Gibbs}. Given a CNF formula $\Phi = (V, E)$ and a parameter $\theta \geq 0$, we introduce $|C|$ extra variables 
\begin{align*}
U \triangleq \{ u_c \in \{0,1\}  \mid c \in C \}.	
\end{align*}
We now define a new CNF formula $\Phi'=(V \cup U, C')$. The set of clauses $C'$ is defined as
\begin{align*}
C' \triangleq \{ u_c \lor c \mid c \in C \}.	
\end{align*}
Hence, given any assignment $X \in \{0,1\}^{V \cup U}$, a clause $c' = u_c \lor c$ is satisfied by $X$ if $X(u_c) = 1$ or the clause $c$ is satisfied by $X$.
\begin{observation}
\label{observation-uc}
The CNF formula $\Phi'=(V \cup U, C')$ is $k+1$ uniform  and each variable $u \in U$ belongs to only one clause.	
\end{observation}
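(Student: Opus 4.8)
The plan is to unwind the definitions directly; no machinery is needed here. First I would record that the clauses of $\Phi'$ are exactly the clauses $c' = u_c \lor c$ for $c \in C$, and that the map $c \mapsto c'$ is a bijection between $C$ and $C'$, so in particular $\abs{C'} = \abs{C}$. Since the $u_c$ are introduced as fresh variables, we have $U \cap V = \emptyset$ and the $u_c$ are pairwise distinct (they are indexed by the distinct clauses in $C$).

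Next I would verify $(k+1)$-uniformity. Fix $c \in C$ and the corresponding clause $c' = u_c \lor c$ in $C'$. By construction $\vbl{c'} = \vbl{c} \cup \{u_c\}$. Because $\Phi$ is $k$-uniform, $\abs{\vbl{c}} = k$ and the $k$ literals of $c$ lie on distinct variables, all of which are in $V$; since $u_c \notin V \supseteq \vbl{c}$, the variable $u_c$ is new to this clause, so $\abs{\vbl{c'}} = k+1$ and all $k+1$ literals are on distinct variables. Hence $\Phi'$ is $(k+1)$-uniform.

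Finally, for the degree claim on the $U$-variables: fix $u_c \in U$. A clause $u_{c''} \lor c''$ of $C'$ contains the variable $u_c$ iff $u_c \in \vbl{c''} \cup \{u_{c''}\}$; since $\vbl{c''} \subseteq V$ and $u_c \notin V$, this forces $u_c = u_{c''}$, i.e.\ $c'' = c$ (using that $c \mapsto u_c$ is injective). Thus $u_c$ appears in exactly the one clause $u_c \lor c$, which gives the second assertion.

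There is essentially no obstacle — the only point requiring a word of care is the disjointness/distinctness of the fresh variables $u_c$, which is immediate from the setup of $U$, and the fact that literals of $c$ are on variables of $V$ (not on any $u$-variable), so adding $u_c$ genuinely increases the clause width by exactly one. Everything else is bookkeeping.
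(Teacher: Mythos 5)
Your proof is correct and amounts to the same direct definition-unwinding that the paper has in mind; the paper states this as an unproved Observation precisely because it follows immediately from the construction $C' = \{u_c \lor c \mid c \in C\}$, the freshness of the $u_c$, and the $k$-uniformity of $\Phi$, which are exactly the facts you check.
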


Let $\+P$ denote the product distribution over $\{0,1\}^{V \cup U}$ such that each variable $v \in V$ takes value from $\{0,1\}$ uniformly, and each variable $u \in U$ takes value 1 with probability $\exp(-\theta)$ and takes value 0 with probability  $1 - \exp(-\theta)$. For each clause $c' \in C'$, we define a bad event $\+B_{c'}$ as the clause $c'$ is not satisfied. Recall that $\mu_{\theta}$ is the Gibbs distribution defined in~\eqref{eq-def-Gibbs}. We have the following proposition. 
\begin{proposition}
\label{proposition-weighted-CNF}
For any assignment $X \in \{0,1\}^V$, it holds that
\begin{align*}
\Pr[\+P]{\text{each variable $v \in V$ takes the value $X(v)$ } \,\big |\, \bigwedge_{c' \in C'}\overline{\+B_{c'}}} = \mu_{\theta}(X).
\end{align*}
\end{proposition}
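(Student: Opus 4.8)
The plan is to compute the joint probability that the restriction of $\+P$ to $V$ equals a given $X$ \emph{and} all constraints $\overline{\+B_{c'}}$ hold, and then normalize. First I would fix $X\in\{0,1\}^V$ and condition $\+P$ on the event that every $v\in V$ takes value $X(v)$; this event has probability $2^{-n}$, where $n=\abs{V}$, since the $V$-variables are uniform and independent under $\+P$. Under this conditioning the only remaining randomness lies in the auxiliary variables $U=\set{u_c\mid c\in C}$, which remain independent with $\Pr[\+P]{u_c=1}=\exp(-\theta)$.

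Next I would determine which bad events can still occur. By construction $c'=u_c\lor c$, and $\+B_{c'}$ is the event that $c'$ is unsatisfied, i.e.\ $u_c=0$ and $c$ is unsatisfied by the current assignment. If $c\notin F(X)$ then $c$ is already satisfied by $X$, so $\overline{\+B_{c'}}$ holds with probability $1$ regardless of $u_c$. If $c\in F(X)$ then $\overline{\+B_{c'}}$ holds iff $u_c=1$, an event of probability $\exp(-\theta)$; moreover, by \Cref{observation-uc} each $u_c$ appears in a unique clause of $C'$, so these events are mutually independent over $c\in F(X)$. Multiplying,
\[
  \Pr[\+P]{\bigwedge_{c'\in C'}\overline{\+B_{c'}}\;\Big|\; \text{every } v\in V \text{ takes } X(v)} = \exp(-\theta\abs{F(X)}) = w_\theta(X),
\]
so the joint probability equals $2^{-n}w_\theta(X)$.

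Finally I would sum over all $X\in\{0,1\}^V$ to obtain $\Pr[\+P]{\bigwedge_{c'\in C'}\overline{\+B_{c'}}}=2^{-n}\sum_{X}w_\theta(X)=2^{-n}Z(\theta)$, and divide: the conditional probability in question equals $\frac{2^{-n}w_\theta(X)}{2^{-n}Z(\theta)}=\frac{w_\theta(X)}{Z(\theta)}=\mu_\theta(X)$. I do not expect any real obstacle; the statement is a direct unpacking of the definitions of $\Phi'$, $\+P$, and $\mu_\theta$. The only point needing a moment's care is that conditioning on the $V$-part decouples the constraints into independent single-variable events on the $u_c$'s, which is precisely the content of \Cref{observation-uc}.
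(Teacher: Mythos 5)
Your proposal is correct and follows essentially the same argument as the paper: compute the joint probability of fixing the $V$-assignment and avoiding all bad events by noting that only the $u_c$ with $c\in F(X)$ are constrained (each forced to $1$, contributing a factor $\exp(-\theta)$), obtaining $2^{-|V|}w_\theta(X)$, then normalize by summing over $X$. The paper writes this as a direct joint-probability computation while you phrase it via conditioning on the $V$-part first, but the two are identical.
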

\begin{proof}
We use $F(X(V))$ to denote the set of clauses $c \in C$ ($C$ is the set of clauses in original CNF formula $\Phi$) such that $c$ is not satisfied by $X(V)$. 
Consider a clause $c' \in C'$ where $c' = u_c \lor c$.
Suppose the bad event $\+B_{c'}$ does not occur.
If $c \in F(X(V))$, then $u_c$ must take the value 1.
If $c \notin F(X(V))$, then $u_c$ can take an arbitrary value from $\{0,1\}$.
We have
\begin{align*}
&\Pr[\+P]{\text{(each variable $v \in V$ takes the value $X(v)$) } \land   \left(\bigwedge_{c' \in C'}\overline{\+B_{c'}}\right) } \\
=\ & \left(\frac{1}{2}\right)^{|V|} \exp(-\theta |F(X(V))|) =\left(\frac{1}{2}\right)^{|V|} w_{\theta}(X(V)),
\end{align*}
where $w_{\theta}(\cdot)$ is the weight function for
$\mu_{\theta}$. Therefore,
\begin{align*}
  &\phantom{{}={}}\Pr[\+P]{\text{each variable $v \in V$ takes the value $X(v)$ } \,\big |\,
    \bigwedge_{c' \in C'}\overline{\+B_{c'}}}\\
  &=\frac{w_\theta(X(V))}{2^{\abs{V}}\Pr[\+P]{\bigwedge_{c'\in C'}\overline{\+B_{c'}}}}\\
  &=\frac{w_\theta(X(V))}{2^{\abs{V}}\sum_{X'\in\set{0,1}^V}\Pr[\+P]{\text{(each variable $v \in V$
    takes the value $X'(v)$) } \land \bigwedge_{c'\in C'}\overline{\+B_{c'}}}}\\
  &=\frac{w_\theta(X(V))}{\sum_{X'\in\set{0,1}^V}w_\theta(X'(V))} = \mu_\theta(X). \qedhere
\end{align*}
\end{proof}

Let $\mu'$ denote the product distribution $\+P$ over $\{0,1\}^{V \cup U}$ conditioned on none of the bad event $\+B_{c'}$ for $c' \in C'$ occurs. 
Our aim is to sample $X \in \{0,1\}^{V \cup U}$ such that
\begin{align*}
  \DTV{X}{\mu'} \leq \delta,
\end{align*}
which, by Proposition~\ref{proposition-weighted-CNF}, implies that
\begin{align*}
  \DTV{X(V)}{\mu_{\theta}} \leq \delta.		
\end{align*}

Recall that $\Phi'=(V \cup U, C')$ is a $(k+1)$-uniform CNF formula.
We describe how to modify our algorithm in Section~\ref{section-alg} to sample from the Gibbs distribution $\mu'$.

The first step is to mark variables.
We construct a set of marked variables $\+M \subseteq V$ such that each clause $c' \in C'$ contains at least $k_{\alpha}$ marked variables and at least $k_{\beta}$ unmarked variables. 
Note that we do not mark variables in the set $U$, i.e. $U \cap \+M = \emptyset$. 
This step can be accomplished by the Moser-Tardos algorithm in Section~\ref{section-mark}.

We define the Glauber dynamics chain $(X_t)_{t \geq 0}$ for marked variables, whose stationary distribution is the marginal distribution $\mu'_{\+M}$ on $\+M$ projected from $\mu'$. We start with an initial assignment $X_0\in\{0,1\}^{\+M}$ where $X_0(v)$ is uniformly at random for all $v\in\+M$.
In the $t$-th step, the chain evolves as follows:
\begin{itemize}
\item pick $v \in \+M$ uniformly at random and set $X_t(u) \gets X_{t-1}(u)$ for all
  $u \in \+M \setminus \{v\}$;
\item sample $X_t(v) \in \{0, 1\}$ from the distribution
  $\mu_{v}'(\cdot \mid X_{t-1}(\+M \setminus \{v\} ))$.
\end{itemize}

We use Algorithm~\ref{alg-mcmc} to simulate the Glauber dynamics chain defined above. 
There are two modifications.
First, in Line~\ref{line-sample-1}, we use the subroutine $ \sample(\Phi', \frac{\epsilon}{4(T+1)}, X_{t-1}({\+M} \setminus \{v\}), \{v\})$ to draw random sample $X_t(v) \in \{0,1\}$.
Second, in Line~\ref{line-sample-2}, we use the subroutine $\sample(\Phi', \frac{\epsilon}{4(T+1)}, X_{T},  V \cup U \setminus\+M)$ to draw the random sample $X_{ V \cup U \setminus\+M}$.  
  
We also need to adjust the $\sample$ subroutine in the rejection sampling step 
$$Y^X_i \gets \Rejection\left(\Phi^X_i, R \right),$$
namely Line~\ref{line-sample-rejection-sampling} of $\sample(\Phi, \delta, X, S)$. 
Recall that $\Phi^X_i =(V_i^X, C_i^X)$. 
In the rejection sampling, for each variable $v \in V_i^X \cap V$, 
we sample its value from $\{0,1\}$ uniformly and independently; 
for each variable $u \in V_i^X \cap U$, we sample its value from $\{0,1\}$ independently such that $\Pr{u = 1} = \exp(-\theta)$.

We need to verify Lemma~\ref{lemma-mixing} and Lemma~\ref{lemma-sample} for the algorithm above. 
Due to the definition of $\Phi'=(V \cup U, C')$ and Observation~\ref{observation-uc}, the following two facts hold for $\Phi'$:
\begin{itemize}
\item each variable in $\Phi'$ belongs to at most $d$ clauses;
\item for any $c' \in C'$, it holds that $\Pr[\+P]{ \+B_{c'}} = \exp(-\theta)\left(\frac{1}{2} \right)^k \leq \left(\frac{1}{2} \right)^k$.
\end{itemize}
With these two facts, we can verify that all results based on the local lemma still hold for $\Phi'$ with the product distribution $\+P$. 
An analogue to \Cref{lemma-sample} holds by the identical proof in \Cref{section-proof-main}.

To prove the rapid mixing analogue to Lemma~\ref{lemma-mixing}, 
we need to sightly modify the two coupling procedures $\+C_v$ and $\+C$ in Algorithm~\ref{alg-coupling-v} and Algorithm~\ref{alg-coupling}, respectively.
Let $X,Y \in \{0,1\}^{V \cup U}$ be two assignments for path coupling that disagree only on a variable $v_0 \in \+M$.
Recall that $\mu'$ is the Gibbs distribution specified by $\Phi'$. 
Fix a variable $v \in \+M \setminus \{v_0\}$.
We use $\nu'$ to denote the distribution $\mu'$ conditional on the assignment of the set $\Lambda = \+M \setminus \{v_0,v\} $ is specified by $X(\Lambda) = Y(\Lambda)$, 
where $X, Y \in \{0,1\}^{\+M}$ differ at only $v_0$. 
Formally,
\begin{align}
  \label{eq-def-nu'}
  \forall \sigma \in \{0,1\}^{V \cup U}:\quad 
  \nu'(\sigma) =\frac{\one{\sigma(\Lambda)= X(\Lambda)} \cdot \mu'(\sigma) }{ \sum_{\tau \in \{0,1\}^{V \cup U}}\one{\tau(\Lambda)= X(\Lambda)}\cdot\mu'(\tau)}.
\end{align}
We define a hypergraph $H'\triangleq(V, \+E')$ for $\Phi'=(V\cup U, C')$,
which is obtained from $H_{\Phi'}$ (defined in~\eqref{eq-def-Hphi}) by removing all variables in $U$.
Namely, the variable set in $H$ is $V$ rather than $V \cup U$,
and 
the hyperedge set $\+E$ is defined by
\begin{align*}
\+E' \triangleq \{ V \cap \vbl{c'} \mid c' \in C' \}	 = \{ \vbl{c} \mid c \in C \}.
\end{align*}
The two coupling procedures are modified as follows.
\begin{itemize}
  \item \Cref{alg-coupling-v}, $\+C_v$: the input hypergraph is $H'$. 
    In Line~\ref{line-pX-pY-Cv}, 
    we set $p^{X}_u = \nu'_u(0\mid X^{\+C_v})$ and $p^{Y}_u = \nu'_u(0\mid Y^{\+C_v})$, 
    where $\nu'$ is defined in~\eqref{eq-def-nu'}; 
    in Line~\ref{line-extend-Cv}, 
    we use the optimal coupling between $\nu'_{U \cup V_1 \setminus \Vcol}(\cdot\mid X^{\+C_v}(\Vcol\cup V_2))$ and $\nu'_{U \cup V_1 \setminus \Vcol}(\cdot\mid Y^{\+C_v}(\Vcol \cup V_2))$ to extend $X^{\+C_v}$ and $Y^{\+C_v}$ further on the set $U \cup V_1 \setminus \Vcol$.

  \item \Cref{alg-coupling}, $\+C$: the input hypergraph is $H'$.
    In Line~\ref{line-pX-pY-Cv}, we set $p^{X}_u = \mu'_u(0\mid X^{\+C})$ and $p^{Y}_u = \mu'_u(0\mid Y^{\+C})$, 
    where $\mu'$ is the Gibbs distribution defined in~\eqref{eq-def-Gibbs}.
\end{itemize}
In other words, in the while-loop (namely, Line~\ref{line-find-u-Cv}) of $\+C_v$ and $\+C$, we do not choose any variable in $U$.
However, the effect of $U$ needs to be taken into consideration in the calculation of the probabilities $p^{X}_u$ and $p^{Y}_u$ in Line~\ref{line-pX-pY-Cv}.

Consider the modified coupling procedure $\+C_v$. 
Let  $V_1, V_2 , \mathcal{S},\Vcol$ and $X^{\+C_v}, Y^{\+C_v}$ be the sets and assignments after the execution of $\+C_v$. 
Due to Observation~\ref{observation-uc}, each variable $u \in U$ belongs to only one clauses.
Then we can verify that two distributions $\nu'_{V_2\setminus \Vcol}(\cdot \mid X^{\+C_v}(\Vcol)) $ and $\nu'_{V_2\setminus \Vcol}(\cdot \mid X^{\+C_v}(\Vcol \cap V_2))$ are identical, 
and two distributions $\nu'_{V_2\setminus \Vcol}(\cdot \mid Y^{\+C_v}(\Vcol)) $ and $\nu'_{V_2\setminus \Vcol}(\cdot \mid Y^{\+C_v}(\Vcol \cap V_2))$ are identical. 
With these two facts, 
we can prove that $X^{\+C_v}(u) = Y^{\+C_v}(u)$ for all $u \in V_2$. 
Therefore the rapid mixing result, \Cref{lemma-mixing}, follows from the identical proof in Section~\ref{section-proof-mixing}.

\bibliographystyle{alpha} \bibliography{refs.bib}

\end{document}